\documentclass[acmsmall,screen,nonacm,natbib=false]{acmart}

\setcopyright{none}

\RequirePackage[
  datamodel=acmdatamodel,
  style=acmnumeric,
  sortcites=true,
  backref=true,
  backrefstyle=three,
  uniquename=false,
  uniquelist=false
  ]{biblatex}
\usepackage{software-biblatex}

\DeclareFieldFormat{titlecase}{#1}

\usepackage{fontawesome5}
\usepackage{mathtools}
\usepackage{multicol}
\usepackage{bcprules}
\usepackage{commands}
\usepackage{xspace}
\usepackage{enumitem}
\newcommand{\capybara}{\textsf{Capybara}}
\newcommand{\corecapybara}{\textsf{CoreCapybara}}
\newcommand{\capless}{\textsf{Capless}}

\newcommand{\ccformal}{\textsf{CC}\ensuremath{_{<:\Box}}\xspace}

\usepackage{mdframed}

\usepackage{listings}

\newcommand{\btHL}[1]{\colorbox{yellow!20}{#1}}
\lstdefinelanguage{dotty}{
  basicstyle=\footnotesize\ttfamily,
  keywords={erased, val, var, if, then, in, handle,
    return, def, match, case, new, type, trait, exists,
     package, object, given, eff, mut, iso, imm, ro, mv, shrd, rd,
     pretype, class, extends, extension, infix, else,
     box, unbox, try, catch, import, throw, throws, using,
		 use, cap, this, consume, sealed, override, private, protected,
		 transparent, inline, opaque, open, tracked, lazy, enum, abstract, final, any, fresh, update},
  keywordstyle=\bfseries\color{magenta!80!black},
  morekeywords=[2]{@use, @consume, @assumeSafe, @untrackedCaptures},
  keywordstyle=[2]{\color{teal}},
  sensitive=true,
  comment=[l]{//},
  morecomment=[s]{/*}{*/},
  commentstyle=\color{green!40!black},
  stringstyle=\color{green!60!black},
  morestring=[b]',
  morestring=[b]",
  moredelim=**[is][\btHL]{`}{`},
	columns=fullflexible,
  alsoletter={@},
}
\lstdefinelanguage{rust}{
  basicstyle=\footnotesize\ttfamily,
  keywords={fn, let, mut, move, match, if, else, for, while, loop, return,
    struct, enum, impl, trait, use, pub, mod, ref, as, where, dyn, in,
    self, Self, crate, const, static, type, unsafe, async, await, break,
    continue, box},
  keywordstyle=\bfseries\color{magenta!80!black},
  sensitive=true,
  comment=[l]{//},
  morecomment=[s]{/*}{*/},
  commentstyle=\color{green!40!black},
  morestring=[b]",
  stringstyle=\color{green!60!black},
  columns=fullflexible,
}

\lstnewenvironment{code}[1][]
  {\lstset{language=dotty,
    basicstyle=\footnotesize\ttfamily,
    backgroundcolor=\color{gray!4},
    frame=trbL, framerule=0.0pt, rulecolor=\color{blue!20!gray!40}, rulesep=1.5pt, rulesepcolor=\color{blue!20!gray!40},
    framesep=0pt,
    framexleftmargin=5pt, framexrightmargin=1pt,
    framextopmargin=3pt, framexbottommargin=3pt,
    xleftmargin=6pt,
    columns=flexible, keepspaces=true, showstringspaces=false,
    extendedchars=true, breaklines=false,
    aboveskip=3pt, belowskip=1pt, #1}}
  {}

\lstnewenvironment{rustcode}
  {\lstset{language=rust,
    basicstyle=\footnotesize\ttfamily,
    backgroundcolor=\color{gray!4},
    frame=trbL, framerule=0.0pt, rulecolor=\color{blue!20!gray!40}, rulesep=1.5pt, rulesepcolor=\color{blue!20!gray!40},
    framesep=0pt,
    framexleftmargin=5pt, framexrightmargin=1pt,
    framextopmargin=3pt, framexbottommargin=3pt,
    xleftmargin=6pt,
    columns=flexible, keepspaces=true, showstringspaces=false,
    extendedchars=true, breaklines=false,
    aboveskip=3pt, belowskip=1pt}}
  {}

\lstnewenvironment{codelinenos}
  {\lstset{language=dotty,
    basicstyle=\footnotesize\ttfamily,
    backgroundcolor=\color{gray!4},
    frame=trbL, framerule=0.0pt, rulecolor=\color{blue!20!gray!40}, rulesep=1.5pt, rulesepcolor=\color{blue!20!gray!40},
    framesep=0pt,
    framexleftmargin=18pt, framexrightmargin=1pt,
    framextopmargin=3pt, framexbottommargin=3pt,
    xleftmargin=6pt,
    numbers=left, numberstyle=\tiny\color{gray}, numbersep=6pt,
    columns=flexible, keepspaces=true, showstringspaces=false,
    extendedchars=true, breaklines=false,
    aboveskip=3pt, belowskip=1pt}}
  {}

\usepackage{cleveref}
\crefname{appendix}{appendix}{appendices}
\Crefname{appendix}{Appendix}{Appendices}
\usepackage{stmaryrd}

\usepackage{wrapfig}

\newcommand{\mknote}[3]{}

\newcommand{\CLAUDE}[1]{}

\usepackage{yfonts,lettrine}

\AtBeginDocument{\setlength{\DefaultFindent}{0.5em}}
\makeatletter%
\begin{document}

\title[System \capybara{}: Tracking Capabilities for Separation and Freshness]{System \capybara{}: Tracking Capabilities for Separation and Freshness (Extended Version)}

\author{Yichen Xu}
\orcid{0000-0003-2089-6767}
\email{yichen.xu@epfl.ch}
\affiliation{%
  \institution{EPFL}
  \country{Switzerland}
}
\author{Oliver Bra\v{c}evac}
\orcid{0000-0003-3569-4869}
\email{oliver.bracevac@epfl.ch}
\affiliation{%
  \institution{EPFL}
  \country{Switzerland}
}
\author{Cao Nguyen Pham}
\orcid{0009-0005-2543-3309}
\email{nguyen.pham@epfl.ch}
\affiliation{%
  \institution{EPFL}
  \country{Switzerland}
}
\author{Yaoyu Zhao}
\orcid{0000-0003-2257-1413}
\email{yaoyu.zhao@epfl.ch}
\affiliation{%
  \institution{EPFL}
  \country{Switzerland}
}
\author{Martin Odersky}
\orcid{0009-0005-3923-8993}
\email{martin.odersky@epfl.ch}
\affiliation{%
  \institution{EPFL}
  \country{Switzerland}
}

\renewcommand{\shortauthors}{Xu et al.}

\begin{abstract}
Substructural type systems give strong static control over aliasing.
Examples include
uniqueness,
separation,
and borrowing.
How can such control be brought to established languages
whose programming models rely on higher-order abstraction,
unrestricted aliasing,
and pervasive sharing?
We study this problem in the context of Scala.
We show how to retrofit these guarantees \emph{selectively} instead of globally:
ordinary code keeps Scala's usual aliasing discipline,
while stronger guarantees can be enforced where they matter.

Our starting point is Scala's capture checking,
whose treatment of capabilities is inspired by the object-capability tradition:
capabilities are ordinary values,
and capture sets record,
in a value's type,
which capabilities the value may use.
We develop System~\capybara{},
which adds a selective alias-control layer to this mechanism.
By tracking separation, consumption, freshness, and read-only access for capabilities,
\capybara{} recovers key reasoning principles from substructural
and ownership-based disciplines
without global invariants.

We give a type-preserving translation from the surface calculus \capybara{}
to \corecapybara{},
a core calculus extending System \capless{},
the earlier foundation for capture checking.
The translation uses quantifiers for capture polymorphism and freshness,
and constraint-indexed modal types for separation.
We prove a semantic soundness result for the core calculus in Lean~4
and derive type safety,
memory safety (no use-after-free or double-free),
immutability of read-only computations,
and data-race freedom for well-typed programs.

Finally, we implement Scala~3's new separation checker,
which brings higher-order separation reasoning about effects, capabilities,
and resources to ordinary Scala, including fearless concurrency.
\end{abstract}

\maketitle

\section{Introduction}\label{sec:intro}

Modern safety guarantees often rest on controlling aliasing.
Memory safety, deterministic resource management, and data-race freedom
all require knowing when computations may touch
the same mutable resource.
This is hard in languages with unrestricted aliasing:
objects cross abstraction boundaries,
closures capture mutable state,
and higher-order functions defer effects until invocation.
The style is expressive,
but mutability undermines local reasoning:
a write through one reference
can silently change what another reads,
the familiar ``spooky action at a distance''.
With deallocation or parallelism,
the same aliasing problem may lead to dangling pointers, double-frees, or data races.

Rust~\cite{DBLP:journals/cacm/JungJKD21,rust} shows the value of static alias control.
Its ownership and borrowing discipline supports safe resource management,
memory safety without a garbage collector,
and \emph{fearless concurrency}~\cite{rustbook-fearless-concurrency}.
But Rust obtains these guarantees by making exclusivity
part of the language's programming discipline.
That design is hard to retrofit into languages
where sharing is pervasive.

The tension appears even in a small Rust example.
A mutable borrow and a shared borrow of the same value
cannot both be live at once.
Here a closure logs the vector's length,
with a push between two calls to it:
\begin{rustcode}
let mut v = vec![1, 2, 3];
let log_length = || println!("{}", v.len());  // borrows v
log_length();
v.push(4);     // error: cannot borrow `v` as mutable
log_length();  // ...because this later read still borrows it
\end{rustcode}
The closure \lstinline|log_length|
retains a shared borrow of \lstinline|v| until its last call,
so the push in between, which mutably borrows the vector, is rejected.
The operations are sequenced:
no data race or dangling reference can arise.
The borrow checker rejects the program because of the alias the closure
retains, not because the operations actually interfere.
Closures that capture mutable state are a common setting
in which shared-xor-mutable reasoning rejects safe higher-order patterns.
Rust programmers work around this %
by restructuring the code
or by switching to reference-counted cells with dynamically checked borrows~\cite{rust},
stepping outside the static discipline.

The need for alias control is not specific to systems programming.
In higher-level languages, resources such as file handles, buffers, and network connections
still call for exclusive access and affine ownership.
The goal of this paper is to bring such guarantees to a real-world language
where sharing remains the default: Scala.
Programmers should be able to require separation or affine use
where a resource-sensitive API needs it,
without imposing an exclusivity discipline on ordinary code.
We build on
\emph{capturing types}~\cite{DBLP:journals/toplas/BoruchGruszeckiOLLB23,whatisinthebox} (or capture checking)
which attach to a value's type a \emph{capture set}
recording the capabilities the value may use.
In this setting, capabilities are ordinary values:
the effects code may perform are bounded by the capabilities it
holds~\cite{DBLP:conf/icfem/CraigPGA18},
and capture sets statically track where those capabilities flow.
A file handle, a mutable buffer, or a network connection
can be treated as a capability;
a closure's type records the capabilities it may exercise.

Capturing types have proven non-invasive and flexible.
For higher-order code,
effect polymorphism can stay implicit:
closures over mutable state can be tracked
without changing ordinary higher-order signatures.
Capture checking is implemented in the Scala~3 compiler
and has been applied to large parts of the Scala standard
library~\cite{whatisinthebox} retroactively and non-invasively.

What they do not yet provide is \emph{exclusive} use.
Capture sets record \emph{which} capabilities a value may use,
but not whether access is read-only or exclusive,
or whether a capability has been consumed.
For a parallel combinator, capture sets describe what each closure may touch
but cannot prevent both from writing to the same buffer.
Likewise, they can record use of a file handle,
but not that it has been consumed and its old aliases disabled.

We present System~\capybara{},\footnote{Capybaras are known for living peacefully alongside others.
The name is a phonetic pun for ``capabilities and borrowing''.}
a calculus built around tracked capabilities.
Drawing on linearity, uniqueness, ownership, and borrowing
\cite{DBLP:conf/ifip2/Wadler90,DBLP:journals/mscs/BarendsenS96,DBLP:conf/esop/MarshallVO22},
\capybara{} adds selective alias control on top of Scala's capture checking.
\emph{Separation} requires non-interfering memory footprints,
for example when computations are passed to a parallel combinator.
\emph{Read-only capabilities} distinguish reads from writes:
read-only uses of the same capability may overlap,
and separation rejects an overlap only when one side may write or
consume the shared capability.
\emph{Consume} models affinity and ownership transfer:
once a capability is consumed,
all aliases covered by it become unusable,
so a consuming API can treat the capability as fresh.
Aliasing remains permitted and tracked by default;
separation, read-only access, and consumption are enforced only where
an API asks for them.
\Needspace{7\baselineskip}
The Rust example above, ported to \capybara{} with a buffer in place of the vector,
type-checks in sequence and is rejected when run in parallel with a write:
\begin{code}
val v = new Buffer(1, 2, 3)
val logSize = () => println(v.size)     // : () ->{v.rd} Unit
logSize(); v.push(4); logSize()         // ok: sequenced uses
runParallel(logSize, () => v.push(4))   // error: write overlaps read
\end{code}
The technical challenge is to make these local requirements sound
in a higher-order language with subtyping
and both implicit and explicit capture polymorphism.
We give a type-preserving translation from the surface calculus
\capybara{} to \corecapybara{},
a smaller core calculus on which we develop the metatheory,
building on System \capless{}~\cite{whatisinthebox}.
In the core,
implicit and explicit capture polymorphism become universal quantification,
consumption and freshness become existential capture,
and separation obligations become \emph{modal types}
indexed by explicit constraints.
\corecapybara{} also guides our Scala 3 implementation:
a separation checker that layers separation, consume, and read-only capabilities
on the existing capture checker.

Several practical languages now expose substructural information in
the type system:
Linear Haskell through linear arrows,
Swift and Mojo through ownership features,
and OxCaml through modes
\cite{DBLP:journals/pacmpl/BernardyBNJS18,swift-noncopyable,swift-borrowing,mojo-ownership,oxidizingocaml}.
OxCaml uses modal types for memory management, data-race freedom,
and effects~\cite{DBLP:journals/pacmpl/GeorgesPEWDECPD25,tang25modal};
in \capybara{}, modal types carry separation constraints over
tracked capabilities.

Previous work on capture checking relied on syntactic type soundness
\cite{DBLP:journals/toplas/BoruchGruszeckiOLLB23,whatisinthebox}.
For \capybara{}, we instead give a semantic model of \corecapybara{},
following the logical approach of Timany et al.
\cite{DBLP:journals/jacm/TimanyKDB24}.
In the model, a capture set denotes a runtime footprint:
the locations a term may read, write, or consume.
Semantic soundness states that well-typed terms produce only traces
authorized by their footprints.
This theorem, together with confluence of the reduction semantics,
is the basis for the formal claims summarized below.

This paper makes the following contributions.

\begin{itemize}[leftmargin=1.4em,labelsep=0.45em,itemsep=0.2ex,topsep=0.4ex]
\item We introduce System~\capybara{} by example,
  showing how tracked capabilities support local reasoning
  about aliases and resources in ordinary Scala code
  (\Cref{sec:informal}).
\item We formalize \capybara{},
  a surface calculus that retrofits selective substructural reasoning
  onto capture checking
  (\Cref{sec:formal}).
\item We define \corecapybara{}, a compact core extending System~\capless{}~\cite{whatisinthebox},
  and a type-preserving translation from \capybara{} to the core
  (\Cref{sec:core}; details in \Cref{app:translation}).
\item We prove semantic soundness for \corecapybara{},
  mechanized in Lean~4~\cite{DBLP:conf/cade/Moura021},
  deriving type safety,
  memory safety (no use-after-free or double-free),
  separation,
  and read-only immutability;
  confluence gives data-race freedom for well-typed programs
  (\Cref{sec:metatheory}).
\item We describe the implementation of the new Scala~3 separation checker
  layered on top of capture checking,
  and exercise it on resource-sensitive and concurrent case studies
  (\Cref{sec:case-studies}).
\end{itemize}
\Cref{sec:discussion,sec:related-work,sec:conclusion} discuss limitations and
design choices, related work, and conclusions.
\section{\capybara{} by Example}\label{sec:informal}

\subsection{A Brief Introduction to Capturing Types}

System~\capybara{} extends 
\emph{capturing types}~\cite{DBLP:journals/toplas/BoruchGruszeckiOLLB23,whatisinthebox},
so we introduce them first.
Capturing types support safe resource and effect programming
by recording in a value's type the capabilities it may use.
Resources and effects, such as file handles, mutable state, or control effects,
are modeled as \emph{capabilities}.
A \emph{capture set} lists them:
\lstinline[mathescape]|T^$\set{x_1,\dots,x_n}$| denotes a \lstinline|T| value
that captures at most $x_1,\dots,x_n$.

For example, given a mutable string buffer \lstinline|buf|,
the following is a closure that writes to it:
\begin{code}
() => buf.write("Hello, world!")
\end{code}
The closure has the capturing type
\lstinline|(() -> Unit)^{buf}|,
often written \lstinline|() ->{buf} Unit|.
The shape type \lstinline|() -> Unit|
says that the function takes no argument and returns unit.
The capture set \lstinline|{buf}| makes it explicit that
invoking the function may use the capability \lstinline|buf|, 
and so may write to the buffer.
A type whose capture set is non-empty, like \lstinline|() ->{buf} Unit|, is called \emph{impure},
while one with an empty capture set, like \lstinline|() -> Unit|, is called \emph{pure}.

A type system that tracks effects has to support
\emph{effect polymorphism}.
A higher-order function such as list \lstinline|map|
should work whether the mapping operation it receives is pure or effectful.
A naive treatment forces every higher-order signature to take an
extra effect parameter,
prohibitively intrusive for existing code~\cite{DBLP:journals/toplas/BoruchGruszeckiOLLB23}.
Capturing types handle this with \emph{lightweight effect polymorphism}~\cite{whatisinthebox,DBLP:journals/toplas/BoruchGruszeckiOLLB23}.
List \lstinline|map| keeps its original signature,
yet is effect-polymorphic:
\begin{code}
trait List[+A]:
  def map[B](f: A => B): List[B]
\end{code}
Under the hood, lightweight effect polymorphism rests on
\emph{subcapturing} and \emph{root capabilities}.

Subcapturing is like subtyping but over capture sets,
written \lstinline|{x1,...,xn} <: {y1,...,ym}|.
At its core it is the subset relation:
it holds whenever the left-hand set is a subset of the right.
It also relates a variable to what its type captures.
Name the closure from before,
\begin{code}
val sayHi = () => buf.write("Hello, world!")
\end{code}
so that \lstinline|sayHi| has type \lstinline|() ->{buf} Unit|.
Then \lstinline|{sayHi} <: {buf}|.
Subsetting and transitivity yield further relations, for instance
\lstinline|{} <: {sayHi} <: {buf, io}|.

The second ingredient is the \emph{root capability} \lstinline|any|.
It subsumes \emph{any} capability.
Therefore, \lstinline|{any}| is the top capture set:
every capture set is a subcapture of it.
Return to \lstinline|def map[B](f: A => B): List[B]|.
Its parameter type \lstinline|A => B| is shorthand for \lstinline|A ->{any} B|,
a function capturing \lstinline|any|.
By subcapturing,
this signature accepts functions of any captures,
pure or effectful,
which makes \lstinline|map| \mbox{effect-polymorphic}.

Note that \lstinline|map|'s result type \lstinline|List[B]| stays pure:
\lstinline|map| is strict,
applying \lstinline|f| during the call and retaining no reference to
it.
A lazy operation is different.
Consider a function returning an iterator over a buffer's lines:
\begin{code}
def linesIterator(buf: Buffer^): Iterator[String]^{buf} = ...
\end{code}
Here \lstinline|Buffer^| abbreviates \lstinline|Buffer^{any}|,
so the parameter accepts any buffer.
The iterator reads lines from \lstinline|buf| on demand,
so it retains the buffer capability.
Its type \lstinline|Iterator[String]^{buf}| captures \lstinline|buf|.
Using the iterator therefore produces mutation effects,
which the capture set makes explicit.

\subsection{Separating Anys From Anys}\label{sec:informal:separation}

Capturing types track which capabilities a value uses,
not whether they are \emph{separate} or \emph{fresh}.
Every capability subcaptures \lstinline|any|,
so the types alone cannot show two capabilities disjoint.

This matters wherever non-interference does.
For instance, consider a function that runs two operations in parallel:
\begin{code}
def runParallel(op1: () => Unit, op2: () => Unit): Unit = ...
\end{code}
Capturing types record what the two argument functions may use,
but enforce no separation constraint between them.
Nothing prevents passing two closures writing to the same buffer:
\begin{code}
runParallel(() => buf.write("Hello"), () => buf.write("World"))
\end{code}
Each closure captures the mutable buffer \lstinline|buf| and writes to it,
so running the two in parallel is a data race.
Ruling out such races calls for tracking separation
in capturing types.

This is the gap \capybara{} closes.
On top of capturing types it adds \emph{separation checking}.
It understands \lstinline|any| differently:
each \lstinline|any| in a function's parameters
stands for capabilities separate
from those of the other parameters
and of the function itself.
More concretely, \capybara{} reads the two \lstinline|any|s in \lstinline|runParallel| as distinct:
\begin{code}[mathescape]
def runParallel(op1: () ->{any$_1$} Unit, op2: () ->{any$_2$} Unit): Unit = ...
\end{code}
The subscripts are for exposition:
they show how \capybara{} reads the signature internally;
the source still writes a plain \lstinline|any|.
\capybara{} checks that
the capabilities subsumed by the two \lstinline|any|s
do not interfere.
The earlier call violates this:
both closures write to \lstinline|buf|,
so \lstinline[mathescape]|any$_1$| and \lstinline[mathescape]|any$_2$| each subsume \lstinline|{buf}|
and interfere at \lstinline|buf|.
\capybara{} rejects the call with a separation error.

When two operations act on separate resources,
\capybara{} accepts running them in parallel:
\begin{code}
def alloc(): Buffer^{fresh} = new Buffer
val f1 = alloc(); val f2 = alloc()
runParallel(() => println(f1.read()), () => f2.write("World"))
\end{code}
Each call to \lstinline|alloc| returns a \emph{fresh} capability
(a notion we return to in \Cref{sec:informal:consume}),
distinct from every other,
so \lstinline|{f1}| and \lstinline|{f2}| are separate.
Here \lstinline[mathescape]|any$_1$| subsumes \lstinline|{f1}|
and \lstinline[mathescape]|any$_2$| subsumes \lstinline|{f2}|;
the two do not interfere, so the check succeeds,
and the parallel computation is indeed safe at runtime.

The obligation paid at call sites,
that distinct \lstinline|any|s stand for separate capabilities,
becomes a guarantee for functions,
which may assume distinct \lstinline|any|s separate from each other.
Consider a function that receives two buffers and updates them in parallel:
\begin{code}[mathescape]
def updateBuffers(buf1: Buffer^{any$_1$}, buf2: Buffer^{any$_2$}): Unit =
  runParallel(() => buf1.append("Hello, "), () => buf2.append("World!"))
\end{code}
Calling \lstinline|runParallel| in the body requires
\lstinline|buf1| and \lstinline|buf2| to be separate.
\capybara{} accepts the call because the two buffers capture distinct \lstinline|any|s
(\lstinline[mathescape]|any$_1$| and \lstinline[mathescape]|any$_2$|).
We call these \lstinline|any|s \emph{roots}.
They are treated as separate from every other root in scope.
The \lstinline|fresh| of \lstinline|alloc| is a root as well,
a new one at each call,
which is why \lstinline|f1| and \lstinline|f2| never alias.
Roots are the minimal units of separation and freshness reasoning:
two capture sets are separate when they trace back to non-interfering roots.
As \Cref{sec:informal:core} shows,
roots correspond to the quantified capture variables of the core
calculus.

\capybara{} provides statically checked, data-race-free parallelism
without giving up flexibility:
a capability may be shared,
provided the sharing is recorded in the types,
and exclusivity is demanded only where safety requires it,
as at a parallel call.

The following example adapts the Rust example of \Cref{sec:intro}
into \capybara{}.
The closure \lstinline|logSize| reads \lstinline|v.size|,
so it captures \lstinline|v| read-only and has type \lstinline|() ->{v.rd} Unit|:
\begin{code}
val v = new Buffer(1, 2, 3)
val logSize = () => println(v.size)  // : () ->{v.rd} Unit
logSize()
v.push(4)
logSize()  // ok
runParallel(logSize, () => v.push(4))  // error
\end{code}
Calling \lstinline|logSize|, pushing to \lstinline|v|,
and calling \lstinline|logSize| again all type-check:
the three run in sequence, and nothing interferes.
Capturing types record that \lstinline|logSize|
has a reference to \lstinline|v|,
but do not reject the first two uses.
Only the last line is rejected,
where \lstinline|logSize| and the write to \lstinline|v|
run in parallel and form a read/write conflict at \lstinline|v|:
separation is enforced when concurrency makes it matter.
Where Rust rejects the program for the alias that
\lstinline|logSize| retains,
\capybara{} rejects only the interfering use.
This permissiveness matters for adoption:
existing Scala code aliases pervasively,
and a discipline that strictly restricts aliasing
would be infeasible to adopt.

\subsection{Read-Only Capabilities}\label{sec:informal:readonly}

Consider the following example,
where two parallel operations read from the same buffer:
\begin{code}
runParallel(() => println(buf.size), () => println(buf(0)))
\end{code}
Both closures use \lstinline|buf|,
but only for reading,
so they are data-race-free.

To accept such programs,
\capybara{} introduces \emph{read-only capabilities}.
A capturing type carries
a \emph{permission}
that limits what its value may be used for.
It is written as a prefix qualifier on the type, as in
\lstinline|ro Buffer^{...}|.
The permission \lstinline|ro| restricts a reference to \lstinline|Buffer|'s
read-only operations:
\begin{code}
val f1r: ro Buffer^ = f1
f1r.size           // ok, a non-mutating operation
f1r.write("Hello") // error, write is not permitted through f1r
\end{code}
\lstinline|f1r| is a read-only view of \lstinline|f1|,
exposing its read-only methods but not its mutating ones.
By default, a capability has read-write permission:
every \lstinline|Buffer| type seen so far omitted the qualifier.
A read-write reference (\lstinline|Buffer^|)
may be used wherever a read-only one is expected (\lstinline|ro Buffer^|),
but not the reverse.
In our Scala 3 implementation, the qualifier is never written
but inferred (\Cref{sec:impl}).

The permission on a type constrains what a value may do.
Capture sets instead record how a capability is \emph{used}.
Concretely,
a capture set tags each of its elements with
an \emph{access mode},
which is either read-only or read-write.
We write a read-only use with the suffix \lstinline|.rd|.
For instance,
the closure \lstinline|() => f1.size| calls
a read operation of \lstinline|f1|,
so it has the capturing type \lstinline|() ->{f1.rd} Int|.

Separation checks treat
two read-only uses of the same capability as non-interfering,
a principle familiar from fractional
permissions~\cite{DBLP:conf/sas/Boyland03},
so the snippet type-checks.
A read-only use concurrent with a read-write one still conflicts.

\subsection{Always Consume Your Capabilities Fresh}\label{sec:informal:consume}

Recall the two calls to \lstinline|alloc| from before:
each returned a \emph{fresh} capability.
Freshness is a global property.
A fresh capability is aliased by nothing,
so it is separate from anything else in the program.
\capybara{} records this guarantee in the capture set
with the special element \lstinline|fresh|,
as in \lstinline|Buffer^{fresh}|.
The element appears only in the result type of a function,
as in \lstinline|alloc|'s signature \lstinline|() -> Buffer^{fresh}|:
it states that every call returns a fresh capability.

There are two ways to obtain a fresh capability.
The first is to create one outright:
a constructor can return its newly allocated resource as fresh,
since nothing in the program could alias it yet.

The second is to \emph{consume} an existing capability.
A parameter marked \lstinline|consume| disables every other reference to its
argument, so a reference returned by the call may be treated as fresh.
\begin{code}
// Merging two buffers is more efficient if we can reuse the allocated memory
def merge(consume a: Buffer^, consume b: Buffer^): Buffer^{fresh} =
  if a.size <= b.freeSize then b.prepend(a.toSeq)
  else a.append(b.toSeq)

val f1 = alloc()
val f2 = alloc()
val fn = () => f2.write("hello")
val f = merge(f1, f2)
f1.read()  // error, f1 was consumed
fn()       // error, f2 was consumed
\end{code}
The call \lstinline|merge(f1, f2)| consumes both buffers,
so any later use of \lstinline|f1| or \lstinline|fn| is rejected.
With \lstinline|f1| and \lstinline|f2| disabled,
nothing can reach the buffers through them,
so the capability returned as \lstinline|f| is fresh:
the implementation of \lstinline|merge| can freely assume that no other aliases
exist for its two arguments, and return one of them
exactly as if it had just been created.

Consuming a capability is allowed only when it is fresh.
For instance,
a parameter typed \lstinline|Buffer^|
is \emph{separate} but not fresh: 
the caller may still hold other references,
so consuming it is rejected.
\begin{code}
def buildStr(acc: Buffer^): Unit =
  acc.append("Hello, ")        // ok: reading and writing acc is allowed
  val res = merge(acc, alloc())  // error: acc cannot be consumed
\end{code}
The parameter grants full read-write access, so the append succeeds.
But \lstinline|merge| would consume \lstinline|acc|,
and \lstinline|buildStr| received \lstinline|acc| as an argument
that other code may still alias.
Only a capability the function owns,
one freshly created or passed under \lstinline|consume|,
may be passed to \lstinline|merge|.

The three disciplines of this section are analogous to
Rust's three kinds of references.
An owned value is a \emph{fresh} capability:
\lstinline|f| returned by \lstinline|merge| above owns its buffer.
Only owned capabilities may be consumed,
which is the counterpart of a \emph{move}.
A mutable borrow \lstinline|&mut Buffer| is a tracked read-write alias,
e.g. \lstinline|g: Buffer^{f}|:
it may read and write the buffer but not consume it,
and it conflicts with every concurrent use of \lstinline|f|.
A shared borrow \lstinline|&Buffer| is a read-only view,
e.g. \lstinline|g: ro Buffer^{f.rd}|:
it may overlap with other reads, but not with writes.
The disciplines differ in how exclusivity is enforced:
Rust enforces shared-xor-mutable throughout each borrow's lifetime;
\capybara{} tracks aliases and demands exclusivity only
at a \lstinline|consume| or a separation check.

\subsection{Capybara under the Hood}\label{sec:informal:core}

The language presented so far is a surface language
that translates to a core calculus, \corecapybara{}, which gives
\lstinline|any|, \lstinline|ro|, \lstinline|consume|, and \lstinline|fresh|
an explicit semantics.
We preview the translation informally.

In the core, the translation of \lstinline|runParallel| uses capture
quantifiers for root capabilities and a \emph{modal type} for separation,
in the style of
contextual modal type theory~\cite{DBLP:journals/tocl/NanevskiPP08}:
\begin{code}[mathescape]
def runParallel[c$_1$, c$_2$](op1: () ->{c$_1$} Unit, op2: () ->{c$_2$} Unit): [{c$_1$} $\bowtie$ {c$_2$}] Unit
\end{code}
The subscripted \lstinline[mathescape]|any$_1$| and
\lstinline[mathescape]|any$_2$| of \Cref{sec:informal:separation} are now
explicit capture parameters,
which a call instantiates with the arguments' actual capture sets.
The result is a modal type $[\Psi]\,T$, a $T$ usable only under the
separation constraint $\Psi$, where $C_1 \bowtie C_2$ constrains two
capture sets to be separate;
to run the function, the caller must prove
\lstinline[mathescape]|{c$_1$} $\bowtie$ {c$_2$}|.
Dually, a \lstinline|fresh| in a result type is existentially quantified:
the result type of \lstinline|alloc| translates to
$\exists c.\,$\lstinline|Buffer^{c}|,
a buffer using some fresh capability,
and binding the result unpacks the package,
introducing a capture variable that nothing else in scope can mention.
\Cref{app:tr:example} gives the full translation.

A \lstinline|consume| parameter receives an owned capability,
so its \lstinline|any| is quantified existentially rather than universally:
the function translates to a \emph{consumer},
whose argument arrives as a package and is unpacked on entry.
The signature of the \lstinline|merge| function in \Cref{sec:informal:consume} becomes
\begin{code}[mathescape]
def merge(a: $\exists c_1.\,$Buffer^{c$_1$}, b: $\exists c_2.\,$Buffer^{c$_2$}): $\exists c.\,$Buffer^{c}
\end{code}
The call site types each argument at \lstinline|Buffer^{fresh}| and
packs it with its capability:
\lstinline|merge(f1, f2)| translates to
\lstinline[mathescape]|merge($\langle${f1}, f1$\rangle$, $\langle${f2}, f2$\rangle$)|.
Packing \emph{consumes} the witness,
which disables \lstinline|f1| and \lstinline|f2|.
\section{Formalizing \capybara{}}\label{sec:formal}

This section formalizes \capybara{}, the surface language of
\Cref{sec:informal}. Its core calculus \corecapybara{}, used for the
metatheory, follows in \Cref{sec:core}.

\Cref{fig:syntax} gives the abstract syntax of \capybara{}.
Its static semantics is split across three figures.
\Cref{fig:all-typing} presents the main judgments:
subcapturing, subbounding, term typing, and subtyping.
\Cref{fig:capture-typing} gives the capability-specific rules:
kinding, separation, and sequential composition.
\Cref{fig:state-typing} types the memory primitives, parallel
composition, and conditionals (\Cref{sec:formal:state}).

\begin{wide-rules}\noindent
	{\small\begin{multicols}{2}\noindent
		\begin{flalign*}
			x,\,y,\,z         \tag*{\textbf{Variable}}\\
			X                 \tag*{\textbf{Type Variable}}\\
			c,\,\ANY,\,\FRESH                 \tag*{\textbf{Capture Variable}}\\
			s,\,t,\,u\coloneqq\ &           \tag*{\textbf{Term}}\\
			&x                                      \tag*{variable}\\
			&v                              \tag*{value}\\
			&x\,y                              \tag*{app.}\\
			&x[S]                              \tag*{type app.}\\
			&x[C]                              \tag*{capture app.}\\
			&\ALLOC x \,\mid\, \READ x \,\mid\, x:=y \,\mid\, \DEALLOC x  \tag*{memory primitives}\\
			&\LET x = t \IN u                  \tag*{let}\\
			&\PAR t_1, t_2                  \tag*{parallel}\\
			&\IF x \THEN t_1 \ELSE t_2         \tag*{cond.}\\
			v\coloneqq\ &           \tag*{\textbf{Value}}\\
			&\lambda(\alpha\,x: T)t \,\mid\, \lambda[X<:S]t \,\mid\, \lambda[c<:B]t   \tag*{abstractions}\\
			&()\,\mid\,\TTRUE\,\mid\,\TFALSE    \tag*{constants}\\
		\end{flalign*}
		\begin{flalign*}
			a\coloneqq\ &x\,\mid\,v           \tag*{\textbf{Answer}}\\
			m\coloneqq\ &\epsilon\,\mid\,\RO           \tag*{\textbf{Mutability}}\\
			\mu\coloneqq\ &m\,\mid\,\CONSUME           \tag*{\textbf{Access Mode}}\\
			\alpha\coloneqq\ &\epsilon\,\mid\,\CONSUME           \tag*{\textbf{Consume Mode}}\\
			B\coloneqq\ &C\,\mid\,m           \tag*{\textbf{Capture Bound}}\\
			\theta\coloneqq\ &x\,\mid\,c           \tag*{\textbf{Capture}}\\
			C\coloneqq\ &\set{\mu_1\,\theta_1,\dots,\mu_n\,\theta_n}     \tag*{\textbf{Capture Set}}\\
			T,\,U\coloneqq\ &m\,S\capt C\,\mid\,S           \tag*{\textbf{Type}}\\
			R,\,S\coloneqq\ &           \tag*{\textbf{Shape Type}}\\
			&\top                    \tag*{top}\\
			&X                    \tag*{type variable}\\
			&\forall(\alpha\,x: T)U \,\mid\, \forall[X<:S]U \,\mid\, \forall[c<:B]U             \tag*{functions}\\
			&\REF[T]                       \tag*{reference}\\
			&\TBOOL\,\mid\,\TUNIT            \tag*{constant}\\
			\G,\,\Delta\coloneqq\ &\emptyset\,\mid\,\G, x: T\,\mid\,\G, X<:S\,\mid\,\G, \alpha\,c<:B                    \tag*{\textbf{Context}}\\
		\end{flalign*}
		\end{multicols}}
	\vspace{-3em}
	\caption{Abstract syntax of System \capybara{}.}\label{fig:syntax}
\end{wide-rules}
\begin{figure*}[htbp]
\rulefigsetup

\judgheader{Subcapturing}{\subs{\G}{C_1}{C_2}}

\begin{multicols}{3}

\infrule[\ruledef{sc-trans}]
{\subs{\G}{C_1}{C_2}\andalso
  \subs{\G}{C_2}{C_3}}
{\subs{\G}{C_1}{C_3}}

\infrule[\ruledef{sc-elem}]
{C_1\subseteq C_2}
{\subs{\G}{C_1}{C_2}}

\infrule[\ruledef{sc-union}]
{\subs{\G}{C_1}{C}\andalso
  \subs{\G}{C_2}{C}}
{\subs{\G}{C_1\cup C_2}{C}}

\infrule[\ruledef{sc-var}]
{x:m\,S\capt C\in\G}
{\subs{\G}{\set{x}}{C}}

\infrule[\ruledef{sc-mode}]
{m_1\preceq m_2}
{\subs{\G}{m_1\,C}{m_2\,C}}

\infrule[\ruledef{sc-ro-mono}]
{\subs{\G}{C_1}{C_2}}
{\subs{\G}{\RO\,C_1}{\RO\,C_2}}

\end{multicols}

\judgheader[Subcapturing extended to capture bounds plus the following rules:]{Subbounding}{\subs{\G}{B_1}{B_2}}

\begin{multicols}{2}

\infrule[\ruledef{sb-mode}]
{m_1\preceq m_2}
{\subs{\G}{m_1}{m_2}}

\infrule[\ruledef{sb-kind}]
{\typs{\G}{C}{m}}
{\subs{\G}{C}{m}}

\end{multicols}

\judgheader{Typing}{\typ{C}{\G}{t}{T}}

\begin{multicols}{2}

\infrule[\ruledef{var}]
{x: m\,S\capt C\in \G}
{\typ{\set{x}}{\G}{x}{m\,S\capt \set{x}}}

\infrule[\ruledef{readonly}]
{x: S\capt C\in \G}
{\typ{\set{\RO\,x}}{\G}{x}{\RO\,S\capt \set{\RO\,x}}}

\infrule[\ruledef{fresh}]
{x: T[\FRESH\leadsto D_1,\ldots,D_n]\in \G\andalso
 D=\textstyle\bigcup_{i=1}^{n} D_i\\
 \accessonly{\G}{D_1,\ldots,D_n}\andalso
 \consumable{\G}{D_1,\ldots,D_n}\\
 \disj{\G}{D_i}{D_j}\ \text{for}\ i\neq j}
{\typ{(D\cup\CONSUME\,D)}{\G}{x}{T}}

\infrule[\ruledef{sub}]
{\typ{C}{\G}{t}{T}\andalso
 \wf{\G}{C',T'}\\
 \subs{\G}{T}{T'}\andalso
 \subs{\G}{C}{C'}}
{\typ{C'}{\G}{t}{T'}}

\infrule[\ruledef{app}]
{\typ{\set{x}}{\G}{x}{(\forall(z: T) U)\capt C}\\
 \typ{\set{y}}{\G}{y}{T[\ANY\leadsto D]}\andalso
 \accessonly{\G}{D}\\
 \sep{\G}{D}{\vert(\forall(z:T)U)\capt C\vert}}
{\typ{\set{x,y}}{\G}{x\,y}{[z:=y]U}}

\infrule[\ruledef{consume-app}]
{\typ{\set{x}}{\G}{x}{(\forall(\CONSUME\,z: T) U)\capt C}\\
 \typ{\set{y}}{\G}{y}{T[\ANY\leadsto D]}\andalso
 \accessonly{\G}{D}\\
 \seqcomp{\G}{\CONSUME\,D}{\set{x}}\andalso
 \consumable{\G}{D}}
{\typ{\set{x,y}\cup D\cup\CONSUME\,D}{\G}{x\,y}{[z:=y]U}}

\infrule[\ruledef{tabs}]
{\typ{C}{(\G, X<:S)}{t}{U}\andalso\wf{\G}{S}}
{\typ{\set{}}{\G}{\lambda[X<:S]t}{(\forall[X<:S] U)\capt C}}

\infrule[\ruledef{tapp}]
{\typ{C'}{\G}{x}{(\forall[X<:S] U)\capt C}\andalso
 \subs{\G}{S'}{S}}
{\typ{C'}{\G}{x[S']}{[X:=S']U}}

\infrule[\ruledef{cabs}]
{\typ{(C\cup\set{c})}{(\G, c<:B)}{t}{U}\andalso\wf{\G}{C}}
{\typ{\set{}}{\G}{\lambda[c<:B]t}{(\forall[c<:B] U)\capt C}}

\infrule[\ruledef{capp}]
{\typ{C'}{\G}{x}{(\forall[c<:D] U)\capt C}\andalso
 \accessonly{\G}{D}\\
 \sep{\G}{D}{\vert(\forall[c<:D]U)\capt C\vert}}
{\typ{C'\cup D}{\G}{x[D]}{[c:=D]U}}

\end{multicols}

\infrule[\ruledef{abs}]
{\typ{(C\cup\set{x}\cup D)}{(\G,\alpha\,c<:\epsilon,\,x: T[\ANY\leadsto\set{c}])}{t}{U}\andalso
 \wf{\G}{C, T}\\
 D = \set{c,\CONSUME\,c}\quad \text{if $\alpha=\CONSUME$ otherwise $\set{}$}}
{\typ{\set{}}{\G}{\lambda(\alpha\,x: T)t}{(\forall(\alpha\,x: T) U)\capt C}}

\infrule[\ruledef{let}]
{\typ{C_1}{\G}{t}{T}\andalso
 \seqcomp{\G}{C_1}{C_2}\andalso
 \wf{\G}{C_2,U}\andalso
 D = \set{c_1,\ldots,c_n}\\
 \typ{C_2\cup D\cup\CONSUME\,D}{(\G,\CONSUME\,c_1,\ldots,\CONSUME\,c_n,\, x: T[\FRESH\leadsto\set{c_1},\ldots,\set{c_n}])}{u}{U}}
{\typ{(C_1\cup C_2)}{\G}{\LET x = t\IN u}{U}}

\judgheader{Subtyping}{\subs{\G}{T_1}{T_2}}

\begin{multicols}{3}

\infax[\ruledef{top}]
{\subs{\G}{S}{\top}}

\infax[\ruledef{refl}]
{\subs{\G}{T}{T}}

\infrule[\ruledef{trans}]
{\subs{\G}{T_1}{T_2}\andalso \subs{\G}{T_2}{T_3}}
{\subs{\G}{T_1}{T_3}}

\infrule[\ruledef{tvar}]
{X<:S\in\G}
{\subs{\G}{X}{S}}

\infrule[\ruledef{capt}]
{\subs{\G}{S_1}{S_2}\andalso\subs{\G}{C_1}{C_2}}
{\subs{\G}{m\,S_1\capt C_1}{m\,S_2\capt C_2}}

\infrule[\ruledef{fun}]
{\subs{(\G, \alpha\,x: T)}{U_1}{U_2}}
{\subs{\G}{\forall(\alpha\,x: T) U_1}{\forall(\alpha\,x: T) U_2}}

\infrule[\ruledef{tfun}]
{\subs{(\G, X<:S_2)}{U_1}{U_2}\andalso\subs{\G}{S_2}{S_1}}
{\subs{\G}{\forall[X<:S_1] U_1}{\forall[X<:S_2] U_2}}

\infrule[\ruledef{cfun}]
{\subs{\G}{B_2}{B_1}\andalso\subs{(\G, c<:B_2)}{U_1}{U_2}}
{\subs{\G}{\forall[c<:B_1] U_1}{\forall[c<:B_2] U_2}}

\end{multicols}

\vspace{-1.5em}
\caption{Main typing rules of System \capybara{}.}\label{fig:all-typing}

\end{figure*}
\begin{figure*}[htbp]
\rulefigsetup

\judgheader{Capability Kinding}{\typs{\G}{C}{m}}
\begin{multicols}{3}

\infax[\ruledef{k-empty}]
{\typs{\G}{\set{}}{m}}

\infrule[\ruledef{k-union}]
{\typs{\G}{C_1}{m}\andalso
 \typs{\G}{C_2}{m}}
{\typs{\G}{C_1\cup C_2}{m}}

\infrule[\ruledef{k-sc}]
{\subs{\G}{C_1}{C_2}\andalso
 \typs{\G}{C_2}{m}}
{\typs{\G}{C_1}{m}}

\infax[\ruledef{k-rw}]
{\typs{\G}{C}{\epsilon}}

\infax[\ruledef{k-ro}]
{\typs{\G}{\RO\,C}{\RO}}

\infrule[\ruledef{k-imm}]
{c<:\RO\in\G}
{\typs{\G}{\set{c}}{\RO}}

\end{multicols}

\judgheader{Separation}{\sep{\G}{C_1}{C_2}}
{\newcommand{\seprulecell}[2]{\parbox[c]{#1\linewidth}{#2}}
\noindent
\begin{tabular}{@{}c@{\hspace{0.01\linewidth}}c@{\hspace{0.01\linewidth}}c@{}}
\seprulecell{0.34}{
\infrule[\ruledef{sep-symm}]
{\sep{\G}{C_1}{C_2}}
{\sep{\G}{C_2}{C_1}}
}
&
\seprulecell{0.30}{
\infax[\ruledef{sep-empty}]
{\sep{\G}{\set{}}{C}}
}
&
\seprulecell{0.34}{
\infrule[\ruledef{sep-sc}]
{\sep{\G}{C_1}{C_2}\andalso
 \subs{\G}{C_1'}{C_1}}
{\sep{\G}{C_1'}{C_2}}
}
\\
\seprulecell{0.34}{
\infrule[\ruledef{sep-union}]
{\sep{\G}{C_1}{C}\andalso
 \sep{\G}{C_2}{C}}
{\sep{\G}{C_1\cup C_2}{C}}
}
&
\seprulecell{0.30}{
\infrule[\ruledef{sep-ro}]
{\typs{\G}{C_1}{\RO}\andalso
 \typs{\G}{C_2}{\RO}}
{\sep{\G}{C_1}{C_2}}
}
&
\seprulecell{0.34}{
\infrule[\ruledef{sep-root}]
{c_1<:B_1\in\G\quad c_2<:B_2\in\G\\ c_1\neq c_2}
{\sep{\G}{\set{\mu_1\,c_1}}{\set{\mu_2\,c_2}}}
}
\end{tabular}}

\judgheader[The same rules as separation, dropping \ruleref{sep-ro}.]{Disjointness}{\disj{\G}{C_1}{C_2}}

\judgheader{Sequential Composition}{\seqcomp{\G}{C_1}{C_2}}
\begin{multicols}{2}

\infrule[\ruledef{seq-sc}]
{\subs{\G}{C_1}{C_2}\andalso
 \seqcomp{\G}{C_2}{C_3}}
{\seqcomp{\G}{C_1}{C_3}}

\infrule[\ruledef{seq-union}]
{\seqcomp{\G}{C_1}{C_3}\andalso
 \seqcomp{\G}{C_2}{C_3}}
{\seqcomp{\G}{(C_1\cup C_2)}{C_3}}

\infrule[\ruledef{seq-access-only}]
{\accessonly{\G}{C_1}}
{\seqcomp{\G}{C_1}{C_2}}

\infrule[\ruledef{seq-sep}]
{\sep{\G}{C_1}{C_2}}
{\seqcomp{\G}{C_1}{C_2}}
  
\end{multicols}

\judgheader{Access-Only}{\accessonly{\G}{C}}
\begin{multicols}{4}

\infrule[\ruledef{ao-sc}]
{\subs{\G}{C_1}{C_2}\\
 \accessonly{\G}{C_2}}
{\accessonly{\G}{C_1}}

\infrule[\ruledef{ao-union}]
{\accessonly{\G}{C_1}\\
 \accessonly{\G}{C_2}}
{\accessonly{\G}{C_1\cup C_2}}

\infax[\ruledef{ao-empty}]
{\\ \accessonly{\G}{\set{}}}

\infax[\ruledef{ao-access}]
{\\ \accessonly{\G}{\set{m\,c}}}

\end{multicols}

\judgheader{Consumable}{\consumable{\G}{C}}
\begin{multicols}{4}

\infrule[\ruledef{con-sc}]
{\subs{\G}{C_1}{C_2}\\
 \consumable{\G}{C_2}}
{\consumable{\G}{C_1}}

\infrule[\ruledef{con-union}]
{\consumable{\G}{C_1}\\
 \consumable{\G}{C_2}}
{\consumable{\G}{C_1\cup C_2}}

\infax[\ruledef{con-empty}]
{\\ \consumable{\G}{\set{}}}

\infrule[\ruledef{con-root}]
{\CONSUME\,c<:B\in\G}
{\consumable{\G}{\set{m\,c}}}

\end{multicols}

\vspace{-1.5em}
\caption{Typing rules for capabilities.}\label{fig:capture-typing}

\end{figure*}
 
\subsection{Syntax}\label{sec:formal:syntax}

\capybara{} is a variant of \ccformal{}, the calculus of capturing
types~\cite{DBLP:journals/toplas/BoruchGruszeckiOLLB23}, extended
with mutability permissions, consumption, and separation.
Like \ccformal{}, its terms are in \emph{monadic normal form (MNF)}:
applications operate on variables, 
and intermediate results are
named by let bindings.
The normal form matters for application, 
whose result type may mention the parameter.
When the argument is a variable $y$, 
the result type is the exact substitution $[x:=y]U$.
MNF induces no loss of expressiveness, since a general application $t_1\,t_2$ can
always be written
$\LET x_1 = t_1 \IN \LET x_2 = t_2 \IN x_1\,x_2$.

A term is a variable, a value, an application, a type or capture application,
a memory primitive, a parallel composition, a conditional,
or a let-binding.
A value is a term lambda, a type lambda, a capture lambda, or a
constant.
The memory primitives and parallel composition are typed in
\Cref{sec:formal:state}.
The parameter of a term lambda carries a \emph{consume mode} $\alpha$,
recording whether the argument may only be accessed ($\epsilon$)
or also consumed ($\CONSUME$).
\Cref{sec:formal:consume} formalizes consume.

A capturing type $m\,S\capt C$ consists of
a \emph{permission} $m$ qualifying the type,
a shape type $S$,
and a capture set $C$;
we omit the default permission $\epsilon$ and write $S\capt C$.
The permission constrains what the value may be used for:
by default a value is read-write ($\epsilon$),
and the read-only permission $\RO$ restricts it to non-mutating
operations (recall \Cref{sec:informal:readonly}'s \lstinline|ro Buffer^{...}|).
Mutabilities are ordered by a reflexive relation $\preceq$,
generated by $\RO\preceq\epsilon$:
read-only is the weaker of the two.
The capture set over-approximates
the capabilities a value of this type may use
and at what access modes:
each element $\mu\,\theta$ pairs a capture $\theta$ with an access
mode $\mu$.
A mutability marks an access to a capability: $\epsilon\,\theta$
uses it read-write, $\RO\,\theta$ read-only.
$\CONSUME\,\theta$ instead marks that the capability is \emph{consumed}.

Shape types include the top type,
type variables,
the three function types
(term, type, and capture functions),
reference types $\REF[T]$,
and the base types $\TBOOL$ and $\TUNIT$.
A capture bound $B$, 
used to bound a capture parameter, 
is either a concrete capture set or a mutability.
For instance, $\lambda[c<:\RO]t$ bounds $c$
by capabilities with at most read-only access.
Capability kinding (presented shortly) makes precise which capture
sets a mutability bound accepts.

\subsection{Typing, Subtyping and Subcapturing}\label{sec:formal:typing}

\rwpar{Subcapturing}
Subcapturing $\subs{\G}{C_1}{C_2}$ is a preorder.
It subsumes set inclusion (\ruleref{sc-union} and \ruleref{sc-elem}).
\ruleref{sc-var} resolves a variable to the capture set it was declared with:
$x:m\,S\capt C\in\G$ gives
$\subs{\G}{\set{x}}{C}$.
\ruleref{sc-mode} and \ruleref{sc-ro-mono} concern \emph{qualified} capture
sets.
Qualifying $\mu\,C$ restamps every element of $C$ with the access mode $\mu$:
\begin{align*}
\epsilon\,C &= C
& \RO\,C &= \set{\RO\,\theta \mid \mu\,\theta\in C}
& \CONSUME\,C &= \set{\CONSUME\,\theta \mid \mu\,\theta\in C}.
\end{align*}
\ruleref{sc-mode} lets a capture set be qualified down to a weaker mode,
since $\RO\preceq\epsilon$; \ruleref{sc-ro-mono} extends this to a
congruence under qualifying.

\rwpar{Subbounding}
Subbounding $\subs{\G}{B_1}{B_2}$ extends subcapturing to capture bounds $B$.
\ruleref{sb-mode} orders mutability bounds the same way as mutabilities;
\ruleref{sb-kind} admits a capture set below a mutability bound $m$ once it
carries kind $m$, a judgment defined at the end of this subsection.
This is how $c$ in $\lambda[c<:\RO]t$ can be instantiated by any
read-only capture set.

\rwpar{Typing}
The judgment $\typ{C}{\G}{t}{T}$ states that $t$ evaluates to a value of
type $T$ with \emph{use set} $C$, the set of capabilities exercised during
the evaluation of $t$.
\ruleref{var} types $x$ at $\set{x}$, 
refining its
declared capture set to the singleton $\set{x}$ and keeping its permission.
\ruleref{readonly} types the same occurrence $x$ at a read-only view instead,
charging $\set{\RO\,x}$.
This formalizes the \lstinline|.rd| suffix of \Cref{sec:informal:readonly},
which is expository, not concrete syntax.
\ruleref{sub} is ordinary subsumption, weakening the use set and type subject
to target well-formedness $\wf{\G}{\cdot}$.

\ruleref{abs} and \ruleref{app} give the $\ANY$ mechanism of
\Cref{sec:informal} its formal content.
Both rest on \emph{root capability instantiation}.
A root capability occurs as an element $\mu\,\ANY$ (or $\mu\,\FRESH$) of a capture set inside a type. 
Instantiating an occurrence with a capture set $D$
replaces the element by $D$ qualified at its mode:
\[
C[\ANY\leadsto D] = (C\setminus\set{\mu\,\ANY})\cup\mu\,D
\qquad\text{where }\mu\,\ANY\in C,
\]
descending compositionally through all other capture sets and type formers.
$T[\ANY\leadsto D_1,\ldots,D_n]$ instantiates the $n$ occurrences of $\ANY$
in $T$, read left to right, with $D_1,\ldots,D_n$ respectively.
The single-set form $T[\ANY\leadsto D]$ gives every occurrence the same $D$.
$\FRESH$-instantiation is defined the same way.
Intuitively, the two root capabilities stand for capture
quantification, $\ANY$ universal and instantiated by the caller,
$\FRESH$ existential and witnessed by the function body;
the translation to \corecapybara{} makes this reading literal
(\Cref{app:translation}).

Abstracting $\lambda(\alpha\,x:T)t$ binds a fresh capture variable $c$,
mutability-bounded ($\alpha\,c<:\epsilon$), 
and checks the body against $T$
with every $\ANY$ in it instantiated to $\set{c}$, 
written $T[\ANY\leadsto\set{c}]$.
At a call $x\,y$, \ruleref{app} requires $y$'s type to have the form
$T[\ANY\leadsto D]$ for some access-only $D$ separated from the callee's
spine capture set $|(\forall(z:T)U)\capt C|$;
\Cref{sec:formal:consume} defines both judgments.

\ruleref{fresh} gives the $\FRESH$ of \Cref{sec:informal:consume} its
formal content: it widens concrete capture sets to $\FRESH$.
A variable $x$ declared at $T[\FRESH\leadsto D_1,\ldots,D_n]$, an
instance of the $\FRESH$-carrying type $T$ with concrete witnesses
$D_1,\ldots,D_n$, can be typed at $T$ itself.
This is how \lstinline|merge| returns one of its arguments at
\lstinline|Buffer^{fresh}|.
The widening consumes the witnesses: each $D_i$ must be access-only
and consumable, and the rule charges $D\cup\CONSUME\,D$ for
$D=D_1\cup\cdots\cup D_n$.
The witnesses must moreover be pairwise \emph{disjoint}
(\Cref{sec:formal:consume}),
so that distinct $\FRESH$es never overlap.

\ruleref{let} types let bindings and opens $\FRESH$-carrying types such as the
\lstinline|Buffer^{fresh}| of \Cref{sec:informal:consume}.
A binding $\LET x=t\IN u$ binds $x$ at
$T[\FRESH\leadsto\set{c_1},\ldots,\set{c_n}]$ for new consumable capture
variables $c_1,\ldots,c_n$, one per $\FRESH$ that $t$'s declared type
$T$ carries.
Distinct $\FRESH$es thus resolve to distinct roots: each names its own
freshly created resource, which the continuation may access and consume
independently of the others.
The rule also sequences the use of $t$ before that of $u$
($\seqcomp{\G}{C_1}{C_2}$, \Cref{sec:formal:consume}), which enforces
the consume semantics: nothing $t$ consumes can be used in $u$.

\ruleref{tabs}, \ruleref{tapp}, \ruleref{cabs}, and \ruleref{capp} abstract
and apply over type and capture parameters.
\ruleref{cabs} is analogous to \ruleref{abs}, abstracting over a capture
parameter $c$ bounded by $B$; \ruleref{capp} mirrors \ruleref{app}, requiring
the instantiating capture set $D$ to be separated from the callee's own
footprint and charging $D$ alongside the callee.

Note that the abstraction rules have the empty use set:
a lambda is already a value, so evaluating it uses no capabilities,
and the body's use set becomes the capture annotation of the function
type.
This makes the types of curried functions precise.
Assume a capability $\textsf{a}$, invoked by applying
it to unit, and consider
$\lambda(z_1{:}\TUNIT)\,\lambda(z_2{:}\TUNIT)\,\textsf{a}\,z_2$.
The inner body uses $\set{\textsf{a}}$, so the term has the capturing
type
\[
\forall(z_1{:}\TUNIT)\,(\forall(z_2{:}\TUNIT)\,\TUNIT)\capt\set{\textsf{a}},
\]
whose outer function is pure: the use of $\textsf{a}$ is charged to
the arrow whose application performs it.

The same precision means the outermost capture set of a function type
understates what calling the function can do: the type above is pure
at top level, yet applying the function and then its result exercises
$\textsf{a}$.
The footprint that \ruleref{app} and \ruleref{capp} check separation
against is therefore the \emph{spine capture set} $|T|$,
the capabilities $T$ mentions along its spine:
\begin{align*}
|\top| = |X| = |\TBOOL| = |\TUNIT| &= \set{} & |\forall[X<:S']U| &= |S'|\cup|U|\\
|m\,S\capt C| &= \mfree{C}\cup|S| & |\forall[c<:B]U| &= |U|\setminus c\\
|\forall(x:m\,S\capt C)U| &= \mfree{C}\cup(|U|\setminus x) & |\REF[T]| &= |T|
\end{align*}
where $\mfree{C}$ removes the root capabilities $\ANY$ and $\FRESH$
from $C$.
The recursion walks the spine of nested arrows, type bounds, and
reference contents,
unioning the capture set at each domain and codomain,
so $|T|$ collects every capability a value of type $T$ can eventually
use, not only those its outermost capture set names.
Two kinds of elements are dropped along the way:
the variables that nested binders introduce
and the root capabilities.
\ruleref{abs} and \ruleref{cabs} type a function body treating its
parameter as separate from the rest of the function.
Checking the argument against the whole spine is what makes that
assumption good at every later application.

\rwpar{Subtyping}
Subtyping $\subs{\G}{T_1}{T_2}$ has the usual reflexive-transitive shell
\ruleref{refl}, \ruleref{trans}, a type variable rule \ruleref{tvar},
and one congruence per shape \ruleref{fun}, \ruleref{tfun}, \ruleref{cfun}.
Type and capture functions are contravariant in their bounds
and covariant in their bodies under the extended context.
Term functions are \emph{invariant} in their domains:
a function body treats its parameter's capture set as a
separation assumption (\Cref{sec:formal:consume}),
and narrowing the domain by subtyping would strengthen that assumption behind the function's back.
Call sites lose no flexibility, since \ruleref{sub} still adapts an argument
to the declared domain, and a function can always be $\eta$-expanded at a
narrower domain.
\ruleref{capt} lifts shape subtyping and subcapturing to capturing types,
$\subs{\G}{m\,S_1\capt C_1}{m\,S_2\capt C_2}$ from $\subs{\G}{S_1}{S_2}$ and
$\subs{\G}{C_1}{C_2}$, but keeps the permission $m$ fixed on both sides.
Consequently, subtyping alone can never turn a read-write value into a
read-only one; only \ruleref{readonly}, applied to a variable occurrence,
weakens a permission, and only \ruleref{sc-mode} weakens an access mode
inside a capture set.

\rwpar{Capability kinding}
The judgment $\typs{\G}{C}{m}$ certifies that 
the capabilities in $C$ are used at most at mutability $m$: kind $\RO$ means $C$ contains only read-only uses, 
while kind $\epsilon$ also admits read-write uses 
and thus holds of every capture set \ruleref{k-rw}.
A set qualified read-only has kind $\RO$ \ruleref{k-ro},
and a capture variable declared with a mutability bound inherits that bound
as its kind \ruleref{k-imm}, so $c<:\RO\in\G$ gives $\set{c}:\RO$.
Kinding is closed under union \ruleref{k-union} and inherited along
subcapturing \ruleref{k-sc}: a set below one of kind $m$ itself has kind
$m$.

\subsection{Separation and Consume}\label{sec:formal:consume}

\rwpar{Separation}
$\sep{\G}{C_1}{C_2}$ certifies that two capture sets do not interfere.
The check traces both sets back to their roots
and compares the roots pairwise.
The tracing is done by \ruleref{sep-sc}: to separate a set,
it suffices to separate a set above it in subcapturing,
in particular the roots it resolves to.
\ruleref{sep-symm} makes the rule available on both sides.
\ruleref{sep-union} and \ruleref{sep-empty} then decompose
the root sets into pairs of individual roots.
Two leaves close the derivation.
\ruleref{sep-root} separates any two distinct roots:
roots are the basic carriers of separation guarantees.
\ruleref{sep-ro} separates any two access-only sets of kind $\RO$,
so capabilities may overlap freely at read-only mode.

\rwpar{Disjointness}
$\disj{\G}{C_1}{C_2}$ is the fragment of separation without
\ruleref{sep-ro}.
It admits no read-only sharing:
two read-only views of the same capability are separate but not
disjoint.
Disjoint sets thus hold genuinely distinct capabilities,
and \ruleref{fresh} uses it to keep its witnesses apart.

\rwpar{Sequential composition}
$\seqcomp{\G}{C_1}{C_2}$ asks whether a computation using $C_1$ can safely
run before one using $C_2$.
This judgment is where the static semantics enforces consume:
reads and writes end with the first computation,
but a consumed capability stays gone,
and nothing that runs later may use it.
\ruleref{let} checks the judgment between a bound term and its
continuation, \ruleref{consume-app} between consuming the argument and
the call itself.
The check follows the same scheme as separation:
\ruleref{seq-sc} traces $C_1$ back to its roots,
\ruleref{seq-union} splits them, and two leaves decide.
A part that consumes nothing is harmless \ruleref{seq-access-only};
a part that consumes must be separated from $C_2$,
keeping what it consumes unusable in the computation that follows
\ruleref{seq-sep}.

\rwpar{Access-only and consumable}
$\accessonly{\G}{C}$ holds when every capture variable $C$ resolves to is
tagged with a mutability rather than $\CONSUME$ \ruleref{ao-access}, and is
closed under union and subcapturing.
$\consumable{\G}{C}$ holds when every capture variable $C$ resolves to is
declared with consume mode $\CONSUME$ \ruleref{con-root}.
The typing rules require access-only sets wherever they instantiate
$\ANY$, $\FRESH$, or a capture binder: the witness $D$ of \ruleref{app},
\ruleref{consume-app}, and \ruleref{capp}, and the witnesses of
\ruleref{fresh}.
A $\CONSUME$-qualified capability is more than a permission to access:
using it kills the capability.
Hiding one inside the instance of a capture variable would make the kill
invisible, since an occurrence of the capture variable reads as a plain access,
and nothing would record that using a set mentioning it may consume
capabilities.

\begin{figure}[tbp]
\small\centering
\setlength{\tabcolsep}{3.5pt}
\begin{tabular}{@{}lcl@{}}
\toprule
\textbf{Surface} (\Cref{sec:formal}) & &
\textbf{Core} (\Cref{sec:core}), with $\Psi = (\set{d_1},\set{d_2})$ \\
\midrule
$\textsf{rp} : \forall(op_1: X\capt\set{\ANY})$
 & $\longmapsto$ &
$\textsf{rp} : \forall[d_1<:\top]\,\forall(op_1: X\capt\set{d_1})$\\
$\phantom{\textsf{rp} : }\big(\forall(op_2: X\capt\set{\ANY})\,\TUNIT\big)\capt\set{op_1}$
 & &
$\phantom{\textsf{rp} : }\big(\forall[d_2<:\top]\,\forall(op_2: X\capt\set{d_2})\,([\Psi,\emptyset]\,\TUNIT)\capt\set{op_1,op_2}\big)\capt\set{op_1}$\\
\addlinespace[3pt]
$\textsf{alloc} : \forall(y:\TUNIT)\,(X\capt\set{\FRESH})$
 & $\longmapsto$ &
$\textsf{alloc} : \forall(y:\TUNIT)\,\EXCAP{c}\,X\capt\set{c}$\\
\midrule
$\LET f_1 = \textsf{alloc}\,()\IN$
 & $\longmapsto$ &
$\LET \<c_1,f_1\> = \textsf{alloc}\,()\IN$\\
$\LET f_2 = \textsf{alloc}\,()\IN$
 & $\longmapsto$ &
$\LET \<c_2,f_2\> = \textsf{alloc}\,()\IN$\\
$\LET g = \textsf{rp}\,f_1\IN$
 & $\longmapsto$ &
$\LET g_1 = \textsf{rp}[\set{f_1}]\IN\ \LET g_2 = g_1\,f_1\IN$\\
$g\,f_2$
 & $\longmapsto$ &
$\LET g_3 = g_2[\set{f_2}]\IN\ \LET g_4 = g_3\,f_2\IN\ \UNLOCK\,g_4$\\
\bottomrule
\end{tabular}
\caption{The running example, surface (left) against core (right),
aligned construct by construct.
$\textsf{rp}$ is the \lstinline|runParallel| pattern of
\Cref{sec:informal:separation} over an abstract resource shape $X<:\top$,
applied to two freshly allocated resources;
$\textsf{alloc}\,()$ abbreviates $\LET y = ()\IN \textsf{alloc}\,y$.
Each $\ANY$ becomes a capture parameter $d_i$, instantiated at the call;
each $\FRESH$ allocation returns an existential package, unpacked into a
distinct consumable root $c_i$;
the modal result $[\Psi,\emptyset]\,\TUNIT$ locks the result behind the
separation assumption $\Psi$, discharged by the final $\UNLOCK$.
\Cref{sec:formal:consume,sec:core:consume} give the surface and core
derivations; \Cref{app:tr:example} gives the translation.}
\label{fig:running-example}
\end{figure}
 
\rwpar{A worked example}
In the program of \Cref{fig:running-example} (left), each
$\textsf{alloc}$ result carries one $\FRESH$, so \ruleref{let} binds
$f_i : X\capt\set{c_i}$ with distinct consumable roots $c_1\neq c_2$.
At the first application $\textsf{rp}\,f_1$, \ruleref{app} instantiates
$\ANY$ with $D = \set{f_1}$, and the separation premise is trivial: the
spine capture set of $\textsf{rp}$'s type is empty.
The binding types $g$ at
$(\forall(op_2:X\capt\set{\ANY})\,\TUNIT)\capt\set{f_1}$:
the partial application has captured $f_1$.
The second application $g\,f_2$ is where separation is enforced:
$D = \set{f_2}$ meets the spine $\set{f_1}$, and
$\sep{\G}{\set{f_2}}{\set{f_1}}$ follows because \ruleref{sc-var}
resolves each $f_i$ to its root, \ruleref{sep-root} separates the
distinct roots, and \ruleref{sep-sc} carries the conclusion back.
Passing the same resource twice would instead demand
$\sep{\G}{\set{f}}{\set{f}}$, which no rule derives.
\Cref{sec:core} returns to this program in its core form
(\Cref{fig:running-example}, right); \Cref{app:tr:example} traces its
translation.

\subsection{Mutable State and Parallelism}\label{sec:formal:state}

\begin{figure*}[htbp]
\rulefigsetup

\begin{multicols}{2}

\infrule[\ruledef{alloc}]
{\typ{\set{x}}{\G}{x}{T}}
{\typ{\set{x}}{\G}{\ALLOC x}{\REF[T]\capt \set{\FRESH}}}

\infrule[\ruledef{read}]
{x: m\,\REF[T]\capt C\in \G}
{\typ{\set{\RO\,x}}{\G}{\READ x}{T}}

\infrule[\ruledef{write}]
{x: \REF[T]\capt C\in \G\andalso
 \typ{\set{y}}{\G}{y}{T}}
{\typ{\set{x}}{\G}{x:=y}{\TUNIT}}

\infrule[\ruledef{dealloc}]
{x: \REF[T]\capt C\in \G\andalso
 \consumable{\G}{\set{x}}}
{\typ{\set{\CONSUME\,x}}{\G}{\DEALLOC x}{\TUNIT}}

\infrule[\ruledef{par}]
{\sep{\G}{C_1}{C_2}\andalso
 \typ{C_1}{\G}{t_1}{T_1}\andalso
 \typ{C_2}{\G}{t_2}{T_2}}
{\typ{C_1\cup C_2}{\G}{\PAR t_1, t_2}{\TUNIT}}

\infrule[\ruledef{if}]
{\typ{\set{x}}{\G}{x}{\TBOOL}\andalso
 \typ{C_1}{\G}{t_1}{T}\andalso
 \typ{C_2}{\G}{t_2}{T}}
{\typ{\set{x}\cup C_1\cup C_2}{\G}{\IF x\THEN t_1\ELSE t_2}{T}}

\end{multicols}

\vspace{-1em}
\caption{Typing rules for mutable state, parallelism, and conditionals of System \capybara{}.}\label{fig:state-typing}

\end{figure*}
 
The calculus so far manages capabilities but has no concrete effects.
Mutable references, parallel composition, and conditionals give its
guarantees operational content (\Cref{fig:state-typing}).

A reference of type $\REF[T]\capt C$ is a mutable cell holding a $T$.
Allocation creates a fresh one:
\ruleref{alloc} types $\ALLOC x$ at $\REF[T]\capt\set{\FRESH}$, so
the receiving let binding names the new cell by a consumable capture
variable of its own \ruleref{let}.
Deallocation is the primitive consumer:
\ruleref{dealloc} requires $\set{x}$ consumable and charges
$\set{\CONSUME\,x}$, after which sequential composition keeps the
dead reference out of reach.
Reading is a read-only effect:
\ruleref{read} accepts a reference of either permission and charges
the read-only view $\set{\RO\,x}$.
Writing requires the read-write permission and charges $\set{x}$
\ruleref{write}.

The term $\PAR t_1, t_2$ runs its two branches concurrently.
\ruleref{par} requires the use sets of the two branches to be
separated: two computations may run in parallel exactly when
separation certifies that their capabilities do not interfere.
Conditionals are standard \ruleref{if}.
All these constructs carry over to the core calculus
(\Cref{sec:core}), whose dynamic semantics gives them operational
meaning.
\section{The Core Calculus \corecapybara{}}\label{sec:core}

\corecapybara{} is the target of the type-preserving translation
(\Cref{app:translation}) and the calculus for our metatheory
(\Cref{sec:metatheory}).
It is a variant of System \capless{}~\cite{whatisinthebox}, inheriting
existential capture types, and shares the surface calculus's monadic
normal form.
It makes surface contracts explicit: separation becomes modal types,
freshness existential capture types, and read-only views reader values.
We present only the constructs and judgments without surface counterparts;
\Cref{app:corecapybara} gives the complete definitions, which follow the
Lean mechanization.

\begin{wide-rules}\noindent
	\vspace{-2em}
	{\small\begin{multicols}{2}\noindent
		\begin{flalign*}
			s,\,t,\,u\coloneqq\ &\ldots           \tag*{\textbf{Term}}\\
			&x\,t                              \tag*{consumer app.}\\
			&\UNLOCK\,x                  \tag*{unlock}\\
			&\LET \<c_1,\ldots,c_n,x\> = t \IN u               \tag*{unpack}\\
			v\coloneqq\ &\ldots           \tag*{\textbf{Value}}\\
			&\lambda(\<c,x\>: \EXCAP{c}T)t                      \tag*{consumer lambda}\\
			&\RO\,x                      \tag*{reader}\\
			&\LOCK[\Psi,\Phi]t                      \tag*{modal}\\
			&\<C_1,\ldots,C_n,x\>                      \tag*{pack}\\
		\end{flalign*}
		\begin{flalign*}
			\alpha\coloneqq\ &\ldots\,\mid\,\KILLED           \tag*{\textbf{Consume Mode}}\\
			B\coloneqq\ &C\,\mid\,\top                      \tag*{\textbf{Capture Bound}}\\
			R,\,S\coloneqq\ &\ldots           \tag*{\textbf{Shape Type}}\\
			&\forall(\<c,x\>: \EXCAP{c}T)E             \tag*{consumer}\\
			&[\Psi,\Phi]E             \tag*{modal}\\
			E,\,F\coloneqq\ &T\,\mid\,\EXCAP{c_1,\ldots,c_n}T \tag*{\textbf{Existential Type}}\\
			\sepctx\coloneqq\ &C_1,\ldots,C_n        \tag*{\textbf{Separation Context}}\\
			\modectx\coloneqq\ &C_1:m_1,\ldots,C_n:m_n        \tag*{\textbf{Mode Context}}\\
			\G,\,\Delta\coloneqq\ &\ldots\,\mid\,\G, \LOCK[\sepctx,\modectx]                    \tag*{\textbf{Context}}\\
		\end{flalign*}
		\end{multicols}}
	\vspace{-3em}
	\caption{Syntax of System \corecapybara{}: the delta over \capybara{}
	(\Cref{fig:syntax}), with \ldots{} marking the carried-over
	productions. \Cref{app:corecapybara} presents the complete syntax.}\label{fig:core-syntax-delta}
	\vspace{-2em}
\end{wide-rules}
 
\subsection{Syntax and Judgments}
\Cref{fig:core-syntax-delta} gives the delta over \capybara{};
\Cref{fig:core-syntax} gives the complete syntax.
\corecapybara{} carries over the surface abstractions, memory primitives,
parallel composition, and conditionals, but replaces root capabilities with
existential packs and unpacks, modal locks and unlocks, and reader values.
A capture binding $\alpha\,c<:B$ is access-only ($\epsilon$), consumable
($\CONSUME$), or killed ($\KILLED$); killed bindings arise only during
typing.

The judgments of \Cref{sec:formal} carry over, with result types extended to
existentials.
Subcapturing also resolves access-only capture variables to their bounds
\rruleref{sc-cvar} and forgets read-only qualifiers
\rruleref{sc-ro}.
Below we give the changes to separation and the new satisfaction judgment
$\sat{\G}{[\Psi,\Phi]}$.

\subsection{Modals and Separation}

Separation $\sep{\G}{C_1}{C_2}$ concludes at three leaves:

{\inlinerulesetup
\threerulecolumns
{\infrule[\rruleuse{sep-ro}]
 {\typs{\G}{C_1}{\RO}\andalso
  \typs{\G}{C_2}{\RO}}
 {\sep{\G}{C_1}{C_2}}}
{\infrule[\rruleuse{sep-consumable}]
 {\CONSUME\,c_1<:B_1\in\G\\
  \CONSUME\,c_2<:B_2\in\G\\
  c_1\neq c_2}
 {\sep{\G}{\set{\mu_1\,c_1}}{\set{\mu_2\,c_2}}}}
{\infrule[\rruleuse{sep-lock}]
 {\LOCK[\Psi,\Phi]\in\G\\
  C_1,\,C_2\ \text{distinct entries of}\ \Psi}
 {\sep{\G}{C_1}{C_2}}}}

\noindent
\rruleref{sep-ro} separates two read-only sets, while
\rruleref{sep-consumable} separates distinct consumable capture variables
and replaces the surface \ruleref{sep-root}.
\rruleref{sep-lock} reads a modal lock's assumptions:
the entries of $\Psi$ are pairwise separate, while $(C:m)\in\Phi$ bounds
$C$'s kind by $m$.

{\inlinerulesetup
\tworulecolumns
{\infrule[\rruleuse{lock}]
 {\typ{C}{(\G, \LOCK[\Psi,\Phi])}{t}{E}}
 {\typ{\set{}}{\G}{\LOCK[\Psi,\Phi]\,t}{([\Psi,\Phi]E)\capt C}}}
{\infrule[\rruleuse{unlock}]
 {\typ{\set{}}{\G}{x}{([\Psi,\Phi]E)\capt \set{x}}\\
  \sat{\G}{[\Psi,\Phi]}}
 {\typ{\set{x}}{\G}{\UNLOCK\,x}{E}}}}

\noindent
\rruleref{lock} types its body under the recorded assumptions;
\rruleref{unlock} eliminates the modality when those assumptions are
satisfied, as defined by \rruleref{sat}:

{\inlinerulesetup
\infrule[\rruleuse{sat}]
{\typs{\G}{C}{m}\quad\text{for each}\ (C:m)\in\Phi\\
 \sep{\G}{C_1}{C_2}\quad\text{for each pair of distinct entries}\ C_1,\,C_2\ \text{of}\ \Psi}
{\sat{\G}{[\Psi,\Phi]}}}

\noindent
Locks make surface separation checks compositional.
A translated function assumes that its parameter is separate from the
function's footprint; each call discharges that assumption at an unlock,
which implements the separation premise of \ruleref{app}
(\Cref{app:translation}).

\subsection{Consume, Freshness, and Existentials}\label{sec:core:consume}

A consumable capture variable may be consumed once.
The \emph{kill} operation marks every consumable root of $C$ as killed:
\[
\G\ominus C \coloneqq
\G\big[\;\alpha\,c<:B \;\mapsto\; \KILLED\,c<:B
  \;\big|\; \CONSUME\,c\in\roots{\G}{C}\;\big],
\]
where $\roots{\G}{C}$ resolves the term variables of $C$ through their
capture sets until only capture variables remain (\Cref{app:corecapybara}).
Killed variables may be neither accessed nor consumed.

{\inlinerulesetup
\infrule[\rruleuse{let}]
{\seqcomp{\G}{C_1}{C_2}\andalso
 \typ{C_1}{\G}{t}{T}\andalso
 \typ{C_2}{((\G\ominus C_1), x: T)}{u}{E}}
{\typ{C_1\cup C_2}{\G}{\LET x = t\IN u}{E}}}

\noindent
\rruleref{let} therefore makes anything consumed by the bound term
unusable in its continuation.

Fresh results are existential packages,
introduced by packs and eliminated by unpacks:

{\inlinerulesetup
\infrule[\rruleuse{pack}]
{\textstyle D=\bigcup_{i=1}^{n} C_i\andalso
 \accessonly{\G}{D}\andalso
 \consumable{\G}{D}\andalso
 \disj{\G}{C_i}{C_j}\ \text{for}\ i\neq j\\
 \typ{\set{}}{\G}{x}{[c_1:=C_1,\ldots,c_n:=C_n]T}}
{\typ{D\cup\CONSUME\,D}{\G}{\<C_1,\ldots,C_n,x\>}{\EXCAP{c_1,\ldots,c_n}T}}

\infrule[\rruleuse{unpack}]
{\typ{C_1}{\G}{t}{\EXCAP{c_1,\ldots,c_n}T}\andalso
 \seqcomp{\G}{C_1}{C_2}\andalso
 \Psi_w = \set{c_1,\ldots,c_n}\,,\ C_2\andalso
 \Phi_w=\emptyset\\
 \G' = (\G\ominus C_1), \CONSUME\,c_1<:\top, \ldots, \CONSUME\,c_n<:\top,
   \LOCK[\Psi_w,\Phi_w], x: T\\
 \typ{C_2\cup\set{c_1,\ldots,c_n,\CONSUME\,c_1,\ldots,\CONSUME\,c_n}}{\G'}{u}{E}}
{\typ{C_1\cup C_2}{\G}{\LET \<c_1,\ldots,c_n,x\> = t\IN u}{E}}}

\noindent
\rruleref{pack} consumes pairwise-disjoint witnesses and hides them behind
existential capture variables,
the premise that \ruleref{fresh} already imposes on the surface.
The core disjointness judgment drops the lock leaf \rruleref{sep-lock},
so disjointness never rests on modal assumptions:
each witness is a resource of its own, not a shared view.
\rruleref{unpack} binds the witnesses as fresh consumable variables and
records the \emph{freshness lock} $[\Psi_w,\emptyset]$,
making the witnesses separate from the continuation's charge $C_2$
(\Cref{rem:tr:ownership}).
Accordingly, surface \ruleref{fresh} occurrences translate to packs and fresh
roots introduced by \ruleref{let} translate to unpacks
(\Cref{app:translation}).
A consumer lambda is a first-class unpacking function and the translated form
of a $\CONSUME$-mode function:

{\inlinerulesetup
\tworulecolumns
{\infrule[\rruleuse{consumer}]
 {\typ{C\cup\set{c,\CONSUME\,c}}{(\G, \CONSUME\,c<:\top, x: T)}{t}{E}}
 {\typ{\set{}}{\G}{\lambda(\<c,x\>: \EXCAP{c}T)t}{(\forall(\<c,x\>:\EXCAP{c}T) E)\capt C}}}
{\infrule[\rruleuse{consumer-app}]
 {\seqcomp{\G}{C_1}{\set{x}}\andalso
  \accessible{\G}{\set{x}}\\
  \typ{\set{}}{\G}{x}{(\forall(\<c,x\>: \EXCAP{c}T) E)\capt \set{x}}\andalso
  \typ{C_1}{\G}{t}{\EXCAP{c}T}}
 {\typ{C_1\cup\set{x}}{\G}{x\,t}{E}}}}

\noindent
\rruleref{consumer} binds the witness as a consumable root;
\rruleref{consumer-app} sequences argument consumption before access to the
consumer, preventing the former from disabling the latter.

\rwpar{Example}
In the core form of the running example
(\Cref{fig:running-example}, right),
the body of $\textsf{rp}$ is typed under $\LOCK[\Psi,\emptyset]$,
where \rruleref{sep-lock} supplies the separation of $\set{d_1}$ from
$\set{d_2}$ that the surface program assumes.
The two unpacks bind $f_i : X\capt\set{c_i}$ with distinct consumable
roots, and the applications instantiate $d_1:=\set{f_1}$ and
$d_2:=\set{f_2}$, leaving
$g_4 : ([\Psi',\emptyset]\,\TUNIT)\capt\set{f_1,f_2}$
with the concrete lock $\Psi' = (\set{f_1},\set{f_2})$.
The final \rruleref{unlock} demands its satisfaction,
$\sep{\G}{\set{f_1}}{\set{f_2}}$:
\ruleref{sep-sc} resolves each $f_i$ to its root,
and \rruleref{sep-consumable} separates the two.
Had both arguments been the same resource, the entries of $\Psi'$ would
share a root, and unlocking would be underivable.

\subsection{References and Readers}

Memory primitives keep their surface typing except where explicit core
constructs take over.
Allocation returns the fresh reference as a one-variable package:

{\inlinerulesetup
\infrule[\rruleuse{alloc}]
{\typ{\set{}}{\G}{x}{T}}
{\typ{\set{}}{\G}{\ALLOC x}{\EXCAP{c}\REF[T]\capt \set{c}}}}

\noindent
Unpacking introduces a consumable capture variable; read-only views become
reader values:

{\inlinerulesetup
\tworulecolumns
{\infrule[\rruleuse{reader}]
 {x: \REF[T]\capt C\in \G}
 {\typ{\set{}}{\G}{\RO\,x}{\RO\,\REF[T]\capt\set{\RO\,x}}}}
{\infrule[\rruleuse{read}]
 {\accessible{\G}{\set{x}}\\
  \typ{\set{}}{\G}{x}{\RO\,\REF[T]\capt C}}
 {\typ{\set{x}}{\G}{\READ x}{T}}}}

\noindent
\rruleref{reader} translates the surface \ruleref{readonly} view.
The other constructs carry over, but memory operations require accessible,
hence non-killed, references (\Cref{app:corecapybara}).

\subsection{Dynamic Semantics}

Evaluation runs a term against a \emph{heap},
a finite map from locations to cells:
\[
H(l) \;::=\; v \;\mid\; \refcell{l'} \;\mid\; \deadcell{l'},
\]
A cell stores a value, a live reference to $l'$, or a dead reference left by
deallocation.
At runtime, term variables additionally range over heap locations:
monadic normal form lifts every intermediate value into a heap cell of its
own (rule \rruleref{s-lift} in \Cref{app:corecapybara}),
so a location acts as the runtime name of a value
and reference cells store locations.
A step from $\cfg{H}{t}$ emits a trace of its accesses:
\[
\tau \;::=\; \cdot \;\mid\; e;\,\tau,
\qquad\qquad
e \;::=\; \RO\,l \;\mid\; \epsilon\,l \;\mid\; \ALLOC l \;\mid\; \DEALLOC l.
\]
The events $\RO\,l$, $\epsilon\,l$, $\ALLOC l$, and $\DEALLOC l$ record reads,
writes, allocation, and deallocation, respectively; only memory primitives
emit them:

{\inlinerulesetup
\begin{multicols}{2}

\infrule[\rruleuse{s-read}]
{H(l) = \RO\,l_c\andalso H(l_c) = \refcell{n}}
{\cfg{H}{\READ l} \stepsto{\RO\,l_c} \cfg{H}{n}}

\infrule[\rruleuse{s-alloc}]
{l_x\in\dom H\andalso l\notin\dom H}
{\cfg{H}{\ALLOC l_x} \stepsto{\ALLOC l} \cfg{H[l\mapsto\refcell{l_x}]}{\<\set{l},l\>}}

\infrule[\rruleuse{s-write}]
{H(l_x) = \refcell{n_0}\andalso l_y\in\dom H}
{\cfg{H}{l_x := l_y} \stepsto{\epsilon\,l_x} \cfg{H[l_x\mapsto\refcell{l_y}]}{()}}

\infrule[\rruleuse{s-drop}]
{H(l_x) = \refcell{n}}
{\cfg{H}{\DEALLOC l_x} \stepsto{\DEALLOC l_x} \cfg{H[l_x\mapsto\deadcell{n}]}{()}}

\end{multicols}}

\noindent
Access and deallocation require a live cell.
\rruleref{s-drop} leaves a dead cell that matches no such premise, making
use-after-free and double-free stuck.

With the location-only branch \emph{footprints} $C_1$ and $C_2$ from
\rruleref{par}, a left branch steps as follows:

{\inlinerulesetup
\infrule[\rruleuse{s-par-l}]
{\cfg{H}{t_1} \stepsto{\tau} \cfg{H'}{t_1'}\andalso
 \tauth{\tau}{C_1}\andalso
 \nintf{C_1}{C_2}}
{\cfg{H}{\PAR[C_1, C_2]\, t_1, t_2} \stepsto{\tau} \cfg{H'}{\PAR[C_1', C_2]\, t_1', t_2}}}

\noindent
Its trace must be \emph{authorized} by the branch's footprint:
\[
\begin{array}{lcl}
\tauth{\cdot}{C} &\text{iff}& \text{true}\\
\tauth{(\mu\,l)\,\tau}{C} &\text{iff}& \mu\preceq\mu'\ \text{for some}\ \mu'\,l\in C,\ \text{and}\ \tauth{\tau}{C}\\
\tauth{(\ALLOC l)\,\tau}{C} &\text{iff}& \tauth{\tau}{C\cup\set{\epsilon\,l,\CONSUME\,l}}\\
\tauth{(\DEALLOC l)\,\tau}{C} &\text{iff}& \CONSUME\,l\in C\ \text{and}\ \tauth{\tau}{C}
\end{array}
\]
Thus accesses require a sufficient mode, deallocation requires
$\CONSUME$, and allocation grants full rights to the new cell.
The branch footprints must also be \emph{non-interfering}:
\[
\nintf{C_1}{C_2}
\quad\text{iff}\quad
l_1\neq l_2\ \text{or}\ \mu_1=\mu_2=\RO
\quad\text{for all}\ \mu_1\,l_1\in C_1\ \text{and}\ \mu_2\,l_2\in C_2,
\]
so a shared location is read-only in both.
After a step, the footprint grows with newly allocated cells:
$C_1' = C_1\cup\set{\epsilon\,l,\CONSUME\,l\mid \ALLOC l\ \text{occurs in}\ \tau}$.
\rruleref{s-par-r} is symmetric,
and \rruleref{s-par-join} returns unit once both branches are answers.
Well-typed programs satisfy authorization and non-interference, the
conditions underlying confluence (\Cref{sec:metatheory}).

$\redsto{\tau}$ is the reflexive-transitive closure of small-step reduction.
Its sequential restriction $\seqredsto{\tau}$ schedules a right branch only
after the left reaches an answer.
\Cref{app:corecapybara} gives both the complete relation and its big-step
counterpart $\evalsto{\tau}$, which follows the same schedule.
Confluence (\Cref{thm:confluence}) transfers results from sequential to
arbitrary runs.
\section{Metatheory}\label{sec:metatheory}

We prove \corecapybara{} sound semantically
\cite{DBLP:journals/jacm/TimanyKDB24}: the fundamental theorem gives
type safety, memory safety, and immutability.
We then prove confluence of the reduction semantics, which gives
data-race freedom.
The core development is mechanized in Lean~4
\cite{DBLP:conf/cade/Moura021}.\footnote{The Lean~4 mechanization was
developed with AI assistance. It is used to do proof engineering work.}
We sketch the model and results, then give the type-preserving translation
that transfers them to \capybara{}.
\Cref{app:metatheory} contains the full model and its Lean
correspondence; \Cref{app:translation} proves the translation on paper.

\subsection{A Logical Model of \corecapybara{}}

\begin{figure*}[tbp]
\small

\judgheader{Footprint denotation}{\fdenot{C}^{H}}
\vspace{-1.4em}
\[
\begin{array}{c}
\fdenot{\set{}}^H = \set{},
\qquad
\fdenot{C_1\cup C_2}^H = \fdenot{C_1}^H\cup\fdenot{C_2}^H,
\qquad
\fdenot{\set{\mu\,l}}^H = \mu\,\fploc{H}{l},\\[4pt]
\text{where}\quad
\fploc{H}{l} =
\begin{cases}
\set{\epsilon\,l} & \text{if}\ H(l)\ \text{is a reference cell,}\\
F_v & \text{if}\ H(l) = v\ \text{with footprint}\ F_v
\end{cases}
\end{array}
\]

\judgheader{Worlds and world extension}{\wext{\Sigma'}{H'}{\Sigma}{H}}
\vspace{-1.4em}
\[
\begin{array}{c}
\mathit{World}_k = \mathit{Loc}\rightharpoonup \mathit{Rel}_k,
\qquad
\mathit{Rel}_k = \textstyle\prod_{j<k}
  \left(\mathit{World}_j\to\mathit{Heap}\to\mathit{Term}\to\mathrm{Prop}\right)\\[4pt]
\wext{\Sigma'}{H'}{\Sigma}{H}
\ \coloneqq\
H'\sqsupseteq H
\ \wedge\
\forall l\in\dom{\Sigma}.\ \Sigma'(l)=\Sigma(l)
\end{array}
\]

\judgheader{Value denotation (selected clauses)}{a\in\vdenot{T}_\rho(k,\Sigma,H)}
\vspace{-1.4em}
\[
\begin{array}{c}
\resolve(H,v) = v,
\quad
\resolve(H,l) = v\ \ \text{if}\ H(l)=v,
\quad
\Sigma(l)\approx_k\vdenot{T}_\rho
\ \coloneqq\
\forall j<k.\ \
\Sigma(l)_j = \vdenot{T}_\rho(j,\cdot,\cdot)
\end{array}
\]
\vspace{-1.2em}
\[
\begin{array}{r@{\ }c@{\ }l}
\vdenot{\TBOOL}_\rho(k,\Sigma,H) &=&
  \set{a\mid \resolve(H,a)\in\set{\TTRUE,\TFALSE}}\\[3pt]
\vdenot{\REF[T]\capt C}_\rho(k,\Sigma,H) &=&
  \set{l\mid H(l)\ \text{is a cell} \wedge l\in\fdenot{C} \wedge \Sigma(l)\approx_k\vdenot{T}_\rho}\\[3pt]
\vdenot{(\forall(x:T_1)E)\capt C}_\rho(k,\Sigma,H) &=&
  \{\,a\mid \resolve(H,a)=\lambda(x:T_1')\,t\ \text{with}\
  R_0\subseteq\fdenot{C},\ \text{and}\\
&&\quad [x:=l]\,t\in\edenot{E}^{R_0}_{\rho[x\mapsto l]}(j,\Sigma',H')\
  \text{for every}\ j\leq k\\
&&\quad \text{and}\
  \wext{\Sigma'}{H'}{\lfloor\Sigma\rfloor_j}{H}\ \text{with}\
  \memtyped{j}{\Sigma'}{H'}\ \text{and}\ R_0\ \text{live in}\ H',\\
&&\quad \text{and every argument}\
  l\in\vdenot{T_1}_\rho(j,\Sigma',H')\,\}\\[3pt]
\vdenot{\EXCAP{c_1,\ldots,c_n}T}_\rho(k,\Sigma,H) &=&
  \{\,a\mid \resolve(H,a)=\<C_1,\ldots,C_n,x\>\ \text{with}\\
&&\quad \fdenot{C_1},\ldots,\fdenot{C_n}\ \text{pairwise disjoint, and}\\
&&\quad x\in\vdenot{T}_{\rho[c_1\mapsto C_1,\ldots,c_n\mapsto C_n]}(k,\Sigma,H)\,\}
\end{array}
\]

\judgheader{Expression denotation}{t\in\edenot{E}^{R}_\rho(k,\Sigma,H)}
\vspace{-1.4em}
\[
\begin{array}{l}
t\in\edenot{E}^{R}_\rho(k,\Sigma,H)
\ \coloneqq\
\safe{k}{H}{t}
\ \wedge\
\forall\,\tau,H',a.\ \
\cfg{H}{t}\evalsto{\tau}\cfg{H'}{a}
\ \wedge\
\readsof{\tau}<k
\Longrightarrow\\
\qquad
\tauth{\tau}{R}
\ \wedge\
\exists\,\Sigma'.\
\wext{\Sigma'}{H'}{\lfloor\Sigma\rfloor_{k'}}{H}
\ \wedge\
\memtyped{k'}{\Sigma'}{H'}
\ \wedge\
a\in\vdenot{E}_\rho(k',\Sigma',H')
\qquad\text{where}\ k' = k-\readsof{\tau}
\end{array}
\]

\judgheader{Semantic typing}{\semtyp{C}{\G}{t}{E}}
\vspace{-1.4em}
\[
\semtyp{C}{\G}{t}{E}
\ \coloneqq\
\forall k,\ \text{world}\ (\Sigma,H),\ \text{and}\
  \rho\ \text{realizing}\ \G\
  \text{with}\ \fdenot{C}\ \text{live in}\ H.\ \
t\rho\in\edenot{E}^{\fdenot{C}}_\rho(k,\Sigma,H)
\]

\vspace{-0.5em}
\caption{The logical model of \corecapybara{}: denotations of capture
sets, types, and expressions, and the semantic typing judgment.
$\Sigma(l)_j = \vdenot{T}_\rho(j,\cdot,\cdot)$ abbreviates the
pointwise agreement
$\Sigma(l)_j\,(\Sigma',H',a) \iff a\in\vdenot{T}_\rho(j,\Sigma',H')$
for all $\Sigma'$, $H'$, $a$.
The value denotation shows a selection of its clauses
(\Cref{app:metatheory}).}\label{fig:model}
\end{figure*}
 
\Cref{fig:model} collects the definitions of the model.
First, a capture set $C$ denotes a \emph{footprint} $\fdenot{C}^H$,
a capture set whose elements mention only heap locations
(\Cref{sec:core});
we omit the heap annotation when it is clear.
A reference cell contributes itself but not its content,
and a stored value contributes the footprint recorded when it was
lifted, so the interpretation closes transitively over the heap;
applying the mode $\mu$ restamps every element a location contributes
(\Cref{sec:formal}).
The use set's footprint thus lists every cell a term may touch
and at what mode.

Next, a type $T$ denotes a predicate $\vdenot{T}$ on answers,
relative to a \emph{world} $(\Sigma,H)$:
the runtime heap $H$ and a \emph{store typing} $\Sigma$ assigning each
reference cell a predicate for its content.
The step index $k$ breaks the circularity between store typing and
type interpretation~\cite{DBLP:journals/toplas/AppelM01,ahmed2004semantics}
($\mathit{Rel}_k$ in \Cref{fig:model}).
Evaluation moves to \emph{future} worlds, related by the extension
$\wext{\Sigma'}{H'}{\Sigma}{H}$:
it preserves the store typing and stored values,
while a reference cell may change its content and may die but not
revive; denotations are stable under extension.
$\vdenot{T}$ is also relative to a \emph{semantic environment} $\rho$,
mapping term variables to locations, type variables to predicates,
and capture variables to footprints.
Altogether, $a\in\vdenot{T}_\rho(k,\Sigma,H)$ reads: the answer $a$
behaves as a $T$-value at $(\Sigma,H)$, for runs performing fewer
than $k$ reads;
the expression denotation below makes this reading precise.

An answer names a value either directly or through a location, and
the partial function $\resolve$ recovers it.
\Cref{fig:model} shows a selection of the clauses;
the remaining ones follow the same patterns (\Cref{app:metatheory}).
A reference is a cell location in the footprint $\fdenot{C}$ of its
capture set, and the denotation of its content type agrees with the
predicate the store typing assigns,
$\Sigma(l)\approx_k\vdenot{T}_\rho$, at every lower index:
the forward direction of the agreement types the value a read
returns, the backward direction lets a write re-establish a
well-typed store, and exposing contents only below the current index
is what makes a read consume one unit of $k$.

The clause for functions shows the Kripke structure:
for every argument, the body is judged in every future world,
so a stored function remains usable however memory evolves.
The \emph{truncation} $\lfloor\Sigma\rfloor_j$ turns a world at index
$k$ into one at index $j$;
$\memtyped{j}{\Sigma'}{H'}$ states that the world is
\emph{well-typed}, chiefly that every live cell's content satisfies
its assigned predicate;
and a footprint is \emph{live} when no cell it mentions is dead,
so a call requires the captured cells to be still allocated
(\Cref{app:metatheory}).
$\edenot{E}$ is the \emph{expression denotation}, defined below;
its budget, here the closure footprint $R_0$ of the abstraction's
capture annotation, upper-bounds the cells the body may touch once
applied.

An existential package $\EXCAP{c_1,\ldots,c_n}T$ requires pairwise disjoint
witness footprints.

The expression denotation $\edenot{E}^{R}$ lifts $\vdenot{E}$ from
answers to expressions:
an expression inhabits it when it evaluates to an answer in
$\vdenot{E}$ while accessing only the locations in $R$.
$\readsof{\tau}$ counts the read events of $\tau$, and
$\safe{k}{H}{t}$ states that evaluation from $\cfg{H}{t}$ does not
get stuck within $k$ reads (\Cref{app:metatheory}).
The trace clause $\tauth{\tau}{R}$ (\Cref{sec:core}) makes the use
set operational: every access and every deallocation a run performs
is charged to the budget at its declared mode.

The evaluation relation here is the sequentially scheduled big-step
semantics (\Cref{sec:core}), so the model constrains sequential runs
only.
Its guarantees extend to arbitrary interleavings through confluence
(\Cref{thm:confluence}): an interleaved run joins against a
sequential run that the model provides, and either steps further or
coincides with its answer up to a renaming of locations
(\Cref{app:meta:adequacy}).

Finally, semantic typing $\semtyp{C}{\G}{t}{E}$ lifts the
interpretation to open terms, quantifying over every world and every
environment $\rho$ that \emph{realizes} $\G$:
each binding inhabits its denotation, and distinct consumable capture
variables denote disjoint footprints (\Cref{app:metatheory}).

\subsection{Soundness and Memory Safety}

\begin{theorem}[Fundamental Theorem]\label{thm:fundamental}
If $\typ{C}{\G}{t}{E}$, then $\semtyp{C}{\G}{t}{E}$.
\end{theorem}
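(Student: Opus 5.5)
The proof goes by induction on the derivation of $\typ{C}{\G}{t}{E}$, proving one \emph{compatibility lemma} per typing rule: each rule remains valid when every judgment $\typ{C}{\G}{t}{E}$ is replaced by its semantic counterpart $\semtyp{C}{\G}{t}{E}$. Before the term rules, we establish semantic counterparts of the auxiliary judgments, each by induction on its own derivation.

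For these auxiliary judgments we fix a $\rho$ realizing $\G$ in a world $(\Sigma,H)$ and show the following.
\begin{enumerate}
\item Subcapturing $\subs{\G}{C_1}{C_2}$ implies $\fdenot{C_1\rho}\subseteq\fdenot{C_2\rho}$ up to mode weakening. For \ruleref{sc-var}, this uses that a variable's location denotes a value whose recorded footprint lies within its declared capture set.
\item Kinding $\typs{\G}{C}{\RO}$ implies that every element of $\fdenot{C\rho}$ is tagged $\RO$.
\item Separation implies $\nintf{\fdenot{C_1\rho}}{\fdenot{C_2\rho}}$. The leaf \rruleref{sep-consumable} follows from the disjointness clause of realization. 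The leaf \rruleref{sep-lock} requires that realizing a context containing $\LOCK[\Psi,\Phi]$ include the semantic satisfaction of $\Psi,\Phi$; this is how we define realization.
\item Access-only, consumable, and sequential composition have the following semantic readings. Consumable roots denote locations owned exclusively by the environment. If $\seqcomp{\G}{C_1}{C_2}$, then any $\CONSUME$-tagged location of $\fdenot{C_1}$ lies outside $\fdenot{C_2}$.
\item Subtyping implies inclusion of value denotations, and denotations are monotone in $k$ and stable under world extension $\wext{\Sigma'}{H'}{\Sigma}{H}$.
\end{enumerate}
Weakening and substitution lemmas, $\vdenot{[x:=y]U}_\rho=\vdenot{U}_{\rho[x\mapsto\rho(y)]}$ and likewise for captures, handle the application rules.

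The term cases then go as follows.
\begin{itemize}
\item Variables, readers, and constants are immediate.
\item Abstraction and consumer lambdas unfold the function clause of $\vdenot{\cdot}$. They apply the induction hypothesis in the extended environment and the future world, after checking that $\rho[x\mapsto l]$ (plus $c\mapsto$ the argument footprint) realizes the extended context. For consumers, disjointness of the fresh consumable $c$ from the other consumable roots comes from the pack clause.
\item Application instantiates the function clause at $j=k$.
\item \rruleref{lock} and \rruleref{unlock}: the lock's denotation is a suspended computation guarded by semantic satisfaction of $[\Psi,\Phi]$. Unlocking discharges that guard using the semantic sat lemma.
\item \rruleref{pack} and \rruleref{alloc} produce packages whose witness footprints are pairwise disjoint by the semantic disjointness lemma. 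The charge $D\cup\CONSUME\,D$ authorizes nothing beyond what the pack owns, and alloc's trace event grants full rights by the definition of $\tauth{\cdot}{\cdot}$.
\item \rruleref{read}, \rruleref{write}, and \rruleref{dealloc} use $\Sigma(l)\approx_k\vdenot{T}_\rho$, which is also where a read consumes a step index, together with liveness of the accessible reference.
\item \rruleref{par} under sequential scheduling is two lets with a join. Separation gives non-interference, so the right branch still runs in a world where its footprint is live and its store-typing assumptions hold.
\end{itemize}

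The main obstacle is \rruleref{let} and \rruleref{unpack}. After $t$ runs from $(\Sigma,H)$ to $(\Sigma',H',a)$, we must show that $\rho[x\mapsto a]$ (and the witnesses $c_i\mapsto C_i$) realizes $(\G\ominus C_1),x:T$ together with the freshness lock, and that $\fdenot{C_2}$ is still live in $H'$. The difficulties are the following.
\begin{itemize}
\item Deallocations in $t$ are authorized by $\tauth{\tau}{\fdenot{C_1}}$ and hence only kill $\CONSUME$-tagged locations of $C_1$. By the semantic sequential-composition lemma, these are disjoint from $\fdenot{C_2}$, so $C_2$ stays live.
\item Killed bindings are realized by anything, since they are inaccessible.
\item Fresh witnesses are disjoint from every surviving consumable root, and from $C_2$ as the lock $\Psi_w$ demands. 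The reason is that they are either newly allocated or were consumed from $C_1$, and consumed roots are now killed. This needs an invariant strengthening $\edenot{\cdot}$: locations in the result's footprint are either fresh or come from $\fdenot{R}$.
\end{itemize}
I would first try to derive this invariant directly from trace authorization plus the package disjointness clause, and otherwise add it explicitly to the expression denotation and reprove the compatibility lemmas. Step-index bookkeeping through $k'=k-\readsof{\tau}$ and the truncation $\lfloor\Sigma\rfloor_{k'}$ is routine but must be threaded through the composition by transitivity of world extension.
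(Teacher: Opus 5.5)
Your overall plan matches the paper's: induction on the core derivation with one compatibility lemma per rule, fundamental lemmas for subcapturing, kinding and separation, realization that carries lock satisfaction and disjointness of consumable roots, and a strengthened postcondition for existential results. The gaps are in the expression relation you induct with. The most serious one is at \rruleref{par}. A runtime node $\PAR[C_1,C_2]\,t_1,t_2$ steps by \rruleref{s-par-l} only if that step's trace satisfies $\tauth{\tau}{C_1}$. The safety clause at a parallel node therefore needs \emph{every step} of $t_1$, not just every complete run, to be authorized by the branch footprint. Your induction hypothesis gives safety plus authorization of complete runs within budget. That is consistent with a prefix that deallocates a cell not covered at $\CONSUME$ mode and then exceeds the read budget (or diverges); such a prefix blocks the parallel node. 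Describing \rruleref{par} as \emph{two lets with a join} hides this, because the guard is checked per step, not at the join. The paper's $\edenot{E}^R$ has a third clause, prefix safety: every within-budget sequential run, complete or not, is authorized by $R$ (and in the mechanization is realized in guarded form). Every compatibility lemma must maintain this clause, and \rruleref{par} combines the branches' prefix clauses with the separation lemma to discharge both guards at each step.

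Three further steps fail as stated.
\begin{itemize}
\item Your confinement invariant (result locations are fresh or lie in $\fdenot{R}$) cannot pay the ownership lock of \rruleref{unpack}. When $C_1$ is access-only, \rruleref{seq-access-only} lets $\fdenot{C_2}$ overlap all of $\fdenot{C_1}$. So a witness location that merely lies in $\fdenot{R}$ may lie in $\fdenot{C_2}$, or in the footprint of a surviving consumable root. The paper instead requires every location a witness reaches to be covered by $R$ \emph{at $\CONSUME$ mode} or freshly allocated. It also requires the witnesses to be live in $H'$, because the continuation's budget contains $\set{c_i,\CONSUME\,c_i}$ and semantic typing demands a live budget. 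Your remark that witnesses were consumed from $C_1$ is the right intuition; the invariant has to state it.
\item In the \rruleref{consumer} case, the pack clause only makes the witnesses of one package pairwise disjoint. The consumer clause of $\vdenot{\cdot}$ only guarantees that $W$ is disjoint from the closure footprint $R_0$, which the caller pays through the sequencing premise of \rruleref{consumer-app} and the confinement bound. Neither makes $\rho[c\mapsto W]$ separation-well-formed against consumable roots of $\G$ outside the closure. That is why the paper's consumer rule kills every consumable capability of $\G$ that the closure does not capture before typing the body.
\item Kind $\RO$ does not mean every footprint element is tagged $\RO$, since $\RO\,C$ keeps $\CONSUME$ marks. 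The correct lemma is that a kind-$\RO$ set covers no write, and the access-only premises of \rruleref{sep-ro} are what exclude $\CONSUME$.
\end{itemize}
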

The proof is by induction on typing, with one compatibility lemma per rule.

\begin{theorem}[Type Soundness]\label{thm:soundness}
Let $t$ be well typed and $H$ an initial heap providing the capabilities
declared by its context (\Cref{app:meta:adequacy}).
If $\cfg{H}{t}\redsto{\tau}\cfg{H'}{t'}$,
then $t'$ is an answer or $\cfg{H'}{t'}$ can step.
\end{theorem}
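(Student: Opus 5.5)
The plan is to derive Type Soundness from the Fundamental Theorem (\Cref{thm:fundamental}) by a standard adequacy argument, then lift it from sequential to arbitrary interleavings via confluence (\Cref{thm:confluence}).

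\textbf{Instantiating the model.} First I would construct an initial world $(\Sigma_0,H)$ and an environment $\rho_0$ realizing the context $\G$. The hypothesis that $H$ ``provides the capabilities declared by its context'' is exactly what is needed:
\begin{itemize}
\item each term variable maps to a location whose value inhabits its type;
\item each reference cell gets the predicate of its content type in $\Sigma_0$, so that $\memtyped{k}{\Sigma_0}{H}$ holds;
\item distinct consumable capture variables denote disjoint footprints;
\item the footprint $\fdenot{C}$ of the use set is live in $H$.
\end{itemize}
Applying \Cref{thm:fundamental} to $\typ{C}{\G}{t}{E}$ then yields $t\rho_0\in\edenot{E}^{\fdenot{C}}_{\rho_0}(k,\Sigma_0,H)$ for every $k$, and in particular $\safe{k}{H}{t}$ for every $k$.

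\textbf{Sequential runs.} Next I would unfold $\safe{k}{H}{t}$. It states that no evaluation from $\cfg{H}{t}$ gets stuck within $k$ reads, meaning every reachable non-answer configuration can step. Given a sequential reduction $\cfg{H}{t}\seqredsto{\tau}\cfg{H'}{t'}$, I choose $k>\readsof{\tau}$. Since $k$ is arbitrary, this handles every finite run. So for sequentially scheduled runs, $t'$ is an answer or can step. Some care is needed because $\mathsf{safe}$ is phrased against the big-step or sequential schedule. I expect to rely on the appendix's equivalence between $\evalsto{\tau}$ and $\seqredsto{\tau}$ for partial runs, so that safety speaks about intermediate configurations and not only terminal ones.

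\textbf{Arbitrary interleavings.} This step is the main obstacle. Given an arbitrary run $\cfg{H}{t}\redsto{\tau}\cfg{H'}{t'}$ with $t'$ not an answer, I must show that $\cfg{H'}{t'}$ can step. The \rruleref{s-par-l} and \rruleref{s-par-r} rules only fire when the footprints are non-interfering and the trace is authorized. So I must show these side conditions remain valid along interleaved runs; otherwise a parallel term could be stuck merely because a guard failed.

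My approach is to use \Cref{thm:confluence} to join the interleaved prefix with the sequential run that the model provides. That run exists and is not stuck, by the previous step. Confluence supplies a common reduct, so either $t'$ can step toward it, or $t'$ already coincides with a reachable sequential configuration up to renaming of locations, in which case sequential safety applies. The delicate part is that confluence itself presupposes authorization and non-interference, as the text notes. I would therefore establish, by induction on the length of the interleaved run, an invariant: every reachable $\PAR[C_1,C_2]$ has footprints satisfying $\nintf{C_1}{C_2}$ and branch traces authorized. This invariant would follow from the model's $\tauth{\tau}{R}$ clause together with the \rruleref{par} separation premise interpreted semantically. I would first try to obtain the invariant directly from the par compatibility lemma of the fundamental theorem, strengthened to be preserved under steps in either branch.
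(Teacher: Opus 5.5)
Your overall architecture matches the paper's. Soundness is the fundamental theorem composed with an adequacy argument on a platform context: sequential safety comes from the safety clause, and confluence lifts it to arbitrary schedules. The gap is in your last step, when the joining run out of $t'$ is empty. There you say that $t'$ ``coincides with a reachable sequential configuration, in which case sequential safety applies.'' That does not follow. \Cref{thm:meta:confluence} gives you an \emph{unrestricted} run $\sigma_2$ out of the endpoint of your sequential run. So the joined configuration is reached by a sequential prefix followed by an arbitrary interleaving, and the safety clause says nothing about it. You also never fix which finite sequential run you join against; ``a run that is not stuck'' does not name one.

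The missing ingredients are a budget tied to the interleaved run and the read-count clause of confluence. The paper instantiates the semantic typing at $k=\readsof{\tau}+1$ for the trace $\tau$ of the \emph{interleaved} run. It then takes a \emph{maximal} sequential run, one that either reaches an answer within the budget or performs at least $k$ reads. Suppose the joining run out of $t'$ is empty. If the sequential endpoint is an answer, then $\sigma_2$ is empty too, since answers do not reduce, and $t'$ is an answer up to renaming. If instead the sequential run exhausted the budget, condition (4) of confluence (equal read counts of the joined traces) forces $\readsof{\tau}\geq k$, a contradiction. Trace equivalence, condition (3), cannot do this job, because it discounts events on cells the run itself allocated. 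That is exactly why confluence records read counts separately.

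Second, you have the role of the guards backwards. Confluence carries no typing hypothesis. The authorization and non-interference conditions of \rruleref{s-par-l} and \rruleref{s-par-r} are premises of the step relation itself, so any run of that relation satisfies them by construction. You therefore need no invariant over arbitrary interleavings, and the model, which constrains only sequential runs, would not readily give you one. If $t'$ is blocked by a failing guard, the confluence argument above already rules that out, because a nonempty joining run means $t'$ steps in the guarded relation.

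What does need proof is the other direction. The model's sequential relation is unguarded, so the sequential run you extract from it must be shown to be a run of the \emph{guarded} relation before confluence applies. The paper gets this from the prefix-safety clause of the expression relation, which realizes every within-budget sequential run in guarded form. A minor point: the safety clause is already stated over sequential small-step runs, so you need no big-step/small-step correspondence for partial runs.
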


These live-cell requirements rule out use-after-free and double-free.

\begin{corollary}[Memory Safety]\label{cor:memsafe}
A well-typed program exhibits no use-after-free, and it frees each
allocation at most once.
\end{corollary}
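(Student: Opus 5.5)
We derive \Cref{cor:memsafe} from \Cref{thm:soundness} together with the shape of the heap-access rules of the dynamic semantics. First, we make the two properties precise operationally. A \emph{use-after-free} is a reduction in which a memory primitive ($\READ$, $:=$, or $\DEALLOC$) is applied to a location $l$ whose cell is $\deadcell{n}$. A \emph{double free} is a trace containing two events $\DEALLOC l$ for the same $l$. The key observation is syntactic. The only rules that inspect a reference cell are \rruleref{s-read}, \rruleref{s-write}, and \rruleref{s-drop}, and each requires the target cell to be $\refcell{\cdot}$. No rule turns a $\deadcell{\cdot}$ back into a live cell: \rruleref{s-drop} is the only rule producing dead cells, and \rruleref{s-alloc} only writes at $l\notin\dom H$. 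Consequently, once $\DEALLOC l$ has fired, every later configuration whose redex is a memory primitive on $l$ has no applicable rule.

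For the argument itself, suppose a well-typed $t$ runs from an initial heap $H$ as in \Cref{thm:soundness}, and reaches $\cfg{H_1}{t_1}$ in which a memory primitive is about to act on a dead location $l$. The main point is that this redex sits in an evaluation position, possibly inside a let or a $\PAR$ branch. We argue by induction on the evaluation context that $\cfg{H_1}{t_1}$ then cannot step. A $\PAR$ node can still step through the other branch, so we reduce that branch first. To do so, we use \Cref{thm:soundness} repeatedly: every reachable configuration is an answer or can step. Since the stuck redex is not an answer, the overall term is never an answer, so the scheduler must eventually attempt the redex. Alternatively, and more directly, we use the sequential schedule $\seqredsto{\tau}$ together with confluence (\Cref{thm:confluence}) to move the offending redex to the head. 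At the head the term is a non-answer with no applicable step, which contradicts \Cref{thm:soundness}. For double free, the second $\DEALLOC l$ is exactly such a use-after-free on a dead cell, because the cell for $l$ became $\deadcell{\cdot}$ at the first free and stays dead. Hence each allocation is freed at most once.

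The main obstacle is the parallel case: a stuck redex inside one $\PAR$ branch does not immediately make the whole configuration stuck. My first attempt would be to avoid it using the model rather than the progress theorem alone. By \Cref{thm:fundamental}, every run is authorized, $\tauth{\tau}{R}$, by a live footprint. The function clause and semantic typing also require the footprints used to be live in the current world. Extension $\wext{\Sigma'}{H'}{\Sigma}{H}$ forbids revival. Killing through \rruleref{let} and \rruleref{unpack} removes consumed roots from every later use set. Together these give, for sequential runs, that every $\DEALLOC l$ event is charged against a $\CONSUME\,l$ that is then unavailable to the remaining computation. Confluence then transfers this from sequential runs to arbitrary interleavings, as described at the end of the model subsection.
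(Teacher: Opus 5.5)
Your core argument is the paper's own derivation: \Cref{thm:soundness} (adequacy composed with the fundamental theorem), plus the observation that a dead cell matches no reduction rule and never revives. A second free is then just a free of a dead cell.

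One correction to your definitions: as you state them, both properties hold for every program, typed or not. No rule ever fires on a dead cell. And since \rruleref{s-drop} needs a live cell and locations are never reused, no trace contains two $\DEALLOC l$ events. The content of the corollary is that a well-typed program never \emph{reaches} a configuration blocked on such an operation, which is what you actually argue about.

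The gap is the parallel case, which you correctly isolate but do not close.
\begin{itemize}
\item ``The scheduler must eventually attempt the redex'' presupposes that the sibling branch reaches an answer. If the sibling steps forever, every reachable configuration can step and \Cref{thm:soundness} is never contradicted. The paper proves no termination result that would exclude this.
\item Your alternative fails too. \Cref{thm:standardization} applies only to runs ending in an answer, which a run with a permanently blocked branch never reaches. \Cref{thm:confluence} only supplies joins. Indeed, take an (ill-typed) term whose left branch loops through the store while its right branch frees a cell and then reads it. It satisfies the conclusions of both \Cref{thm:soundness} and \Cref{thm:confluence}, yet its right branch is blocked on a dead cell forever. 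So no argument using only these whole-configuration statements can rule the scenario out.
\item Your third paragraph misreads the model. Authorization is not linear: $\tauth{(\DEALLOC l)\,\tau}{C}$ asks only for $\CONSUME\,l\in C$ and continues with the same $C$, so a budget containing $\CONSUME\,l$ authorizes any number of frees of $l$. Killing is a static operation on typing contexts. Its semantic effect lives inside the compatibility lemmas, not in anything \Cref{thm:fundamental} exports.
\end{itemize}

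The paper's one-line proof does not treat the parallel case separately. It identifies use-after-free and double-free with configurations that block on a dead cell, i.e.\ with stuck configurations, which adequacy excludes outright. Under that reading, your core observation plus \Cref{thm:soundness} is already a complete proof, and the parallel detour and third paragraph are unnecessary.

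If you want the stronger, branch-wise statement (no $\PAR$ branch ever blocks on a dead cell), you need branch-level progress, which the stated theorems do not provide. It is available inside the model: the safety clause at a parallel node is a rely--guarantee invariant requiring each branch to remain safe under every heap the other branch can produce (\Cref{app:meta:exprel}). You would have to extract it from the expression relation. Alternatively, a theorem that every run of a well-typed program is finite would make your ``reduce the sibling first'' step valid.
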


The model also yields \emph{immutability}:
a read-only run cannot change the state it started with.
\begin{theorem}[Immutability]\label{thm:purity}
If a well-typed closed program has a use set of kind $\RO$,
then every run of it to an answer leaves every initial cell
unchanged, in both content and liveness.
\end{theorem}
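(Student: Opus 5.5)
The plan is to derive immutability from the Fundamental Theorem (\Cref{thm:fundamental}) applied to the closed program, and then to read off the trace authorization clause of the expression denotation. Let $t$ be closed with $\typ{C}{\emptyset}{t}{E}$ and $\typs{\emptyset}{C}{\RO}$. The fundamental theorem gives $\semtyp{C}{\emptyset}{t}{E}$. We instantiate it with the initial world $(\Sigma,H)$, the empty environment, and a step index $k$ larger than the number of reads in the run we are considering. Every sequential run $\cfg{H}{t}\evalsto{\tau}\cfg{H'}{a}$ then satisfies $\tauth{\tau}{\fdenot{C}^H}$.

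The first lemma is semantic soundness of kinding. If $\typs{\G}{C}{\RO}$ and $\rho$ realizes $\G$, then every element of $\fdenot{C\rho}^H$ has mode $\RO$. We prove it by induction on the kinding derivation. \ruleref{k-empty} and \ruleref{k-union} are immediate. \ruleref{k-ro} holds because qualifying restamps every contributed element, as in the definition of $\fdenot{\set{\mu\,l}}$. \ruleref{k-imm} holds because a realizing environment maps a capture variable bounded by $\RO$ to a read-only footprint. \ruleref{k-sc} needs the companion lemma that subcapturing is sound as footprint inclusion \emph{up to mode}: if $\subs{\G}{C_1}{C_2}$, then every $\mu\,l\in\fdenot{C_1}$ is dominated by some $\mu'\,l\in\fdenot{C_2}$ with $\mu\preceq\mu'$. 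The core rule \rruleref{sc-ro} forgets read-only qualifiers, so inclusion must be stated modulo this ordering. A set below an all-$\RO$ footprint may also carry $\RO$ elements only, since $\RO$ is the bottom of $\preceq$ restricted to mutabilities. The same lemma is presumably already needed in the compatibility case for \ruleref{sub}, so the plan reuses it.

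With the lemma in hand, inspect the trace. Induct on $\tau$, carrying the growing footprint that $\mathsf{alloc}$ adds. Every access event $\mu\,l$ on an initial location $l\in\dom H$ must satisfy $\mu\preceq\RO$, so $\mu=\RO$ and no initial cell is written. No $\DEALLOC l$ on an initial cell can occur, because it needs $\CONSUME\,l$. The only $\CONSUME$ entries ever added are for fresh cells, and a fresh $l\notin\dom H$ is never an initial cell. Only \rruleref{s-write} and \rruleref{s-drop} modify an existing cell, and they emit $\epsilon\,l_x$ and $\DEALLOC l_x$ respectively. \rruleref{s-lift} and \rruleref{s-alloc} only extend the heap at fresh locations. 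Hence $H'$ agrees with $H$ on $\dom H$ in both content and liveness. Finally, the conclusion transfers from the sequential schedule to arbitrary runs $\redsto{\tau}$ using confluence (\Cref{thm:confluence}) and the adequacy argument of \Cref{app:meta:adequacy}. An arbitrary run to an answer coincides with a sequential one up to a renaming of locations that fixes initial locations, because renaming only affects freshly allocated cells. Alternatively, the argument can run directly on the small-step trace, since \rruleref{s-par-l} also checks authorization against the branch footprints.

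The main obstacle is the kinding and subcapturing soundness step: kind $\RO$ must force \emph{every} element of the denoted footprint to be read-only, even after resolving variables transitively through stored values' footprints. The issue is that $\fdenot{\set{\RO\,x}}$ restamps a value's footprint to $\RO$ wholesale, while a bare $\set{x}$ of kind $\RO$ (via \ruleref{k-sc}) must be shown to reach only read-only entries through the heap. The approach is to strengthen the realizability invariant so that a variable's location footprint is dominated by the denotation of its declared capture set. The kinding lemma then goes through by induction along \ruleref{sc-var}. A secondary point is the closedness hypothesis: it ensures the initial footprint $\fdenot{C}^H$ covers all initial cells the program can reach.
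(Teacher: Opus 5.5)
Your skeleton is the paper's. \Cref{thm:purity} is the semantic immutability theorem composed with the fundamental theorem, and that theorem is proved as you propose. A kind-$\RO$ budget authorizes no write or deallocation of an initial cell, so the authorization clause leaves those cells untouched on sequential runs, and the result is then transferred to arbitrary schedules.

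\emph{The schedule transfer is where you diverge.} The paper uses standardization (\Cref{thm:meta:standardization}): an interleaved run to an answer becomes a sequential run with \emph{syntactically identical} final heap and answer, so the sequential argument applies directly. Your confluence route also works but does more. You must first produce a sequential run to an answer (the budget argument of \Cref{thm:meta:adequacy-reduce}, with equal read counts excluding exhaustion). You also need the joining renaming to fix every initial location, which only the ``in particular'' gloss of \Cref{thm:confluence} provides; its precise form (\Cref{cor:meta:determinism}) gives just some permutation. Your fallback of arguing directly on small-step traces fails. \rruleref{s-par-l} and \rruleref{s-par-r} check a step only against the branch annotations of its parallel node, not against the program's use set, and steps outside parallel compositions are unguarded.

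\emph{The empty context makes your statement vacuous.} A capture set well-scoped in $\emptyset$ is empty, so the budget is $\set{}$. Authorization alone then forbids every event on a non-fresh cell, and neither the kind hypothesis nor the kinding lemma you call the main obstacle is ever used. The paper runs closed programs on the platform $(\G_N,H_N,\rho_N)$ of pre-allocated cells, where the budget genuinely names initial cells. Your argument carries over there, since the platform realizes $\G_N$ and is separation-well-formed.

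\emph{Your kinding lemma is stronger than the paper's and fails in the core.} The paper says only that a kind-$\RO$ set denotes a footprint covering no write. Your \rruleref{k-ro} case relies on the restamping of \Cref{sec:formal}. The core's mode application (\Cref{app:corecapybara}) instead leaves $\CONSUME$-marked elements unchanged, so \rruleref{k-ro} yields $\typs{\G}{\set{\CONSUME\,x}}{\RO}$ with footprint $\set{\CONSUME\,l_x}$. Also, core \rruleref{k-imm} is paid by a lock's $\Phi$ entry, not by an $\RO$ bound.
\begin{itemize}
\item The write half survives, since a $\CONSUME$ element never covers a write.
\item The liveness half needs a $\CONSUME$-free budget, which kind $\RO$ alone does not give.
\item An access-only use set does give it. Its roots carry no $\CONSUME$, realizing environments map capture variables to $\CONSUME$-free footprints, and $C$ subcaptures its roots, so \Cref{lem:meta:subcapt} transfers $\CONSUME$-freeness to $\fdenot{C}$.
\end{itemize}
The paper's one-line proof also asserts that kind $\RO$ excludes deallocations without isolating this step.
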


Allocation remains allowed: the program may create and use new cells freely;
immutability concerns only the initial cells.

\subsection{Separation and Concurrency}

The footprint reading makes the separation judgment precise.
\begin{theorem}[Separation]\label{thm:separation}
If $\sep{\G}{C_1}{C_2}$, then in every world realizing $\G$ the two
footprints are non-interfering:
$\nintf{\fdenot{C_1}}{\fdenot{C_2}}$.
\end{theorem}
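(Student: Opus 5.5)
We prove the statement by induction on the derivation of $\sep{\G}{C_1}{C_2}$ in the core calculus. We fix a world $(\Sigma,H)$ and an environment $\rho$ realizing $\G$, and write $\fdenot{C}$ for the footprint of $C\rho$ in $H$. Before the induction we establish three auxiliary facts about the model.
\begin{itemize}
\item \emph{Monotonicity of footprints under subcapturing.} If $\subs{\G}{C'}{C}$, then every $\mu'\,l\in\fdenot{C'}$ is dominated by some $\mu\,l\in\fdenot{C}$ with $\mu'\preceq\mu$. The proof is by induction on subcapturing. Rules \ruleref{sc-elem}, \ruleref{sc-union} and \ruleref{sc-trans} are immediate. \ruleref{sc-var} uses that $\rho$ realizes $\G$: a variable's value lies in the denotation of its declared type, so its lifted footprint is bounded by the footprint of its capture set. \rruleref{sc-cvar} uses that $\rho(c)$ respects $c$'s bound. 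The mode rules hold because restamping by $\mu$ is monotone.
\item \emph{Kinding soundness.} If $\typs{\G}{C}{\RO}$, then every element of $\fdenot{C}$ has mode $\RO$. Rule \ruleref{k-ro} holds by restamping, \ruleref{k-imm} by realization of mutability-bounded variables, and the union and \ruleref{k-sc} cases by the first fact.
\item \emph{Downward closure of non-interference.} The relation $\nintf{\cdot}{\cdot}$ is downward closed under this domination order. Replacing an element by one with the same location and a weaker mode preserves the condition, since a shared location stays read-only on both sides.
\end{itemize}

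The induction cases then go as follows.
\begin{itemize}
\item \ruleref{sep-symm}: $\nintf{\cdot}{\cdot}$ is symmetric by definition.
\item \ruleref{sep-empty}: the condition is vacuous for an empty footprint.
\item \ruleref{sep-union}: the denotation of a union is the union of the denotations, and non-interference is checked pointwise.
\item \ruleref{sep-sc}: combine the induction hypothesis with monotonicity and downward closure.
\item \rruleref{sep-ro}: by kinding soundness, every element on both sides is $\RO$, so any shared location has $\mu_1=\mu_2=\RO$.
\item \rruleref{sep-consumable}: the realization clause states that distinct consumable capture variables denote disjoint footprints. Restamping by $\mu_1$ and $\mu_2$ does not change locations, so no location is shared at all.
\end{itemize}

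The main obstacle is \rruleref{sep-lock}, whose conclusion rests on the assumptions $\LOCK[\Psi,\Phi]\in\G$ rather than on anything intrinsic to $C_1$ and $C_2$. Here I would rely on the definition of realization for contexts containing locks. When $\rho$ extends a context through a lock binding, it must satisfy the lock semantically: the denotations of the entries of $\Psi$ are pairwise non-interfering, and each $(C:m)\in\Phi$ denotes a footprint of mode at most $m$. This is the only way the \rruleref{lock} compatibility lemma in \Cref{thm:fundamental} could type a body under the lock. For it to hold, the \rruleref{unlock} compatibility case must discharge the clause from $\sat{\G}{[\Psi,\Phi]}$, and that direction uses the present theorem at a smaller derivation.

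This creates a potential circularity between separation soundness and the lock clause of realization. I would break it by proving the theorem as a standalone lemma about realizations that already include the lock clause; the unlock compatibility case then merely invokes it. A second issue is the unpack-introduced freshness lock $[\Psi_w,\emptyset]$. There I must check that the disjointness of the package witnesses, together with the kill $\G\ominus C_1$, justifies non-interference between the witnesses and $C_2$. Only then is the lock clause actually satisfiable at unpack, and this should be verified as part of the same invariant.

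Finally, all footprints should be evaluated in the same heap $H$. Since world extension may change the lifted footprints only by adding new cells, the stability of the lock clause under extension also needs a brief check.
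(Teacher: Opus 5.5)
Your overall route matches the paper's. The paper proves this as the standalone \Cref{lem:meta:sep} by induction on the separation derivation. It uses monotonicity of footprints under subcapturing (\Cref{lem:meta:subcapt}) and a kinding lemma, disjointness of distinct consumable variables for \rruleref{sep-consumable}, and the lock clause of realization for \rruleref{sep-lock}. Because the compatibility lemmas merely invoke it, your circularity worry dissolves exactly as you propose.

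There is, however, a genuine gap in the \rruleref{sep-ro} case: your kinding-soundness lemma is false in the core. In \corecapybara{}, $\RO\,C$ restamps only mutabilities and leaves $\CONSUME$-marked elements unchanged. Hence $\RO\,\set{\CONSUME\,c}=\set{\CONSUME\,c}$, and \rruleref{k-ro} derives $\typs{\G}{\set{\CONSUME\,c}}{\RO}$, although the footprint of that set consists entirely of $\CONSUME$ elements. The same happens for $\set{\RO\,f}$ when $f$ is a closure whose annotation consumes a reference. The correct lemma, as the paper states it, is only that a set of kind $\RO$ covers no write. Kinding alone therefore cannot close \rruleref{sep-ro}: it would "separate" $\set{\CONSUME\,c}$ from itself. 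This is exactly why the core rule carries the premises $\accessonly{\G}{C_1}$ and $\accessonly{\G}{C_2}$, which your case never uses.

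To repair the case, weaken the kinding lemma to "no $\epsilon$ element" and add a second lemma: an access-only set denotes a $\CONSUME$-free footprint. Its proof goes as follows. Realization makes each $\rho(c)$ $\CONSUME$-free, so $\fdenot{\roots{\G}{C}}$ is $\CONSUME$-free when the roots carry no $\CONSUME$ mark. Then $\subs{\G}{C}{\roots{\G}{C}}$ (\Cref{lem:tr:resolve}) together with your monotonicity lemma transfers this to $\fdenot{C}$, because a $\CONSUME$ element is dominated only by a $\CONSUME$ element. The two lemmas together make every element on both sides $\RO$.

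Three smaller points need correcting.
\begin{itemize}
\item Your \rruleref{k-imm} case argues from mutability-bounded capture variables, which do not exist in the core, where bounds are $C$ or $\top$. Core \rruleref{k-imm} reads a $\Phi$ entry of a lock, so it is discharged by the same semantic lock clause you use for \rruleref{sep-lock}.
\item The disjointness you invoke for \rruleref{sep-consumable} is separation-well-formedness, which \Cref{lem:meta:sep} lists as a hypothesis separate from realization.
\item The freshness lock of \rruleref{unpack} is not an obligation of this lemma but of the fundamental theorem's \rruleref{unpack} case. Pairwise disjointness of the witnesses relates them only to one another. The paper instead pays the lock with the witness-confinement bound plus the premise $\seqcomp{\G}{C_1}{C_2}$ (\Cref{rem:tr:ownership}).
\end{itemize}
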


Non-interference of footprints is exactly the side condition of the
parallel reduction rules (\Cref{sec:core}), so by \Cref{thm:separation} a
well-typed parallel composition never blocks on it.
The two scheduling theorems below assume no typing, only
\emph{well-formedness}: $\cfg{H}{t}$ is well-formed when every location
$t$ mentions is allocated in $H$.
They are stated up to \emph{trace equivalence}:
$\tau_1\treq\tau_2$ when both traces perform the same per-location
sequence of accesses and deallocations, ignoring events on cells the
trace itself allocated (\Cref{app:meta:std});
equivalent traces reorder only independent events.

\begin{theorem}[Standardization]\label{thm:standardization}
If $\cfg{H}{t}$ is well-formed and
$\cfg{H}{t}\redsto{\tau}\cfg{H'}{a}$ with $a$ an answer, then
$\cfg{H}{t}\seqredsto{\tau'}\cfg{H'}{a}$ for some $\tau'\treq\tau$.
\end{theorem}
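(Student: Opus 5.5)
The plan is to prove Standardization by commuting steps inside an arbitrary run into sequential order. We induct on the length of the run $\cfg{H}{t}\redsto{\tau}\cfg{H'}{a}$ and use a local permutation lemma. The key observation is that the only source of nondeterminism is the choice between \rruleref{s-par-l} and \rruleref{s-par-r} inside some $\PAR[C_1,C_2]\,t_1,t_2$. A run is sequential exactly when no right-branch step is taken while the left branch is not yet an answer. So we measure the distance from sequential order by counting \emph{inversions}: pairs consisting of a right-branch step of some parallel node followed later by a left-branch step of the same node. We then show that each inversion can be removed without changing the final configuration, and while changing the trace only up to $\treq$.

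The core is a \textbf{diamond/commutation lemma}. Suppose $\cfg{H}{\PAR[C_1,C_2]\,t_1,t_2}$ steps first by $t_2\stepsto{\tau_2}t_2'$ (\rruleref{s-par-r}) and then by $t_1\stepsto{\tau_1}t_1'$ (\rruleref{s-par-l}). Then the two steps can be swapped, reaching the same heap and term up to a renaming of freshly allocated or lifted locations, with $\tau_2;\tau_1\treq\tau_1;\tau_2$. We establish this in three parts.

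\emph{Disjoint heap effects.} Both step rules carry the premises $\tauth{\tau_i}{C_i}$ and $\nintf{C_1}{C_2}$. Together these mean that every location read, written, or deallocated by one branch is either untouched by the other branch or only read by both. New allocations and lifts (\rruleref{s-lift}, \rruleref{s-alloc}) pick locations outside $\dom H$, so they commute up to the choice of fresh names. Here we need well-formedness: each branch only mentions allocated locations, so a fresh location of one branch cannot clash with a location the other branch already uses.

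\emph{Preservation of premises after the swap.} We check that the premises of the commuted steps still hold. A step of $t_2$ does not change any cell that $t_1$ relies on, since non-interference forbids it, so $t_1$ still steps from the intermediate heap. The footprint $C_2'$ grows only by cells $t_2$ just allocated, and these are disjoint from $C_1$, so $\nintf{C_1}{C_2'}$ still holds.

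\emph{Trace equivalence.} The per-location event sequences are unchanged, because events on a shared location are all reads, and reads commute. Events on newly allocated cells are ignored by $\treq$.

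We lift the lemma through evaluation contexts: let-contexts and enclosing parallel nodes. Reduction inside a context is compositional and the side conditions are local, so this lifting is routine.

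With the lemma in hand, the global argument repeatedly bubbles the earliest out-of-order left step backwards past the right steps that precede it. Each swap strictly decreases the inversion count. Trace equivalence is preserved because $\treq$ is a congruence for concatenation and is transitive. The final configuration $\cfg{H'}{a}$ is preserved up to location renaming. We must fix renaming so that the statement's literal $H'$ is recovered, either by choosing the fresh names used in the swapped run to coincide with the original ones, or by reading the theorem up to $\alpha$-renaming of allocated locations, as the paper indicates. The step \rruleref{s-par-join} only fires once both branches are answers, so it is already correctly placed and never needs moving.

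I expect the \textbf{main obstacle} to be the commutation lemma's treatment of footprints and fresh locations. The footprint annotations $C_1,C_2$ evolve along the run. We need that the non-interference premise of the left step, checked originally against the \emph{grown} $C_2'$, still holds when the left step is moved before the right one, and conversely. Locations allocated by $t_2$ may also appear later in the heap through writes, so we must make sure they cannot flow into $t_1$'s footprint. I would first try to prove an invariant: in any well-formed run, the locations allocated by one branch of a $\PAR$ never occur in the other branch's term or footprint. This should follow because writes by one branch target only locations the other branch does not read, so newly allocated names cannot leak between the branches. Using this invariant, the footprint side conditions become monotone and commute.
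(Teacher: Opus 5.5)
Your overall route is the one the paper indicates: you commute an adjacent right/left step pair of one parallel node, justify the swap purely by the authorization and non-interference guards, and use an inversion count as the measure. The paper remarks that the guards ``carry the separation content that makes the commutations sound''; the proof itself is in Lean. However, there is a genuine gap in how you use well-formedness.

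The commutation lemma is applied at configurations of partially rearranged runs, which the original run never visits. So you need well-formedness as an invariant preserved by every step. The term-only reading (every location the term mentions is allocated) is not preserved: \rruleref{s-read} returns a cell's content, which may be dangling. Your argument that allocated names cannot leak across branches (``writes by one branch target only locations the other does not read'') misses exactly this route, and the swap really fails there.

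Here is a counterexample. Take $H(l_c)=\refcell{n}$ and $H(l_r)=\RO\,l_c$ with $n\notin\dom H$, and let $t=\PAR[\set{\RO\,l_c},\set{}]\,(\LET x=\READ l_r\IN\ALLOC x),\ \ALLOC l_r$. This term mentions only allocated locations.
\begin{enumerate}
\item The right branch may allocate under the fresh name $n$.
\item The left branch then reads $n$ out of $l_c$, renames, and $\ALLOC n$ succeeds.
\item Every guard holds: $C_2=\set{}$, the read is covered by $\RO\,l_c$, and the grown $\set{\epsilon\,n,\CONSUME\,n}$ avoids $l_c$. The run then joins.
\end{enumerate}
Yet the right step cannot be moved after the left's $\ALLOC n$, and the sequential run is stuck there.

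The fix is to take well-formedness to include heap closure. Every location stored in a live or dead cell, or inside a stored value and its footprint, must be allocated, as must the locations in parallel annotations. Then prove this invariant rule by rule. \rruleref{s-write} and \rruleref{s-alloc} already check membership in $\dom H$, so \rruleref{s-read} then returns allocated locations. \rruleref{s-lift} stores a value whose locations are allocated. With this invariant, the local lemma needs only well-formedness of the configuration just before the swapped pair, and your cross-branch leakage invariant becomes unnecessary.

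There are also four smaller repairs.
\begin{itemize}
\item $\treq$ is left- but not right-congruent. For example, $\ALLOC l\treq\cdot$, yet $(\ALLOC l)\,(\epsilon\,l)$ and $\epsilon\,l$ are not $\treq$-related, because allocations change which later events count as internal. Carry instead the invariant that $\tau'$ arises from $\tau$ by transposing adjacent events of different branches. These are on distinct locations (counting $\ALLOC l$ as an event on $l$) or are both reads. This relation is a congruence, preserves the set of allocated locations, and implies $\treq$.
\item The paper states that the final heap and answer are \emph{syntactically} identical; renaming enters only in confluence. So drop renaming from the commutation lemma. Heap domains only grow, so the advanced left step's name, fresh in the larger heap, is fresh in $H$, and the delayed right step's name differs from it.
\item Bubbling a left step backwards past arbitrary preceding steps can create inversions of an enclosing node $M$. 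This happens when your node sits in $M$'s right branch and the preceding step lies in $M$'s left branch. Swap only adjacent inversions instead. An inversion of minimal distance is adjacent: any step strictly between belongs either to the same node, giving a shorter inversion of it, or to the other branch of an enclosing node, giving a shorter inversion of that node. Each such swap removes exactly one inversion.
\item Your post-swap premise check runs in the wrong direction. You need $t_1$ to step from the original heap $H$, and the delayed right step to satisfy $\nintf{C_1'}{C_2}$ against the grown left footprint. The symmetric argument gives both.
\end{itemize}
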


\begin{theorem}[Confluence]\label{thm:confluence}
Let $\cfg{H}{t}$ be well-formed,
$\cfg{H}{t}\redsto{\tau_1}\cfg{H_1}{t_1}$, and
$\cfg{H}{t}\redsto{\tau_2}\cfg{H_2}{t_2}$.
Then there are runs
$\cfg{H_1}{t_1}\redsto{\sigma_1}\cfg{H_1'}{t_1'}$ and
$\cfg{H_2}{t_2}\redsto{\sigma_2}\cfg{H_2'}{t_2'}$
and a permutation $\pi$ of locations such that
$H_2' = \pi\,H_1'$, $t_2' = \pi\,t_1'$,
$\pi(\tau_1\,\sigma_1)\treq\tau_2\,\sigma_2$, and
$\readsof{\tau_1\,\sigma_1} = \readsof{\tau_2\,\sigma_2}$.
In particular, two runs that reach answers reach the same heap and
the same answer up to a renaming of freshly allocated locations.
\end{theorem}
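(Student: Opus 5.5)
The plan is a standard diamond-lemma argument. Single steps are confluent up to a location permutation. We tile that local property across the two runs, carrying trace equivalence and read counts along. The one structural fact we lean on is that reduction is deterministic except at two sources of choice. The first is the scheduling of $\PAR$ branches (\rruleref{s-par-l} versus \rruleref{s-par-r}). The second is the choice of fresh locations in \rruleref{s-alloc} and \rruleref{s-lift}. Well-formedness, meaning every location mentioned by $t$ is allocated in $H$, is preserved by reduction; we prove this first by inspecting each rule.

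\textbf{Step 1 (local diamond).} Suppose $\cfg{H}{t}\stepsto{e_1}\cfg{H_1}{t_1}$ and $\cfg{H}{t}\stepsto{e_2}\cfg{H_2}{t_2}$. We show that either the two steps coincide up to a permutation $\pi$ of freshly chosen locations, or there are single steps from $t_1$ and from $t_2$ that join up to some $\pi$. In the joining case we also require $\pi(e_1e_2')\treq e_2e_1'$ and equal read counts.

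We argue by induction on the evaluation context. Outside a $\PAR$, the redex is unique, so both steps fire the same rule. They can then differ only in the fresh location chosen, and we take $\pi$ to be the transposition of the two choices.

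Inside $\PAR[C_1,C_2]\,t_1,t_2$, the interesting case is one step in the left branch and one in the right. Here the side conditions $\tauth{e_i}{C_i}$ and $\nintf{C_1}{C_2}$ do the work. Any location written or deallocated by one branch is absent from the other's footprint, unless both accesses are at mode $\RO$. So the two events commute on the heap: reads commute with reads, and disjoint cells commute trivially. Each step remains enabled after the other, because the premises of \rruleref{s-read}, \rruleref{s-write} and \rruleref{s-drop} only inspect cells the other branch does not modify. Fresh allocations are chosen distinct after renaming.

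We also have to check that the footprint updates $C_i'$ stay non-interfering. Newly allocated $l$ lies outside $\dom H$, hence outside the other footprint, thanks to well-formedness. The two resulting traces differ only in the order of events on distinct or read-shared locations, which gives trace equivalence. Read counts are preserved because the multiset of events is the same. \rruleref{s-par-join} has no competitor, since both branches are answers.

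\textbf{Step 2 (tiling).} We first extend Step 1 to a strip lemma: one step against $n$ steps, by induction on $n$. We then extend it to full confluence by induction on the length of the first run. At each stage we compose permutations, and we use that $\treq$ is a congruence for concatenation and is stable under $\pi$. Because the diamond is strong (at most one step on each side), termination is not needed and the induction is purely on run length.

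For the final sentence of the theorem: answers do not step. So when both $t_1$ and $t_2$ are answers, $\sigma_1$ and $\sigma_2$ are empty, and we get $H_2=\pi H_1$ and $t_2=\pi t_1$. The permutation can be taken to fix $\dom H$, since only fresh choices are ever renamed.

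The main obstacle is Step 1's bookkeeping for $\PAR$. This includes nested $\PAR$s with evolving footprints $C_i'$, and checking that $\pi$ acts coherently on footprint annotations, traces and heaps simultaneously. Stating $\treq$ so that it ignores events on trace-allocated cells is what makes the renamed allocations harmless. We would first prove a lemma that all reduction is equivariant under location permutations fixing $\dom H$, and then dispatch the cases.
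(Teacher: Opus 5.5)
Your overall plan agrees with what the paper indicates about its mechanized proof. The authorization and non-interference guards of \rruleref{s-par-l} and \rruleref{s-par-r} make a left step and a right step commute. A location permutation absorbs the choice of fresh cells. Read counts must be tracked separately, which is why the paper stresses that condition (4) does not follow from (3).

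The step that fails as written is in your tiling: $\treq$ is \emph{not} a congruence for concatenation. $\mathit{ext}$ discards $\ALLOC$ events but lets them turn later events internal. Hence $\ALLOC l\treq\ALLOC l'$ (both are even $\treq$ the empty trace), while $(\ALLOC l)\,(\epsilon\,l)\not\treq(\ALLOC l')\,(\epsilon\,l)$ for $l\neq l'$. Prefixing a common trace is harmless. But gluing a tile's $\pi(e_1e_2')\treq e_2e_1'$ onto the remainder of the runs needs congruence on the right. So the compositions in your strip lemma and outer induction do not follow from $\treq$ alone.

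The repair is cheap, and your Step~1 already proves what is needed. In the commuting case, $\pi(e_1e_2')$ is literally $e_2e_1'$ with the two events swapped. The guards plus freshness make those two events independent: they are on distinct locations (counting $\ALLOC l$ as an event on $l$), or both are reads. In the other case, $\pi e_1=e_2$. Carry the congruence closure of such swaps through the tiling instead of $\treq$. That relation is a congruence, is stable under $\pi$, preserves read counts, and implies $\treq$, so it delivers (3) and (4) at the end.

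Two smaller points. First, taking $\pi$ to be the transposition of two fresh choices gives $H_2=\pi H_1$ only if no cell of $H$ mentions either location. Well-formedness as stated constrains only $t$, and with a dangling pointer the claim fails. Take $H(l_0)=\refcell{l}$ with $l\notin\dom H$, and two runs of $\ALLOC l_0$ choosing $l$ and another fresh $l'$. Add closedness of the heap to the invariant you show is preserved. It is preserved, since \rruleref{s-write} and \rruleref{s-alloc} store only allocated locations and \rruleref{s-lift} stores a subterm of a well-formed term.

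Second, your on-the-nose equality of the $\PAR$ annotations relies on reading them as sets. The paper notes that its mechanization obtains condition (2) only up to these annotations: their growth depends on the interleaving order, though the footprints they denote agree.
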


Confluence makes scheduling unobservable in well-typed runs, yielding
data-race freedom.

\subsection{Translating \capybara{} to \corecapybara{}}\label{sec:meta:translation}

The translation $\embed{\cdot}$ explains the separation and freshness
reasoning of the surface calculus in terms of the core constructs
that our logical model grounds,
and carries the formal results of this section over to \capybara{}.
This subsection presents its shape;
\Cref{app:translation} gives the full definitions and proofs.

Types translate in two variants.
The \emph{plain} translation $\embed{T}$ serves domains and other
invariant or contravariant positions, where $\FRESH$ cannot occur.
The \emph{result} translation $\eemb{T}$ closes $\FRESH$ occurrences into
existentials in covariant positions, including let-bound and function-result
types:
\[
\eemb{T} =
\EXCAP{c_1,\ldots,c_n}\,\embed{\freshinst{T}{\set{c_1},\ldots,\set{c_n}}},
\]
binding one capture variable per $\FRESH$ occurrence, where
$\freshinst{T}{\cdots}$ instantiates the occurrences left to right
(\Cref{sec:formal:typing}). A $\FRESH$-free type translates plainly.
Existential quantification is thus the meaning of $\FRESH$,
and universal quantification the meaning of $\ANY$.
Function types show both, together with the insertion of modal types:
$A = (\forall(x:T)U)\capt C$ translates to
\[
\embed{A} =
\Big(\forall[\cstar<:\top]\
\forall\big(x:\embed{\anyinst{T}{\set{\cstar}}}\big)\
([\Psi_A,\Phi_A]\,\eemb{U})\capt W_A\Big)\capt \embed{C}.
\]
The bound capture variable $\cstar$ instantiates the $\ANY$ of the
domain, so the caller supplies the argument's witness explicitly.
The result becomes a modal type carrying the manufactured lock
$\Psi_A = (\set{\cstar},\,\langle R_A\rangle)$, where $R_A$ resolves the
arrow's own capture annotation and domain captures to their roots and
$\langle R_A\rangle$ lists them as one entry per root: the parameter is
separate from each root the function reaches, and those roots are pairwise
separate. The lock is \emph{shallow}: it reads only the arrow's own
annotation and domain, and a separation that a nested closure relies on
lives in the nested arrow's own lock, paid where that arrow is eliminated.
The annotation $W_A = \set{\cstar}\cup\embed{R_A}$ is the same root set
read as a capture set, the resolved body charge of \ruleref{abs}.
For $c<:\top$ and $f:\REF[\TBOOL]\capt\set{c}$, the three binder forms
instantiate this scheme as follows, with $R_A=\roots{\G}{\set{f}}=\set{c}$:
{\small
\[
\begin{aligned}
\text{\bfseries Term}\quad
&\embed{(\forall(x:\top\capt\set{\ANY})\,\TUNIT)\capt\set{f}}\\[-2pt]
&\quad=
\Big(\forall[\cstar<:\top]\,
\forall(x:\top\capt\set{\cstar})\,
([\Psi,\emptyset]\,\TUNIT)\capt\set{\cstar,c}\Big)\capt\set{f},
\qquad
\Psi=(\set{\cstar},\set{c})
\\[5pt]
\text{\bfseries Capture}\quad
&\embed{(\forall[c'<:\RO]\,\TUNIT)\capt\set{f}}\\[-2pt]
&\quad=
\Big(\forall[c'<:\top]\,
([\Psi,\Phi]\,\TUNIT)\capt\set{c',c}\Big)\capt\set{f},
\qquad
\Psi=(\set{c'},\set{c}),
\quad
\Phi=(\set{c'}:\RO)
\\[5pt]
\text{\bfseries Consume}\quad
&\embed{(\forall(\CONSUME\,x:\top\capt\set{\ANY})\,\TUNIT)\capt\set{f}}\\[-2pt]
&\quad=
\Big(\forall\big(\<\cstar,x\>:\EXCAP{\cstar}\,\top\capt\set{\cstar}\big)\,
([\Psi,\emptyset]\,\TUNIT)
\capt\set{\cstar,\CONSUME\,\cstar,c}\Big)\capt\set{f},\\[-2pt]
&\hspace{2em}\Psi=(\set{\cstar,\CONSUME\,\cstar},\set{c}).
\end{aligned}
\]
}
Here $\ANY$ introduces a witness separated from $f$'s root, an $\RO$ bound
translates to $\top$ while $\Phi$ keeps its read-only assumption, and a
consume parameter becomes a consumer that may consume the argument's root.

Term translation is directed by the typing derivation: a surface
variable may be typed by \ruleref{var}, \ruleref{readonly}, or
\ruleref{fresh}, and the translated term depends on which.
Write $\embed{\G}^{\LOCK}$ for the translated context interleaved with
the locks that the enclosing translated binders push.

\begin{theorem}[Typability Preservation]\label{thm:translation}
If $\typ{C}{\G}{t}{T}$ in \capybara{}, then
$\typ{C'}{\embed{\G}^{\LOCK}}{\embed{t}}{\eemb{T}}$ in
\corecapybara{}, for some $C'$ with
$\roots{\embed{\G}^{\LOCK}}{C'}\sqsubseteq
 \roots{\embed{\G}^{\LOCK}}{\embed{C}}$.
\end{theorem}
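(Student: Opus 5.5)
The proof goes by induction on the surface derivation of $\typ{C}{\G}{t}{T}$. Since the term translation is derivation-directed, $\embed{t}$ is defined simultaneously with the induction. The statement is too weak to carry through on its own, so I would first strengthen it into a family of auxiliary lemmas, each proved by induction on the corresponding surface judgment.

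The first four lemmas handle the judgments other than typing. Each establishes that, under $\embed{\G}^{\LOCK}$:
\begin{enumerate}
\item surface subcapturing $\subs{\G}{C_1}{C_2}$ implies core subcapturing of the translations, and also root inclusion $\roots{}{\embed{C_1}}\sqsubseteq\roots{}{\embed{C_2}}$;
\item kinding, access-only, and consumable are preserved;
\item separation, disjointness, and sequential composition are preserved;
\item subtyping is preserved, as $\subs{}{\embed{T_1}}{\embed{T_2}}$ and likewise for $\eemb{\cdot}$.
\end{enumerate}
The only delicate case of item 3 is \ruleref{sep-root}. Surface roots are either consumable capture variables, which \rruleref{sep-consumable} covers, or parameter roots $c$ introduced by \ruleref{abs} or \ruleref{cabs}. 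The latter were translated to $\cstar$ or $c'$, and the binder's manufactured lock $\Psi_A=(\set{\cstar},\langle R_A\rangle)$ sits in $\embed{\G}^{\LOCK}$, so \rruleref{sep-lock} applies. This is exactly why the lock lists one entry per root. Item 4 needs invariance of term-function domains, which the surface \ruleref{fun} guarantees.

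I also need substitution lemmas relating the translation to root instantiation. Concretely, $\embed{T[\ANY\leadsto D]}=[\cstar:=\embed{D}]\embed{T[\ANY\leadsto\set{\cstar}]}$, together with the analogous statement for $\FRESH$ under $\eemb{\cdot}$, and commutation with term substitution $[z:=y]$ in result types.

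With these lemmas in hand, the typing cases go as follows.
\begin{itemize}
\item \ruleref{var} maps directly to the core \rruleref{var}.
\item \ruleref{readonly} translates to the reader value $\RO\,x$.
\item \ruleref{fresh} translates to a pack $\<D_1,\ldots,D_n,x\>$. The premises of \rruleref{pack} are exactly the translated premises, and the charge $D\cup\CONSUME\,D$ matches.
\item \ruleref{let} translates to \rruleref{unpack} when $T$ carries $\FRESH$, and to the core let otherwise. The surface context's $\CONSUME\,c_i$ becomes the unpack's consumable binders, and the freshness lock is added to $\embed{\G}^{\LOCK}$. The kill $\G\ominus C_1$ must be shown harmless for the continuation: using item 3, sequential composition guarantees that the continuation's charge avoids the consumed roots, so the induction hypothesis can be weakened to the killed context.
\item \ruleref{abs} translates to a capture lambda binding $\cstar$, a term lambda, and then a lock around the body. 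The induction hypothesis is used with context $\embed{\G}^{\LOCK},\cstar,x,\LOCK[\Psi_A,\Phi_A]$.
\item \ruleref{app} translates to a capture application at $\embed{D}$, then the term application, then $\UNLOCK$. The core \rruleref{sat} premise is discharged from the surface separation premise against the spine $|\cdot|$. I use that $R_A$ is contained in the roots of the spine capture set and that the pairwise separation of roots within $R_A$ is available from the lock or from \ruleref{sep-root} in the enclosing context.
\item \ruleref{consume-app} translates to a pack followed by \rruleref{consumer-app}.
\end{itemize}
The root-inclusion conclusion on $C'$ is tracked case by case. Unlocks and packs add only roots already present in $\embed{C}$, and \ruleref{sub} is handled by item 1.

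The main obstacle is the \ruleref{app} and \ruleref{capp} case: showing that the shallow lock $\Psi_A$ is satisfiable at the unlock. The surface rule separates $D$ from the entire spine but says nothing explicit about the roots of $R_A$ being pairwise separate. I would try to prove an invariant that every well-formed function type in a surface context has $R_A$ consisting of pairwise separate roots. This should follow by induction, since such sets originate from \ruleref{abs}, \ruleref{cabs}, or unpack binders. If the invariant fails in general, I would weaken $\Psi_A$ to list $R_A$ as a single entry, and recheck the \ruleref{sep-root} case of item 3 accordingly.
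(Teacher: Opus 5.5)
Your skeleton matches the paper's: induction on the derivation, a derivation-directed translation, preservation lemmas for the auxiliary judgments, and an $\UNLOCK$ that pays the \ruleref{app} premise. But item~3 is false as an unconditional lemma, and your case split for \ruleref{sep-root} misses exactly the hard pairs.

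\emph{The \ruleref{sep-root} case.} Take $\lambda(op:T_a).\,\LET fresh=\textsf{alloc}\,()\IN \textsf{rp}\,op\,fresh$. Its surface derivation needs $\sep{\G}{\set{op}}{\set{fresh}}$, that is, \ruleref{sep-root} between the access-only parameter root $\cstar$ and the consumable root $c$ bound by the later $\FRESH$-let.
\begin{itemize}
\item \rruleref{sep-consumable} does not apply, since it needs both roots consumable.
\item The binder's lock $\Psi_A$ does not apply either: it was manufactured before $c$ existed, so $c$ is not one of its entries.
\end{itemize}
Similarly, an outer parameter root that an inner arrow does not capture, or two ambient access-only capture variables bounded by $\epsilon$, are separated by fiat in the surface and by no rule in the core.

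\emph{How the paper handles it.} The paper makes separation and sequencing preservation \emph{conditional}: every distinct-variable pair of translated roots must already be derivable (\emph{payable}) in the core context. It discharges this with an invariant on \emph{sites}: the roots of the current use set, plus the \ruleref{app} witnesses, are pairwise payable.
\begin{itemize}
\item The invariant is trivial at the root, because ambient contexts are required to declare no capture variables.
\item It descends into premises by root monotonicity.
\item At a binder it is re-established by that binder's own lock. Abstractions charge $\set{}$, so the enclosing site cannot help; but the body's charge resolves exactly to $R_A$ plus the $\cstar$-atoms, and all of these are entries of $\Psi_A$.
\item At a $\FRESH$-let it is extended by the ownership lock $\LOCK[\Psi_w,\emptyset]$ of the translated \rruleref{unpack}. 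Its second entry is instantiated at the root closure of the continuation's charge.
\end{itemize}
That ownership lock is what pays the pair $(c,\cstar)$ above, and it does so only because $op$ is charged in the continuation. You mention the freshness lock but never use it, and nothing in item~3 restricts attention to charged roots.

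\emph{The unlock step.} The step you flag as the main obstacle is left open. The surface-level invariant you propose is vacuous, since \ruleref{sep-root} separates any two distinct roots. What the $\UNLOCK$ actually needs is core payability, which requires every root of $R_A$ and of $D$ to be charged at the call. That is not automatic:
\begin{itemize}
\item \ruleref{sub} may widen the subject's arrow annotation, and the witness $D$, beyond what $x$ and $y$ resolve to.
\item The domain's root-free captures $\mfree{C_T}$ need not be reachable from $y$ at all.
\end{itemize}
The paper closes this by translating a \emph{canonical} derivation (declared subjects, and tight witnesses with $\roots{\G}{D}\sqsubseteq\roots{\G}{C_y}$), together with a domain-honesty convention. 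Distinct-variable pairs are then paid by the site invariant. Shared-variable pairs are forced read-only by a root characterization of the source premise and paid by \rruleref{sep-ro}.

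Your fallback of collapsing $R_A$ into a single entry would break the \ruleref{abs} case. A body may rely on \ruleref{sep-root} between two captured roots that no enclosing site charges, and only separate entries can pay for that.

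\emph{Two smaller gaps.}
\begin{itemize}
\item Substituting into stored lock entries does not commute on the nose with re-translating the instantiated codomain, because entries become written-level sets such as $\embed{D}$. A cast through \rruleref{boxed-sat} is needed.
\item Moving the continuation under $\G\ominus C_1$ requires its derivation to consult consume authority only at charged roots. This has to be tracked explicitly; it does not follow from sequencing alone.
\end{itemize}
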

The use sets agree up to root covering:
$P\sqsubseteq P'$ when $P'$ covers every element of $P$ at an equal or
stronger mode.
Most proof cases are homomorphic;
the interesting ones synthesize the explicit constructs of
\Cref{sec:core}.
A \ruleref{fresh} occurrence, retyping $x$ from
$T[\FRESH\leadsto D_1,\ldots,D_n]$ to $T$, synthesizes a pack that
consumes the witnesses,
$\embed{x} = \<\embed{D_1},\ldots,\embed{D_n},x\>$, and a
$\FRESH$-carrying \ruleref{let} translates to an unpack.
A \ruleref{readonly} occurrence synthesizes a reader value,
$\embed{x} = \RO\,x$.
An application supplies the argument's $\ANY$-witness $D$ to the callee,
applies it, and unlocks the result:
$
\embed{x\,y} =
\LET x_1 = x[\embed{D}]\IN \LET x_2 = x_1\,y\IN \UNLOCK\,x_2.
$
The final unlock is where the separation premise of \ruleref{app} is
paid: the instantiated lock demands $\embed{D}$ separate from each root
of $R_A$, the translated image of
$\sep{\G}{D}{\,|(\forall(x:T)U)\capt C|\,}$ at the shallow part of the
spine, with the deeper parts paid at the unlocks of the nested arrows.
On the running example (\Cref{fig:running-example}), the second
application $g\,f_2$
expands to exactly this shape, and its unlock demands
$\sep{\G}{\set{f_2}}{\set{f_1}}$: the premise that \ruleref{sep-root}
discharged in the surface is re-derived by \rruleref{sep-consumable} on
the consumable roots that the translated unpacks bind
(\Cref{app:tr:example}).

With type-preserving translation,
type soundness, memory safety, immutability, and data-race freedom of \corecapybara{}
transfer to well-typed \capybara{} programs.
\section{Case Studies}\label{sec:case-studies}

\subsection{Implementation}\label{sec:impl}

We implemented \capybara{} in Scala~3 by extending its capture checker with
a separation checker.
It runs after capture-set computation and enforces separation and freshness.

The surface API uses three Scala markers: classes with tracked mutable state
extend \lstinline|Mutable|, mutating methods are marked \lstinline|update|,
and consuming methods \lstinline|consume|:
\begin{code}
class Counter extends Mutable:
  private var count: Int = 0
  update def inc(): Unit = count += 1
  def get: Int = count
  consume def finalize(): Int = count
\end{code}

With explicit receivers, the markers correspond to the modes of
\Cref{sec:formal}.
\lstinline|inc| takes its receiver at the default read-write permission,
$\textsf{inc} : \forall(\textsf{this} : \textsf{Counter}\capt\set{\ANY})\ \TUNIT$.
\lstinline|get| takes it read-only at
$\RO\,\textsf{Counter}\capt\set{\RO\,\ANY}$ and is callable through read-only
views.
\lstinline|finalize| takes it at consume mode,
$\textsf{finalize} : \forall(\CONSUME\,\textsf{this} : \textsf{Counter}\capt\set{\ANY})\ \textsf{Int}$:
a call consumes the receiver and kills its aliases.

The read-only qualifier \lstinline|ro| of \Cref{sec:informal:readonly}
does not appear in Scala syntax.
Instead, the checker infers permissions from capture sets, which we call
\emph{mode inference}.
A type whose capture set is in read-only mode is itself read-only:
\lstinline|Counter^{any.rd}| is understood as
$\RO\,\textsf{Counter}\capt\set{\RO\,\ANY}$.
A read-only capture set of a \lstinline|Mutable| type cannot be widened to a
read-write one, as this would turn a read-only view into a read-write reference.

Separation and consume checking follow the root-based scheme of
\Cref{sec:formal:consume}.
The checker resolves every capture set to its \emph{peaks},
the compiler's term for roots (\Cref{sec:informal:separation}).
Two read-only peaks may overlap; an overlap in which one side writes is an
error.
For consume, the checker maintains a set of killed peaks: a
\lstinline|consume| call adds the peaks of its argument, and any later
expression using one is rejected.

\subsection{Mutable Builder of Immutable Data Structures}\label{sec:case-studies:builder}

Immutable data structures such as strings or lists are ubiquitous in
functional programming and immune to data races.
Within separation-checked Scala, immutable lists are \emph{pure}: no
separation constraint applies, and they can be freely shared.

Efficient construction often uses a mutable \emph{builder}, which must neither
be shared nor reused after returning an immutable result. In Scala~3:

\begin{code}
trait ListBuilder[T]:
  update def append(item: T): Unit
  consume def complete(): List[T]
\end{code}

The \lstinline|update| marker on \lstinline|append| requires an exclusive,
read-write reference, preventing mutable aliases across the call.
\lstinline|complete| consumes the builder and disables later access.

We can go one step further:
for data structures such as arrays and byte buffers,
we can obtain an immutable version
simply by ``freezing'' the object, irrevocably dropping its mutation
capability, similar to
the \texttt{freeze} operation of Pony \cite{pony} and Project Verona \cite{verona}:
\begin{code}
def freezeBuffer(consume buf: Buffer^): Buffer^{} = ...
\end{code}
A \lstinline|Buffer| is a mutable data structure;
freezing returns it as a pure one.
The result type \lstinline|Buffer^{}| carries the empty capture set,
so the frozen buffer is untracked and,
like any immutable value, may be shared freely.
This is safe for two reasons:
the parameter is marked \lstinline|consume|,
so no other reference can reach the buffer after the call;
and mode inference makes the result read-only,
since the empty capture set is in read-only mode.
Nothing can write to the frozen buffer through the old references or
the new one, so it is effectively immutable.

\subsection{Fearless Concurrency}

\subsubsection{Structured Parallelism}\label{sub:concurrency_par}

The library combinator \lstinline|runParallel| of
\Cref{sec:informal:separation} provides data-race-free structured parallelism:
it runs its two argument functions on separate threads,
requires their capture sets to be separate,
and suspends the caller until both return,
so the concurrent computations join before it does.

As an example, a parallel quicksort uses a primitive that splits a
\lstinline|Mutable| array into two halves:
\begin{code}
class Array[T] extends Mutable:
  update def splitAt(mid: Int)(f: (Array[T]^, Array[T]^) => Unit): Unit
\end{code}
\lstinline|splitAt| passes the two halves of the receiver to
\lstinline|f| as two array views.
Their types carry distinct \lstinline|any|s,
so \lstinline|f| may treat the halves as separate.

A parallel quicksort runs its two recursive calls concurrently:
\begin{code}
def quickSort[T](arr: Array[T]^, ord: (T, T) ->{any.rd} Boolean): Unit =
  if arr.len > 1 then
    val middle = partition(arr, ord)
    arr.splitAt(middle): (left, right) =>
      runParallel(() => quickSort(left, ord), () => quickSort(right, ord))
\end{code}
The closures capture \lstinline|left| and \lstinline|right| read-write.
They share \lstinline|ord|, whose capture set \lstinline|{any.rd}| contains
only read-only capabilities, so it can safely run in parallel with itself.

\subsubsection{Fork-Join Parallelism}

Fork-join concurrency spawns a computation that outlives the
expression starting it:
the new thread runs until it is joined, possibly never.
Conceptually, spawning \lstinline|f| runs
\lstinline|runParallel(f, rest)| against the rest of the program;
since \lstinline|rest| is unbounded,
nothing the spawned closure captures may ever be used again,
so the parameter is marked \lstinline|consume|:
\begin{code}
def spawn(consume f: () => Unit): ThreadHandle
\end{code}
A common use case is concurrent request handling in a server:
\begin{code}
trait Server:
  def start(address: String) =
    os.listen(address) // open a TCP server
      .onRequest((consume req: Request^) => spawn(() => handleRequest(req)))
  def handleRequest(r: Request^): Unit
\end{code}
Each request is handled on a thread of its own.
The \lstinline|onRequest| callback takes ownership of its request
(\lstinline|consume req|),
and \lstinline|spawn| consumes the handler closure
together with the request it captures.

\section{Discussion}\label{sec:discussion}

\rwpar{Formal scope and implementation}
Semantic soundness for \corecapybara{} establishes type safety, memory safety,
immutability, and separation; confluence establishes data-race freedom.
The type-preserving translation maps well-typed \capybara{} programs into the
core.
We do not define a separate semantics for the surface calculus or prove
semantic preservation.
The calculi omit Scala features such as classes, variance, exceptions, and
concrete array layouts.
The Scala~3 implementation handles them through its integration with the
existing capture checker (\Cref{sec:impl}).
We do not relate the compiler implementation formally to the declarative
calculi, nor prove checker completeness or rule decidability.
For concurrency, the calculi model structured parallelism, in which both
branches join before evaluation continues.
In Scala, APIs such as \lstinline|spawn| consume a closure whose thread may
outlive the call.
The checker enforces consumption, but the thread's behavior and
synchronization primitives are outside the formal model.

\rwpar{Limits of selective enforcement}
A capability may be consumed only through a fresh, dominating reference, as
in external uniqueness~\cite{DBLP:conf/ecoop/ClarkeW03}.
A component of a shared structure therefore cannot be consumed directly.
APIs for such structures must expose operations such as builders, freezing,
or disjoint splitting (\Cref{sec:case-studies}).
Most permissions are inferred; \lstinline|Mutable| classes, receiver modes,
\lstinline|consume| parameters, and \lstinline|fresh| results mark the
contracts that restrict aliasing.
We have not measured the annotation cost of adding these contracts to
existing capture-checked code.
Synchronized sharing still requires specialized or trusted APIs.

\rwpar{Why a semantic proof}
Our main properties concern executions:
separated terms have non-interfering footprints,
read-only runs preserve every initial cell,
and well-typed programs cannot use or free a dead cell.
A syntactic proof would need corresponding heap and liveness invariants at
every reduction step.
Consumption makes these invariants nontrivial:
\rruleref{let} types its continuation in the rewritten context
$\G\ominus C_1$, so contexts do not evolve monotonically,
and deallocated cells remain in the heap, so attempts to access or deallocate
them again are stuck.
The logical relation instead includes footprints, liveness, and killing in
the interpretation of types; the fundamental theorem establishes these
invariants rule by rule.
Semantic typing is also more expressive than the syntactic rules.
Suppose a primitive \lstinline|ne| that tests two references for identity.
The following program is not syntactically well-typed, since nothing in
scope separates \lstinline|a| from \lstinline|b|:
\begin{code}
(a: Ref^, b: Ref^) => if a ne b then par (a := 0), (b := 1)
\end{code}
It can nevertheless be proved semantically well-typed: when the test
succeeds, \lstinline|a| and \lstinline|b| are distinct cells, their
footprints do not interfere, and the parallel composition runs without
getting stuck (\Cref{sec:core}).

\rwpar{Higher-order store and cycles}
The calculus supports higher-order store: references may contain closures and
other references.
Given $x: T$, the program
$\LET a = \ALLOC x \IN \LET b = \ALLOC a \IN \ldots$
binds $b : \REF[\REF[T]\capt\set{a}]\capt\set{c_b}$ where $c_b$ is a fresh root:
the content type names the stored reference $a$ in its capture set,
and reading $b$ returns $a$ at that type.
It does not express cyclic reference graphs.
At allocation, a reference's content type and capture set may mention only
capabilities already in scope, so it may contain only references allocated
earlier.
Supporting cycles remains future work.

\rwpar{Path-dependent capabilities}
The calculi support limited path dependence: capture sets may mention term
variables, and dependent function codomains may name their parameters
(\Cref{sec:formal}), but path selection is absent.
The Scala~3 checker treats capabilities as paths and resolves them by prefix
coverage: \lstinline|a| covers \lstinline|a.b|, while siblings
\lstinline|a.b| and \lstinline|a.c| are treated as separate.
Extending \corecapybara{} with path selection based on Dependent Object Types
(DOT)~\cite{DBLP:conf/birthday/AminGORS16,DBLP:conf/oopsla/RompfA16} is future
work.

\rwpar{Why relational modalities}
Scala APIs mix curried and uncurried functions.
Separation should be invariant under currying:
constraints between parameter lists must survive partial application.
Capture checking already supplies unary capture information.
We considered and rejected reachability types as a basis for separation and
freshness in Scala~\cite{DBLP:journals/pacmpl/WeiBJBR24}.
That calculus reconstructs separation at each application, but formalizes only
unary application, not uncurried functions or Scala's multiple parameter lists.
A later logical-relations model revises the abstraction
rule~\cite{DBLP:journals/pacmpl/BaoJ0BR25}, suggesting that this subtle design
point remains unsettled.
Neither account treats Scala's curried and uncurried calls uniformly.
\capybara{} instead models separation constraints, for curried and
uncurried functions alike, as \emph{relational
modalities}~\cite{DBLP:journals/tocl/NanevskiPP08}.
The encoding is stable and sound:
constraints survive partial application by construction,
and the logical model interprets the modality directly
(\Cref{sec:metatheory}).
The surface stays modality-free:
the translation computes every lock and unlock.
Both the calculus and the Scala checker handle the full application spine,
so \lstinline|f(a, b)| and \lstinline|f(a)(b)| enforce the same separation.
\section{Related Work}\label{sec:related-work}

\rwpar{Capture Checking and Separation}
\capybara{} builds on capture checking and capturing types, which record the
capabilities that a value may use
\cite{DBLP:journals/toplas/BoruchGruszeckiOLLB23,whatisinthebox}.
Capture checking is implemented in Scala~3~\cite{dottycc}
and was first proposed as a foundation for exception safety
\cite{DBLP:conf/scala/OderskyBBLL21}.
Its boxing discipline draws on contextual modal type
theory~\cite{DBLP:journals/tocl/NanevskiPP08},
and spores anticipated restrictions on closure capture in
Scala~\cite{miller14spores}.
Degrees of separation added relative, per-binding constraints to capture
checking, obtaining confluence for fork-join
parallelism~\cite{SeparationChecking,DBLP:journals/pacmpl/XuBO24}.
\capybara{} defines separation through roots: distinct roots are separate
by construction, and capture sets are separate when they resolve to
non-interfering roots, without per-binding declarations.
It also adds freshness, consume, read-only access, and a semantic account of
\corecapybara{}.

\rwpar{Ownership, Borrowing, and Uniqueness}
Rust is the standard example of ownership and borrowing in a practical
language,
supporting memory safety and fearless concurrency through a pervasive alias
discipline~\cite{DBLP:conf/sigada/MatsakisK14,rust,DBLP:journals/cacm/JungJKD21,DBLP:journals/jfp/JungKJBBD18,weiss_oxide_2020}.
Swift and Mojo also support ownership and
borrowing~\cite{swift-noncopyable,swift-borrowing,mojo-ownership}.
Classic ownership and uniqueness systems enforce a distinguished owner,
confinement, unique access, or external uniqueness
\cite{DBLP:conf/ecoop/NobleVP98,DBLP:conf/oopsla/ClarkePN98,DBLP:series/lncs/ClarkeOSW13,boyapati2002ownership,DBLP:journals/mscs/BarendsenS96,DBLP:conf/oopsla/Hogg91,DBLP:conf/ecoop/ClarkeW03};
Mezzo makes permissions the basis of the language
\cite{DBLP:journals/toplas/BalabonskiPP16}.
Adoption and focus recover temporary exclusive access to an aliased value by
setting its adopter aside~\cite{DBLP:conf/pldi/FahndrichD02}.
Tempered domination later applies this mechanism to fearless
concurrency~\cite{DBLP:conf/pldi/MilanoTM22}.
Previous Scala work used capabilities for unique references and
actor isolation~\cite{ECOOP2010-Object-OrientedProgram./Haller10,DBLP:conf/oopsla/HallerL16}.
\capybara{} permits ordinary sharing and uses tracked capture sets to establish
the exclusivity required for separation or consumption, without a lexical
focus construct.

\rwpar{Linear and Modal Retrofitting}
Linear and affine types constrain use of values
\cite{DBLP:journals/tcs/Girard87,DBLP:conf/ifip2/Wadler90},
graded modal types generalize these restrictions
\cite{DBLP:journals/pacmpl/OrchardLE19},
and recent work connects linearity, uniqueness, and borrowing
\cite{DBLP:conf/esop/MarshallVO22,DBLP:journals/pacmpl/MarshallO24}.
Linear Haskell supports ordinary and linear functions side by side
\cite{DBLP:journals/pacmpl/BernardyBNJS18}.
\capybara{} retrofits substructural reasoning by constraining access through
tracked capabilities rather than value use.
\corecapybara{}'s modality $[\Psi,\Phi]E$ records separation constraints between
program-specific capture sets and read-only capability bounds.
OxCaml modes instead qualify individual values along fixed axes such as
locality, uniqueness, and contention
\cite{oxidizingocaml,DBLP:journals/pacmpl/GeorgesPEWDECPD25};
modal effect types index computations by effect contexts
\cite{tang25modal,tang26rows}.
These indices qualify one value or computation at a time; they do not directly
relate the footprints of two otherwise freely aliased values.
Like modal effect types, \corecapybara{} uses Fitch-style locks; its unlocks
discharge program-specific constraints rather than compare modes in a fixed
lattice, as OxCaml submoding does.
The translation inserts lock and unlock forms; \capybara{} has no modal syntax.
Whether these constraint-indexed modalities admit a standard multimodal account
remains open.

\rwpar{Capabilities and Effects}
Capability systems go back to unforgeable tokens for controlling
authority~\cite{DBLP:journals/cacm/DennisH66};
object-capability languages make authority follow from holding object
references~\cite{objectcapabilites}.
Capabilities also induce an effect discipline: the effects of a computation
are bounded by the capabilities it holds
\cite{DBLP:conf/icfem/CraigPGA18,DBLP:conf/ecoop/Gordon19}.
Wyvern makes this idea explicit with object capabilities and
path-dependent effects
\cite{DBLP:conf/ecoop/MelicherSPA17,DBLP:journals/toplas/MelicherXZPA22}.
Second-class values restrict capability scope, while effect handlers reflect
capabilities in computation types
\cite{osvald2016gentrification,DBLP:conf/scala/OsvaldR17,DBLP:conf/ecoop/XhebrajB0R22,DBLP:journals/pacmpl/BrachthauserSO20,DBLP:journals/pacmpl/BrachthauserSLB22}.
Type-and-effect systems and row-polymorphic effect languages track effects
directly
\cite{DBLP:conf/popl/LucassenG88,DBLP:journals/jfp/TalpinJ92,leijen2016koka}.
\capybara{} builds on this capability-based account of effects, adding static
control of exclusive and read-only access.

\rwpar{Separation, Immutability, and Regions}
Separation logic and syntactic systems for interference control use
disjointness and permissions to reason about memory access
\cite{DBLP:conf/popl/Reynolds78,reynolds1989syntactic,o1999syntactic,DBLP:conf/csl/OHearnRY01,DBLP:journals/jfp/OHearn03,DBLP:conf/lics/Reynolds02,DBLP:conf/sas/Boyland03}.
In \capybara{}, separated capture sets denote non-interfering footprints and
justify parallel composition.
Reference immutability restricts mutation through read-only references
\cite{DBLP:conf/pldi/FosterFA99,ACMTrans.Program.Lang.Syst./FosterRJ06,DBLP:conf/oopsla/TschantzE05,DBLP:conf/oopsla/GordonPPBD12,DBLP:conf/ecoop/DortL19,DBLP:journals/pacmpl/LeeL23,DBLP:conf/ecoop/BoylandNR01};
\capybara{} applies the same restriction to capabilities; its separation
judgment permits overlapping read-only footprints.
Region systems and Cyclone provide static memory management through
regions and region capabilities
\cite{DBLP:conf/popl/TofteT94,DBLP:journals/iandc/TofteT97,DBLP:conf/pldi/GrossmanMJHWC02}.
Typed memory-management calculi add static capabilities,
alias types,
and linear regions
\cite{DBLP:conf/popl/CraryWM99,DBLP:journals/toplas/WalkerCM00,DBLP:conf/esop/FluetMA06}.
\capybara{} proves no-use-after-free and no double-free using liveness in the
semantic model, without region handles or region inference.

\rwpar{Capabilities for Data-Race Freedom}
Reference-capability systems such as Pony and Verona assign permissions that
control aliasing and concurrency
\cite{pony,DBLP:journals/pacmpl/ClebschFDYWV17,verona}.
Other systems obtain fearless concurrency through uniqueness with
immutability, branded permissions, affine regions with isolated
fields, or capability-based actors
\cite{DBLP:conf/oopsla/GordonPPBD12,DBLP:journals/pacmpl/YanovskiDJD21,DBLP:conf/pldi/MilanoTM22,DBLP:conf/ecoop/CastegrenW16}.
In \capybara{}, data-race freedom follows from separation and confluence:
parallel computations have non-interfering footprints, and their scheduling
is unobservable.
Separation is checked per captured capability, rather than per object graph,
actor heap, thread reservation, or region.

\rwpar{Semantic Soundness}
Logical relations and the logical approach to type soundness support
semantic accounts of higher-order state,
memory safety,
and concurrency
\cite{DBLP:journals/toplas/AppelM01,ahmed2004semantics,DBLP:conf/esop/Ahmed06,DBLP:conf/popl/AhmedDR09,DBLP:journals/jfp/JungKJBBD18,DBLP:journals/jacm/TimanyKDB24,DBLP:journals/pacmpl/BaoJ0BR25}.
Our logical relation interprets types as predicates over worlds and capture
sets as access-tagged footprints.
Its fundamental theorem yields type safety, memory safety, and immutability;
the same model validates separation.
The core development is mechanized in Lean~4.
Future work is to interpret \capybara{} types in a separation logic, possibly
through implicit dynamic frames~\cite{DBLP:conf/ecoop/SmansJP09}, and relate
that interpretation to Iris~\cite{DBLP:conf/popl/JungSSSTBD15}.
\vspace{-0.25\baselineskip}
\section{Conclusion}\label{sec:conclusion}

We presented System~\capybara{}, extending capturing types with selective
substructural reasoning.
Ordinary aliasing remains unrestricted; APIs require separation, read-only
access, or consumption only where needed.
Separation gives non-interfering footprints, while consumption prevents later
use through aliases.

A type-preserving translation maps \capybara{} to \corecapybara{}, whose
semantic soundness proof we mechanize in Lean~4.
It establishes type safety, memory safety, immutability, and data-race freedom.
We implement \capybara{} as the new separation checker in the Scala~3 compiler,
built on its existing capture checker.
The implementation gives Scala itself Rust-style resource safety and fearless
concurrency without global ownership invariants: pervasive sharing remains the
default, and substructural constraints fit its generic and higher-order idioms.
 
\clearpage
\printbibliography

\clearpage
\appendix
\counterwithin{figure}{section}
\crefalias{section}{appendix}
\crefalias{subsection}{appendix}
\crefalias{subsubsection}{appendix}
\section{Complete Definitions of System \corecapybara{}}\label{app:corecapybara}

This appendix presents the full formal definitions of the core calculus
\corecapybara{}. The definitions follow the Lean mechanization; they depart
from it only in notation and minor presentational simplifications, as stated
in the conventions below.

\subsection{Syntax}

\begin{wide-rules}\noindent
	{\small\begin{multicols}{2}\noindent
		\begin{flalign*}
			x,\,y,\,z         \tag*{\textbf{Variable}}\\
			X                 \tag*{\textbf{Type Variable}}\\
			c                 \tag*{\textbf{Capture Variable}}\\
			s,\,t,\,u\coloneqq\ &           \tag*{\textbf{Term}}\\
			&x                                      \tag*{variable}\\
			&v                              \tag*{value}\\
			&x\,y                              \tag*{app.}\\
			&x\,t                              \tag*{consumer app.}\\
			&x[S]                              \tag*{type app.}\\
			&x[C]                              \tag*{capture app.}\\
			&\ALLOC x                              \tag*{alloc}\\
			&\READ x                              \tag*{read}\\
			&x:=y                              \tag*{write}\\
			&\DEALLOC x                              \tag*{dealloc}\\
			&\mblue{\UNLOCK\,x}                  \tag*{\mblue{unlock}}\\
			&\LET x = t \IN u                  \tag*{let}\\
			&\LET \<c_1,\ldots,c_n,x\> = t \IN u               \tag*{unpack}\\
			&\PAR t_1, t_2                  \tag*{parallel}\\
			&\IF x \THEN t_1 \ELSE t_2         \tag*{cond.}\\
			v\coloneqq\ &           \tag*{\textbf{Value}}\\
			&\lambda(x: T)t                      \tag*{term lambda}\\
			&\lambda(\<c,x\>: \EXCAP{c}T)t                      \tag*{consumer lambda}\\
			&\lambda[X<:S]t                      \tag*{type lambda}\\
			&\lambda[c<:B]t                      \tag*{capture lambda}\\
			&\RO\,x                      \tag*{reader}\\
			&\mblue{\LOCK[\Psi,\Phi]t}                      \tag*{\mblue{modal}}\\
			&\<C_1,\ldots,C_n,x\>                      \tag*{pack}\\
			&()\,\mid\,\TTRUE\,\mid\,\TFALSE    \tag*{constants}\\
			a\coloneqq\ &           \tag*{\textbf{Answer}}\\
			&x                      \tag*{variable}\\
			&v                      \tag*{value}\\
			m\coloneqq\ &           \tag*{\textbf{Mutability}}\\
			&\epsilon                      \tag*{writable}\\
			&\RO                        \tag*{read-only}\\
			\mu\coloneqq\ &           \tag*{\textbf{Access Mode}}\\
			&m                      \tag*{access}\\
			&\CONSUME                      \tag*{consume}\\
		\end{flalign*}
		\begin{flalign*}
			\mblue{\sepctx\coloneqq\ }&\mblue{C_1,\ldots,C_n}        \tag*{\mblue{\textbf{Separation Context}}}\\
			\mblue{\modectx\coloneqq\ }&\mblue{C_1:m_1,\ldots,C_n:m_n}        \tag*{\mblue{\textbf{Mode Context}}}\\
			\alpha\coloneqq\ &           \tag*{\textbf{Consume Mode}}\\
			&\epsilon                      \tag*{access-only}\\
			&\CONSUME                      \tag*{consumable}\\
			&\KILLED                      \tag*{killed}\\
			B\coloneqq\ &           \tag*{\textbf{Capture Bound}}\\
			&C                      \tag*{capture set}\\
			&\top                      \tag*{unbounded}\\
			\theta\coloneqq\ &           \tag*{\textbf{Capture}}\\
			&x                      \tag*{variable}\\
			&c                      \tag*{capture variable}\\
			C\coloneqq\ &\set{\mu_1\,\theta_1,\dots,\mu_n\,\theta_n}     \tag*{\textbf{Capture Set}}\\
			T,\,U\coloneqq\ &           \tag*{\textbf{Type}}\\
			&m\,S\capt C                    \tag*{capturing}\\
			&S                    \tag*{pure}\\
			R,\,S\coloneqq\ &           \tag*{\textbf{Shape Type}}\\
			&\top                    \tag*{top}\\
			&X                    \tag*{type variable}\\
			&\forall(x: T)E             \tag*{term function}\\
			&\forall[X<:S]E             \tag*{type function}\\
			&\forall[c<:B]E             \tag*{capture function}\\
			&\forall(\<c,x\>: \EXCAP{c}T)E             \tag*{consumer}\\
			&\mblue{[\Psi,\Phi]E}             \tag*{\mblue{modal}}\\
			&\REF[T]                       \tag*{reference}\\
			&\TBOOL\,\mid\,\TUNIT            \tag*{constant}\\
			E,\,F\coloneqq\ &           \tag*{\textbf{Existential Type}}\\
			&T \tag*{type}\\
			&\EXCAP{c_1,\ldots,c_n}T \tag*{exists}\\
			\G,\,\Delta\coloneqq\ &           \tag*{\textbf{Context}}\\
			&\emptyset                    \tag*{empty}\\
			&\G, x: T                    \tag*{term binding}\\
			&\G, X<:S                \tag*{type binding}\\
			&{\G, \alpha\,c<:B}                    \tag*{capture binding}\\
			&{\G, \mblue{\LOCK[\sepctx,\modectx]}}                    \tag*{\mblue{lock binding}}\\
	\end{flalign*}
	\end{multicols}}
	\vspace{-3em}
	\caption{Abstract syntax of System \corecapybara{}.}\label{fig:core-syntax}
\end{wide-rules}
 
\Cref{fig:core-syntax} presents the abstract syntax of System \corecapybara{}.
\corecapybara{} has
memory primitives for allocating, reading, writing,
and deallocating references,
reader values $\RO\,x$ that grant read-only access to a reference,
modal values $\LOCK[\Psi,\Phi]t$ with the eliminator $\UNLOCK\,x$,
parallel composition, and conditionals.
An existential type $\EXCAP{c_1,\ldots,c_n}T$ hides $n$ capture variables;
its introduction form is a pack $\<C_1,\ldots,C_n,x\>$ that supplies the $n$
witnesses, and its elimination forms are the unpack
$\LET\<c_1,\ldots,c_n,x\>=t\IN u$ and the consumer lambda
$\lambda(\<c,x\>:\EXCAP{c}T)t$, a first-class function that consumes a packed
argument.
Its application $x\,t$ passes the term $t$, which distinguishes it from an
ordinary application $x\,y$ of a variable to a variable.
A capture bound $B$ is a capture set or the trivial bound $\top$.
A capture binding $\alpha\,c<:B$ carries a consume mode $\alpha$:
an access-only variable ($\epsilon$) may only be accessed,
a consumable one ($\CONSUME$) may additionally be consumed,
and a killed one ($\KILLED$) has been consumed and may be neither
accessed nor consumed.
The mode $\KILLED$ does not occur in source programs;
it arises in the contexts of typing derivations
through the kill operation defined below.

\subsection{Static Semantics}

The static semantics consists of nine judgments:
subcapturing $\subs{\G}{C_1}{C_2}$,
capability kinding $\typs{\G}{C}{m}$,
subbounding $\subs{\G}{B_1}{B_2}$,
separation $\sep{\G}{C_1}{C_2}$,
disjointness $\disj{\G}{C_1}{C_2}$,
sequential composition $\seqcomp{\G}{C_1}{C_2}$,
satisfaction $\sat{\G}{[\Psi,\Phi]}$,
subtyping $\subs{\G}{E_1}{E_2}$,
and term typing $\typ{C}{\G}{t}{E}$.

Two conventions apply throughout.
First, all rules are stated up to $\alpha$-renaming,
and every type and capture set is well-scoped in the context where it occurs;
the mechanization enforces this with de Bruijn indices.
In particular, in rule \rruleref{abs} the capture set $C$ cannot mention
the bound variable $x$,
and in rules \rruleref{let} and \rruleref{unpack}
neither $C_2$ nor $E$ can mention the variables bound by the continuation.
Second, in the mechanization, abstractions, modal values, and parallel
compositions carry the capture sets appearing in their typing rules as
annotations; we omit these annotations, except the two capture sets on a
parallel composition, which the reduction rules consult and which we display in
\Cref{fig:core-smallstep}.
As in the surface language, we omit the default permission $\epsilon$ of a
capturing type $m\,S\capt C$ and write $S\capt C$.
We identify a pure capturing type $S\capt\set{}$ with the shape type $S$.

\subsubsection{Capture Set Operations and Predicates}

\rwpar{Mode application}
Applying an access mode to a capture set acts on every element:
$\epsilon\,C = C$;
$\RO\,C$ replaces the mutability of every element of $C$ by $\RO$,
leaving $\CONSUME$-marked elements unchanged;
and $\CONSUME\,C$ marks every element of $C$ with $\CONSUME$.
Mutabilities are ordered by $\RO\preceq\epsilon$.

\rwpar{Roots}
The \emph{roots} of a capture set resolve term variables through the context
until only capture variables remain:
\begin{align*}
\roots{\G}{\set{}} &= \set{} \\
\roots{\G}{C_1\cup C_2} &= \roots{\G}{C_1}\cup\roots{\G}{C_2} \\
\roots{\G}{\set{\mu\,c}} &= \set{\mu\,c} \\
\roots{\G}{\set{\mu\,x}} &= \mu\,\roots{\G}{C}
  \quad\text{where}\ x: m\,S\capt C\in\G
\end{align*}
A capture set containing only capture variables is a \emph{root set},
ranged over by $P$.
We write $P\vert_{\CONSUME}$ for the restriction of $P$ to its
$\CONSUME$-marked elements.
The mechanization names this function \textsf{peaks}.

\rwpar{Predicates}
The judgments on capture sets used by the rules are defined via roots:
\begin{itemize}
\item $\accessonly{\G}{C}$ holds iff $\roots{\G}{C}$ contains no
  $\CONSUME$-marked element.
  A bound is access-only if it is $\top$ or an access-only capture set.
\item $\consumable{\G}{C}$ holds iff every capture variable occurring in
  $\roots{\G}{C}$ is bound with mode $\CONSUME$ in $\G$.
\item $\accessible{\G}{C}$ holds iff no capture variable occurring in
  $\roots{\G}{C}$ is bound with mode $\KILLED$ in $\G$.
\end{itemize}

\rwpar{Killing}
The context $\G\ominus C$ kills the capture variables consumed by $C$:
\[
\G\ominus C \coloneqq
\G\big[\;\alpha\,c<:B \;\mapsto\; \KILLED\,c<:B
  \;\big|\; \CONSUME\,c\in\roots{\G}{C}\;\big],
\]
replacing the binding of every capture variable that the roots of $C$
mark consumed and leaving all other bindings unchanged.

\subsubsection{Subcapturing, Capability Kinding, and Subbounding}

\begin{figure*}[htbp]
\rulefigsetup

\judgheader{Subcapturing}{\subs{\G}{C_1}{C_2}}
\begin{multicols}{3}

\infrule[\rruledef{sc-trans}]
{\subs{\G}{C_1}{C_2}\andalso
  \subs{\G}{C_2}{C_3}}
{\subs{\G}{C_1}{C_3}}

\infrule[\rruledef{sc-elem}]
{C_1\subseteq C_2}
{\subs{\G}{C_1}{C_2}}

\infrule[\rruledef{sc-mode}]
{m_1\preceq m_2}
{\subs{\G}{m_1\,C}{m_2\,C}}

\infrule[\rruledef{sc-union}]
{\subs{\G}{C_1}{C}\andalso
  \subs{\G}{C_2}{C}}
{\subs{\G}{C_1\cup C_2}{C}}

\infrule[\rruledef{sc-var}]
{x:m\,S\capt C\in\G}
{\subs{\G}{\set{x}}{C}}

\infrule[\rruledef{sc-cvar}]
{\epsilon\,c<:C\in\G}
{\subs{\G}{\set{c}}{C}}

\infax[\rruledef{sc-ro}]
{\subs{\G}{\RO\,C}{C}}

\infrule[\rruledef{sc-ro-mono}]
{\subs{\G}{C_1}{C_2}}
{\subs{\G}{\RO\,C_1}{\RO\,C_2}}

\infrule[\rruledef{sc-consume-mono}]
{\subs{\G}{C_1}{C_2}}
{\subs{\G}{\CONSUME\,C_1}{\CONSUME\,C_2}}

\end{multicols}

\judgheader{Capability Kinding}{\typs{\G}{C}{m}}
\begin{multicols}{3}

\infax[\rruledef{k-empty}]
{\typs{\G}{\set{}}{m}}

\infrule[\rruledef{k-union}]
{\typs{\G}{C_1}{m}\andalso
 \typs{\G}{C_2}{m}}
{\typs{\G}{C_1\cup C_2}{m}}

\infrule[\rruledef{k-sc}]
{\subs{\G}{C_1}{C_2}\andalso
 \typs{\G}{C_2}{m}}
{\typs{\G}{C_1}{m}}

\infax[\rruledef{k-rw}]
{\typs{\G}{C}{\epsilon}}

\infrule[\rruledef{k-imm}]
{\LOCK[\Psi,\Phi]\in\G\andalso
 (C:\RO)\in\Phi}
{\typs{\G}{C}{\RO}}

\infax[\rruledef{k-ro}]
{\typs{\G}{\RO\,C}{\RO}}

\end{multicols}

\judgheader[Subcapturing extended to capture bounds plus the following rule:]{Subbounding}{\subs{\G}{B_1}{B_2}}
\infax[\rruledef{sb-top}]
{\subs{\G}{B}{\top}}

\vspace{-1.5em}
\caption{Subcapturing, capability kinding, and subbounding of System \corecapybara{}.}\label{fig:core-subcapturing}

\end{figure*}
 
Figure~\ref{fig:core-subcapturing} defines subcapturing, capability kinding, and
subbounding.
Subcapturing extends the surface rules with
\rruleref{sc-cvar}, which resolves a capture variable to its declared bound
and applies only to access-only bindings,
\rruleref{sc-ro}, which forgets a read-only restriction upward,
and the congruence rules \rruleref{sc-ro-mono} and \rruleref{sc-consume-mono}.
Capability kinding assigns a mutability to a capture set:
every set has kind $\epsilon$ by \rruleref{k-rw},
while kind $\RO$ certifies that the set grants at most read-only access,
either because the set is $\RO$-marked \rruleref{k-ro}
or because an enclosing lock asserts it \rruleref{k-imm}.
Subbounding lifts subcapturing to capture bounds,
with $\top$ above every bound.

\subsubsection{Separation, Sequential Composition, and Satisfaction}

\begin{figure*}[htbp]
\rulefigsetup

\judgheader{Separation}{\sep{\G}{C_1}{C_2}}
\begin{multicols}{2}

\infrule[\rruledef{sep-symm}]
{\sep{\G}{C_1}{C_2}}
{\sep{\G}{C_2}{C_1}}

\infrule[\rruledef{sep-union}]
{\sep{\G}{C_1}{C}\andalso
 \sep{\G}{C_2}{C}}
{\sep{\G}{C_1\cup C_2}{C}}

\infax[\rruledef{sep-empty}]
{\sep{\G}{\set{}}{C}}

\infrule[\rruledef{sep-ro}]
{\accessonly{\G}{C_1}\andalso
 \accessonly{\G}{C_2}\\
 \typs{\G}{C_1}{\RO}\andalso
 \typs{\G}{C_2}{\RO}}
{\sep{\G}{C_1}{C_2}}

\infrule[\rruledef{sep-sc}]
{\sep{\G}{C_1}{C_2}\andalso
 \subs{\G}{C_1'}{C_1}}
{\sep{\G}{C_1'}{C_2}}

\infrule[\rruledef{sep-lock}]
{\LOCK[\Psi,\Phi]\in\G\\
 C_1,\,C_2\ \text{are distinct entries of}\ \Psi}
{\sep{\G}{C_1}{C_2}}

\infrule[\rruledef{sep-consumable}]
{\CONSUME\,c_1<:B_1\in\G\andalso
 \CONSUME\,c_2<:B_2\in\G\\
 c_1\neq c_2}
{\sep{\G}{\set{\mu_1\,c_1}}{\set{\mu_2\,c_2}}}

\end{multicols}

\judgheader[The same rules as separation, dropping \rruleref{sep-ro} and \rruleref{sep-lock}.]{Disjointness}{\disj{\G}{C_1}{C_2}}
\judgheader{Sequential Composition}{\seqcomp{\G}{C_1}{C_2}}
\begin{multicols}{2}

\infrule[\rruledef{seq-sc}]
{\subs{\G}{C_1}{C_1'}\andalso
 \seqcomp{\G}{C_1'}{C_2}}
{\seqcomp{\G}{C_1}{C_2}}

\infrule[\rruledef{seq-union}]
{\seqcomp{\G}{C_1}{C_3}\andalso
 \seqcomp{\G}{C_2}{C_3}}
{\seqcomp{\G}{(C_1\cup C_2)}{C_3}}

\infrule[\rruledef{seq-access-only}]
{\accessonly{\G}{C_1}}
{\seqcomp{\G}{C_1}{C_2}}

\infrule[\rruledef{seq-sep}]
{\sep{\G}{C_1}{C_2}}
{\seqcomp{\G}{C_1}{C_2}}

\end{multicols}

\judgheader{Satisfaction}{\sat{\G}{[\Psi,\Phi]}}
\infrule[\rruledef{sat}]
{\typs{\G}{C}{m}\quad\text{for each}\ (C:m)\in\Phi\\
 \sep{\G}{C_1}{C_2}\quad\text{for each pair of distinct entries}\ C_1,\,C_2\ \text{of}\ \Psi}
{\sat{\G}{[\Psi,\Phi]}}

\vspace{-1.5em}
\caption{Separation, sequential composition, and satisfaction of System \corecapybara{}.}\label{fig:core-separation}

\end{figure*}
 
Figure~\ref{fig:core-separation} defines the three judgments that govern
aliasing and consumption.
Separation $\sep{\G}{C_1}{C_2}$ certifies that two capture sets do not
interfere:
two access-only sets of read-only kind are separated
\rruleref{sep-ro};
two distinct consumable capture variables are separated
\rruleref{sep-consumable};
a lock in the context contributes the separation assumptions it records
\rruleref{sep-lock};
and separation is preserved when the left-hand set shrinks to a subcapture
\rruleref{sep-sc}.
Disjointness $\disj{\G}{C_1}{C_2}$ is the fragment of separation obtained by
dropping \rruleref{sep-ro} and \rruleref{sep-lock}.
In particular, it admits no read-only sharing: two read-only views of the
same capability are separate but not disjoint.
Disjoint sets thus hold genuinely distinct capabilities, and \rruleref{pack}
uses the judgment to keep the witnesses of an existential pairwise apart.
Sequential composition $\seqcomp{\G}{C_1}{C_2}$ states that a computation
using $C_1$ may run before one using $C_2$:
either $C_1$ consumes nothing \rruleref{seq-access-only}
or it is separated from $C_2$ \rruleref{seq-sep}.
Satisfaction $\sat{\G}{[\Psi,\Phi]}$ discharges a modal context:
each mode entry of $\Phi$ is established by kinding
and each pair of distinct entries of $\Psi$ by separation.

\subsubsection{Subtyping}

\begin{figure*}[htbp]
\rulefigsetup

\judgheader{Subtyping}{\subs{\G}{E_1}{E_2}}
\begin{multicols}{2}

\infax[\rruledef{top}]
{\subs{\G}{S}{\top}}

\infax[\rruledef{refl}]
{\subs{\G}{E}{E}}

\infrule[\rruledef{trans}]
{\subs{\G}{E_1}{E_2}\andalso
 \subs{\G}{E_2}{E_3}}
{\subs{\G}{E_1}{E_3}}

\infrule[\rruledef{tvar}]
{X<:S\in\G}
{\subs{\G}{X}{S}}

\infrule[\rruledef{capt}]
{\subs{\G}{S_1}{S_2}\andalso
 \subs{\G}{C_1}{C_2}}
{\subs{\G}{m\,S_1\capt C_1}{m\,S_2\capt C_2}}

\infrule[\rruledef{fun}]
{\subs{\G}{T_2}{T_1}\andalso
 \subs{(\G, x: T_2)}{E_1}{E_2}}
{\subs{\G}{\forall(x: T_1) E_1}{\forall(x: T_2) E_2}}

\infrule[\rruledef{tfun}]
{\subs{\G}{S_2}{S_1}\andalso
 \subs{(\G, X<:S_2)}{E_1}{E_2}}
{\subs{\G}{\forall[X<:S_1] E_1}{\forall[X<:S_2] E_2}}

\infrule[\rruledef{cfun}]
{\subs{\G}{B_2}{B_1}\andalso
 \subs{(\G, \epsilon\,c<:B_2)}{E_1}{E_2}}
{\subs{\G}{\forall[c<:B_1] E_1}{\forall[c<:B_2] E_2}}

\infrule[\rruledef{boxed}]
{\subs{(\G, \LOCK[\Psi,\Phi])}{E_1}{E_2}}
{\subs{\G}{[\Psi,\Phi]E_1}{[\Psi,\Phi]E_2}}

\infrule[\rruledef{boxed-sat}]
{\sat{(\G, \LOCK[\Psi_2,\Phi_2])}{[\Psi_1,\Phi_1]}}
{\subs{\G}{[\Psi_1,\Phi_1]E}{[\Psi_2,\Phi_2]E}}

\infrule[\rruledef{exists}]
{\subs{(\G, \epsilon\,c_1<:\top, \ldots, \epsilon\,c_n<:\top)}{T_1}{T_2}}
{\subs{\G}{\EXCAP{c_1,\ldots,c_n}T_1}{\EXCAP{c_1,\ldots,c_n}T_2}}

\end{multicols}

\vspace{-1.5em}
\caption{Subtyping rules of System \corecapybara{}.}\label{fig:core-subtyping}

\end{figure*}
 
Figure~\ref{fig:core-subtyping} defines subtyping.
The judgment relates existential types, with capturing types included as the
existential types that bind no capture variable.
Rule \rruleref{capt} combines shape subtyping with subcapturing;
the remaining congruence rules relate shape types.
The mechanization instead inlines the subcapturing premise of
\rruleref{capt} into each congruence rule.
Only pure types are below $\top$ \rruleref{top}.
Rule \rruleref{boxed-sat} replaces the modal context of a modal type by
another one that suffices to satisfy it:
the original context $[\Psi_1,\Phi_1]$ must be satisfied under a lock
carrying the new context $[\Psi_2,\Phi_2]$.
Rule \rruleref{exists} relates two existentials of the same arity under their
$n$ freshly bound, access-only capture variables.
Consumer types carry no congruence rule; they are related only by
\rruleref{refl} and \rruleref{trans}.
Reference and reader types are invariant in their content type.

\subsubsection{Term Typing}

\begin{figure*}[htbp]
\rulefigsetup

\judgheader{Typing}{\typ{C}{\G}{t}{E}}
\begin{multicols}{2}

\infrule[\rruledef{var}]
{x: m\,S\capt C\in \G}
{\typ{\set{}}{\G}{x}{m\,S\capt \set{x}}}

\infrule[\rruledef{reader}]
{x: \REF[T]\capt C\in \G}
{\typ{\set{}}{\G}{\RO\,x}{\RO\,\REF[T]\capt\set{\RO\,x}}}

\infrule[\rruledef{abs}]
{\typ{C}{(\G, x: T)}{t}{E}}
{\typ{\set{}}{\G}{\lambda(x: T)t}{(\forall(x: T) E)\capt C}}

\infrule[\rruledef{tabs}]
{\typ{C}{(\G, X<:S)}{t}{E}}
{\typ{\set{}}{\G}{\lambda[X<:S]t}{(\forall[X<:S] E)\capt C}}

\infrule[\rruledef{cabs}]
{\accessonly{\G}{B}\andalso
 \typ{C}{(\G, \epsilon\,c<:B)}{t}{E}}
{\typ{\set{}}{\G}{\lambda[c<:B]t}{(\forall[c<:B] E)\capt C}}

\infrule[\rruledef{consumer}]
{\typ{C\cup\set{c,\CONSUME\,c}}{(\G, \CONSUME\,c<:\top, x: T)}{t}{E}}
{\typ{\set{}}{\G}{\lambda(\<c,x\>: \EXCAP{c}T)t}{(\forall(\<c,x\>:\EXCAP{c}T) E)\capt C}}

\infrule[\rruledef{lock}]
{\typ{C}{(\G, \LOCK[\Psi,\Phi])}{t}{E}}
{\typ{\set{}}{\G}{\LOCK[\Psi,\Phi]\,t}{([\Psi,\Phi]E)\capt C}}

\infrule[\rruledef{pack}]
{\textstyle D=\bigcup_{i=1}^{n} C_i\andalso
 \accessonly{\G}{D}\andalso
 \consumable{\G}{D}\\
 \disj{\G}{C_i}{C_j}\ \text{for}\ i\neq j\\
 \typ{\set{}}{\G}{x}{[c_1:=C_1,\ldots,c_n:=C_n]T}}
{\typ{D\cup\CONSUME\,D}{\G}{\<C_1,\ldots,C_n,x\>}{\EXCAP{c_1,\ldots,c_n}T}}

\infrule[\rruledef{app}]
{\accessible{\G}{\set{x}}\\
 \typ{\set{}}{\G}{x}{(\forall(z: T) E)\capt \set{x}}\andalso
 \typ{\set{}}{\G}{y}{T}}
{\typ{\set{x}}{\G}{x\,y}{[z:=y]E}}

\infrule[\rruledef{consumer-app}]
{\seqcomp{\G}{C_1}{\set{x}}\andalso
 \accessible{\G}{\set{x}}\\
 \typ{\set{}}{\G}{x}{(\forall(\<c,x\>: \EXCAP{c}T) E)\capt \set{x}}\andalso
 \typ{C_1}{\G}{t}{\EXCAP{c}T}}
{\typ{C_1\cup\set{x}}{\G}{x\,t}{E}}

\infrule[\rruledef{tapp}]
{\accessible{\G}{\set{x}}\\
 \typ{\set{}}{\G}{x}{(\forall[X<:S] E)\capt \set{x}}}
{\typ{\set{x}}{\G}{x[S]}{[X:=S]E}}

\infrule[\rruledef{capp}]
{\accessible{\G}{\set{x}}\andalso
 \accessonly{\G}{D}\\
 \typ{\set{}}{\G}{x}{(\forall[c<:D] E)\capt \set{x}}}
{\typ{\set{x}}{\G}{x[D]}{[c:=D]E}}

\infrule[\rruledef{unlock}]
{\typ{\set{}}{\G}{x}{([\Psi,\Phi]E)\capt \set{x}}\\
 \sat{\G}{[\Psi,\Phi]}}
{\typ{\set{x}}{\G}{\UNLOCK\,x}{E}}

\infrule[\rruledef{let}]
{\seqcomp{\G}{C_1}{C_2}\andalso
 \typ{C_1}{\G}{t}{T}\\
 \typ{C_2}{((\G\ominus C_1), x: T)}{u}{E}}
{\typ{C_1\cup C_2}{\G}{\LET x = t\IN u}{E}}

\end{multicols}

\infrule[\rruledef{unpack}]
{\typ{C_1}{\G}{t}{\EXCAP{c_1,\ldots,c_n}T}\andalso
 \seqcomp{\G}{C_1}{C_2}\andalso
 \Psi_w = \set{c_1,\ldots,c_n}\,,\ C_2\andalso
 \Phi_w=\emptyset\\
 \G' = (\G\ominus C_1), \CONSUME\,c_1<:\top, \ldots, \CONSUME\,c_n<:\top,
   \LOCK[\Psi_w,\Phi_w], x: T\\
 \typ{C_2\cup\set{c_1,\ldots,c_n,\CONSUME\,c_1,\ldots,\CONSUME\,c_n}}{\G'}{u}{E}}
{\typ{C_1\cup C_2}{\G}{\LET \<c_1,\ldots,c_n,x\> = t\IN u}{E}}

\begin{multicols}{2}

\infrule[\rruledef{alloc}]
{\typ{\set{}}{\G}{x}{T}}
{\typ{\set{}}{\G}{\ALLOC x}{\EXCAP{c}\REF[T]\capt \set{c}}}

\infrule[\rruledef{read}]
{\accessible{\G}{\set{x}}\andalso
 \typ{\set{}}{\G}{x}{\RO\,\REF[T]\capt C}}
{\typ{\set{x}}{\G}{\READ x}{T}}

\infrule[\rruledef{write}]
{\accessible{\G}{\set{x}}\\
 \typ{\set{}}{\G}{x}{\REF[T]\capt C}\andalso
 \typ{\set{}}{\G}{y}{T}}
{\typ{\set{x}}{\G}{x:=y}{\TUNIT}}

\infrule[\rruledef{dealloc}]
{\consumable{\G}{\set{x}}\\
 \typ{\set{}}{\G}{x}{\REF[T]\capt \set{x}}}
{\typ{\set{\CONSUME\,x}}{\G}{\DEALLOC x}{\TUNIT}}

\infrule[\rruledef{if}]
{\typ{C_1}{\G}{x}{\TBOOL}\\
 \typ{C_2}{\G}{t_1}{E}\andalso
 \typ{C_3}{\G}{t_2}{E}}
{\typ{C_1\cup C_2\cup C_3}{\G}{\IF x\THEN t_1\ELSE t_2}{E}}

\infrule[\rruledef{par}]
{\sep{\G}{C_1}{C_2}\\
 \typ{C_1}{\G}{t_1}{E_1}\andalso
 \typ{C_2}{\G}{t_2}{E_2}}
{\typ{C_1\cup C_2}{\G}{\PAR t_1, t_2}{\TUNIT}}

\infax[\rruledef{unit}]
{\typ{\set{}}{\G}{()}{\TUNIT}}

\infax[\rruledef{true}]
{\typ{\set{}}{\G}{\TTRUE}{\TBOOL}}

\infax[\rruledef{false}]
{\typ{\set{}}{\G}{\TFALSE}{\TBOOL}}

\infrule[\rruledef{sub}]
{\typ{C}{\G}{t}{E}\\
 \subs{\G}{C}{C'}\andalso
 \subs{\G}{E}{E'}}
{\typ{C'}{\G}{t}{E'}}

\end{multicols}

\vspace{-1.5em}
\caption{Typing rules of System \corecapybara{}.}\label{fig:core-typing}

\end{figure*}
 
Figure~\ref{fig:core-typing} presents the typing rules.
The judgment $\typ{C}{\G}{t}{E}$ carries a use set $C$ bounding the
capabilities that $t$ accesses or consumes.
A variable occurrence has an empty use set,
and its type refines the declared capture set to $\set{x}$
\rruleref{var};
uses are charged at elimination forms,
each of which requires the eliminated capability to be accessible.
Type and capture application charge the empty use set: a type or capture
lambda binds a \emph{value} (\Cref{fig:core-syntax}), so applying it
produces that value without running anything, and the value's work is
charged where it is later eliminated; the accessibility and access-only
premises remain. The restriction to value bodies costs no expressiveness,
since a computation is suspended by boxing it under a lock with an empty
separation context, whose satisfaction is vacuous. Unlock, by contrast,
charges its subject \rruleref{unlock}: opening a modal value releases the
suspended computation inside, and the subject's capture set, which carries
the modal's annotation, is what pays for it.
Rule \rruleref{let} sequences the two use sets through
$\seqcomp{\G}{C_1}{C_2}$ and types the continuation in $\G\ominus C_1$,
so capture variables consumed by the bound term are killed and no longer
accessible.
Rule \rruleref{unpack} eliminates an $n$-ary existential.
It additionally requires the consumed roots of $C_1$ to be
consumable, and binds $n$ fresh consumable capture variables
$c_1,\ldots,c_n$ that the continuation may access and consume, as recorded by
the extra use set $\set{c_1,\ldots,c_n,\CONSUME\,c_1,\ldots,\CONSUME\,c_n}$.
Between the witnesses and $x$ it pushes the manufactured \emph{ownership
lock} $\LOCK[\Psi_w,\Phi_w]$, recording that the witnesses are separate from
the continuation charge $C_2$; the sequencing premise pays for it in the
footprint model, where every witness location is either confined to what
$C_1$ consumes, which $\seqcomp{\G}{C_1}{C_2}$ keeps out of $C_2$, or
freshly allocated by the bound term (see \Cref{rem:tr:ownership}).
Dually, \rruleref{pack} consumes the packed capture sets: writing
$D=\bigcup_i C_i$, each $C_i$ must be access-only and consumable, the witnesses
must be pairwise disjoint, and the pack charges $D\cup\CONSUME\,D$.
The consumer lambda \rruleref{consumer} is a first-class function that unpacks
its argument: its body binds the consumable witness $c$ and the term variable
$x$, and it kills every consumable capability of $\G$ that its closure $C$ does
not capture, chosen as the root set $X$.
Its application \rruleref{consumer-app} sequences the argument's use before the
consumer's own, mirroring \rruleref{unpack}.
Allocation returns a reference under a fresh capture variable
\rruleref{alloc};
deallocation requires $\set{x}$ to be consumable and charges
$\set{\CONSUME\,x}$ \rruleref{dealloc}.
Writing goes through the reference itself \rruleref{write},
whereas reading goes through a reader value \rruleref{reader}, \rruleref{read}.
Parallel composition requires the use sets of the two branches to be
separated \rruleref{par}.

\subsection{Dynamic Semantics}

The dynamic semantics is store-based.
Evaluation runs a term against a \emph{heap}.
At runtime, term variables additionally range over heap locations $l$:
evaluation substitutes locations for bound variables.
A \emph{configuration} $\cfg{H}{t}$ pairs a heap $H$ with the term $t$ under evaluation.
A heap is a finite map from locations to cells,
\[
H(l) \;::=\; v \;\mid\; \refcell{l'} \;\mid\; \deadcell{l'},
\]
a stored value $v$, 
a live reference cell $\refcell{l'}$ holding the location $l'$, 
or a dead cell $\deadcell{l'}$ left by a deallocation.
We write $H(l)$ for lookup, $\dom H$ for the set of allocated locations,
and $H[l\mapsto c]$ for the heap 
that extends or updates $H$ at $l$.
Each stored value also carries its \emph{footprint}, the set of runtime
capabilities it holds, which is the runtime image of its capture set; the
parallel rules consult it and we leave it implicit.

Both relations emit a \emph{trace} $\tau$, a finite sequence of heap events
recording every access the run performs.
An event is an access $\mu\,l$, a read when $\mu=\RO$ and a write when
$\mu=\epsilon$, an allocation $\ALLOC l$, or a deallocation $\DEALLOC l$.
Traces are the objects over which the standardization and confluence
theorems are stated (\Cref{app:meta:std}).
We write $\tau_1\,\tau_2$ for concatenation and omit the label on a step that emits
the empty trace.

\subsubsection{Small-step Reduction}

\begin{figure*}[htbp]
\rulefigsetup

\judgheader{Small-step reduction}{\cfg{H}{t} \stepsto{\tau} \cfg{H'}{t'}}
\begin{multicols}{2}

\infrule[\rruledef{s-app}]
{H(l) = \lambda(x: T)t}
{\cfg{H}{l\,l'} \stepsto{} \cfg{H}{[x:=l']t}}

\infrule[\rruledef{s-tapp}]
{H(l) = \lambda[X<:S]t}
{\cfg{H}{l[S']} \stepsto{} \cfg{H}{[X:=S']t}}

\infrule[\rruledef{s-capp}]
{H(l) = \lambda[c<:B]t}
{\cfg{H}{l[C]} \stepsto{} \cfg{H}{[c:=C]t}}

\infrule[\rruledef{s-consumer}]
{H(l) = \lambda(\<c,x\>: \EXCAP{c}T)t}
{\cfg{H}{l\,t'} \stepsto{} \cfg{H}{\LET \<c,x\> = t' \IN t}}

\infrule[\rruledef{s-unwrap}]
{H(l) = \LOCK[\Psi,\Phi]t}
{\cfg{H}{\UNLOCK\,l} \stepsto{} \cfg{H}{t}}

\infrule[\rruledef{s-if-true}]
{H(l) = \TTRUE}
{\cfg{H}{\IF l\THEN t_1\ELSE t_2} \stepsto{} \cfg{H}{t_1}}

\infrule[\rruledef{s-if-false}]
{H(l) = \TFALSE}
{\cfg{H}{\IF l\THEN t_1\ELSE t_2} \stepsto{} \cfg{H}{t_2}}

\infrule[\rruledef{s-read}]
{H(l) = \RO\,l_c\andalso H(l_c) = \refcell{n}}
{\cfg{H}{\READ l} \stepsto{\RO\,l_c} \cfg{H}{n}}

\infrule[\rruledef{s-write}]
{H(l_x) = \refcell{n_0}\andalso l_y\in\dom H}
{\cfg{H}{l_x := l_y} \stepsto{\epsilon\,l_x} \cfg{H[l_x\mapsto\refcell{l_y}]}{()}}

\infrule[\rruledef{s-alloc}]
{l_x\in\dom H\andalso l\notin\dom H}
{\cfg{H}{\ALLOC l_x} \stepsto{\ALLOC l} \cfg{H[l\mapsto\refcell{l_x}]}{\<\set{l},l\>}}

\infrule[\rruledef{s-drop}]
{H(l_x) = \refcell{n}}
{\cfg{H}{\DEALLOC l_x} \stepsto{\DEALLOC l_x} \cfg{H[l_x\mapsto\deadcell{n}]}{()}}

\infrule[\rruledef{s-lift}]
{v\ \text{a value}\andalso l\notin\dom H}
{\cfg{H}{\LET x = v \IN t} \stepsto{} \cfg{H[l\mapsto v]}{[x:=l]t}}

\infax[\rruledef{s-rename}]
{\cfg{H}{\LET x = l' \IN t} \stepsto{} \cfg{H}{[x:=l']t}}

\infrule[\rruledef{s-let}]
{\cfg{H}{t_1} \stepsto{\tau} \cfg{H'}{t_1'}}
{\cfg{H}{\LET x = t_1 \IN t_2} \stepsto{\tau} \cfg{H'}{\LET x = t_1' \IN t_2}}

\end{multicols}

\infax[\rruledef{s-unpack}]
{\cfg{H}{\LET \<c_1,\ldots,c_n,x\> = \<C_1,\ldots,C_n,l\> \IN t} \stepsto{} \cfg{H}{[c_1:=C_1,\ldots,c_n:=C_n,x:=l]t}}

\infrule[\rruledef{s-unpack-ctx}]
{\cfg{H}{t_1} \stepsto{\tau} \cfg{H'}{t_1'}}
{\cfg{H}{\LET \<c_1,\ldots,c_n,x\> = t_1 \IN t_2} \stepsto{\tau} \cfg{H'}{\LET \<c_1,\ldots,c_n,x\> = t_1' \IN t_2}}

\infrule[\rruledef{s-par-l}]
{\cfg{H}{t_1} \stepsto{\tau} \cfg{H'}{t_1'}\andalso \tauth{\tau}{C_1}\andalso \nintf{C_1}{C_2}}
{\cfg{H}{\PAR[C_1, C_2]\, t_1, t_2} \stepsto{\tau} \cfg{H'}{\PAR[C_1', C_2]\, t_1', t_2}}

\infrule[\rruledef{s-par-r}]
{\cfg{H}{t_2} \stepsto{\tau} \cfg{H'}{t_2'}\andalso \tauth{\tau}{C_2}\andalso \nintf{C_1}{C_2}}
{\cfg{H}{\PAR[C_1, C_2]\, t_1, t_2} \stepsto{\tau} \cfg{H'}{\PAR[C_1, C_2']\, t_1, t_2'}}

\infrule[\rruledef{s-par-join}]
{t_1\ \text{an answer}\andalso t_2\ \text{an answer}}
{\cfg{H}{\PAR[C_1, C_2]\, t_1, t_2} \stepsto{} \cfg{H}{()}}

\vspace{-1.5em}
\caption{Small-step reduction of System \corecapybara{}.}\label{fig:core-smallstep}

\end{figure*}
 
Figure~\ref{fig:core-smallstep} defines the small-step relation
$\cfg{H}{t} \stepsto{\tau} \cfg{H'}{t'}$.
An elimination form looks up its subject in the heap and reduces against the
stored value.
\rruleref{s-app}, \rruleref{s-tapp}, and \rruleref{s-capp} apply a stored
abstraction by substitution; type and capture arguments are computationally
irrelevant, so these steps leave the heap and the trace untouched.
\rruleref{s-consumer} applies a consumer by unpacking its argument, reducing to
a \rruleref{unpack} redex, and \rruleref{s-unwrap} opens a boxed value.
The memory primitives are the only steps that touch the heap or the trace:
\rruleref{s-read} follows a reader to its cell and returns the stored location,
emitting a read; \rruleref{s-write} overwrites a live cell, emitting a write;
\rruleref{s-alloc} installs a fresh live cell and returns a pack of its
location; and \rruleref{s-drop} marks a cell dead.
A let evaluates its bound term under \rruleref{s-let}, then either lifts the
resulting value to a fresh location \rruleref{s-lift} or, when it is already a
location, substitutes it \rruleref{s-rename}; \rruleref{s-unpack} substitutes
the $n$ witnesses and the location of a pack.
\rruleref{s-par-l} and \rruleref{s-par-r} interleave the two branches of a
parallel composition and \rruleref{s-par-join} returns unit once both are
answers.
A parallel node carries the footprints $C_1$ and $C_2$ of its two branches,
the capture-set annotations of the \rruleref{par} typing rule resolved to the
runtime capabilities they hold.
A footprint is a finite set of elements $\mu\,l$ tagging a location $l$ with an
access mode $\mu$, and we order access modes by extending $\RO\preceq\epsilon$
with $\CONSUME\preceq\CONSUME$, leaving $\CONSUME$ incomparable to the
mutabilities.
A left step is allowed on two conditions.
First, the emitted trace is \emph{authorized} by the stepping branch's
footprint.
Say a location $l$ is \emph{fresh at} an event of $\tau$ when an earlier
$\ALLOC l$ event precedes it in $\tau$.
Then $\tauth{\tau}{C_1}$ holds when, for every event of $\tau$ whose location is not
fresh at it, an access $\mu\,l$ has some $\mu'\,l\in C_1$ with $\mu\preceq\mu'$,
and a deallocation of $l$ has $\CONSUME\,l\in C_1$.
Second, the two footprints are \emph{non-interfering}:
$\nintf{C_1}{C_2}$ holds when every $\mu_1\,l_1\in C_1$ and $\mu_2\,l_2\in C_2$
satisfy $l_1\neq l_2$ or $\mu_1=\mu_2=\RO$.
The stepping branch's footprint then grows to
$C_1'=C_1\cup\set{\epsilon\,l,\CONSUME\,l\mid \ALLOC l\ \text{occurs in}\ \tau}$,
adding the cells $t$ allocates so a later step may use them.
\rruleref{s-par-r} is symmetric.
These are exactly the obligations that the \rruleref{par} typing rule
establishes on $C_1$ and $C_2$, so a well-typed program never blocks on them.

\subsubsection{Big-step Evaluation}

\begin{figure*}[htbp]
\rulefigsetup

\judgheader{Big-step evaluation}{\cfg{H}{t} \evalsto{\tau} \cfg{H'}{a}}
\begin{multicols}{2}

\infax[\rruledef{b-val}]
{\cfg{H}{v} \evalsto{} \cfg{H}{v}}

\infax[\rruledef{b-var}]
{\cfg{H}{l} \evalsto{} \cfg{H}{l}}

\infrule[\rruledef{b-app}]
{H(l) = \lambda(x: T)t\andalso
 \cfg{H}{[x:=l']t} \evalsto{\tau} \cfg{H'}{a}}
{\cfg{H}{l\,l'} \evalsto{\tau} \cfg{H'}{a}}

\infrule[\rruledef{b-tapp}]
{H(l) = \lambda[X<:S]t\andalso
 \cfg{H}{[X:=S']t} \evalsto{\tau} \cfg{H'}{a}}
{\cfg{H}{l[S']} \evalsto{\tau} \cfg{H'}{a}}

\infrule[\rruledef{b-capp}]
{H(l) = \lambda[c<:B]t\andalso
 \cfg{H}{[c:=C]t} \evalsto{\tau} \cfg{H'}{a}}
{\cfg{H}{l[C]} \evalsto{\tau} \cfg{H'}{a}}

\infrule[\rruledef{b-consumer}]
{H(l) = \lambda(\<c,x\>: \EXCAP{c}T)t\\
 \cfg{H}{\LET \<c,x\> = t' \IN t} \evalsto{\tau} \cfg{H'}{a}}
{\cfg{H}{l\,t'} \evalsto{\tau} \cfg{H'}{a}}

\infrule[\rruledef{b-unwrap}]
{H(l) = \LOCK[\Psi,\Phi]t\andalso
 \cfg{H}{t} \evalsto{\tau} \cfg{H'}{a}}
{\cfg{H}{\UNLOCK\,l} \evalsto{\tau} \cfg{H'}{a}}

\infrule[\rruledef{b-if-true}]
{H(l) = \TTRUE\andalso
 \cfg{H}{t_1} \evalsto{\tau} \cfg{H'}{a}}
{\cfg{H}{\IF l\THEN t_1\ELSE t_2} \evalsto{\tau} \cfg{H'}{a}}

\infrule[\rruledef{b-if-false}]
{H(l) = \TFALSE\andalso
 \cfg{H}{t_2} \evalsto{\tau} \cfg{H'}{a}}
{\cfg{H}{\IF l\THEN t_1\ELSE t_2} \evalsto{\tau} \cfg{H'}{a}}

\infrule[\rruledef{b-read}]
{H(l) = \RO\,l_c\andalso H(l_c) = \refcell{n}}
{\cfg{H}{\READ l} \evalsto{\RO\,l_c} \cfg{H}{n}}

\infrule[\rruledef{b-write}]
{H(l_x) = \refcell{n_0}\andalso l_y\in\dom H}
{\cfg{H}{l_x := l_y} \evalsto{\epsilon\,l_x} \cfg{H[l_x\mapsto\refcell{l_y}]}{()}}

\infrule[\rruledef{b-alloc}]
{l_x\in\dom H\andalso l\notin\dom H}
{\cfg{H}{\ALLOC l_x} \evalsto{\ALLOC l} \cfg{H[l\mapsto\refcell{l_x}]}{\<\set{l},l\>}}

\infrule[\rruledef{b-drop}]
{H(l_x) = \refcell{n}}
{\cfg{H}{\DEALLOC l_x} \evalsto{\DEALLOC l_x} \cfg{H[l_x\mapsto\deadcell{n}]}{()}}

\infrule[\rruledef{b-let-val}]
{\cfg{H}{t_1} \evalsto{\tau_1} \cfg{H_1}{v}\andalso l\notin\dom H_1\\
 \cfg{H_1[l\mapsto v]}{[x:=l]t_2} \evalsto{\tau_2} \cfg{H_2}{a}}
{\cfg{H}{\LET x = t_1 \IN t_2} \evalsto{\tau_1\,\tau_2} \cfg{H_2}{a}}

\infrule[\rruledef{b-let-var}]
{\cfg{H}{t_1} \evalsto{\tau_1} \cfg{H_1}{l}\andalso
 \cfg{H_1}{[x:=l]t_2} \evalsto{\tau_2} \cfg{H_2}{a}}
{\cfg{H}{\LET x = t_1 \IN t_2} \evalsto{\tau_1\,\tau_2} \cfg{H_2}{a}}

\infrule[\rruledef{b-unpack}]
{\cfg{H}{t_1} \evalsto{\tau_1} \cfg{H_1}{\<C_1,\ldots,C_n,l\>}\\
 \cfg{H_1}{[c_1:=C_1,\ldots,c_n:=C_n,x:=l]t_2} \evalsto{\tau_2} \cfg{H_2}{a}}
{\cfg{H}{\LET \<c_1,\ldots,c_n,x\> = t_1 \IN t_2} \evalsto{\tau_1\,\tau_2} \cfg{H_2}{a}}

\infrule[\rruledef{b-par}]
{\cfg{H}{t_1} \evalsto{\tau_1} \cfg{H_1}{a_1}\andalso
 \cfg{H_1}{t_2} \evalsto{\tau_2} \cfg{H_2}{a_2}}
{\cfg{H}{\PAR t_1, t_2} \evalsto{\tau_1\,\tau_2} \cfg{H_2}{()}}

\end{multicols}

\vspace{-1.5em}
\caption{Big-step evaluation of System \corecapybara{}.}\label{fig:core-bigstep}

\end{figure*}
 
Figure~\ref{fig:core-bigstep} defines the big-step relation
$\cfg{H}{t} \evalsto{\tau} \cfg{H'}{a}$, evaluating $t$ to an answer $a$, a value
or a location, while threading the heap and concatenating the traces of its
subevaluations.
It is the sequential counterpart of the small-step relation: values and
locations evaluate to themselves \rruleref{b-val}, \rruleref{b-var}, the
elimination forms evaluate the redex they expose, and \rruleref{b-par} runs the
two branches left to right rather than interleaving them.
\rruleref{b-let-val} and \rruleref{b-let-var} sequence a let by evaluating the
bound term first and then the body, lifting an intermediate value to the heap
or substituting an intermediate location, and \rruleref{b-unpack} does the same
for an existential.
The two relations agree up to the scheduling of parallel composition;
the metatheory relates them by a standardization argument
(\Cref{thm:meta:standardization}).
\section{Metatheory of System \corecapybara{}}\label{app:metatheory}

This appendix presents the semantic model behind \Cref{sec:metatheory}
and states the theorems in the form in which they are mechanized.
The Lean development is the ground truth.
The definitions below transcribe it in the notation of
\Cref{app:corecapybara}, omitting de Bruijn indices and well-scopedness
side conditions.
The model is a step-indexed Kripke logical relation over a higher-order
store~\cite{DBLP:conf/esop/Ahmed06,DBLP:journals/jacm/TimanyKDB24}:
types denote world-indexed predicates on heaps and terms,
capture sets denote footprints,
and the typing judgment asserts that the term evaluates to an answer
satisfying its type's predicate, with a trace authorized by its use
set.

\subsection{Footprints}\label{app:meta:footprints}

A \emph{footprint} is a capture set whose elements mention only heap
locations (\Cref{sec:core}).
Each stored value in the heap carries its footprint,
the runtime image of its capture-set annotation,
and a location inherits the footprint of what it names:
\[
\fploc{H}{l} =
\begin{cases}
\set{\epsilon\,l} & \text{if}\ H(l)\ \text{is a reference cell,}\\
R_v & \text{if}\ H(l) = v\ \text{with footprint}\ R_v.
\end{cases}
\]
A semantic environment $\rho$ maps term variables to locations and
capture variables to footprints;
applying it to a capture set $C$ yields a ground set,
whose \emph{footprint} $\fdenot{C}_\rho^H$ expands each element
to the footprint of its location:
\[
\fdenot{\set{}}_\rho^H = \set{},
\qquad
\fdenot{C_1\cup C_2}_\rho^H = \fdenot{C_1}_\rho^H\cup\fdenot{C_2}_\rho^H,
\qquad
\fdenot{\set{\mu\,l}}_\rho^H = \mu\,\fploc{H}{l},
\]
with mode application as in \Cref{app:corecapybara}.
We drop the subscripts when the environment and heap are clear.
A footprint $C$ \emph{covers} an access $\mu\,l$ when $\mu'\,l\in C$
for some $\mu\preceq\mu'$.
This is the reading used by trace
authorization $\tauth{\tau}{C}$ (\Cref{sec:core}).
Non-interference $\nintf{C_1}{C_2}$ is as in \Cref{sec:core},
and two footprints are \emph{disjoint} when they share no location at
all.

\subsection{Worlds}\label{app:meta:worlds}

The denotations of the next subsection are Kripke: a type is
interpreted relative to a \emph{world} $(\Sigma, H)$, the heap $H$ a
program has reached together with a \emph{store typing} $\Sigma$ that
assigns semantic contents to its reference cells.
As a program allocates and writes, the world grows, and denotations
are stable under this growth, so a value keeps its type in every
future world.

The store typing is what breaks the circularity of the higher-order
store: a cell typed $\REF[T]$ should hold contents of type $T$, but
$T$ may itself mention reference types, and through type variables a
cell may even store values of an abstract type.
$\Sigma$ therefore assigns each cell a relation rather than a
syntactic type, and the relations are stratified by a step index $k$:
\[
\mathit{World}_k = \mathit{Loc}\rightharpoonup \mathit{Rel}_k,
\qquad
\mathit{Rel}_k = \textstyle\prod_{j<k}
  \left(\mathit{World}_j\to\mathit{Heap}\to\mathit{Term}\to\mathrm{Prop}\right).
\]
A stored relation at index $k$ is a family, over every lower index
$j<k$, of predicates on a world at index $j$, a heap, and a term.
The space of relations recurses on the index, so the store is
inherently step-indexed.
This shape is forced: a stored relation must be stable under world
extension, and must agree with the content type in both directions so that writes can re-establish it.
We write $\lfloor\Sigma\rfloor_j$ for the truncation of $\Sigma$ to a
lower index $j$, which restricts each stored family to indices below
$j$.

World extension
$\wext{\Sigma'}{H'}{\Sigma}{H}$
holds when $H'$ evolves from $H$,
meaning every allocated location keeps its cell,
where a stored value is unchanged
and a reference cell may change its content and may die but not
revive,
and when $\Sigma'$ agrees with $\Sigma$ on $\dom\Sigma$.
A world is \emph{well-typed}, written $\memtyped{k}{\Sigma}{H}$, when
\begin{enumerate}
\item every $l\in\dom\Sigma$ is a reference cell of $H$;
\item every stored relation of $\Sigma$ is stable under world
  extension, at every index below $k$; and
\item for every \emph{live} cell $l\in\dom\Sigma$ with
  $H(l)=\refcell{l'}$, the content satisfies its stored relation at
  every lower index: $\Sigma(l)_j\,(\lfloor\Sigma\rfloor_j, H, l')$
  for all $j<k$.
\end{enumerate}
Condition (3) exposes cell contents only at indices strictly below
$k$, so a read consumes one unit of the index.
The model accordingly treats $k$ as a \emph{read budget}: it bounds
the number of read events of the runs about which the model makes
claims.

\subsection{Interpretation of Types}\label{app:meta:types}

Types are interpreted relative to a \emph{semantic environment}
$\rho$, which assigns to each variable of the context a semantic
object of the matching kind: a heap location to a term variable, a
footprint to a capture variable (\Cref{app:meta:footprints}), and a
denotation, a world-indexed predicate on answers, to a type variable.
It is defined in \Cref{app:meta:semtyping}.
The value denotation $\vdenot{T}_\rho$ is then a world-indexed
predicate on answers, defined by recursion on the type, mutually with
the expression denotation of the next subsection.
We write
$a\in\vdenot{T}_\rho(k,\Sigma,H)$ when it holds of $a$ at the world.
Because evaluation lifts values to the heap, an answer names a value
either directly or through a location.
The partial function $\resolve$ recovers it:
\[
\resolve(H,v) = v,
\qquad
\resolve(H,l) = v\ \ \text{if}\ H(l)=v,
\]
and $\resolve(H,l)$ is undefined when $l$ is unallocated or holds a
reference cell.
For a stored relation $R$ at index $k$, write
$R\approx_k\vdenot{T}_\rho$ when $R$ agrees with the denotation of $T$
at every lower world:
$R_j\,(\Sigma',H',a') \iff a'\in\vdenot{T}_\rho(j,\Sigma',H')$
for all $j<k$, $\Sigma'$, $H'$, $a'$.
The first-order clauses are:
\[
\begin{array}{lcl}
\vdenot{\TUNIT}_\rho(k,\Sigma,H) &=& \set{a\mid \resolve(H,a) = ()}\\[2pt]
\vdenot{\TBOOL}_\rho(k,\Sigma,H) &=& \set{a\mid \resolve(H,a)\in\set{\TTRUE,\TFALSE}}\\[2pt]
\vdenot{\top}_\rho(k,\Sigma,H) &=& \set{a\mid \resolve(H,a)\ \text{is a value with empty footprint}}\\[2pt]
\vdenot{X}_\rho(k,\Sigma,H) &=& \rho(X)\,(k,\Sigma,H)\\[2pt]
\vdenot{\REF[T]\capt C}_\rho(k,\Sigma,H) &=&
  \set{l\mid H(l)\ \text{a cell},\ \fdenot{C}\ \text{covers}\ \epsilon\,l,\
       \Sigma(l)\approx_k\vdenot{T}_\rho}\\[2pt]
\vdenot{\RO\,\REF[T]\capt C}_\rho(k,\Sigma,H) &=&
  \set{a\mid \resolve(H,a)=\RO\,l,\ \fdenot{C}\ \text{covers}\ \RO\,l,\
       \Sigma(l)\approx_k\vdenot{T}_\rho}
\end{array}
\]
Only pure values are below $\top$.
In the two cell clauses, the forward direction of the agreement
$\approx_k$ types the result of a read, and the backward direction
lets a write re-establish condition (3) of $\memtyped{k}{\Sigma}{H}$.

A function type denotes the abstractions whose bodies meet their
specification in every future world.
The four abstraction clauses share this quantification, so we
abbreviate it: for $j\leq k$, let $\fworlds{j}{R}$ be the set of
well-typed future worlds at index $j$ in which $R$ is live, that is,
the pairs $(\Sigma',H')$ with
$\wext{\Sigma'}{H'}{\lfloor\Sigma\rfloor_j}{H}$,
$\memtyped{j}{\Sigma'}{H'}$, and every cell of $R$ live in $H'$.
In each clause, the \emph{closure footprint} $R_0$ is the footprint of
the stored abstraction's capture annotation, and the footprints of the
quantified ground capture sets are taken in $H'$.
The function clause is:
\[
\begin{array}{l}
\vdenot{(\forall(x:T_1)E)\capt C}_\rho(k,\Sigma,H) =\\
\quad\{\,a\mid \resolve(H,a)=\lambda(x:T_1')\,t\ \text{with}\
  R_0\subseteq\fdenot{C},\ \text{and}\\
\qquad [x:=l]\,t\in\edenot{E}^{R_0}_{\rho[x\mapsto l]}(j,\Sigma',H')\\
\qquad\quad \text{for all}\ j\leq k,\
  (\Sigma',H')\in\fworlds{j}{R_0},\
  \text{and}\ l\in\vdenot{T_1}_\rho(j,\Sigma',H')\,\}
\end{array}
\]
The body's budget is $R_0$, not the annotation $C$: a function may
touch exactly what it captures.
A type abstraction takes a denotation as its argument:
\[
\begin{array}{l}
\vdenot{(\forall[X<:S]E)\capt C}_\rho(k,\Sigma,H) =\\
\quad\{\,a\mid \resolve(H,a)=\lambda[X<:S']\,t\ \text{with}\
  R_0\subseteq\fdenot{C},\ \text{and}\\
\qquad [X:=\top]\,t\in\edenot{E}^{R_0}_{\rho[X\mapsto\delta]}(j,\Sigma',H')\\
\qquad\quad \text{for all}\ j\leq k,\
  (\Sigma',H')\in\fworlds{j}{R_0},\
  \text{and all proper}\ \delta\
  \text{bounded by}\ \vdenot{S}_\rho\,\}
\end{array}
\]
A denotation $\delta$ is \emph{proper} when it satisfies the closure
properties the model demands of every type's interpretation, chiefly
stability under heap growth, world extension, and index decrease.
As an argument here it must moreover contain only pure answers.
Bounded means that $\delta$ implies $\vdenot{S}_\rho$ at every world
extending $(\Sigma',H')$, at every index at most $j$.
The type substituted into the body is irrelevant to evaluation.
The mechanization fixes $\top$.
A capture abstraction takes a ground capture set:
\[
\begin{array}{l}
\vdenot{(\forall[c<:B]E)\capt C}_\rho(k,\Sigma,H) =\\
\quad\{\,a\mid \resolve(H,a)=\lambda[c<:B']\,t\ \text{with}\
  R_0\subseteq\fdenot{C},\ \text{and}\\
\qquad [c:=W]\,t\in\edenot{E}^{R_0}_{\rho[c\mapsto W]}(j,\Sigma',H')\\
\qquad\quad \text{for all}\ j\leq k,\
  (\Sigma',H')\in\fworlds{j}{R_0},\\
\qquad\quad \text{and all ground}\ W\ \text{with}\ \fdenot{W}\
  \CONSUME\text{-free and}\ \fdenot{W}\subseteq\fdenot{B}\,\}
\end{array}
\]
The bound imposes no constraint when $B=\top$.
A consumer takes a packed argument, a ground witness $W$ together with
a location:
\[
\begin{array}{l}
\vdenot{(\forall(\<c,x\>:\EXCAP{c}T_1)E)\capt C}_\rho(k,\Sigma,H) =\\
\quad\{\,a\mid \resolve(H,a)=\lambda(\<c,x\>:\EXCAP{c}T_1')\,t\
  \text{with}\ R_0\subseteq\fdenot{C},\ \text{and}\\
\qquad [c:=W,\,x:=l]\,t\in
  \edenot{E}^{\,R_0\cup\fdenot{W}\cup\CONSUME\,\fdenot{W}}_{\rho}(j,\Sigma',H')\\
\qquad\quad \text{for all}\ j\leq k,\
  (\Sigma',H')\in\fworlds{j}{R_0},\
  \text{all ground}\ W\ \text{with}\ \fdenot{W}\
  \CONSUME\text{-free,}\\
\qquad\quad \text{live in}\ H'\text{, and disjoint from}\ R_0,\
  \text{and all}\ l\in\vdenot{T_1}_{\rho[c\mapsto W]}(j,\Sigma',H')\,\}
\end{array}
\]
The witness binding $\rho[c\mapsto W]$ carries consumable authority,
and the body's budget widens to
$R_0\cup\fdenot{W}\cup\CONSUME\,\fdenot{W}$: the consumer receives
full rights to the witness, including its consumption.
The result type $E$ mentions neither $c$ nor $x$, so it is interpreted
under $\rho$ itself.
A modal type $([\Psi,\Phi]E)\capt C$ denotes the boxed values whose
recorded separation context is discharged by the semantic satisfaction
of $[\Psi,\Phi]$ and whose bodies inhabit $\edenot{E}$ in every future
world in which $\rho$ satisfies $[\Psi,\Phi]$, meaning each kind entry
of $\Phi$ holds of the corresponding footprint and each separation
pair of $\Psi$ denotes non-interfering footprints.
Finally, an existential type denotes the packs whose witnesses are
separate:
\[
\begin{array}{l}
\vdenot{\EXCAP{c_1,\ldots,c_n}T}_\rho(k,\Sigma,H) =\\
\quad\{\,a\mid \resolve(H,a)=\<C_1,\ldots,C_n,x\>,\\
\qquad \fdenot{C_1},\ldots,\fdenot{C_n}\ \text{are}\
  \CONSUME\text{-free and pairwise disjoint},\\
\qquad x\in\vdenot{T}_{\rho[c_1\mapsto C_1,\ldots,c_n\mapsto C_n]}(k,\Sigma,H)\,\}
\end{array}
\]
where the witnesses $c_i$ are bound at consumable authority.

\subsection{The Expression Relation}\label{app:meta:exprel}

The expression denotation lifts a predicate on answers to a predicate
on expressions: an expression inhabits it when it evaluates, by the
big-step relation of \Cref{fig:core-bigstep}, to an answer in the
value denotation while accessing only the locations in its budget.
The read budget guards every clause.
Write $\readsof{\tau}$ for the number of read events of a trace
$\tau$.
Call a run of the small-step relation \emph{sequential} when it
follows the left-first schedule: \rruleref{s-par-r} fires only when
the left branch is already an answer, so the right branch of a
parallel node steps only after the left has finished.
This is the order in which \rruleref{b-par} evaluates a parallel
composition, so sequential runs are the small-step reading of the
big-step relation.
We write $\cfg{H}{t}\redsto{\tau}\cfg{H'}{t'}$ for the
reflexive-transitive closure of the small-step relation,
concatenating traces, and $\seqredsto{\tau}$ for its restriction to
sequential runs.
Then $t\in\edenot{E}^{R}_\rho(k,\Sigma,H)$ iff
$\memtyped{k}{\Sigma}{H}$ implies all of:
\begin{enumerate}
\item \emph{Safety:} every configuration reachable from $\cfg{H}{t}$
  by a sequential run performing fewer than $k$ reads is an answer or
  can step.
\item \emph{Postcondition:} for every evaluation
  $\cfg{H}{t}\evalsto{\tau}\cfg{H'}{a}$ with $\readsof{\tau}<k$,
  the trace is authorized by the budget, $\tauth{\tau}{R}$, and,
  where $k' = k-\readsof{\tau}$, there is a store typing $\Sigma'$ at
  index $k'$ with
  \[
  \wext{\Sigma'}{H'}{\lfloor\Sigma\rfloor_{k'}}{H},
  \qquad
  \memtyped{k'}{\Sigma'}{H'},
  \qquad
  a\in\vdenot{E}_\rho(k',\Sigma',H').
  \]
\item \emph{Prefix safety:} for every sequential run
  $\cfg{H}{t}\seqredsto{\tau}\cfg{H''}{t''}$ with
  $\readsof{\tau}<k$, complete or not, the trace is authorized by the
  budget, $\tauth{\tau}{R}$.
\end{enumerate}
Runs that exhaust the budget owe nothing.
Adequacy below recovers
unconditional claims by instantiating $k$ above the read count of any
given run.
When $E$ is an existential, the postcondition additionally requires
the witnesses of the answer to be live in $H'$ and \emph{confined}:
every capability a witness reaches is either consumable under the
budget $R$ or was freshly allocated by the run.
This witness-confinement bound is what pays for the ownership lock of
\rruleref{unpack} (\Cref{rem:tr:ownership}).

The safety clause is a budget-indexed progress predicate.
At a parallel node it is a rely-guarantee invariant:
it carries the
two branch footprints, requires every branch run to be authorized by its footprint and the two footprints to be non-interfering,
and
requires each branch to remain safe under every heap that runs of the
other branch can produce.

The postcondition constrains complete runs only, so prefix safety is
what bounds the trace of a run still underway.
In the mechanization, the sequential relation carries none of the
guards of \rruleref{s-par-l} and \rruleref{s-par-r}, and prefix
safety moreover realizes each within-budget run in \emph{guarded}
form, with the authorization and non-interference guards discharged
at every parallel step.
This realization is what embeds a run of the model into the full
small-step relation, where the guards are checked at every step
(\Cref{app:meta:adequacy}).

\subsection{Semantic Typing}\label{app:meta:semtyping}

An environment $\rho$ \emph{realizes} a context $\G$ at a world,
written $\rho\in\gdenot{\G}(k,\Sigma,H)$, when it interprets each
binding as its denotation demands:
\begin{itemize}
\item for $x: T$, the location $\rho(x)$ is in
  $\vdenot{T}_\rho(k,\Sigma,H)$;
\item for $X<:S$, the denotation $\rho(X)$ is a proper,
  extension-stable predicate bounded by $\vdenot{S}_\rho$;
\item for $\alpha\,c<:B$, the footprint of $\rho(c)$ is
  $\CONSUME$-free and bounded by $\fdenot{B}$, and $\rho$ records the
  authority $\alpha$ for $c$;
\item for a lock $\LOCK[\Psi,\Phi]$, $\rho$ satisfies $[\Psi,\Phi]$
  semantically, as in the modal clause above.
\end{itemize}
Consumption rights are carried by the binding authority, not by the
footprint itself.
The budget of a term that consumes $c$ acquires the
$\CONSUME$ capabilities through the rules that charge them.
An environment is \emph{separation-well-formed} when the footprints of
distinct consumable capture variables are disjoint.

Semantic typing quantifies over all realizing worlds:
$\semtyp{C}{\G}{t}{E}$ holds iff for all $\rho$, $k$, $\Sigma$, $H$
with $\rho\in\gdenot{\G}(k,\Sigma,H)$, $\rho$ separation-well-formed,
and every cell of $\fdenot{C}_\rho^H$ live in $H$,
\[
t\rho \in \edenot{E}^{\fdenot{C}_\rho^H}_{\rho}(k,\Sigma,H),
\]
where $t\rho$ substitutes the environment into the term.

\subsection{The Fundamental Theorem}\label{app:meta:fundamental}

\begin{theorem}[Fundamental theorem]\label{thm:meta:fundamental}
If $\typ{C}{\G}{t}{E}$, then $\semtyp{C}{\G}{t}{E}$.
\end{theorem}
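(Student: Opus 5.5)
The proof goes by induction on the derivation of $\typ{C}{\G}{t}{E}$, with one compatibility lemma per typing rule of \Cref{fig:core-typing}. Each lemma states that the semantic judgment $\semtyp{\cdot}{\G}{\cdot}{\cdot}$ is closed under the corresponding rule. Before the compatibility lemmas, I would establish the auxiliary facts they all rely on.

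\textbf{Structural properties of the model.} First, monotonicity: value denotations are stable under world extension, under truncation to a lower index, and under heap growth. These closure properties make every $\vdenot{T}_\rho$ proper.

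\textbf{Soundness of the judgments on capture sets.} If $\subs{\G}{C_1}{C_2}$, then $\fdenot{C_1}_\rho$ is covered by $\fdenot{C_2}_\rho$ for every realizing, separation-well-formed $\rho$. Kinding $\RO$ yields a footprint containing only $\RO$ elements. Separation yields $\nintf{\fdenot{C_1}}{\fdenot{C_2}}$; this is \Cref{thm:separation}, proved by induction on the separation derivation. The \rruleref{sep-consumable} leaf uses separation-well-formedness, and the \rruleref{sep-lock} leaf uses the lock clause of realization. Disjointness yields location-disjoint footprints. Sequential composition yields that the $\CONSUME$ part of $\fdenot{C_1}$ does not interfere with $\fdenot{C_2}$.

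\textbf{Semantic subtyping.} If $\subs{\G}{E_1}{E_2}$, then $\vdenot{E_1}_\rho\subseteq\vdenot{E_2}_\rho$, proved by induction on subtyping. \rruleref{boxed-sat} reuses the satisfaction lemma.

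\textbf{Substitution lemmas.} Substituting locations, capture sets and types into terms commutes with extending $\rho$, so $\vdenot{[c:=D]E}_\rho=\vdenot{E}_{\rho[c\mapsto\fdenot{D}]}$, and similarly for term and type variables.

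\textbf{Killing lemma.} If $\rho$ realizes $\G$ and the run of $t$ is authorized by $\fdenot{C_1}$, then in the resulting world $\rho$ still realizes $\G\ominus C_1$. Killed bindings impose nothing, and the surviving bindings are preserved by world extension. A cell may have died only if it was charged at $\CONSUME$, and every such cell lies under a killed root.

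\textbf{Compatibility lemmas.} With these in place, the value and introduction cases (\rruleref{abs}, \rruleref{tabs}, \rruleref{cabs}, \rruleref{consumer}, \rruleref{lock}) unfold the corresponding value clause. Each is proved by instantiating the inductive hypothesis at the extended environment in the future world. The closure footprint $R_0$ is chosen as $\fdenot{C}$ evaluated in the current heap, and stability under world extension transports it. The elimination cases (\rruleref{app}, \rruleref{tapp}, \rruleref{capp}, \rruleref{unlock}, \rruleref{read}, \rruleref{write}, \rruleref{dealloc}, \rruleref{consumer-app}) invert the value denotation of the subject. Accessibility gives liveness of $R_0$, and the budget $\fdenot{\set{x}}=\fploc{H}{x}=R_0$ matches. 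For \rruleref{read} and \rruleref{write}, the agreement $\Sigma(l)\approx_k\vdenot{T}_\rho$ types the result or re-establishes $\memtyped{k}{\Sigma}{H}$; the read is paid for by decrementing the index. \rruleref{alloc} extends $\Sigma$ with $\vdenot{T}_\rho$ at the new location and packs the fresh singleton witness. \rruleref{par} uses \Cref{thm:separation} to discharge the rely-guarantee safety clause. Because the big-step relation runs the branches left to right, the right branch is evaluated in a future world reached from the left one; it remains realizable there because the left branch touched only its non-interfering footprint.

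\textbf{Main obstacle.} I expect the hardest cases to be \rruleref{let} and especially \rruleref{unpack}, including the consumer application that reduces to an unpack. There the continuation is interpreted under $(\G\ominus C_1)$ extended with fresh consumable witnesses and the ownership lock $\LOCK[\Psi_w,\emptyset]$. Realizing this extended context has three parts. First, the witness footprints of the returned pack must be disjoint from the footprints of every other consumable variable, so that separation-well-formedness is preserved. Second, they must be non-interfering with $\fdenot{C_2}$, to satisfy the lock. Third, the composed trace $\tau_1\tau_2$ must be authorized by $\fdenot{C_1\cup C_2}$ even though the continuation's budget also contains the witnesses.

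My first attempt would strengthen the postcondition at existential types with the witness-confinement invariant described in \Cref{app:meta:exprel}: every witness location is either freshly allocated by the run or lies in the $\CONSUME$ part of the budget. Fresh locations are disjoint from everything in $\rho$. Consumed locations are kept away from $C_2$ by $\seqcomp{\G}{C_1}{C_2}$ and are removed from the other live consumable roots by the kill. This discharges the three obligations. The invariant is maintained through \rruleref{pack}, which charges $D\cup\CONSUME\,D$, and propagated through every rule that returns an existential.

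Step-index bookkeeping is a secondary difficulty: the continuation runs at $k'=k-\readsof{\tau_1}$ in a truncated world. It is handled uniformly by the monotonicity lemmas.
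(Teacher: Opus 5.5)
Your proposal follows the paper's proof: induction on the typing derivation with one compatibility lemma per rule, backed by fundamental lemmas for subcapturing, kinding, and separation (the last being \Cref{thm:separation}). As in \Cref{rem:tr:ownership}, the ownership lock of \rruleref{unpack} is paid by strengthening the existential postcondition with the witness-confinement bound.

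Two local corrections:
\begin{itemize}
\item At \rruleref{app}, the liveness of $R_0$ comes from the use-set liveness hypothesis of $\semtyp{C}{\G}{t}{E}$ (the rule charges $\set{x}$, and $\fdenot{\set{x}}=R_0$), not from the accessibility premise.
\item The confinement bound must also make the witnesses live in $H'$, because the continuation's use set contains the $c_i$. It must also cover them by the head's budget at their own access modes, not only at $\CONSUME$, because authorization never lets $\CONSUME\,l$ cover a read or write of $l$.
\end{itemize}
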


The proof is by induction on the typing derivation, with one
compatibility lemma per rule of \Cref{fig:core-typing}.
The auxiliary judgments of \Cref{app:corecapybara} have fundamental
lemmas of their own, which the compatibility lemmas invoke.
The two that carry the concurrency results are:

\begin{lemma}[Subcapturing]\label{lem:meta:subcapt}
If $\subs{\G}{C_1}{C_2}$ and $\rho$ realizes $\G$, then every
capability of $\fdenot{C_1}$ is covered by $\fdenot{C_2}$.
\end{lemma}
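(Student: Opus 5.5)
The plan is induction on the derivation of $\subs{\G}{C_1}{C_2}$, proving the covering claim: for every $\mu\,l\in\fdenot{C_1}$ there is $\mu'\,l\in\fdenot{C_2}$ with $\mu\preceq\mu'$. Everything is evaluated at a fixed world $(\Sigma,H)$ and a fixed realizing environment $\rho$. Covering is reflexive and transitive, since $\preceq$ is a preorder (with $\CONSUME\preceq\CONSUME$). It is also compatible with union, because $\fdenot{\cdot}$ is a homomorphism for $\cup$. These facts dispatch \rruleref{sc-trans}, \rruleref{sc-union}, and \rruleref{sc-elem}, the last because $C_1\subseteq C_2$ gives $\fdenot{C_1}\subseteq\fdenot{C_2}$.

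The mode rules need a small algebraic lemma about mode application on footprints. Since $\fdenot{\set{\mu\,l}} = \mu\,\fploc{H}{l}$ and mode application distributes over union, we have $\fdenot{\mu\,C} = \mu\,\fdenot{C}$ for every $\mu$. I would prove this by induction on $C$, using idempotence-style equations such as $\RO(\RO\,X)=\RO\,X$ and $\CONSUME(\mu X)=\CONSUME X$. One caveat: $\RO$ leaves $\CONSUME$-marked elements unchanged, so this should be checked against the exact definition of restamping.

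With the lemma in hand, the mode rules go as follows.
\begin{itemize}
\item \rruleref{sc-mode}: an element $\RO\,l$ of $\RO\,\fdenot{C}$ either comes from a mutability element $m\,l$ of $\fdenot{C}$, which is covered because $\RO\preceq m$, or is an unchanged $\CONSUME\,l$, which is covered by itself.
\item \rruleref{sc-ro}: this is the special case $\RO\,C$ below $C$.
\item \rruleref{sc-ro-mono} and \rruleref{sc-consume-mono}: covering is preserved by applying the same mode to both sides. I would check this pointwise, and it again needs the observation about $\CONSUME$-elements under $\RO$.
\end{itemize}

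The two resolution rules are where the realization hypothesis is used.

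For \rruleref{sc-var}, with $x:m\,S\capt C\in\G$, we have $\fdenot{\set{x}} = \fploc{H}{\rho(x)}$, and we need this to be covered by $\fdenot{C}_\rho$. Realization gives $\rho(x)\in\vdenot{m\,S\capt C}_\rho(k,\Sigma,H)$, so I would first establish an auxiliary lemma by cases on the shape $S$: every answer in $\vdenot{m\,S\capt C}_\rho$ has its location footprint covered by $\fdenot{C}_\rho$.
\begin{itemize}
\item For abstractions this is exactly the clause condition $R_0\subseteq\fdenot{C}$.
\item For references it is the condition that $\fdenot{C}$ covers $\epsilon\,l$, with $\fploc{H}{l}=\set{\epsilon\,l}$.
\item For readers it is the condition that $\fdenot{C}$ covers $\RO\,l$.
\item For pure shapes and $\top$ the footprint is empty.
\item For type variables the property must be part of properness of $\delta$, so properness should include it.
\end{itemize}

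For \rruleref{sc-cvar}, with $\epsilon\,c<:C$, realization states that $\fdenot{\rho(c)}$ is bounded by $\fdenot{B}=\fdenot{C}$. I read ``bounded'' as covering, or as inclusion, which implies covering.

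I expect the main obstacle to be the \rruleref{sc-var} case. There are two difficulties. The first is stating the invariant ``value footprint covered by the type's capture set'' uniformly enough to survive type variables and modal and existential shapes. The second is that variable footprints in $\fdenot{\cdot}$ are read off the heap annotation, while the type clause constrains $R_0$. Here I would rely on the mechanization's convention that the stored footprint \emph{is} the closure footprint $R_0$, so the two coincide definitionally. A secondary subtlety is that the Lemma is stated for a fixed heap. Since footprints of stored values are immutable under world extension, covering transported across extensions needs no extra argument, but I would record this as a remark for its later use in the compatibility lemmas.
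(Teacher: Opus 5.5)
Your proposal is correct and follows the route the paper relies on. The paper gives no written proof and defers to the mechanized fundamental lemma for subcapturing, which is the same induction on the subcapturing derivation: the mode rules are discharged by $\fdenot{\mu\,C}=\mu\,\fdenot{C}$, and the realization hypothesis is used exactly at \rruleref{sc-var} (through the closure, cell, and reader footprint conditions of the value clauses) and at \rruleref{sc-cvar} (through the bound on $\rho(c)$).

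One small clarification: for a type-variable shape, the empty footprint comes not from properness but from the separate purity requirement on type arguments. In a realized context it also follows from $\rho(X)$ being bounded by the pure shape $\vdenot{S}_\rho$.
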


\begin{lemma}[Separation]\label{lem:meta:sep}
If $\sep{\G}{C_1}{C_2}$ and $\rho$ realizes $\G$ and is
separation-well-formed, then $\nintf{\fdenot{C_1}}{\fdenot{C_2}}$.
\end{lemma}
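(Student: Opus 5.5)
We prove \Cref{lem:meta:sep} by induction on the derivation of $\sep{\G}{C_1}{C_2}$, for a fixed $\rho$ that realizes $\G$ and is separation-well-formed. Throughout, \Cref{lem:meta:subcapt} supplies the subcapturing facts. Non-interference is a universally quantified condition on pairs of elements. It is therefore symmetric, it holds vacuously when one side is empty, and it is closed under union on the left. These observations discharge \rruleref{sep-symm}, \rruleref{sep-empty}, and \rruleref{sep-union}, using $\fdenot{C_1\cup C_2}=\fdenot{C_1}\cup\fdenot{C_2}$.

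For \rruleref{sep-sc}, take $\mu_1'\,l\in\fdenot{C_1'}$ and $\mu_2\,l\in\fdenot{C_2}$. \Cref{lem:meta:subcapt} gives some $\mu_1\,l\in\fdenot{C_1}$ with $\mu_1'\preceq\mu_1$. The induction hypothesis then yields $\mu_1=\mu_2=\RO$. Since $\RO$ is minimal for $\preceq$ and $\CONSUME$ is incomparable to the mutabilities, $\mu_1'=\RO$, as required.

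The case \rruleref{sep-lock} is immediate. Realization of a lock binding demands that $\rho$ satisfy $[\Psi,\Phi]$ semantically, and this includes non-interference of the footprints of distinct entries of $\Psi$.

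For \rruleref{sep-consumable}, separation-well-formedness makes $\fdenot{\set{c_1}}=\rho(c_1)$ and $\rho(c_2)$ disjoint. The qualified footprints $\mu_i\,\rho(c_i)$ have the same locations as $\rho(c_i)$, so they are disjoint, and disjointness implies non-interference.

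The main obstacle is \rruleref{sep-ro}. We need an auxiliary semantic kinding lemma: if $\typs{\G}{C}{\RO}$ and $\accessonly{\G}{C}$, then every element of $\fdenot{C}$ carries mode $\RO$. With that lemma, both footprints are all-$\RO$, and non-interference holds trivially. We would prove the lemma by induction on kinding, strengthened to state that every \emph{non-consume} element of $\fdenot{C}$ is $\RO$; access-only then excludes the $\CONSUME$ elements.
\begin{itemize}
\item \rruleref{k-empty} is vacuous, and \rruleref{k-union} is immediate.
\item \rruleref{k-ro} holds because mode application to $\RO$ restamps every mutability as $\RO$.
\item \rruleref{k-sc} uses \Cref{lem:meta:subcapt} together with minimality of $\RO$.
\item \rruleref{k-imm} reads the $\Phi$ entries of realized locks.
\item \rruleref{k-rw} never concludes kind $\RO$.
\end{itemize}
The delicate point is to show that access-only-ness on the syntactic side, which is defined via $\roots{\G}{C}$, rules out $\CONSUME$ elements in $\fdenot{C}$. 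For this we would use that realized capture variables have $\CONSUME$-free footprints and that stored-value footprints are images of capture sets. An induction on the resolution of roots then shows that $\CONSUME$ tags in $\fdenot{C}$ arise only from $\CONSUME$-marked roots. If that invariant fails for stored values, we would add it to the world well-typedness condition.
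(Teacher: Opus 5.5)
Your proof is correct and follows the route the paper indicates. You induct on the separation derivation and use the Subcapturing lemma (\Cref{lem:meta:subcapt}) for \rruleref{sep-sc}, lock realization for \rruleref{sep-lock}, and separation-well-formedness for \rruleref{sep-consumable}. For \rruleref{sep-ro} you use the companion kinding lemma, that a kind-$\RO$ set denotes a footprint covering no write.

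Your hedge about strengthening world well-typedness is unnecessary. Apply \Cref{lem:meta:subcapt} to $\subs{\G}{C}{\roots{\G}{C}}$ (\Cref{lem:tr:resolve}): every $\CONSUME$ element of $\fdenot{C}$ is then covered by a $\CONSUME$ element of $\fdenot{\roots{\G}{C}}$. Such an element exists only if $\roots{\G}{C}$ has a $\CONSUME$-marked atom, because realized capture variables have $\CONSUME$-free footprints.
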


\Cref{lem:meta:sep} is \Cref{thm:separation}: syntactic separation
denotes non-interference of footprints.
It is what lets the compatibility lemma for \rruleref{par} establish
the rely--guarantee obligations of the safety predicate and the
side conditions of \rruleref{s-par-l} and \rruleref{s-par-r}.
Kinding has an analogous lemma: a capture set of kind $\RO$ denotes a
footprint that covers no write.

\subsection{Adequacy}\label{app:meta:adequacy}

Adequacy extracts unconditional statements about closed programs.
A closed program is run on a \emph{platform} of $N$ pre-allocated
boolean cells: the context $\G_N$ binds, per cell, a capture variable
and a term variable of type $\REF[\TBOOL]$; the heap $H_N$ allocates
the $N$ live cells; and the environment $\rho_N$ maps each variable to
its cell.
The platform realizes $\G_N$ at every index by construction, is
separation-well-formed because all its capture variables are
access-only, and is compatible with every footprint because all its
cells are live.
A program is a \emph{source} term: it mentions no heap locations, so
the platform configuration $\cfg{H_N}{t\rho_N}$ is well-formed in the
sense of \Cref{app:meta:std}, as the substitution $\rho_N$ introduces
only the live platform cells.

\begin{theorem}[Sequential adequacy]\label{thm:meta:adequacy}
If $\semtyp{C}{\G_N}{t}{E}$, then every configuration reachable from
$\cfg{H_N}{t\rho_N}$ by a sequential run (\Cref{app:meta:exprel}) is
an answer or can step.
\end{theorem}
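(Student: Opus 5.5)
The plan is to reduce the claim directly to the safety clause of the expression denotation. Fix a sequential run $\cfg{H_N}{t\rho_N}\seqredsto{\tau}\cfg{H''}{t''}$. We choose the step index $k$ strictly larger than $\readsof{\tau}$; this is the usual adequacy trick of instantiating the read budget above the read count of the given run. Next we instantiate the hypothesis $\semtyp{C}{\G_N}{t}{E}$ at the platform: environment $\rho_N$, heap $H_N$, and a store typing $\Sigma_N$ chosen to make the world well-typed. The hypothesis then yields $t\rho_N\in\edenot{E}^{\fdenot{C}}_{\rho_N}(k,\Sigma_N,H_N)$. Once we also establish $\memtyped{k}{\Sigma_N}{H_N}$, the safety clause (1) of \Cref{app:meta:exprel} applies to our run, since it performs fewer than $k$ reads. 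It gives exactly that $t''$ is an answer or can step.

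The remaining work is to discharge the premises of semantic typing at the platform world.

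First, we define $\Sigma_N$. It maps each platform cell $l_i$ to the step-indexed family $j\mapsto\vdenot{\TBOOL}(j,\cdot,\cdot)$ for $j<k$. This gives $\Sigma_N(l_i)\approx_k\vdenot{\TBOOL}$ by construction.

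Second, we check $\memtyped{k}{\Sigma_N}{H_N}$ condition by condition:
\begin{itemize}
\item Condition (1) holds because every $l_i$ is a reference cell.
\item Condition (2) holds because $\vdenot{\TBOOL}$ only inspects $\resolve(H,a)$, and stored values are preserved under world extension.
\item Condition (3) requires that each cell's initial content resolves to a boolean. This holds because the platform heap stores boolean values at the content locations. We fix $H_N$ so that each $l_i=\refcell{l_i'}$ with $H_N(l_i')=\TTRUE$, say.
\end{itemize}

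Third, we show that $\rho_N\in\gdenot{\G_N}(k,\Sigma_N,H_N)$. Each capture variable $c_i$ is mapped to the footprint $\set{\epsilon\,l_i}$, which is $\CONSUME$-free; its bound is $\top$ (or the trivial bound), so the bound condition holds. Each term variable $x_i:\REF[\TBOOL]\capt\set{c_i}$ is mapped to $l_i$, and this satisfies the reference clause: $H_N(l_i)$ is a cell, $\fdenot{\set{c_i}}=\set{\epsilon\,l_i}$ covers $\epsilon\,l_i$, and the agreement holds by the definition of $\Sigma_N$.

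Fourth, we need separation-well-formedness and liveness. Separation-well-formedness is vacuous, since all platform capture variables are access-only. Liveness holds because every cell of $\fdenot{C}_{\rho_N}^{H_N}$ is a platform cell or a stored value's footprint, and all platform cells are live.

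The main obstacle I expect is the exact correspondence between ``sequential run'' in the theorem and the run shape quantified in the safety clause. In particular, the safety clause is stated for runs from $\cfg{H}{t}$ where $t=t\rho$, with the guard-free sequential relation of the mechanization. The theorem's sequential runs may instead be read in the guarded small-step relation. If the two notions differ, I would first observe that a guarded sequential run is in particular an unguarded sequential run, so safety transfers to it. The converse direction, that the configuration can step in the guarded relation, I would obtain from prefix safety: it realizes each within-budget run in guarded form, with the authorization and non-interference guards discharged at every parallel step. Thus a step available in the unguarded relation is also available under the guards.

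A smaller point is that $k$ must be fixed before the world is built. This is harmless, because the platform construction is uniform in $k$.
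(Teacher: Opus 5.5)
Your proposal is correct and follows the paper's proof. You instantiate the semantic typing at the platform with a read budget above $\readsof{\tau}$ (the paper takes $k=\readsof{\tau}+1$) and read off the safety clause; your construction of $\Sigma_N$ and your checks of world well-typedness, realization, separation-well-formedness, and liveness spell out what the paper compresses into ``the platform realizes $\G_N$ at every index by construction''.

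The guarded/unguarded mismatch you worry about does not arise on paper, since the theorem and the safety clause refer to the same notion of sequential run. If you keep that contingency argument anyway, prefix safety has to be applied to the run extended by the extra step, so you need $k\geq\readsof{\tau}+2$ rather than just $k>\readsof{\tau}$.
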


Given a reachable configuration, instantiate the semantic typing at
budget $k = \readsof{\tau}+1$ for the trace $\tau$ of the run that
reaches it.
The run is then strictly within budget and the safety clause
applies.
The schedule theorems of \Cref{app:meta:std} lift the statement to
arbitrary interleavings.

\begin{theorem}[Adequacy]\label{thm:meta:adequacy-reduce}
If $\semtyp{C}{\G_N}{t}{E}$, then every configuration reachable from
$\cfg{H_N}{t\rho_N}$ by the small-step relation, under any schedule,
is an answer or can step.
\end{theorem}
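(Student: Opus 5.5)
The plan is to reduce the statement to \Cref{thm:meta:adequacy} by way of the schedule theorems, standardization (\Cref{thm:standardization}) and confluence (\Cref{thm:confluence}). Fix an arbitrary run $\cfg{H_N}{t\rho_N}\redsto{\tau}\cfg{H'}{t'}$ under any schedule, and suppose $t'$ is neither an answer nor able to step; we derive a contradiction. The platform configuration is well-formed, since $t$ is a source term and $\rho_N$ introduces only live platform cells. Hence both schedule theorems apply.

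The first step uses confluence to pair the interleaved run with a sequential one. Apply \Cref{thm:confluence} to our run and to the trivial empty run. This yields a continuation $\cfg{H'}{t'}\redsto{\sigma_1}\cdots$ that joins, up to a permutation $\pi$, some continuation of the empty run. Because $\cfg{H'}{t'}$ is stuck, $\sigma_1$ is empty. So some run from the initial configuration reaches $\pi^{-1}$ of $\cfg{H'}{t'}$, that is, a renamed copy of the stuck configuration. This alone gives nothing new.

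The real work is to show that a stuck, non-answer configuration reachable by an arbitrary run yields a stuck configuration reachable by a sequential run. I would proceed by induction on the length of the run, in a commutation style. Take the interleaved run to $t'$. Reorder it, swapping independent steps, until every \rruleref{s-par-r} step comes after its left branch finishes, or until the left branch is itself blocked. If a left branch inside $t'$ is stuck, the sequential run that executes that left branch first reaches a configuration whose left component is the same stuck subterm. The trace-equivalence bookkeeping behind \Cref{thm:standardization} justifies each swap, because reorderings touch only independent events.

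Stuckness itself has two sources. A parallel node can block because authorization or non-interference fails, or a redex can lack its premise, such as a dead or missing cell or a wrong value shape. A blocked redex in a branch is blocked the same way in the sequential schedule, since the heap cells it consults are touched only by that branch. Non-interference of the two footprints guarantees this, and \Cref{thm:separation} provides that non-interference for the typed annotations. The prefix-safety clause of the expression relation then shows that every step of the sequential run is authorized and non-interfering. So a sequential run reaches a stuck non-answer, which contradicts \Cref{thm:meta:adequacy}; choose the budget above that run's read count. If the stuck subterm sits in a right branch while the left branch still runs, first extend the left branch sequentially to an answer using safety. If it cannot finish, the left branch itself supplies a sequential stuck state or diverges, and in the diverging case I would instead pick the stuck right state through the symmetric rely guarantee that the safety clause provides at parallel nodes.

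I expect this last point to be the main obstacle: relating stuckness under arbitrary interleavings to sequential stuckness when one branch never terminates. My first attempt would use the rely-guarantee form of the safety predicate directly, since it asserts each branch safe under every heap the other branch can produce. I would then avoid standardization entirely and prove, by induction on the interleaved run, an invariant that every reached configuration lies in the guarded safety predicate.
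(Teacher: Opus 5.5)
Your proposal has a genuine gap at the step you call ``the real work'': moving a stuck, non-answer configuration reached by an interleaved run to one reached by a sequential run. None of the results you invoke provides this.

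\Cref{thm:meta:standardization} only reorders runs that end in an answer, so its trace bookkeeping says nothing about a run that ends in a blocked state. You would need a separate stuck-endpoint commutation lemma, and the proposal only gestures at it. That lemma would have to show three things:
\begin{enumerate}
\item blocking of one branch does not depend on the heap effects of the other;
\item the reordered or truncated run still meets the growing footprint guards of every parallel node;
\item renaming fresh locations is harmless.
\end{enumerate}

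The appeal to \Cref{thm:separation} does not apply. The hypothesis here is \emph{semantic} typing, so there is no syntactic separation derivation to feed it. The guards at a parallel node are simply premises of the steps that were actually taken.

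Your fallback, an invariant over interleaved runs via the rely--guarantee safety clause, is too underspecified to check.

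The divergence scenario you single out also does not arise. At a configuration that cannot step, every active branch is already an answer or blocked, reached in finitely many steps. On your route the real difficulty would be the commutation lemma, not non-termination.

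The idea you are missing is to apply confluence \emph{forward}, joining the given run with a \emph{maximal sequential} run, with the budget chosen from the given run. Then stuckness never has to be transported. The paper's argument runs as follows.
\begin{enumerate}
\item Take $\cfg{H_N}{t\rho_N}\redsto{\tau_1}\cfg{H_1}{t_1}$ and instantiate the semantic typing at $k=\readsof{\tau_1}+1$.
\item The safety clause yields a sequential run that either reaches an answer or performs at least $k$ reads.
\item Prefix safety realizes this run in guarded form, so it is also a run of the unrestricted relation.
\item \Cref{thm:meta:confluence} joins the two runs. If the joining run out of $\cfg{H_1}{t_1}$ is nonempty, $t_1$ can step.
\item If the joining run is empty and the sequential endpoint is an answer, that endpoint cannot step, so its own join is empty. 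Hence $t_1$ is a renaming of it, and therefore an answer.
\item If the joining run is empty and the sequential run performed at least $k$ reads, clause (4) of confluence is violated. The joined traces would need equal read counts, yet one has exactly $k-1$ reads and the other at least $k$.
\end{enumerate}

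Your application of confluence against the empty run used the right tool with the wrong partner run. The read-count clause is exactly what handles the non-terminating schedules you found troublesome.
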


Let $\cfg{H_N}{t\rho_N}\redsto{\tau_1}\cfg{H_1}{t_1}$ and instantiate
the semantic typing at budget $k = \readsof{\tau_1}+1$.
The safety clause yields a maximal sequential run: one that reaches
an answer within the budget, or one that performs at least $k$
reads.
Prefix safety realizes it in guarded form
(\Cref{app:meta:exprel}), so it is also a run of the unrestricted
relation.
Confluence (\Cref{thm:meta:confluence}) joins the two runs.
If the joining run out of $\cfg{H_1}{t_1}$ is nonempty, $t_1$ can
step.
If it is empty, the joined endpoints coincide.
Then either $t_1$ is the sequential answer up to a renaming of
locations, hence itself an answer, or the sequential run exhausted
the budget, which condition (4) of confluence excludes: the joined
traces have equal read counts, yet one performs fewer than $k$ reads
and the other at least $k$.

\Cref{thm:soundness} is \Cref{thm:meta:adequacy-reduce} composed with
\Cref{thm:meta:fundamental}, and \Cref{cor:memsafe} follows because
the stuck configurations that adequacy excludes include every
use-after-free and double-free: a dead cell matches no reduction rule
(\Cref{app:corecapybara}).

Adequacy also yields the guarantee behind the read-only mode: a
computation whose budget is read-only cannot change the state it runs
over.

\begin{theorem}[Immutability]\label{thm:meta:immutability}
If $\semtyp{C}{\G_N}{t}{E}$ and the kind of $C$ is $\RO$, then every
run of $\cfg{H_N}{t\rho_N}$ to an answer, under any schedule, leaves
every cell of $H_N$ unchanged, in both content and liveness.
\end{theorem}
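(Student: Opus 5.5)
The plan is to first prove the statement for sequential runs and then transfer it to arbitrary schedules. For a sequential run, I would use the postcondition clause of the expression denotation. For the transfer, I would use Standardization (\Cref{thm:standardization}) together with Confluence (\Cref{thm:confluence}).

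Fix a run $\cfg{H_N}{t\rho_N}\redsto{\tau}\cfg{H'}{a}$ to an answer. The platform configuration is well-formed, so \Cref{thm:standardization} gives a sequential run $\cfg{H_N}{t\rho_N}\seqredsto{\tau'}\cfg{H'}{a}$ with $\tau'\treq\tau$. Since sequential runs are the small-step reading of the big-step relation, this yields $\cfg{H_N}{t\rho_N}\evalsto{\tau'}\cfg{H'}{a}$. The endpoint heap is the same one, so it suffices to show that $H'$ agrees with $H_N$ on every platform cell.

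Next I instantiate $\semtyp{C}{\G_N}{t}{E}$ at the platform environment $\rho_N$ with budget $k=\readsof{\tau'}+1$. The environment $\rho_N$ realizes $\G_N$, is separation-well-formed, and all its cells are live. The postcondition clause then gives $\tauth{\tau'}{\fdenot{C}}$. By the kinding analogue of \Cref{lem:meta:sep}, $C$ having kind $\RO$ means $\fdenot{C}$ covers no write. I also need to rule out $\CONSUME$ elements, either from $C$ being access-only or by noting that the sound kinding of $\RO$ sets excludes them.

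Authorization then forces the following. Every event of $\tau'$ on a location not freshly allocated earlier in $\tau'$ is a read. In particular, no event is $\epsilon\,l$ or $\DEALLOC l$ for a platform cell $l\in\dom H_N$, because those cells are never allocated in the run. Only \rruleref{s-write} and \rruleref{s-drop} modify existing cells, and each emits exactly such an event; \rruleref{s-alloc} and \rruleref{s-lift} only add new locations. An induction on the run therefore shows that each platform cell keeps its content and remains live.

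The main obstacle is the kinding lemma in its semantic form. It must handle kind $\RO$ derived through \rruleref{k-imm} from lock assumptions, which under $\G_N$ with no locks reduces to \rruleref{k-ro} and \rruleref{k-sc}. It must also show that the footprint of $\RO\,C$ contains only $\RO$-moded and possibly $\CONSUME$ elements, since mode application leaves $\CONSUME$ marks unchanged. I would first argue that in $\G_N$ every capture variable is access-only, so roots contain no $\CONSUME$ and $\fdenot{C}$ is $\CONSUME$-free. A second subtlety is trace equivalence: $\tau'\treq\tau$ only preserves per-location access sequences modulo self-allocated cells, but that is exactly the granularity needed here, and we reason on $\tau'$ and the shared endpoint $H'$ anyway.
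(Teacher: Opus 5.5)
Your skeleton matches the paper's proof. You standardize the interleaved run into a sequential one with the same endpoints (Confluence is not actually needed), read it as a big-step evaluation, and instantiate the semantic typing at $k=\readsof{\tau'}+1$. You then use trace authorization against the read-only footprint, observing that only \rruleref{s-write} and \rruleref{s-drop} touch pre-existing cells. The content half goes through as you describe, because a $\CONSUME$ element never covers a write.

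The liveness half is where the argument breaks. It needs $\fdenot{C}$ to be $\CONSUME$-free, and your justification conflates a binding's authority with the marks on the atoms of $C$. You argue that every capture variable of $\G_N$ is access-only, so roots carry no $\CONSUME$. But the binding mode only governs the $\consumable{}{}$ predicate. For instance, $\roots{\G_N}{\set{\CONSUME\,x_1}}=\set{\CONSUME\,c_1}$ even though $c_1$ is access-only, and semantic typing places no consumability constraint on its use set.

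Your other route, kinding, fails too. In the core, $\RO\,C$ leaves $\CONSUME$-marked atoms unchanged, so $\set{\CONSUME\,x_1}=\RO\,\set{\CONSUME\,x_1}$ has kind $\RO$ by \rruleref{k-ro}. Moreover, $\semtyp{\set{\CONSUME\,x_1}}{\G_N}{\DEALLOC x_1}{\TUNIT}$ holds: the footprint $\set{\CONSUME\,l_1}$ authorizes the drop, and world extension lets a cell die. So neither route can be completed from the hypotheses as printed.

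The missing ingredient is one of two things:
\begin{itemize}
\item an explicit $\accessonly{\G_N}{C}$ hypothesis, which is the premise the core's \rruleref{sep-ro} pairs with kind $\RO$. Over the platform, it says $C$ has no $\CONSUME$-marked atoms. Since platform term variables denote reference cells and capture variables denote $\CONSUME$-free footprints, $\fdenot{C}$ then contains only elements $\RO\,l$.
\item a reading of kind $\RO$ in which restamping also overwrites $\CONSUME$ marks, as the surface definition of $\RO\,C$ does.
\end{itemize}
The paper's one-line ``the read-only kind rules out writes and deallocations'' tacitly relies on one of these. Once you make that assumption explicit instead of deriving it from the platform's binding modes, your proof and the paper's coincide.
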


The read-only kind rules out writes and deallocations, so both
content and liveness are preserved.
The schedule is arbitrary because standardization
(\Cref{thm:meta:standardization}) converts an interleaved run to an
answer into a sequential run with the same endpoints, to which the
sequential argument applies.
\Cref{thm:purity} is this theorem composed with
\Cref{thm:meta:fundamental}.

\subsection{Standardization and Confluence}\label{app:meta:std}

The remaining theorems relate the schedules of the small-step relation
(\Cref{fig:core-smallstep}).
They are stated up to \emph{trace equivalence}: two traces are
equivalent when, at every location, they perform the same sequence of
external accesses and deallocations, where events on cells the trace
itself allocated are internal and do not count.
Formally, the external sequence of $\tau$ at $l$, relative to a set
$A$ of internal locations, is
\[
\begin{array}{lcll}
\mathit{ext}^{A}_{l}(\cdot) &=& \cdot\\
\mathit{ext}^{A}_{l}((\ALLOC l')\,\tau) &=& \mathit{ext}^{A\cup\set{l'}}_{l}(\tau)\\
\mathit{ext}^{A}_{l}((\mu\,l)\,\tau) &=& \mu\ \mathit{ext}^{A}_{l}(\tau)
  &\text{if}\ l\notin A\\
\mathit{ext}^{A}_{l}((\DEALLOC l)\,\tau) &=& \CONSUME\ \mathit{ext}^{A}_{l}(\tau)
  &\text{if}\ l\notin A\\
\mathit{ext}^{A}_{l}(e\,\tau) &=& \mathit{ext}^{A}_{l}(\tau)
  &\text{otherwise,}
\end{array}
\]
and $\tau_1\treq\tau_2$ iff
$\mathit{ext}^{\set{}}_{l}(\tau_1) = \mathit{ext}^{\set{}}_{l}(\tau_2)$
for every $l$.
This is Mazurkiewicz trace equivalence: independent events may be
reordered, conflicting events keep their order.
A term is \emph{well-formed} in $H$ when every location it mentions is
allocated in $H$.
The relations $\redsto{\tau}$ and $\seqredsto{\tau}$ are as in
\Cref{app:meta:exprel}.

\begin{theorem}[Standardization]\label{thm:meta:standardization}
If $t$ is well-formed in $H$ and
$\cfg{H}{t}\redsto{\tau}\cfg{H'}{a}$ with $a$ an answer, then
$\cfg{H}{t}\seqredsto{\tau'}\cfg{H'}{a}$ for some
$\tau'\treq\tau$.
\end{theorem}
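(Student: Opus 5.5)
The plan is to prove the statement by induction on the length of the run $\cfg{H}{t}\redsto{\tau}\cfg{H'}{a}$, using a commutation lemma that pushes out-of-order steps into the left-first schedule. A step is \emph{out of order} when it is an \rruleref{s-par-r} step (possibly under \rruleref{s-let} or \rruleref{s-unpack-ctx} contexts) fired while the left branch of that parallel node is not yet an answer.

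The key local lemma is a \emph{swap lemma}. Suppose $\cfg{H}{\PAR[C_1,C_2]\,t_1,t_2}$ takes a right step with trace $\tau_r$ and then a left step with trace $\tau_l$. Then the two steps can be performed in the opposite order. The result is the same configuration up to a permutation of freshly allocated locations, with trace $\tau_l\,\tau_r\treq\pi(\tau_r\,\tau_l)$. The argument has three parts:
\begin{itemize}
\item Both steps carry authorization $\tauth{\tau_r}{C_2}$ and $\tauth{\tau_l}{C_1}$ together with $\nintf{C_1}{C_2}$. By a \emph{footprint frame} property, the left step cannot read a cell that the right step wrote or killed, except at locations the right step itself allocated. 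Such locations are not in $C_1$, and well-formedness keeps $t_1$ from mentioning them.
\item So the left step is enabled in the original heap, and the heap effects commute.
\item Allocation clashes, where both steps pick the same fresh $l$, are resolved by the permutation $\pi$. This is why the statement is up to $\treq$ and why fresh-location events are internal in $\mathit{ext}$.
\end{itemize}
We also need to check that the swapped right step still satisfies its guard. Its footprint $C_2$ is unchanged by the left step, and the left footprint grows only by fresh cells disjoint from $C_2$, so $\nintf{C_1'}{C_2}$ persists. Read counts are preserved trivially.

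The global argument then runs as follows. Take the first out-of-order step in the run. Since the run ends in an answer, the left branch of that parallel node must eventually reach an answer before \rruleref{s-par-join}. So later in the run there are left steps of that node, and the right steps between form a finite block. Repeatedly apply the swap lemma to move each left step of that node before the block of right steps. This is a bubble-sort on a measure: the number of (right, left) inversions at the outermost offending parallel node, ordered lexicographically by node depth. Each swap strictly decreases the measure and preserves $\treq$ of traces. Since $\treq$ is an equivalence and stable under location permutation, composing yields the claim.

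Finally, note that $H'$ and $a$ must come out exactly equal, not merely up to permutation. We arrange this by choosing, during swaps, the same fresh locations as in the original run whenever no clash occurs. When a clash does occur, we rename consistently in the tail so that the final configuration agrees.

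The main obstacle is the frame property inside the swap lemma. Authorization only constrains external events via footprints, but a footprint $C_1$ reaches cells transitively through stored values (the footprint $\fploc{H}{l}$). We must show that a right step cannot change what $t_1$'s next step depends on, including value lookups via $\resolve$ and nested parallel footprints. I would first prove that stored values are immutable (world extension never rewrites a value cell, and \rruleref{s-lift} only adds cells), so only reference cells matter. Those cells are exactly the $\epsilon\,l$ and $\RO\,l$ events governed by authorization and non-interference. Nested parallel nodes inside $t_2$ need an invariant that inner footprints are covered by $C_2$, preserved by steps. I would state this invariant as part of the induction and expect it to absorb most of the effort.
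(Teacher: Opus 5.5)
Your strategy matches the one the paper relies on: commute adjacent steps that lie in different branches of a parallel node, using only the authorization and non-interference guards of \rruleref{s-par-l} and \rruleref{s-par-r}, and bubble out the inversions. The step that fails as written is your frame argument's appeal to well-formedness at the swap point. You use it twice: ``well-formedness keeps $t_1$ from mentioning them'', and in the claim that $C_2$ is disjoint from the left branch's fresh cells. You never show that well-formedness is preserved along the run, and under the hypothesis as stated it is not. Steps pull heap-stored locations into the term: \rruleref{s-read} returns a cell's content, and \rruleref{s-app} unfolds a stored closure. So a dangling location can surface in one branch and then be picked as fresh by the other. Concretely, let $H(l_f)=\lambda(x{:}\top)\,\IF n\THEN ()\ELSE ()$ with $n\notin\dom H$, and let $t=\PAR[\set{},\set{}]\,(l_f\,l_f),\,(\LET z=\TTRUE\IN ())$. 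Then $t$ is well-formed in $H$. The run that unfolds $l_f$ on the left, lifts $\TTRUE$ to $n$ on the right, and then takes the conditional reaches $()$. Yet every sequential run is stuck at $\IF n$. You therefore need a heap-closure invariant: every location held by a cell or mentioned by a stored value is allocated. This is presumably part of the mechanization's notion of well-formedness. It is preserved, because \rruleref{s-write} and \rruleref{s-alloc} store allocated locations and \rruleref{s-lift} stores a subterm of a well-formed term. With it, your frame argument goes through.

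Three smaller corrections follow.

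\emph{No allocation clash can occur.} In a right-then-left pair, the left step picks its location outside a heap that already contains the right step's location. The swapped order can therefore reuse both choices verbatim, giving the identical configuration with no $\pi$. This is what yields the exact $H'$ and $a$; renaming the tail would only give $\pi H'$. The statement is up to $\treq$ only because events are reordered.

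\emph{Composing swaps needs more than equivalence.} $\treq$ is not a congruence in general: $\ALLOC l\treq\cdot$, but appending $\epsilon\,l$ separates them. So equivalence plus stability under renaming does not justify composing swaps inside a run. Argue instead that each transposition exchanges two events that are on distinct locations or are both reads, with neither allocating the other's location. Such a swap preserves every $\mathit{ext}^A_l$, and hence holds in any context.

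\emph{The nested-footprint invariant is unnecessary.} The guard at the node being commuted checks the full trace of every step of its branch, however deeply nested. That trace records every heap change outside freshly created cells.
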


The final heap and answer are syntactically identical.
Only the trace is reordered.
The theorem carries no typing hypothesis: the authorization and
non-interference guards of \rruleref{s-par-l} and \rruleref{s-par-r}
already carry the separation content that makes the commutations
sound, and typing is what guarantees the guards never block
(\Cref{sec:core}).
It is the precise form of \Cref{thm:standardization}.

\begin{theorem}[Confluence]\label{thm:meta:confluence}
Let $t$ be well-formed in $H$, and let
$\cfg{H}{t}\redsto{\tau_1}\cfg{H_1}{t_1}$ and
$\cfg{H}{t}\redsto{\tau_2}\cfg{H_2}{t_2}$.
Then there are runs
$\cfg{H_1}{t_1}\redsto{\sigma_1}\cfg{H_1'}{t_1'}$ and
$\cfg{H_2}{t_2}\redsto{\sigma_2}\cfg{H_2'}{t_2'}$
and a permutation $\pi$ of locations such that
\begin{enumerate}
\item $H_2' = \pi\,H_1'$;
\item $t_2'$ equals $\pi\,t_1'$;
\item $\pi(\tau_1\,\sigma_1)\treq\tau_2\,\sigma_2$; and
\item $\readsof{\tau_1\,\sigma_1} = \readsof{\tau_2\,\sigma_2}$.
\end{enumerate}
\end{theorem}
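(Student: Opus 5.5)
The plan is the standard diamond-lemma route: establish a local commutation property for single steps, then tile it into the full statement by induction on run lengths. All four conditions are tracked throughout; the permutation $\pi$ absorbs the choice of fresh locations, so the relation needing confluence is ``equality up to renaming''.

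\textbf{Step 1: local diamond.} Suppose $t$ is well-formed in $H$ and
\[
\cfg{H}{t}\stepsto{e_1}\cfg{H_1}{t_1}
\qquad\text{and}\qquad
\cfg{H}{t}\stepsto{e_2}\cfg{H_2}{t_2}.
\]
I show that either the two configurations agree up to a permutation $\pi$, or each can make one step so that the results agree up to some $\pi$, with equivalent traces and equal read counts. The proof goes by induction on the first step.

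Outside parallel nodes, the reduction is deterministic up to the choice of fresh location. This covers evaluation contexts \rruleref{s-let} and \rruleref{s-unpack-ctx}, and the head rules, whose redex is unique because terms are in MNF. Only \rruleref{s-alloc} and \rruleref{s-lift} choose a location, and two such choices differ by the transposition of the two fresh locations. This is the source of $\pi$.

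The real case is a parallel node $\PAR[C_1,C_2]\,t_1,t_2$ stepping once left and once right. The premises $\tauth{\tau}{C_i}$ and $\nintf{C_1}{C_2}$ mean the two events touch distinct locations, or both are reads. Such events commute on the heap: writes and drops on distinct cells commute, and reads do not change $H$. Allocations pick fresh locations, which may collide; I rename one via $\pi$.

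Three further facts are needed here.
\begin{itemize}
\item After one branch steps, the other branch's guard still holds. Its footprint is unchanged. The stepping branch's footprint grows only by $\set{\epsilon\,l,\CONSUME\,l}$ for a fresh $l$, which cannot occur in the other footprint, so $\nintf{C_1'}{C_2}$ still holds.
\item Reads return the same value in either order, because the other branch cannot have written that cell.
\item Trace equivalence holds because the per-location external sequences are unchanged. Reordering occurs only between events on distinct locations or between two reads, and $\readsof{\cdot}$ is order-invariant.
\end{itemize}
I also check well-formedness preservation: each step only mentions allocated locations, and new locations are added to $H$.

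\textbf{Step 2: compatibility with renaming.} I show that reduction is equivariant. If $\cfg{H}{t}\stepsto{\tau}\cfg{H'}{t'}$, then $\cfg{\pi H}{\pi t}\stepsto{\pi\tau}\cfg{\pi H'}{\pi t'}$, including the annotated footprints on parallel nodes. I also show that $\treq$ is preserved by $\pi$ and by concatenation. This is the lemma that lets the diamond tiles be composed.

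\textbf{Step 3: tiling.} Since the local diamond closes in at most one step on each side, a strip lemma follows. One single step against a run of length $n$ closes with at most one step on the run's side and at most $n$ steps on the other. I prove it by induction on $n$, composing permutations and using Step 2 to transport the remaining run across the renaming. A second induction on the length of the other run gives the theorem, with $\pi$ the composite permutation.

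Trace concatenation is handled by two facts: $\treq$ is a congruence, and the internal locations of $\mathit{ext}^A_l$ are exactly those allocated in the prefix. Because of this, renaming fresh allocations does not affect the external sequence, except through $\pi$ itself. The final sentence of the theorem is the case where $t_1$ and $t_2$ are answers, which cannot step. Then $\sigma_1$ and $\sigma_2$ are empty and $t_2=\pi t_1$, $H_2=\pi H_1$. Here $\pi$ moves only locations allocated during the runs, since it is built from transpositions of fresh choices.

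\textbf{Main obstacle.} I expect the hardest part to be the parallel case of the local diamond when the commuted steps are nested inside deeper contexts and involve allocations and lifts. Two things must be verified together: that the fresh-location permutation is consistent with the footprint annotations $C_1'$, and that the guard of the non-stepping branch survives. My first attempt is to strengthen the local lemma with an invariant: the annotated footprints of sibling branches are non-interfering and contain only allocated locations. I would prove that this invariant is preserved by every step before attempting the commutation itself.
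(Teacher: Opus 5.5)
Your overall route is right: a one-step diamond up to a transposition of fresh locations, equivariance of $\stepsto{}$, a strip lemma, and tiling, with read counts tracked tile by tile. It matches what the paper says about its mechanized proof, namely that $\pi$ absorbs fresh-location choices and that (4) must be carried separately from (3). Two steps fail as you state them, and one step of your planned invariant would weaken the theorem.

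\emph{First}, $\treq$ is not a congruence for concatenation. Because $\mathit{ext}$ discounts events on cells allocated earlier in the trace, $\ALLOC l\treq\cdot$, yet $(\ALLOC l)\,(\RO\,l)\not\treq\RO\,l$. It is a left congruence. Right congruence, however, needs both prefixes to allocate the same locations, and tiling uses it every time a closed tile is followed by the renamed remainder of a run. Carry instead the finer relation ``equal, after renaming, up to swaps of adjacent independent events (distinct locations, or two reads).'' This relation has the properties you need:
\begin{itemize}
\item it is a congruence and is equivariant;
\item it implies $\treq$ and preserves $\readsof{\cdot}$, so it delivers (3) and (4) at once;
\item every tile satisfies it, because the two sides of a tile perform the same events.
\end{itemize}

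\emph{Second}, your well-formedness invariant must include closedness of the heap: stored values and cell contents mention only allocated locations. Without it, two of your claims fail. Two fresh choices need not differ by a transposition, and steps need not preserve well-formedness, since \rruleref{s-app} imports a stored body. Concretely, take $H(l_f)=\lambda(x{:}\TUNIT).\,l_9\,x$ with $l_9\notin\dom H$. The term $\LET y=\lambda(z{:}\TUNIT).\,()\IN\LET u=()\IN l_f\,u$ reduces to $()$ when the first \rruleref{s-lift} picks $l_9$. It gets stuck at $l_9\,l_u$ when both lifts avoid $l_9$. Both endpoints are irreducible, and no permutation relates them. So read the hypothesis as including closedness. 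Every step preserves it: \rruleref{s-lift} stores a well-formed value, and \rruleref{s-write} and \rruleref{s-alloc} check $l_y,l_x\in\dom H$. Platform configurations satisfy it.

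\emph{Third}, do not make non-interference of sibling footprints part of the invariant. Well-formedness does not provide it, so assuming it would prove a weaker theorem. You also never need it: at the parallel node where the two steps diverge, both guards have fired, and freshness preserves $\nintf{C_1'}{C_2}$.

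Finally, exact equality in (2) relies on treating annotations as sets. With syntactic capture sets, as in the mechanization, annotation growth depends on interleaving order. Then (2) holds only up to the footprints the annotations denote, as the paper remarks.
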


The renaming $\pi$ accounts for the free choice of fresh locations at
allocation.
In the mechanization, condition (2) holds up to the capture
annotations on parallel nodes: their growth depends on the order of
interleaving, but the footprints they denote agree.
Condition (4) does not follow from (3): trace equivalence discounts
events on cells the runs themselves allocate, so it alone does not
pin down read counts.
It is what closes the budget-exhaustion case of
\Cref{thm:meta:adequacy-reduce}.

\begin{corollary}[Schedule determinism]\label{cor:meta:determinism}
If $t$ is well-formed in $H$ and
$\cfg{H}{t}\redsto{\tau_1}\cfg{H_1}{a_1}$ and
$\cfg{H}{t}\redsto{\tau_2}\cfg{H_2}{a_2}$ with $a_1$, $a_2$ answers,
then $H_2 = \pi\,H_1$, $a_2 = \pi\,a_1$, and
$\pi\,\tau_1\treq\tau_2$ for some permutation $\pi$ of locations.
\end{corollary}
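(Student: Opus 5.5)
This follows directly from \Cref{thm:meta:confluence}. Apply it to the two given runs $\cfg{H}{t}\redsto{\tau_1}\cfg{H_1}{a_1}$ and $\cfg{H}{t}\redsto{\tau_2}\cfg{H_2}{a_2}$. It supplies joining runs $\cfg{H_1}{a_1}\redsto{\sigma_1}\cfg{H_1'}{t_1'}$ and $\cfg{H_2}{a_2}\redsto{\sigma_2}\cfg{H_2'}{t_2'}$ and a permutation $\pi$. These satisfy $H_2'=\pi\,H_1'$, $t_2'=\pi\,t_1'$ and $\pi(\tau_1\,\sigma_1)\treq\tau_2\,\sigma_2$. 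The plan is to show that both joining runs are empty, so the conclusions of confluence specialize to the stated ones.

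The key step is an inspection of \Cref{fig:core-smallstep}: \emph{an answer does not step}. Every small-step rule has a left-hand side that is an elimination form, a memory primitive, a let or unpack, a conditional, or a parallel node. None of these is a value or a location, so no rule applies to $\cfg{H}{a}$ for an answer $a$. Hence the reflexive-transitive closure out of $\cfg{H_i}{a_i}$ only contains the reflexive run. That gives $\sigma_i=\cdot$, $H_i'=H_i$ and $t_i'=a_i$. Substituting yields $H_2=\pi\,H_1$, $a_2=\pi\,a_1$, and $\pi\,\tau_1\treq\tau_2$, where $\pi(\tau_1\,\cdot)=\pi\,\tau_1$ since the empty trace is a unit for concatenation.

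The only point I expect to need care is the mechanization caveat on condition (2): $t_2'=\pi\,t_1'$ holds only up to the capture annotations on parallel nodes. I would argue this is harmless here. An answer is a value or a location and has no active parallel node whose annotation evolves during reduction. The only parallel nodes an answer can contain sit inside lambda bodies or modal values, and reduction never rewrites those annotations. So the annotations of $a_1$ and $a_2$ are left untouched by the runs, and equality up to annotations collapses to syntactic equality. If this turns out to be delicate, the fallback is to state $a_2=\pi\,a_1$ modulo parallel annotations, with denotationally equal footprints, matching the caveat already attached to \Cref{thm:meta:confluence}.
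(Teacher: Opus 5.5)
Your proposal is correct and is exactly the paper's argument: apply \Cref{thm:meta:confluence} to the two runs, observe that no small-step rule fires on an answer, so both closing runs are empty, and the conditions of confluence specialize to the claim. The paper's one-line proof does not mention the mechanization caveat on parallel-node annotations, so that extra paragraph goes beyond it; one precision there is that substitution does act on annotations inside values, so the relevant point is that those annotations never undergo the schedule-dependent footprint growth of the parallel-stepping rules, not that the runs leave them untouched.
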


The corollary is immediate from \Cref{thm:meta:confluence}: answers do
not reduce, so both closing runs are empty.
It is the precise form of \Cref{thm:confluence} and the operational
content of data-race freedom: no observation of a well-typed program
can depend on how its parallel compositions are scheduled.

\subsection{Mechanization}\label{app:meta:mech}

The development is mechanized in Lean~4.
All results are proved in full.
The development contains no unproven obligations.
The syntax, type system, and operational semantics are as in
\Cref{app:corecapybara}.
The model of
\Cref{app:meta:footprints,app:meta:worlds,app:meta:types,app:meta:exprel,app:meta:semtyping}
lives in the \texttt{Denotation} module, the compatibility lemmas and
\Cref{thm:meta:fundamental} in \texttt{Fundamental}, the platform and
adequacy theorems in \texttt{Safety} and \texttt{SafetyReduce}, and
the schedule theorems in \texttt{Semantics}.
The main theorems correspond to the following Lean declarations:
\begin{center}
\begin{tabular}{ll}
\toprule
Result & Lean declaration\\
\midrule
Fundamental theorem (\Cref{thm:meta:fundamental}) & \texttt{fundamental}\\
Separation (\Cref{lem:meta:sep}) & \texttt{fundamental\_sepcheck}\\
Sequential adequacy (\Cref{thm:meta:adequacy}) & \texttt{adequacy\_platform}\\
Adequacy (\Cref{thm:meta:adequacy-reduce}) & \texttt{adequacy\_platform\_reduce}\\
Immutability (\Cref{thm:meta:immutability}) & \texttt{immutability\_adequacy\_platform\_reduce}\\
Standardization (\Cref{thm:meta:standardization}) & \texttt{standardization}\\
Confluence (\Cref{thm:meta:confluence}) & \texttt{confluence}\\
\bottomrule
\end{tabular}
\end{center}
\section{Translating \capybara{} to \corecapybara{}}\label{app:translation}

\theoremstyle{acmdefinition}
\newtheorem{remark}{Remark}[section]
\theoremstyle{acmplain}

\newcommand{\rentries}[1]{\langle #1\rangle}

This appendix defines the type-preserving translation
$\embed{\cdot}$ from the surface calculus \capybara{} (\Cref{sec:formal})
to the core calculus \corecapybara{} (\Cref{app:corecapybara}),
states its correctness theorem, and proves it.
The translation is the bridge that transfers \corecapybara{}'s metatheory,
soundness, memory safety, and data-race freedom, to \capybara{}.

\subsection{Conventions}\label{app:tr:conv}

We fix the following conventions, which the surface presentation of
\Cref{sec:formal} leaves implicit.

\paragraph{Root capabilities.}
The two special capture variables $\ANY$ and $\FRESH$ are \emph{root capabilities}.
They dissolve into universal and existential capture quantification during translation.
Recall from \Cref{sec:formal:typing} that $\mfree{C}$ removes every root
capability from $C$;
call a capture set or type \emph{root-free} when it mentions no root
capability. All judgments of the translation are read
over root-free sets: whenever a root capability would enter a capture position of a
\corecapybara{} type, the type translation has already consumed it into a bound
capture variable.

\paragraph{Root-directedness.}
A well-formed surface type is \emph{root-directed}: occurrences of $\ANY$
appear only as elements of the \emph{top-level} capture set of a
term-function domain, whose root-free part is access-only, and occurrences
of $\FRESH$ appear only in covariant (spine) positions, never in a function
domain. The domain restriction on $\ANY$ is what \ruleref{abs} and
\ruleref{app} already assume, read one arrow at a time: an $\ANY$ nested
deeper inside a domain sits in the domain of a nested arrow, and that arrow
quantifies it itself. The codomain restriction on $\FRESH$ is a stated
convention, enforced by the surface well-formedness judgment $\wf{\G}{T}$,
which we assume. Under root-directedness each translated arrow binds a
\emph{single} capture variable $\cstar$ for the (at most one) $\ANY$ element
of its domain's top-level capture set, and a call instantiates $\cstar$ with
one capture set, exactly as \ruleref{abs} and \ruleref{app} do with
$T[\ANY\leadsto\set{c}]$ and $T[\ANY\leadsto D]$.

\paragraph{Domain honesty.}
At every \ruleref{app} and \ruleref{consume-app} instance, with domain
top-level capture set $C_T$ and argument $y$ declared at capture set $C_y$,
we require the argument to resolve onto the domain's root-free captures:
every root of $\mfree{C_T}$ is covered, at an equal or stronger mode, by a
root of $C_y$ (the covering order $\sqsubseteq$ of \Cref{app:tr:thm}). The
domain's root-free captures name capabilities the parameter actually holds,
and the function body treats them as part of its footprint: it may assume
them separate from what it captures, and the manufactured lock of
\Cref{app:tr:types} records that assumption, to be paid at every call from
the caller's own knowledge of the argument. An argument whose capture set
under-runs the domain would make the assumption unpayable at the call while
keeping it vacuously true of the actual argument; we exclude such
applications. A domain whose only capture is $\ANY$, the common case in
\Cref{sec:informal}, satisfies the requirement trivially.

\paragraph{Ambient contexts.}
The translation is stated for derivations whose \emph{ambient} context, the
prefix of $\G$ not introduced by a binder of the term under translation,
declares term and type bindings only. Every capture variable of a translated
context is then introduced by a translated binder and carries that binder's
lock or consume authority (\Cref{app:tr:ctx}), which is what pays the fiat
separations of \ruleref{sep-root} (\Cref{lem:tr:invariant}). An ambient
capture binding would arrive with neither, and a fiat separation between two
such roots would have no translated image.

\paragraph{Context-indexed types.}
The type translation resolves capture sets to their roots in the context
where the type is read (\Cref{app:tr:types}), so $\embed{T}$ carries a
context index that we leave implicit. The index is stable: a resolution
consults only the bindings of the variables the resolved set mentions, so
extending the context disturbs nothing, and a translated type weakens to any
extension of its context unchanged.

\paragraph{Canonical derivations.}
The translation is defined on typing derivations, and the theorem asserts
that the \emph{judgment} translates, so we may choose which derivation to
translate. We choose a canonical one, which keeps two sources of slack out
of the translated obligations: gratuitously wide witnesses and gratuitously
widened subjects.
\begin{definition}[Canonical derivations]\label{def:tr:canon}
A derivation of $\typ{C}{\G}{t}{T}$ is \emph{canonical} when:
\begin{enumerate}
\item \emph{(Tight witnesses)} at every \ruleref{app} and
  \ruleref{consume-app} instance, the witness $D$ satisfies
  $\roots{\G}{D}\sqsubseteq\roots{\G}{C_y}$ for the argument's declared
  capture set $C_y$, and $D=\set{}$ when the domain mentions no $\ANY$;
\item \emph{(Declared subjects)} the subject premise of every elimination
  (\ruleref{app}, \ruleref{consume-app}, \ruleref{tapp}, \ruleref{capp})
  concludes at the subject's declared capturing type, up to subtyping
  applied inside covariant positions of its shape: in particular its
  top-level capture set and, for the function eliminations, the arrow's
  capture annotation are the declared ones.
\end{enumerate}
\end{definition}

\begin{lemma}[Canonicalization]\label{lem:tr:canon}
Every derivable typing judgment has a canonical derivation.
\end{lemma}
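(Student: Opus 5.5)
We argue by induction on the height of an arbitrary derivation of $\typ{C}{\G}{t}{T}$, rewriting it bottom-up into a canonical one with the same conclusion; the judgment, not the derivation, is preserved. The induction hypothesis makes every premise derivation canonical, so only the last rule needs repair, and that repair may only rebuild premises of equal or smaller height.

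\textbf{Declared subjects.} The subject premise of an elimination is a typing of a variable $x$. By inversion, any derivation of a variable typing is an instance of \ruleref{var}, \ruleref{readonly}, or \ruleref{fresh}, followed by a chain of \ruleref{sub} steps. We collapse this chain with \ruleref{trans} into a single subtyping step $\subs{\G}{T_x}{T'}$ from the declared type. We then push that step into covariant positions, which needs an inversion lemma for subtyping of capturing types and arrows. Rule \ruleref{capt} keeps the permission fixed and splits the step into shape subtyping plus subcapturing. Rules \ruleref{fun}, \ruleref{tfun}, and \ruleref{cfun} act only under binders and on codomains; term domains are invariant, so nothing moves there.

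Any widening of the \emph{top-level} capture set or of the arrow annotation can be dropped, for three reasons. First, \ruleref{app}, \ruleref{consume-app}, \ruleref{tapp}, and \ruleref{capp} use the subject's capture set only through $\set{x}$, the use set, and the separation premise. Second, the separation premise is stated against the spine $|\cdot|$. Third, a smaller annotation gives a smaller spine: the spine is monotone in subcapturing by \ruleref{sep-sc}, since $\mfree{\cdot}$ and the spine recursion respect $\subseteq$ and \ruleref{sc-var} resolves to the declared set. So separation against the declared spine follows from separation against the widened one by \ruleref{sep-sc} and \ruleref{sep-symm}. The codomain widening is reapplied after the elimination with \ruleref{sub}. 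This needs a substitution-commutes-with-subtyping lemma, $[z:=y]U_1<:[z:=y]U_2$, which we prove by routine induction.

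\textbf{Tight witnesses.} Given a witness $D$ with $y:T[\ANY\leadsto D]$, we replace it with $D' = \set{y}$ qualified at the mode of the $\ANY$ occurrence. When the domain has no $\ANY$, we take $D'=\set{}$. We then check each premise with $D'$ in place of $D$:
\begin{itemize}
\item $y:T[\ANY\leadsto D']$ is obtained by \ruleref{var} or \ruleref{readonly}, followed by \ruleref{sub}. Here we use $\subs{\G}{\set{y}}{C_y}$ by \ruleref{sc-var}, and domain honesty to cover $\mfree{C_T}$.
\item Access-only: $\set{y}$ resolves to roots of $C_y$.
\item Separation: it holds because $\subs{\G}{D'}{D}$ follows from the original typing by inversion, where the top-level capture set of $y$'s type dominates $\set{y}$. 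We then apply \ruleref{sep-sc}.
\item The use-set side conditions of \ruleref{consume-app}, namely \ruleref{seq-sc} and \ruleref{con-sc}, transfer the same way.
\end{itemize}
The resulting use set $\set{x,y}\cup D'\cup\CONSUME\,D'$ is below the original, so we restore the original use set with \ruleref{sub}. The result type $[z:=y]U$ does not mention $D$, so it is unchanged.

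\textbf{Expected obstacle.} The main obstacle is the subtyping inversion step. Transitivity chains interleave \ruleref{tvar} and \ruleref{top}, so we must first prove that transitivity is admissible in a normalized system, in the style of algorithmic $F_{<:}$, before inverting. The \ruleref{fresh} case of subject typing is a further complication. If it arises we keep \ruleref{fresh} as the base and note that its conclusion type $T$ is itself the declared type reading used in the translation.
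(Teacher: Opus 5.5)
\textbf{Declared subjects.} Your clause for declared subjects follows the paper's re-association argument. Your planned normalization of transitivity is a heavier substitute for the paper's observation that \ruleref{tvar} only moves up, so a subtyping chain between function shapes never leaves the congruences.

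\textbf{Tight witnesses: the gap.} Replacing $D$ by $D'=\set{y}$ rests on the claim that $\subs{\G}{D'}{D}$ follows by inversion. But inverting the argument premise only gives $\subs{\G}{\set{y}}{\mfree{C_T}\cup D}$. The domain's top-level set also holds its root-free captures $\mfree{C_T}$, and part of $y$'s capture may be accounted for by those rather than by $D$. In that case $\set{y}$ cannot be separated from the spine. We have $|A|\supseteq\mfree{C_T}$, and domain honesty, which you invoke, makes every root of $\mfree{C_T}$ a root of $C_y$. By \Cref{lem:tr:sepchar}, $\sep{\G}{\set{y}}{|A|}$ is therefore underivable as soon as $\mfree{C_T}$ has a read-write root.

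Concretely, take capture variables $c_1,c_2,c_3$ in scope, a callee $x:(\forall(z:X\capt\set{c_1,\ANY})\,\TUNIT)\capt\set{}$, an argument $y:X\capt\set{c_1,c_2}$, and the non-tight witness $D=\set{c_2,c_3}$. The original premise $\sep{\G}{D}{\set{c_1}}$ holds by \ruleref{sep-root}. Your rewrite instead demands $\sep{\G}{\set{y}}{\set{c_1}}$, and $y$ reaches $c_1$.

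For \ruleref{consume-app} there is a second failure, even when $\mfree{C_T}=\set{}$. The surface calculus has no consume congruence for subcapturing; its only mode rules are \ruleref{sc-mode} and \ruleref{sc-ro-mono}. So $\set{\CONSUME\,y}$ lies below no set that does not literally contain $\CONSUME\,y$. Consequently neither the sequencing premise $\seqcomp{\G}{\CONSUME\,D'}{\set{x}}$ nor the restoration of the original charge by \ruleref{sub} follows from the corresponding facts about $D$.

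\textbf{What is missing.} The tight witness must be carved out of $D$ itself, keeping only what $y$'s capture actually reaches, rather than replaced by the argument. The paper applies the atomic decomposition of subcapturing (\Cref{lem:tr:atoms}) to $\subs{\G}{C_y}{\mfree{C_T}\cup D}$, and lets $D_0$ collect the intermediate resolvents of $C_y$'s atoms that land in $D$. Three facts then hold:
\begin{itemize}
\item $\subs{\G}{C_y}{\mfree{C_T}\cup D_0}$, which re-establishes the argument premise.
\item Every atom of $D_0$ is a mode-weakening of an element of $D$, so $\subs{\G}{D_0}{D}$ and $\CONSUME\,D_0\subseteq\CONSUME\,D$. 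These transfer access-onliness, separation, consumability, sequencing, and the charge, by \ruleref{ao-sc}, \ruleref{sep-sc}, \ruleref{con-sc}, \ruleref{seq-sc}, and \ruleref{sub}.
\item $\roots{\G}{D_0}\sqsubseteq\roots{\G}{C_y}$ by \Cref{lem:tr:rootmono}, which is tightness.
\end{itemize}
In the example, $D_0=\set{c_2}$. Your choice $\set{y}$ works essentially only when $\mfree{C_T}$ is empty and the application does not consume.

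\textbf{A smaller point.} The paper's sketch also passes over this one. Moving the codomain step below the elimination can enlarge the spine in the separation premise. For instance, \ruleref{top} widens a pure codomain with nonempty spine to $\top$, whose spine is empty. Your monotonicity remark covers only the annotation. \Cref{def:tr:canon} allows covariant subtyping inside the subject's shape, so you can simply leave that step where it is.
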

\begin{proof}
By induction on the derivation, rewriting each offending instance; the two
clauses are established independently, and neither rewrite disturbs the
conclusion judgment.

\emph{(Declared subjects.)} A subject premise is a \ruleref{var} axiom
followed by \ruleref{sub} steps, collected into one step
$\subs{\G}{m\,S_y\capt C_y}{m\,S\capt C}$ by \ruleref{trans}. The
elimination demands that $S$ be a function former, and $S_y$ must be one of
the same former: \ruleref{tvar} only moves \emph{up} to a declared bound, so
a chain from a function shape to a function shape stays within the
congruences. Inversion of \ruleref{capt} and of the congruence
(\ruleref{fun}, \ruleref{tfun}, \ruleref{cfun}) splits the step into
$\subs{\G}{C_y}{C}$, an identical domain (term domains are invariant under
\ruleref{fun}) or a contravariant bound step, and a covariant codomain step
under the binder. Re-associate: perform the elimination at the declared
type, whose domain or bound premise the original premises already serve, and
push the codomain step onto the \emph{conclusion} by \ruleref{sub}, using
stability of subtyping under the elimination's substitution, a routine
property of the surface calculus. The use set is unchanged; for
\ruleref{capp} the charged witness $D$ is part of the term and also
unchanged.

\emph{(Tight witnesses.)} If the domain mentions no $\ANY$ then
$T[\ANY\leadsto D']=T$ for every $D'$, and $D'=\set{}$ satisfies
access-onliness and separation by \ruleref{ao-empty}, \ruleref{sep-empty},
and \ruleref{sep-symm}. Otherwise, by root-directedness the sole $\ANY$
sits in the domain's top-level capture set $C_T$, so the argument premise
contains the top-level subcapturing $\subs{\G}{C_y}{\mfree{C_T}\cup D}$ and
fixes everything below the top level. By the atomic decomposition of
subcapturing (\Cref{lem:tr:atoms}), each atom of $C_y$ resolves, through
declared capture sets, to atoms that are mode-covered by elements of
$\mfree{C_T}\cup D$. Let $D_0$ collect the intermediate atoms of that
resolution that land in $D$. Then: $\subs{\G}{C_y}{\mfree{C_T}\cup D_0}$,
by the same resolution with each landing atom kept in place;
$\subs{\G}{D_0}{D}$, since each collected atom is mode-covered by an element
of $D$ (\ruleref{sc-elem}, \ruleref{sc-mode}); and
$\roots{\G}{D_0}\sqsubseteq\roots{\G}{C_y}$, since each collected atom is a
resolvent of an atom of $C_y$ and resolution only multiplies modes
(\Cref{lem:tr:rootmono}). The rewritten instance uses witness $D_0$: the
argument premise holds by the first fact, $\accessonly{\G}{D_0}$ by
\ruleref{ao-sc} from the second and the original $\accessonly{\G}{D}$, the
separation premise $\sep{\G}{D_0}{|A|}$ by \ruleref{sep-sc} from the second
and the original premise, and for \ruleref{consume-app} likewise
$\consumable{\G}{D_0}$ by \ruleref{con-sc}. The result type $[z:=y]U$ does
not mention $D$, and for \ruleref{consume-app} the charged
$D_0\cup\CONSUME\,D_0$ lies below the original charge by the second fact,
restored by \ruleref{sub}.
\end{proof}

\paragraph{Translation on derivations.}
The translation is defined on canonical typing \emph{derivations}, and is
compositional. The translated term inserts administrative forms, capture
applications, locks and unlocks, packs and unpacks, that carry no source
syntax but discharge the bookkeeping that \corecapybara{} makes explicit. We
write $\embed{\cdot}$ for the translation of every syntactic class;
context~disambiguates.

\subsection{Translating Capture Sets, Bounds, and Permissions}\label{app:tr:sets}

\paragraph{Capture sets.}
$\embed{C}$ acts elementwise, preserving variable names and access modes:
\[
\embed{\set{\mu_1\,\theta_1,\ldots,\mu_n\,\theta_n}}
  = \set{\mu_1\,\theta_1,\ldots,\mu_n\,\theta_n}
  \quad\text{for root-free}\ \theta_i.
\]
Root capabilities never occur in a translated capture position; they are removed by
$\mfree{\cdot}$ or turned into bound capture variables by the type translation.

\paragraph{Mode application.}
Mode application commutes with translation.
\begin{lemma}[Mode commutation]\label{lem:tr:mode}
$\embed{\mu\,C} = \mu\,\embed{C}$ for every access mode $\mu$ and root-free
capture set $C$.
\end{lemma}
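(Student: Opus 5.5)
The translation $\embed{\cdot}$ on root-free capture sets is elementwise and keeps variable names and access modes. Mode application is also elementwise: it restamps the access mode of each element. So I would prove the lemma by a direct calculation over the elements of $C$, splitting on $\mu$ and checking that the two elementwise maps commute.

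Write $C=\set{\mu_1\,\theta_1,\ldots,\mu_n\,\theta_n}$ with every $\theta_i$ root-free.

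First I note that $\mu\,C$ is again root-free, since mode application changes modes but never the captures $\theta_i$. Hence $\embed{\mu\,C}$ falls under the elementwise clause of the capture-set translation.

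I would then handle the three modes of the surface definition in \Cref{sec:formal:typing}.
\begin{itemize}
\item For $\mu=\epsilon$ both sides equal $\embed{C}$, because $\epsilon\,C=C$.
\item For $\mu=\CONSUME$ we have $\CONSUME\,C=\set{\CONSUME\,\theta_i}_i$. Translating this elementwise gives $\set{\CONSUME\,\theta_i}_i$, which is $\CONSUME\,\set{\mu_i\,\theta_i}_i=\CONSUME\,\embed{C}$.
\item For $\mu=\RO$ the computation is the same, with $\RO$ in place of $\CONSUME$.
\end{itemize}

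The main obstacle is a mismatch in how $\RO\,C$ is defined. The surface definition restamps every element, including $\CONSUME$-marked ones. The core definition in \Cref{app:corecapybara} leaves $\CONSUME$-marked elements unchanged. So in the $\RO$ case the element-level identity holds for mutability-tagged elements $m\,\theta$, but a $\CONSUME\,\theta$ element of $C$ may be treated differently on the two sides.

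I would resolve this by reading the right-hand side $\mu\,\embed{C}$ in the core sense and the left-hand side in the surface sense, and then splitting the $\RO$ case element by element.
\begin{itemize}
\item For elements tagged with a mutability the two readings agree immediately.
\item For $\CONSUME$-tagged elements I would argue that mode application appears only where such elements are absent. In the surface rules $\RO$ is applied through \ruleref{readonly}, \ruleref{sc-mode} and \ruleref{k-ro}, to declared or access-only sets. There $\CONSUME$ elements are excluded, because witnesses and use sets under $\RO$ are access-only.
\end{itemize}

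If that restriction cannot be justified globally, the fallback is to state the lemma for the common convention, taking $\RO$ to fix $\CONSUME$ elements in both calculi. The surface text does not exercise $\RO$ on consumed elements, so this choice would be harmless. With either resolution the identity holds pointwise, and set equality follows by extensionality.
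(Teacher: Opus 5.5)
Your core calculation is the paper's proof. The translation is elementwise and preserves captures and modes, mode application is an elementwise restamp, so the two commute. You are also right that this ``immediate'' argument assumes surface and core $\RO\,C$ agree, and as written they do not. \Cref{sec:formal:typing} restamps $\CONSUME$ elements, while \Cref{app:corecapybara} leaves them unchanged. Under the literal definitions the statement therefore fails at $\mu=\RO$, $C=\set{\CONSUME\,x}$: the left side is $\set{\RO\,x}$ and the right side is $\set{\CONSUME\,x}$.

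Your first remedy does not work. A lemma quantified over every root-free $C$ cannot be rescued by an argument about where mode application happens to be used. The usage claim is also false: \ruleref{sc-mode}, \ruleref{sc-ro-mono}, and \ruleref{k-ro} qualify arbitrary sets. Declared capture sets can carry $\CONSUME$ elements too; a closure that deallocates a captured reference $r$ has annotation $\set{\CONSUME\,r}$, and root resolution then qualifies that set under an $\RO$ atom.

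The fallback is the correct fix, and of the two conventions yours is the only viable one. Under the surface convention, \ruleref{sc-mode} derives $\subs{\G}{\set{\RO\,x}}{\set{\CONSUME\,x}}$. No core derivation can match this: by \Cref{lem:meta:subcapt} a $\CONSUME$ element would have to cover a read, which the covering order forbids. So \Cref{lem:tr:sc} would fail.

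Justify the choice this way, not by the claim that $\RO$ never meets consumed elements. The appendix already relies on your convention: the \ruleref{k-ro} case of \Cref{lem:tr:kindroot} speaks of $\CONSUME$-marked atoms surviving in $\RO\,C'$. With that convention fixed, your calculation is exactly the paper's proof.
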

\begin{proof}
Immediate: both sides restamp each element of $\embed{C}$ with $\mu$, and
$\embed{\cdot}$ preserves the underlying captures.
\end{proof}

\paragraph{Bounds.}
A surface capture bound is a capture set or a mutability. Capture sets translate
as above. Mutabilities translate to the trivial core bound:
\[
\embed{\epsilon} = \top,\qquad \embed{\RO} = \top.
\]
A mutability bound carries no separation content of its own in \corecapybara{},
where the trivial bound is $\top$. Its kinding force is recovered elsewhere: an
$\RO$ bound additionally contributes a mode-context entry $(\set{c}:\RO)$ to the
lock manufactured at the binder that introduces $c$ (\Cref{app:tr:types}), which
is what lets \rruleref{k-imm} pay the source uses of \ruleref{k-imm} and
\ruleref{sb-kind}.

\paragraph{Permissions.}
Both systems qualify capturing types with a permission $m$, and the
translation carries it over verbatim:
\[
\embed{m\,S\capt C} = m\,\embed{S}\capt \embed{C},
\qquad
\embed{S} = \embed{S}\ \text{(pure case)}.
\]
The permission is a qualifier on the capturing \emph{type}, not an operation
on its capture set: $\embed{C}$ is translated element-wise and $m$ never
touches it. Every rule that reads or writes a permission does so through the
type former: \ruleref{var} preserves the binding's qualifier on its refined
singleton, \rruleref{capt} keeps it rigid on both sides, \rruleref{reader}
introduces the read-only view of a reference at
$\RO\,\REF[T]\capt\set{\RO\,x}$, and \rruleref{read} demands an
$\RO$-qualified subject. Because \rruleref{write} matches only the default
qualifier, a reader-typed alias can never appear as the subject of a write:
the qualifier plays exactly the role of the mechanization's separate reader
type constructor, which realizes the same distinction as a shape. Mode
qualification of capture \emph{sets} ($\RO\,C$, $\CONSUME\,C$) is an
independent, element-wise operation used by charges and subcapturing
(\rruleref{sc-ro}, \rruleref{k-ro}) and is likewise translated unchanged.

\subsection{Translating Types}\label{app:tr:types}

Type translation comes in two variants. The \emph{plain} translation
$\embed{T}$ is used in domains and other invariant or contravariant positions,
where $\FRESH$ never occurs (root-directedness). The \emph{result}
translation $\eemb{T}$ is used in covariant result positions, let-bound terms
and function codomains, where $\FRESH$ may occur; it closes the $\FRESH$ occurrences
into an existential:
\[
\eemb{T} =
\begin{cases}
\EXCAP{c_1,\ldots,c_n}\,\embed{\freshinst{T}{\set{c_1},\ldots,\set{c_n}}} & \text{if $T$ mentions $\FRESH$},\\
\embed{T} & \text{otherwise},
\end{cases}
\]
binding one capture variable per $\FRESH$ occurrence, matching
\ruleref{let}. For root-free $T$ the two variants coincide.

\paragraph{Plain translation.}
The base and capturing cases are homomorphic:
\[
\embed{\top} = \top,
\qquad \embed{X} = X,
\qquad \embed{m\,S\capt C} = m\,\embed{S}\capt \embed{C}.
\]
Every function former, term, consumer, capture, and type function alike,
manufactures a \emph{lock} at its capturing type; their clauses follow the
next definition.

\paragraph{Root entries.}
The manufactured locks are built from resolved roots. For a root set $P$
(\Cref{app:corecapybara}), whose elements are moded capture variables
$\mu\,p$, write $P\vert_p$ for the restriction of $P$ to the atoms over the
variable $p$, and
\[
\rentries{P} \;=\; P\vert_{p_1},\ \ldots,\ P\vert_{p_k},
\qquad p_1,\ldots,p_k\ \text{the distinct capture variables of}\ P,
\]
for the \emph{root entry list} of $P$: one separation-context entry per
capture variable, each collecting every access mode at which its variable
occurs in $P$. No two entries share a variable, so a lock built from
$\rentries{P}$ asserts pairwise separation of \emph{distinct roots} and
never demands a root's separation from itself. For a function type $A$ read
in context $\G$, define its \emph{root set}
\[
R_A \;=\;
\begin{cases}
\roots{\G}{\mfree{C\cup C_T}} & A=m\,(\forall(\alpha\,x:T)U)\capt C,\ \text{with $C_T$ the top-level captures of $T$},\\
\roots{\G}{\mfree{C}} & A=m\,(\forall[c<:B]U)\capt C\ \text{or}\ A=m\,(\forall[X<:S]U)\capt C.
\end{cases}
\]
$R_A$ resolves, once, at the type former, everything the arrow's capture
annotation and, for a term function, its domain's root-free captures reach.
It is the translated counterpart of the \emph{shallow} part of the spine
capture set $|A|$: the deeper positions of $|A|$ belong to the nested arrows
of $A$, each of which manufactures its own lock.

\paragraph{Term functions.}
Let $A=m\,(\forall(x:T)U)\capt C$ be an access-only term-function type, so its
parameter carries consume mode $\alpha=\epsilon$. The translation is
\[
\embed{A} =
m\,\Big(\forall[\cstar<:\top]\ \forall\big(x:\embed{\anyinst{T}{\set{\cstar}}}\big)\
   ([\Psi_A,\Phi_A]\,\eemb{U})\capt W_A\Big)\capt \embed{C},
\]
with the \emph{lock manufacture} and the \emph{codomain annotation}
\[
\Psi_A = \set{\cstar}\,,\ \rentries{\embed{R_A}},
\qquad
\Phi_A = \emptyset,
\qquad
W_A = \set{\cstar}\cup\embed{R_A}.
\]
The separation context $\Psi_A$ has one entry for the parameter root
$\set{\cstar}$ and one entry per root of $R_A$; through \rruleref{sep-lock}
it records that the parameter is separated from every root the function's
annotation and domain reach, and that those roots are pairwise separated.
These are exactly the separations the body may assume by fiat
(\ruleref{sep-root}): the body's charge is $C\cup\set{x}$, whose roots are
$R_A\cup\set{\cstar}$, the entry variables (\Cref{lem:tr:invariant}). The
codomain annotation $W_A$ is the same root set read as a capture set: the
translated body charge subcaptures it by resolving each of its own atoms
upward (\Cref{lem:tr:resolve}), which is what lets the translated
abstraction carry the annotation its type states
(\Cref{app:tr:mainproof}). The mode context of a term-function lock is
empty; mode entries arise only at $\RO$-bounded capture binders below. The
single bound variable $\cstar$ instantiates the (at most one) $\ANY$ of the
domain, and $\anyinst{T}{\set{\cstar}}$ denotes $T$ with it replaced by
$\set{\cstar}$.

\paragraph{Consume-mode term functions.}
When the parameter carries $\alpha=\CONSUME$, the arrow translates to a
\corecapybara{} \emph{consumer} type, whose argument is packed with its own
capture witness so that the body may consume the root:
\[
\begin{aligned}
\embed{m\,(\forall(\CONSUME\,x:T)U)\capt C} = m\,\Big(
&\forall\big(\langle\cstar,x\rangle:\EXCAP{\cstar}\embed{\anyinst{T}{\set{\cstar}}}\big)\\[-2pt]
&\ ([\Psi_A,\Phi_A]\,\eemb{U})\capt W_A\Big)\capt \embed{C},
\end{aligned}
\]
with
\[
\Psi_A = \set{\cstar,\CONSUME\,\cstar}\,,\ \rentries{\embed{R_A}},
\qquad
\Phi_A = \emptyset,
\qquad
W_A = \set{\cstar,\CONSUME\,\cstar}\cup\embed{R_A}.
\]
The consumer binds $\CONSUME\,\cstar<:\top$ (\rruleref{consumer}), so its
body may consume the root; accordingly the parameter entry and the codomain
annotation carry both modes of $\cstar$, the image of the body charge
$D=\set{c,\CONSUME\,c}$ of \ruleref{abs} in the consume case.

\paragraph{Capture functions.}
Let $A=m\,(\forall[c<:B]U)\capt C$. The translation is
\[
\embed{A} =
m\,\Big(\forall[c<:\embed{B}]\ ([\Psi_A,\Phi_A]\,\eemb{U})\capt W_A\Big)\capt \embed{C},
\qquad
\Psi_A = \set{c}\,,\ \rentries{\embed{R_A}},
\qquad
W_A = \set{c}\cup\embed{R_A},
\]
with one mode entry paying an $\RO$ bound:
\[
\Phi_A =
\begin{cases}
(\set{c}:\RO) & \text{if } B = \RO,\\
\emptyset & \text{otherwise.}
\end{cases}
\]
When $B=\RO$ the parameter is bounded by $\top$ in the translated type
($\embed{\RO}=\top$), and the mode entry restores that $\set{c}$ has kind
$\RO$ inside the codomain through \rruleref{k-imm}; a capture-set bound
translates to the capture set itself and contributes no mode entry. The
parameter entry $\set{c}$ is not redundant: \ruleref{cabs} charges the body
at $C\cup\set{c}$, so the body may pair $c$ against the roots of $C$ in its
own separation premises, and the entry is what pays those pairs. The bound
$B$ contributes nothing to $\Psi_A$: locks record what the \emph{body}
assumes, and the body's charge reaches $B$ only through $c$ itself.

\paragraph{Type functions.}
Let $A=m\,(\forall[X<:S]U)\capt C$. The translation is
\[
\embed{A} =
m\,\Big(\forall[X<:\embed{S}]\ ([\Psi_A,\Phi_A]\,\eemb{U})\capt W_A\Big)\capt \embed{C},
\qquad
\Psi_A = \rentries{\embed{R_A}},
\qquad
\Phi_A = \emptyset,
\qquad
W_A = \embed{R_A}.
\]
A type binder introduces no root, so the lock has no parameter entry, only
the pairwise entries of the annotation's roots: a type-function body may
still fire \ruleref{sep-root} between two captured roots, and its lock is
what pays for it. Correspondingly \ruleref{tapp} carries no separation
premise, and the translated type application discharges its unlock from the
ambient pairwise separations alone (\Cref{app:tr:mainproof}): the surface
rule and the manufactured lock ask for the same thing.

\begin{remark}[Locks are shallow, root-level, and computed once]\label{rem:tr:shallow}
Three design points carry the appendix.
First, locks are \emph{shallow}: $R_A$ reads only the arrow's own
annotation and domain, never the codomain's spine. A separation that a
nested body relies on lives in the lock of the nested arrow that abstracts
it, is carried through substitution inside that arrow's type, and is paid
where that arrow is eliminated. The deep spine capture set $|A|$ remains a
\emph{surface} device: the \ruleref{app} premise checks the argument against
it eagerly because surface fiat must be justified under instantiation, but
no translated obligation ever mentions the deep part.
Second, entries are \emph{root-level}: each entry is the moded atom set of
a single resolved root, so a body-level fiat pair is read off the lock by
\rruleref{sep-lock} directly, and a use site consumes an entry by resolving
its own sets \emph{upward} to roots (\Cref{lem:tr:resolve}) and descending
by \rruleref{sep-sc}; no stored entry is ever projected downward, so no
entry needs to carry a set together with its resolution.
Third, entries are computed once, at the type former, and travel with the
type as stored syntax: substitution acts on them componentwise
(\Cref{lem:tr:subst}), and widening the annotation re-relates them by
\rruleref{boxed-sat} (\Cref{lem:tr:subtype}). Stability under both is
possible because term substitution does not touch capture variables and the
domain of a term function is rigid under surface subtyping (\ruleref{fun}).
\end{remark}

\subsection{Translating Contexts}\label{app:tr:ctx}

Contexts translate pointwise:
\[
\embed{\emptyset}=\emptyset,\quad
\embed{\G,x:T}=\embed{\G},x:\embed{T},\quad
\embed{\G,X<:S}=\embed{\G},X<:\embed{S},\quad
\embed{\G,\alpha\,c<:B}=\embed{\G},\alpha\,c<:\embed{B},
\]
where the source consume mode $\alpha\in\set{\epsilon,\CONSUME}$ maps to itself
and $\embed{\epsilon}=\embed{\RO}=\top$ on bounds.

The manufactured locks do not live in $\embed{\G}$ globally. They enter the
context only where the translated \emph{term} pushes a lock, at a
\rruleref{lock}, \rruleref{consumer}, or \rruleref{unpack} binder. We write
$\embed{\G}^{\LOCK}$ for the translated context \emph{interleaved with the
frames pushed by the enclosing translated binders}, and define it alongside
the term translation. A source context extension contributes one frame per
binder passed on the way from the ambient context to the site:
\begin{itemize}
\item a term binder $\lambda(\alpha\,x:T)$ with arrow $A$ contributes
  $\alpha\,\cstar<:\top,\ x:\embed{\anyinst{T}{\set{\cstar}}},\
  \LOCK[\Psi_A,\Phi_A]$, where in the consume case the pair
  $\cstar,x$ is bound by a single \rruleref{consumer} frame;
\item a capture binder $\lambda[c<:B]$ with arrow $A$ contributes
  $\epsilon\,c<:\embed{B},\ \LOCK[\Psi_A,\Phi_A]$;
\item a type binder $\lambda[X<:S]$ with arrow $A$ contributes
  $X<:\embed{S},\ \LOCK[\Psi_A,\Phi_A]$;
\item a root-free \ruleref{let} contributes $x:\embed{T}$, and a
  $\FRESH$-carrying \ruleref{let} contributes the \rruleref{unpack} frame
  $\CONSUME\,c_1<:\top,\ldots,\CONSUME\,c_n<:\top,\
  \LOCK[\Psi_w,\Phi_w],\ x:\embed{\freshinst{T}{\set{c_1},\ldots,\set{c_n}}}$
  with the ownership lock of \rruleref{unpack}.
\end{itemize}
The lock of each frame is available to every subderivation nested inside its
binder, which is what makes the manufactured separations of
\Cref{app:tr:types} usable in the body. The kill operators
$\ominus$ that core \rruleref{let} and \rruleref{unpack} apply to their
continuations are \emph{not} part of $\embed{\G}^{\LOCK}$: the main
induction produces continuation derivations in the unkilled context and
places them under the kill by \Cref{lem:tr:kill}.

\subsection{Translating Terms}\label{app:tr:terms}

Term translation is directed by the typing derivation: the translated term
depends on which typing rule concludes the source derivation, since a bare
surface variable may be typed by \ruleref{var}, \ruleref{readonly}, or
\ruleref{fresh}. The homomorphic and short cases are collected in the table
below; the cases that insert nontrivial administrative structure are
described in the paragraphs that follow.

\begin{center}\footnotesize
\renewcommand{\arraystretch}{1.3}
\begin{tabular}{@{}lll@{}}
\toprule
Source rule & Source term & Translated term $\embed{\cdot}$\\
\midrule
\ruleref{var}      & $x$              & $x$\\
\ruleref{tabs}     & $\lambda[X<:S]t$ & $\lambda[X<:\embed{S}]\,\LOCK[\Psi_A,\Phi_A]\,\embed{t}$\\
\ruleref{tapp}     & $x[S']$          & $\LET x_1=x[\embed{S'}]\IN\UNLOCK\,x_1$\\
\ruleref{cabs}     & $\lambda[c<:B]t$ & $\lambda[c<:\embed{B}]\,\LOCK[\Psi_A,\Phi_A]\,\embed{t}$\\
\ruleref{capp}     & $x[D]$           & $\LET x_1=x[\embed{D}]\IN\UNLOCK\,x_1$\\
\ruleref{let}      & $\LET x=t\IN u$  & $\LET x=\embed{t}\IN\embed{u}$ \ ($T$ root-free)\\
\ruleref{let}      & $\LET x=t\IN u$  & $\LET\langle c_1,\ldots,c_n,x\rangle=\embed{t}\IN\embed{u}$ \ ($T$ with $\FRESH$)\\
\bottomrule
\end{tabular}
\end{center}

\noindent
Here $[\Psi_A,\Phi_A]$ is the lock manufactured at the binder's arrow type
(\Cref{app:tr:types}); all three abstraction forms push it, and all three
application forms open it with an \UNLOCK{}. The two \ruleref{let} rows are
distinguished by whether the bound term's declared type $T$ carries
$\FRESH$: a root-free $T$ translates to a \rruleref{let}, whereas a $T$ with
$\FRESH$ translates to a \rruleref{unpack}, whose $n$ freshly bound
witnesses are exactly the roots $c_1,\ldots,c_n$ that $\eemb{T}$ packages.
The charge match is exact: the extra use set
$\set{c_1,\ldots,c_n,\CONSUME\,c_1,\ldots,\CONSUME\,c_n}$ that
\rruleref{unpack} adds to the continuation is the image of the body charge
$D\cup\CONSUME\,D$ of the source \ruleref{let}.

\paragraph{Abstraction.}
An access-only abstraction translates to a capture lambda, a term lambda,
and a lock, in that order, so that its body runs under the manufactured
lock:
\[
\embed{\lambda(x:T)t} =
\lambda[\cstar<:\top]\ \lambda\big(x:\embed{\anyinst{T}{\set{\cstar}}}\big)\
\LOCK[\Psi_A,\Phi_A]\ \embed{t},
\qquad A=(\forall(x:T)U)\capt C.
\]
A consume abstraction translates to a consumer lambda, whose bound pair
$\langle\cstar,x\rangle$ supplies the consumable root:
\[
\embed{\lambda(\CONSUME\,x:T)t} =
\lambda\big(\langle\cstar,x\rangle:\EXCAP{\cstar}\embed{\anyinst{T}{\set{\cstar}}}\big)\
\LOCK[\Psi_A,\Phi_A]\ \embed{t}.
\]
In both cases the translated context at $\embed{t}$ is $\embed{\G}^{\LOCK}$
extended with the binder's frame, as in \Cref{app:tr:ctx}.

\paragraph{Application.}
An application instantiates the callee's capture parameter with the
argument's $\ANY$-witness, applies, then unlocks:
\[
\embed{x\,y} =
\LET x_1 = x[\embed{D}]\IN \LET x_2 = x_1\,y\IN \UNLOCK\,x_2,
\]
where $D$ is the canonical witness of \Cref{def:tr:canon}. The capture
application $x[\embed{D}]$ instantiates $\cstar:=\embed{D}$: the callee's
translated type binds $\cstar<:\top$, and \rruleref{sub} with
\rruleref{sb-top} narrows $\top$ to $\embed{D}$ before \rruleref{capp}. The
term application then produces a value of the modal codomain
$[\Psi_A,\Phi_A]\eemb{U}$ instantiated at the call, and $\UNLOCK\,x_2$ opens
it, discharging the satisfaction obligation of \rruleref{sat}. The
obligation asks for pairwise separation of the instantiated entries:
$\embed{D}$ against each root entry, the image of the source premise
$\sep{\G}{D}{|(\forall(z:T)U)\capt C|}$ at the shallow part of the spine,
and the root entries pairwise, which the ambient locks supply
(\Cref{cor:tr:sat}). The access-only requirement of core \rruleref{capp} on
$\embed{D}$ is the image of the \ruleref{app} premise $\accessonly{\G}{D}$
through \Cref{lem:tr:kind}.

\paragraph{Consume application.}
A consume application packs the argument with its witness and applies the
consumer, then unlocks:
\[
\embed{x\,y} = \LET x_2 = x\,\langle\embed{D},y\rangle\IN \UNLOCK\,x_2.
\]
The synthesized pack $\langle\embed{D},y\rangle$ has type
$\EXCAP{\cstar}\embed{\anyinst{T}{\set{\cstar}}}$ by \rruleref{pack}: with
$n=1$ the pairwise disjointness is vacuous, $\embed{D}$ is access-only and
consumable (the source premise $\consumable{\G}{D}$), and the pack charges
$\embed{D}\cup\CONSUME\,\embed{D}$. \rruleref{consumer-app} then sequences this
charge before $\set{x}$, which \rruleref{seq-union} splits into an access leg
$\seqcomp{}{\embed{D}}{\set{x}}$, free by \rruleref{seq-access-only} from the
access-only premise, and a consume leg
$\seqcomp{}{\CONSUME\,\embed{D}}{\set{x}}$, the translated image
(\Cref{lem:tr:mode}, \Cref{lem:tr:seq}) of the source premise
$\seqcomp{\G}{\CONSUME\,D}{\set{x}}$: the consumption of the argument's
witness sequences before the consumer runs.

\paragraph{Type and capture application.}
$\embed{x[S']}=\LET x_1=x[\embed{S'}]\IN\UNLOCK\,x_1$ and
$\embed{x[D]}=\LET x_1=x[\embed{D}]\IN\UNLOCK\,x_1$ open the manufactured
lock exactly as an application does. For a type application the
instantiated entries are the root entries alone, so the unlock carries no
image of a source premise, matching \ruleref{tapp}, which has none: the
ambient locks pay everything (\Cref{cor:tr:sat}). For a capture application
the entries are $\embed{D}$ and the root entries, the image of the
\ruleref{capp} premise $\sep{\G}{D}{|A|}$ at the shallow spine; the witness
$D$ is part of the source term and, by the revised \ruleref{capp}, part of
the source charge, which is what keeps its roots inside the ambient
invariant's reach (\Cref{lem:tr:invariant}).

\paragraph{Read-only occurrence.}
A source occurrence typed by \ruleref{readonly} translates to the core reader
\emph{value}: for a reference-shaped subject $x$, $\embed{x}=\RO\,x$, and
\rruleref{reader} produces $\RO\,\REF[\embed{T}]\capt\set{\RO\,x}
=\eemb{\RO\,\REF[T]\capt\set{\RO\,x}}$ together with the charge $\set{}$
primitively, with no coercion. This matches the mechanization, whose
\ruleref{readonly} is specific to reference types and translates to its reader
value; the $\RO$ type qualifier is the flattened image of its reader shape.
Following the mechanization, we read \ruleref{readonly} as specific to
reference-shaped subjects (\Cref{app:tr:sets}).
A \ruleref{var} occurrence at an $\RO$-qualified $x$
needs no coercion at all: the qualifier rides on the type, which
\rruleref{var} preserves.

\paragraph{Fresh occurrence.}
A source occurrence typed by \ruleref{fresh}, retyping $x$ from
$T[\FRESH\leadsto D_1,\ldots,D_n]$ to $T$, translates to a pack at the
translated existential $\eemb{T}$:
\[
\embed{x} = \big\langle\ \embed{D_1},\ldots,\embed{D_n}\ ,\ x\ \big\rangle,
\]
one witness per bound root of
$\eemb{T}=\EXCAP{c_1,\ldots,c_n}\embed{\freshinst{T}{\set{c_1},\ldots,\set{c_n}}}$
(one root per $\FRESH$ occurrence, matching \ruleref{let}). The pack charges
$\bigcup_i\embed{D_i}\cup\CONSUME\bigcup_i\embed{D_i}$, exactly the translated
image of the source charge $D\cup\CONSUME\,D$.

\paragraph{Subsumption.}
A source \ruleref{sub} step translates by \Cref{lem:tr:subtype} (subtyping
preservation) and \Cref{lem:tr:sc} (subcapturing preservation), followed by
core \rruleref{sub}.

\paragraph{State, parallelism, and conditionals.}
The memory primitives, parallel composition, and conditionals
(\Cref{sec:formal:state}) translate homomorphically, with two points of
interest.
An allocation keeps its form, and its types line up by the result
translation: $\eemb{\REF[T]\capt\set{\FRESH}} =
\EXCAP{c}\REF[\embed{T}]\capt\set{c}$, the type of core
\rruleref{alloc}.
A read inserts the reader that core \rruleref{read} demands:
\[
\embed{\READ x} = \LET r = \RO\,x \IN \READ r,
\]
where \rruleref{reader} types $\RO\,x$ at
$\RO\,\REF[\embed{T}]\capt\set{\RO\,x}$ and the let charges the
translated image $\set{\RO\,x}$ of the source charge.
Writes, deallocations, parallel compositions, and conditionals map to
their core counterparts unchanged; their premises are the translated
images of the source premises, using \Cref{lem:tr:sep} for the
separation premise of \ruleref{par}, whose payability hypothesis
\Cref{lem:tr:invariant} discharges at the site, and \Cref{lem:tr:kind} for
the consumable premise of \ruleref{dealloc}.

\subsection{The Translation Theorem}\label{app:tr:thm}

The consuming forms of the surface calculus charge the access of what they
consume alongside its consumption, \ruleref{fresh} and \ruleref{consume-app}
charging $D\cup\CONSUME\,D$ and \ruleref{let} and \ruleref{abs} binding
$\set{c,\CONSUME\,c}$, exactly as their translated images \rruleref{pack},
\rruleref{consumer-app}, \rruleref{unpack}, and \rruleref{consumer} do. The
translated use set is therefore bounded by $\embed{C}$ itself.

\begin{theorem}[Typability preservation]\label{thm:tr:main}
If $\typ{C}{\G}{t}{T}$ in \capybara{}, then the translation of a canonical
derivation of it (\Cref{lem:tr:canon}) yields a \corecapybara{} term
$\embed{t}$ with
\[
\typ{C'}{\embed{\G}^{\LOCK}}{\embed{t}}{\eemb{T}}
\qquad\text{for some $C'$ with}\qquad
\roots{\embed{\G}^{\LOCK}}{C'}\sqsubseteq\roots{\embed{\G}^{\LOCK}}{\embed{C}},
\]
where $\embed{\G}^{\LOCK}$ is the lock-interleaved translated context of
\Cref{app:tr:ctx} and $\eemb{T}$ the result translation of \Cref{app:tr:types}.
\end{theorem}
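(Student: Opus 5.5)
The proof goes by induction on a canonical derivation of $\typ{C}{\G}{t}{T}$, which exists by \Cref{lem:tr:canon}. Each case follows the translation clauses of \Cref{app:tr:terms}. The induction hypothesis is strengthened in two ways. First, it holds in every lock-interleaved context $\embed{\G}^{\LOCK}$ that satisfies the ambient invariant of \Cref{lem:tr:invariant}: every capture variable was introduced by a translated binder, carries that binder's lock or consume authority, and the roots of the current charge are entries of some enclosing lock or distinct consumable roots. Second, it delivers the root-covering bound $\roots{\embed{\G}^{\LOCK}}{C'}\sqsubseteq\roots{\embed{\G}^{\LOCK}}{\embed{C}}$ rather than syntactic equality of use sets.

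Before the induction I would fix the auxiliary preservation lemmas, each by its own induction on the auxiliary derivation:
\begin{itemize}
\item subcapturing, \Cref{lem:tr:sc}, using \Cref{lem:tr:atoms}, \Cref{lem:tr:rootmono} and \Cref{lem:tr:resolve};
\item kinding, access-only and consumable, \Cref{lem:tr:kind};
\item separation, \Cref{lem:tr:sep}: each surface \ruleref{sep-root} leaf is paid either by \rruleref{sep-consumable} when both roots are consumable, or by \rruleref{sep-lock} on the enclosing manufactured lock, using the invariant;
\item sequential composition, \Cref{lem:tr:seq};
\item subtyping, \Cref{lem:tr:subtype}: this goes through \rruleref{boxed-sat} for widened annotations, and \ruleref{fun} stays rigid because domains are invariant;
\item substitution into locks componentwise, \Cref{lem:tr:subst};
\item kill compatibility, \Cref{lem:tr:kill}.
\end{itemize}

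The homomorphic cases are \ruleref{var}, \ruleref{readonly} via \rruleref{reader}, \ruleref{sub}, \ruleref{write}, \ruleref{dealloc}, \ruleref{if} and \ruleref{par}. They follow directly from these lemmas and \Cref{lem:tr:mode}. The \ruleref{read} case additionally uses the inserted reader-let.

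For the abstraction cases (\ruleref{abs}, \ruleref{cabs}, \ruleref{tabs}) I apply the hypothesis in the context extended by the binder's frame. I then type the body under \rruleref{lock}, and show that its charge subcaptures $W_A$ by resolving each atom upward to its roots (\Cref{lem:tr:resolve}). The consume variant uses \rruleref{consumer} with extra charge $\set{\cstar,\CONSUME\,\cstar}$, matching $D$ of \ruleref{abs}.

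For \ruleref{let}, a root-free $T$ maps to \rruleref{let}, and a $\FRESH$-carrying $T$ maps to \rruleref{unpack}. In both, the continuation derivation produced in the unkilled context is moved under $\ominus$ by \Cref{lem:tr:kill}. For \rruleref{unpack} I must also check the ownership lock: the continuation charge is $\embed{C_2}$ plus the witnesses, which \Cref{rem:tr:ownership} justifies. \ruleref{fresh} maps to \rruleref{pack}, with disjointness carried over by the disjointness fragment of \Cref{lem:tr:sep}. In that fragment the \ruleref{sep-root} leaves must land in \rruleref{sep-consumable}, never in \rruleref{sep-lock}. This works because the witnesses are consumable, so their roots are consumable variables.

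The main obstacle is the elimination cases \ruleref{app}, \ruleref{consume-app}, \ruleref{capp} and \ruleref{tapp}: I must discharge $\sat{}{[\Psi_A,\Phi_A]}$ at the final $\UNLOCK$, after instantiating $\cstar:=\embed{D}$ via \rruleref{sb-top}, \rruleref{sub} and \rruleref{capp}. I would split the obligation, as in \Cref{cor:tr:sat}, into two parts.
\begin{itemize}
\item The pairs $\embed{D}$ versus each root entry of $\embed{R_A}$ are obtained by restricting the source premise $\sep{\G}{D}{|A|}$ to the shallow part $R_A\subseteq\roots{}{|A|}$, using \ruleref{sep-sc} and \Cref{lem:tr:sep}. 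Canonicity (declared subjects) guarantees that $R_A$ is computed from the subject's declared annotation, so it really lies under $|A|$. Domain honesty covers the domain captures contained in $R_A$.
\item The pairs between root entries are obtained from the ambient invariant: distinct roots reachable from the charge are separated by an enclosing lock or by consumability.
\end{itemize}
The mode entries of $\Phi_A$ follow from the bound premise via \Cref{lem:tr:kind}. Finally, the charge bound holds because tight witnesses give $\roots{}{\embed{D}}\sqsubseteq\roots{}{C_y}$. Hence the charge $\set{x}$ plus the unlock charge is covered by $\roots{}{\embed{\set{x,y}}}$, and in the consume and capture cases by $\roots{}{\embed{\set{x,y}\cup D\cup\CONSUME\,D}}$.
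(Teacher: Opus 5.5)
Your architecture matches the paper's: induction on a canonical derivation, the same lemma stack, and the invariant threaded along the path to the site. However, one step you describe would fail.

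\textbf{Separation by induction on the surface derivation.} You prove separation preservation, and disjointness in the \ruleref{fresh} case, by induction on the surface derivation, paying each \ruleref{sep-root} leaf by \rruleref{sep-consumable} or \rruleref{sep-lock}. A surface derivation can contain leaves with no core image.

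\ruleref{sep-sc} may first widen a side by \ruleref{sc-elem}. For example, to separate $y$ from $w$ (root $c_w$), it may pass through $\set{y}<:\set{c_y,c_z}$. \ruleref{sep-union} then demands the fiat leaf pairing $c_z$ with $c_w$. Suppose $c_z$ is the access-only root of an outer binder whose lock does not list $c_w$. Then nothing pays that leaf, even though the pair $(c_y,c_w)$ actually at stake is payable. In a disjointness derivation, a widened leaf may likewise mention a non-consumable root, so consumability of the witnesses does not reach every leaf. The invariant covers only the site's own roots, so this cannot be repaired leaf by leaf.

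The paper discards the derivation instead. \Cref{lem:tr:sepchar} extracts a root-level fact that is robust to widening:
\begin{itemize}
\item for separation, a root shared by the two actual root sets is read-only on both sides;
\item for disjointness, no root is shared;
\item for sequencing, a consumed root of $C_1$ is not a root of $C_2$.
\end{itemize}
\Cref{lem:tr:sep,lem:tr:seq,lem:tr:disj} then rebuild a core derivation atom by atom from the site's root pairs.

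The same correction applies to your witness pairs at eliminations. $R_A$ lies above $\mfree{C\cup C_T}$ in subcapturing, so \ruleref{sep-sc} cannot restrict $\sep{\G}{D}{|A|}$ to it. The premise enters only through \Cref{lem:tr:sepchar}, when the witness root $d$ equals the entry root $q$; then both atoms are read-only and \rruleref{sep-ro} pays the pair. Pairs with $d\neq q$ are paid by the invariant. At \ruleref{app}, where $D$ is not charged, this needs tightness to place $\roots{\G}{D}$ among the site's roots.

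\textbf{Two missing ingredients.}

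First, after $\cstar:=\embed{D}$ and $[z:=y]$, the codomain is not $\eemb{[z:=y]U}$ on the nose. The stored entries and annotations of nested arrows now hold written-level sets such as $\set{f_1}$, while the translation of the substituted type holds the re-resolved root $\set{c_1}$. \Cref{lem:tr:subst} only says that substitution acts componentwise. To reach the stated type you need the instantiation cast of \Cref{lem:tr:cast}, which uses \rruleref{boxed}, \rruleref{boxed-sat}, and \Cref{lem:tr:resolve}.

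Second, at \rruleref{unpack} the ownership lock needs no check: the rule manufactures it from its sequencing premise. What matters is that you instantiate the continuation slot at the root closure $\widehat{C_2}=\embed{\roots{\G}{C_2}}$, widening the continuation's typing by \rruleref{sub}. With the entry $\embed{C_2}$, \rruleref{sep-lock} only separates the witnesses from $\embed{C_2}$. \rruleref{sep-sc} cannot descend from that set to its roots, and the root-level invariant needs exactly those root pairs in the continuation.
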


The side condition is stated at the level of roots. Write $P\sqsubseteq P'$ when
every $\mu\,\theta\in P$ has some $\mu'\,\theta\in P'$ with $\mu\preceq\mu'$,
extending the mutability order with $\CONSUME\preceq\CONSUME$. This is a
statement-level device only: it never appears as the premise of a separation
rule (a variable's runtime footprint can strictly under-run its root, so root
covering does not transport separation). Eager surface charging and lazy core charging differ
exactly below variable resolution: the surface charges a variable occurrence
$\set{x}$ eagerly, while the core charges its uses at eliminations, but $\set{x}$
and the declared capture set it resolves to have the same roots, so the two agree
once resolved. The consumers of the theorem, the core access-only, consumable,
and accessible predicates and the kill operation, are themselves root-based, so
the root-level bound is exactly what they need.

The proof is in \Cref{app:tr:mainproof}, by induction on the canonical source
derivation, one case per rule, using the lemma stack of \Cref{app:tr:lemmas}.
Because the translated term is typed against $\embed{T}$ and its uses are
root-covered by $\embed{C}$, and \corecapybara{}'s dynamic semantics
preserves both, the metatheory of \Cref{sec:metatheory} transfers to
\capybara{}.

\begin{corollary}[Transfer of guarantees]\label{cor:tr:transfer}
A well-typed \capybara{} program enjoys type soundness
(\Cref{thm:soundness}), memory safety (\Cref{cor:memsafe}), and data-race
freedom (\Cref{thm:separation}, \Cref{thm:confluence}).
\end{corollary}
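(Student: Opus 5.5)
The corollary follows by composing Theorem~\ref{thm:tr:main} with the core metatheory. No new semantic argument is needed; only the hypotheses of the core results must be met by the translated program. Let $t$ be a well-typed closed \capybara{} program, $\typ{C}{\G}{t}{T}$. Its ambient context declares only term and type bindings, per the conventions of \Cref{app:tr:conv}. By Lemma~\ref{lem:tr:canon} we fix a canonical derivation. Theorem~\ref{thm:tr:main} then gives $\typ{C'}{\embed{\G}^{\LOCK}}{\embed{t}}{\eemb{T}}$, with $C'$ root-covered by $\embed{C}$. At the top level no binder has been passed, so $\embed{\G}^{\LOCK}=\embed{\G}$ is an ordinary core context. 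For a program run on the platform this context is $\G_N$.

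Next we apply the Fundamental Theorem (\Cref{thm:fundamental}) to the core derivation. This yields $\semtyp{C'}{\G_N}{\embed{t}}{\eemb{T}}$. Adequacy (\Cref{thm:meta:adequacy-reduce}) then shows that every configuration reachable from $\cfg{H_N}{\embed{t}\rho_N}$ under any schedule is an answer or steps, which is type soundness (\Cref{thm:soundness}). Memory safety (\Cref{cor:memsafe}) follows because dead cells match no reduction rule, so use-after-free and double-free are stuck configurations, which adequacy excludes.

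For data-race freedom, each \ruleref{par} node in $t$ translates homomorphically, and its separation premise translates to a core separation premise (Lemma~\ref{lem:tr:sep}, payable by Lemma~\ref{lem:tr:invariant}). By \Cref{thm:separation} the branch footprints are therefore non-interfering, so the guards of \rruleref{s-par-l} and \rruleref{s-par-r} never block. The source program mentions no locations, so the configuration is well-formed, and Confluence (\Cref{thm:confluence}, via Corollary~\ref{cor:meta:determinism}) makes every pair of terminating schedules agree up to location renaming and trace equivalence.

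The main obstacle is stating what these guarantees mean for \capybara{} itself. The surface language has no dynamic semantics of its own (\Cref{sec:discussion}), so the guarantees are about $\embed{t}$. One must argue that the administrative forms the translation inserts (capture applications, locks and unlocks, packs and unpacks, readers) emit no trace events and cannot introduce stuckness beyond what the core results already exclude. I would discharge this by inspecting the reduction rules: the steps \rruleref{s-capp}, \rruleref{s-unwrap}, \rruleref{s-unpack} and \rruleref{s-lift} are silent, so the traces of $\embed{t}$ consist exactly of the memory events of the source primitives. Immutability transfers similarly, provided the root-level bound on $C'$ preserves kind $\RO$; I would obtain that from Lemma~\ref{lem:tr:kind}.
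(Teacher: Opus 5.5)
Your proposal is correct and follows the paper's proof. Both compose \Cref{thm:tr:main} with the core fundamental theorem, adequacy, separation and confluence, and both observe that the administrative forms the translation inserts reduce by steps emitting no trace events, so every event of a run comes from a source memory primitive. You spell out more plumbing than the paper (canonicalization, the top-level context, well-formedness for confluence), with one slip: under the ambient-context convention a closed program lives in $\G_0=\emptyset$, so identifying its translated context with the platform context $\G_N$ is right only for $N=0$.
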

\begin{proof}
By \Cref{thm:tr:main} the translation of the program is a well-typed
\corecapybara{} program, to which the theorems of \Cref{sec:metatheory}
apply. The surface constructs take their operational meaning through the
translation (\Cref{sec:formal:state}), so the guarantees, which are stated
over the traces and final memories of \corecapybara{} runs, are statements
about the program's own runs. The administrative forms the translation
inserts do not disturb them: capture applications, type applications,
applications of the inserted lambdas, locks, unlocks, packs and unpacks
reduce by \rruleref{s-capp}, \rruleref{s-tapp}, \rruleref{s-app},
\rruleref{s-unwrap}, and \rruleref{s-unpack}, each of which leaves the heap
and the trace untouched, so every heap event of a run belongs to a source
construct.
\end{proof}

\subsection{Lemma Stack}\label{app:tr:lemmas}

The proof of \Cref{thm:tr:main} rests on the following lemmas. Except where
noted, each is by induction on the indicated derivation.

\paragraph{L1: Mode monotonicity.}
\Cref{lem:tr:mode} already records $\embed{\mu\,C}=\mu\,\embed{C}$. We add
monotonicity of qualification.
\begin{lemma}[Qualification monotonicity]\label{lem:tr:qualmono}
If $\subs{\embed{\G}^{\LOCK}}{\embed{C_1}}{\embed{C_2}}$ then
$\subs{\embed{\G}^{\LOCK}}{\mu\,\embed{C_1}}{\mu\,\embed{C_2}}$ for every $\mu$.
\end{lemma}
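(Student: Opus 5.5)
We prove the statement by case analysis on $\mu$, reducing each case to a congruence rule of core subcapturing (\Cref{fig:core-subcapturing}). Write $\Delta=\embed{\G}^{\LOCK}$ and assume $\subs{\Delta}{\embed{C_1}}{\embed{C_2}}$. Nothing specific to the translation is needed beyond \Cref{lem:tr:mode}: the hypothesis is already a core judgment, so the claim is really qualification monotonicity of core subcapturing read at translated sets.

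\emph{Case $\mu=\epsilon$.} Mode application by $\epsilon$ is the identity, so the goal is the hypothesis.

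\emph{Case $\mu=\RO$.} We apply \rruleref{sc-ro-mono} to the hypothesis directly, obtaining $\subs{\Delta}{\RO\,\embed{C_1}}{\RO\,\embed{C_2}}$.

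\emph{Case $\mu=\CONSUME$.} We apply \rruleref{sc-consume-mono}, obtaining $\subs{\Delta}{\CONSUME\,\embed{C_1}}{\CONSUME\,\embed{C_2}}$.

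If the target form is $\embed{\mu\,C_i}$ rather than $\mu\,\embed{C_i}$, we rewrite each side with \Cref{lem:tr:mode}. Root-freeness of $C_i$ is inherited from the translation conventions of \Cref{app:tr:conv}, so \Cref{lem:tr:mode} applies.

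\emph{Main obstacle.} The only potential difficulty is whether the congruence rules are stated for arbitrary capture sets, or whether a definitional mismatch arises. In the core, $\RO\,C$ leaves $\CONSUME$-marked elements unchanged, unlike the surface definition. Since both rules are stated schematically over all $C_1,C_2$, no induction on the subcapturing derivation is needed. Should the mechanization instead derive them as admissible rather than primitive, the fallback is an induction on $\subs{\Delta}{\embed{C_1}}{\embed{C_2}}$, pushing $\mu$ through each rule. In that induction \rruleref{sc-elem} and \rruleref{sc-union} commute with elementwise restamping, \rruleref{sc-var} and \rruleref{sc-cvar} are closed by the congruence on singletons, and \rruleref{sc-mode} and \rruleref{sc-ro} become trivial or reflexive after restamping, because $\mu\,(m\,C)=\mu\,C$ for $\mu\neq\epsilon$.
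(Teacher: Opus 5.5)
Your proof is correct and is the same as the paper's: $\mu=\epsilon$ is the hypothesis itself, $\mu=\RO$ follows by \rruleref{sc-ro-mono}, and $\mu=\CONSUME$ follows by \rruleref{sc-consume-mono}. The fallback induction is not needed, since both congruence rules are primitive in \Cref{fig:core-subcapturing}; it would in any case be circular if they were not, because its \rruleref{sc-var} and \rruleref{sc-cvar} steps appeal to the same congruence on singletons.
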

\begin{proof}
For $\mu=\epsilon$ this is the hypothesis. For $\mu=\RO$ it is
\rruleref{sc-ro-mono}; for $\mu=\CONSUME$ it is \rruleref{sc-consume-mono}.
\end{proof}

\paragraph{L2: Roots.}
Root resolution (\Cref{app:corecapybara}) underlies the lock manufacture,
and four of its properties recur. Throughout, $\sqsubseteq$ is the covering
order of \Cref{app:tr:thm}, and we use silently that resolution is
multiplicative, $\roots{\G}{\mu\,C}=\mu\,\roots{\G}{C}$, and distributes
over union, both immediate from its definition, and that it commutes with
the translation, $\roots{\embed{\G}^{\LOCK}}{\embed{C}} =
\embed{\roots{\G}{C}}$ for root-free $C$, since the translation preserves
every binding's capture set elementwise and the interleaved frames bind no
term variables that $C$ mentions.

\begin{lemma}[Atomic decomposition]\label{lem:tr:atoms}
Say an atom $\mu\,\theta$ \emph{lands} in $C_2$ over $\G$ when either
(i) some $\mu'\,\theta\in C_2$ with $\mu\preceq\mu'$ (extending the
mutability order with $\CONSUME\preceq\CONSUME$), or
(ii) $\theta$ is a term variable $x:m\,S\capt C_x\in\G$, $\mu$ is a
mutability, and every atom of $\mu\,C_x$ lands in $C_2$.
Then $\subs{\G}{C_1}{C_2}$ holds if and only if every atom of $C_1$ lands
in $C_2$.
\end{lemma}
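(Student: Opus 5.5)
The plan is to prove the two directions separately, treating ``lands in $C_2$'' as a relation defined by well-founded recursion on the context: clause (ii) only ever descends from $x$ to the declared set $C_x$ of a binding that precedes $x$ in $\G$. I will work with the surface subcapturing rules of \Cref{fig:all-typing}: \ruleref{sc-trans}, \ruleref{sc-elem}, \ruleref{sc-union}, \ruleref{sc-var}, \ruleref{sc-mode}, and \ruleref{sc-ro-mono}. Write $\mathrm{L}(C_1,C_2)$ for ``every atom of $C_1$ lands in $C_2$''.

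\emph{Completeness (if).} Assume $\mathrm{L}(C_1,C_2)$. By \ruleref{sc-union}, and \ruleref{sc-elem} for the empty set, it suffices to derive $\subs{\G}{\set{\mu\,\theta}}{C_2}$ for each atom separately. I proceed by induction on the landing derivation.
\begin{itemize}
\item In case (i) we have $\mu'\,\theta\in C_2$ with $\mu\preceq\mu'$. For $\mu=\mu'$ this is \ruleref{sc-elem}. Otherwise $\mu=\RO$ and $\mu'=\epsilon$, and \ruleref{sc-mode} applied to $C=\set{\theta}$ gives $\set{\RO\,\theta}<:\set{\theta}$; we finish with \ruleref{sc-elem} and \ruleref{sc-trans}.
\item In case (ii) the induction hypothesis gives $\subs{\G}{\mu\,C_x}{C_2}$. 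Separately, \ruleref{sc-var} gives $\set{x}<:C_x$. When $\mu=\RO$, \ruleref{sc-ro-mono} lifts this to $\set{\RO\,x}<:\RO\,C_x$. Chaining with \ruleref{sc-trans} closes the case.
\end{itemize}

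\emph{Soundness (only if).} I proceed by induction on the derivation of $\subs{\G}{C_1}{C_2}$.
\begin{itemize}
\item \ruleref{sc-elem} is immediate by clause (i).
\item \ruleref{sc-union} follows by splitting the atoms of the union between the two premises.
\item \ruleref{sc-var}: the atom $\epsilon\,x$ lands via clause (ii), since each atom of $C_x$ lands in $C_x$ by (i).
\item \ruleref{sc-mode}: each atom $m_1\,\theta$ of $m_1\,C$ has the counterpart $m_2\,\theta$ in $m_2\,C$ with $m_1\preceq m_2$. Here I have to read $\RO\,C$ on a $\CONSUME$ atom as leaving it unchanged, as in \Cref{app:corecapybara}, so that such atoms still land by (i).
\item \ruleref{sc-ro-mono}: I first show an auxiliary fact by induction on landing: if $\mu\,\theta$ lands in $C_2$, then $\RO\,\mu\,\theta$ lands in $\RO\,C_2$. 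This holds because restamping preserves $\preceq$, and $\RO\,\mu\,C_x$ is $\RO\,C_x$ up to the identity on $\CONSUME$ atoms.
\item \ruleref{sc-trans} is the real work. I need transitivity of landing: if $\mathrm{L}(C_1,C_2)$ and $\mathrm{L}(C_2,C_3)$, then $\mathrm{L}(C_1,C_3)$. I prove it by induction on the landing of an atom $\mu\,\theta$ of $C_1$ in $C_2$. In case (ii) the induction hypothesis applies directly to the atoms of $\mu\,C_x$. In case (i) there is $\mu'\,\theta\in C_2$ landing in $C_3$, and I need a downward-closure lemma: if $\mu'\,\theta$ lands in $C_3$ and $\mu\preceq\mu'$, then $\mu\,\theta$ lands in $C_3$. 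This lemma is by induction on the landing of $\mu'\,\theta$, using transitivity of $\preceq$ in (i). In (ii) it uses that $\mu\,C_x$ is pointwise below $\mu'\,C_x$, which is where I again need the convention that $\RO$ leaves $\CONSUME$ atoms fixed.
\end{itemize}

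I expect the main obstacle to be this transitivity step, together with its interaction with modes. The mode-restamping conventions must make $\mu\,(\mu'\,C)$ pointwise below $\mu'\,C$ whenever $\mu\preceq\mu'$, and the recursion in clause (ii) must be well founded. My first attempt is to fix the convention that qualification by $\RO$ is the identity on $\CONSUME$ atoms and check these two pointwise facts as standalone lemmas before running the main inductions. If \ruleref{sc-ro-mono} is applied to sets containing $\CONSUME$ atoms under a different convention, I would instead restrict that case to access-only sets, which is how the rule is used in the translation.
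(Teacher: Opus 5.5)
Your proposal is correct and follows the paper's proof. The ``if'' direction rebuilds each atom from \ruleref{sc-elem}/\ruleref{sc-mode}, or from \ruleref{sc-var}/\ruleref{sc-ro-mono}/\ruleref{sc-trans}, and assembles with \ruleref{sc-union}. The ``only if'' direction inducts on the subcapturing derivation using composition and qualification-stability of landing, and your downward-closure lemma makes explicit a step the paper's composition argument uses silently.

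Your convention that $\RO$ leaves $\CONSUME$ atoms fixed, as in \Cref{app:corecapybara}, is genuinely needed. Under the restamping of \Cref{sec:formal}, \ruleref{sc-mode} derives $\subs{\G}{\set{\RO\,c}}{\set{\CONSUME\,c}}$ for a capture variable $c$, and no clause lands that atom. Note that the dependence sits in \ruleref{sc-mode} and in clause (ii) of downward closure, not in \ruleref{sc-ro-mono} as your fallback suggests.
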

\begin{proof}
($\Leftarrow$) Per atom: clause (i) is \ruleref{sc-elem} after
\ruleref{sc-mode} (for $\CONSUME$ atoms the covering mode is equal, so
\ruleref{sc-elem} alone); clause (ii) is \ruleref{sc-var} qualified by
\ruleref{sc-ro-mono} when $\mu=\RO$, followed by \ruleref{sc-trans} on the
inner landings; \ruleref{sc-union} assembles the atoms.
($\Rightarrow$) By induction on the subcapturing derivation, using two
closure properties of landing, each by induction on the landing derivation:
landing composes (an atom landing in $C_2$ lands in $C_3$ whenever every
atom of $C_2$ does), and landing is stable under qualification (if
$\mu\,\theta$ lands in $C_2$ then $\mu'(\mu\,\theta)$ lands in $\mu'C_2$).
\ruleref{sc-elem} and \ruleref{sc-var} are landing axioms,
\ruleref{sc-mode} and \ruleref{sc-ro-mono} use stability,
\ruleref{sc-trans} uses composition, and \ruleref{sc-union} splits.
\end{proof}

\begin{lemma}[Root monotonicity]\label{lem:tr:rootmono}
If $\subs{\G}{C_1}{C_2}$ then $\roots{\G}{C_1}\sqsubseteq\roots{\G}{C_2}$.
\end{lemma}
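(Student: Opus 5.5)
The plan is to go through the atomic decomposition of \Cref{lem:tr:atoms}, so that we never induct on the subcapturing derivation directly. Rewrite the hypothesis $\subs{\G}{C_1}{C_2}$ as the statement that every atom of $C_1$ lands in $C_2$. Since resolution distributes over union, $\roots{\G}{C_1}$ is the union of the roots of the singletons of $C_1$. It therefore suffices to prove the following for each atom $\mu\,\theta$: if $\mu\,\theta$ lands in $C_2$, then $\roots{\G}{\set{\mu\,\theta}}\sqsubseteq\roots{\G}{C_2}$. The covering order $\sqsubseteq$ is closed under union on the left, so these per-atom bounds assemble into the result.

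The per-atom claim is proved by induction on the landing derivation.

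\textbf{Clause (i).} Here some $\mu'\,\theta\in C_2$ with $\mu\preceq\mu'$. By multiplicativity, $\roots{\G}{\set{\mu\,\theta}}=\mu\,\roots{\G}{\set{\theta}}$ and $\roots{\G}{\set{\mu'\,\theta}}=\mu'\,\roots{\G}{\set{\theta}}$, and the latter is contained in $\roots{\G}{C_2}$. It remains to show that restamping by a larger mode yields a cover. For every element $\nu\,p$ of $\roots{\G}{\set{\theta}}$ we need $\mu\nu\preceq\mu'\nu$, where juxtaposition denotes the restamping of mode application (note that $\RO$ leaves $\CONSUME$ untouched). Check this by cases on $\nu$ and on the pair $\mu\preceq\mu'$:
\begin{itemize}
\item the pairs are $\RO\preceq\epsilon$, equal mutabilities, or $\CONSUME\preceq\CONSUME$;
\item in the equal cases the claim is trivial;
\item for $\RO\preceq\epsilon$ with a mutability $\nu$ we get $\RO\preceq\nu$;
\item for $\RO\preceq\epsilon$ with $\nu=\CONSUME$ both sides stay $\CONSUME$.
\end{itemize}

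\textbf{Clause (ii).} Here $\theta=x$ with $x:m\,S\capt C_x\in\G$, $\mu$ is a mutability, and every atom of $\mu\,C_x$ lands in $C_2$. By the definition of roots, $\roots{\G}{\set{\mu\,x}}=\mu\,\roots{\G}{C_x}=\roots{\G}{\mu\,C_x}$. The induction hypothesis applies to each atom of $\mu\,C_x$, and the union step closes the case.

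The step I expect to be delicate is the well-foundedness of clause (ii) together with the mode bookkeeping. The induction must be on the landing derivation rather than on the syntax of atoms, since resolving $x$ to $C_x$ may produce larger sets. That is legitimate because landing is inductively defined, and resolution terminates because contexts are well-scoped, so $C_x$ mentions only earlier bindings. The other point needing care is that $\CONSUME$ is incomparable to the mutabilities. I would isolate a small auxiliary fact: if $\mu\preceq\mu'$ and $P\sqsubseteq P'$, then $\mu P\sqsubseteq\mu' P'$. Both clauses can then cite this single fact rather than repeating the case analysis.
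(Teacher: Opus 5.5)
\textbf{Comparison with the paper's proof.} Your proof is correct, but it takes a different route. The paper inducts directly on the subcapturing derivation, one case per rule:
\begin{itemize}
\item \ruleref{sc-elem} by containment;
\item \ruleref{sc-var} by the defining equation $\roots{\G}{\set{x}}=\roots{\G}{C_x}$;
\item \ruleref{sc-mode} and \ruleref{sc-ro-mono} by multiplicativity with a weakened covering mode;
\item \ruleref{sc-union} by distributivity;
\item \ruleref{sc-trans} by transitivity of $\sqsubseteq$.
\end{itemize}

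You instead use the ($\Rightarrow$) direction of \Cref{lem:tr:atoms} and induct on landing. That lemma precedes this one and does not depend on it, so there is no circularity. Your route leaves only two cases, because the transitivity and mode-congruence work is already done in that lemma's composition and qualification-stability properties of landing.

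\textbf{What each route buys.} Yours makes the mode bookkeeping explicit. Your auxiliary fact is exactly what the paper's one-line \ruleref{sc-mode} and \ruleref{sc-ro-mono} cases use without stating it: if $\mu\preceq\mu'$ and $P\sqsubseteq P'$, then $\mu P\sqsubseteq\mu' P'$. The paper's route is self-contained and does not lean on the heavier characterization lemma. It also extends rule by rule if subcapturing gains rules, whereas yours would first require re-establishing the characterization.

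\textbf{The convention you flagged is essential.} You rely on $\RO$ leaving $\CONSUME$ atoms untouched, as defined in \Cref{app:corecapybara}. The surface definition in \Cref{sec:formal:typing} literally restamps every element of $\RO\,C$ to $\RO$. Under that reading, \ruleref{sc-mode} derives $\subs{\G}{\set{\RO\,c}}{\set{\CONSUME\,c}}$, and covering fails because $\RO\not\preceq\CONSUME$. This breaks your argument, the paper's argument, and the landing characterization alike.
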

\begin{proof}
By induction on the derivation. \ruleref{sc-elem}: containment of atoms
gives containment of their resolutions. \ruleref{sc-var}:
$\roots{\G}{\set{x}}=\roots{\G}{C_x}$ by definition, so the two sides are
equal. \ruleref{sc-mode} and \ruleref{sc-ro-mono}: multiplicativity, with
the covering mode weakened by $\preceq$. \ruleref{sc-union}: resolution
distributes. \ruleref{sc-trans}: $\sqsubseteq$ is transitive.
\end{proof}

\begin{lemma}[Resolution reach]\label{lem:tr:resolve}
In \corecapybara{}, $\subs{\G}{C}{\roots{\G}{C}}$ for every capture set $C$
over $\G$.
\end{lemma}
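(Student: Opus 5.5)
The plan is structural induction on $C$, following the clauses of $\roots{\G}{\cdot}$, and closing each case with the core subcapturing rules of \Cref{fig:core-subcapturing}. Because the recursion in the definition of roots resolves term variables through their declared capture sets, the induction measure should be the lexicographic pair (position of the deepest variable of $C$ in $\G$, size of $C$), rather than bare syntactic size. This measure is well-founded because a binding $x:m\,S\capt C_x$ only mentions variables bound earlier in $\G$, by the well-scopedness convention of \Cref{app:corecapybara}.

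The cases are as follows.

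\emph{Empty set.} $\roots{\G}{\set{}}=\set{}$, and the goal follows from \rruleref{sc-elem}.

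\emph{Union.} Apply the induction hypothesis to $C_1$ and to $C_2$. Weaken each result into $\roots{\G}{C_1}\cup\roots{\G}{C_2}$ by \rruleref{sc-elem} and \rruleref{sc-trans}, then combine with \rruleref{sc-union}.

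\emph{Capture variable.} A singleton $\set{\mu\,c}$ is its own root set, so \rruleref{sc-elem} closes it. This case is deliberately not routed through \rruleref{sc-cvar}.

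\emph{Term variable.} For a singleton $\set{\mu\,x}$ with $x:m\,S\capt C_x\in\G$, we have $\roots{\G}{\set{\mu\,x}}=\mu\,\roots{\G}{C_x}$. By the induction hypothesis on $C_x$, which sits at an earlier position, $\subs{\G}{C_x}{\roots{\G}{C_x}}$. \rruleref{sc-var} gives $\subs{\G}{\set{x}}{C_x}$, and \rruleref{sc-trans} then yields $\subs{\G}{\set{x}}{\roots{\G}{C_x}}$.

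The remaining step, and the main obstacle, is lifting this to the moded atom $\mu\,x$. There are three subcases.
\begin{itemize}
\item For $\mu=\epsilon$ the result above already suffices.
\item For $\mu=\RO$, \rruleref{sc-ro-mono} turns it into $\subs{\G}{\RO\,\set{x}}{\RO\,\roots{\G}{C_x}}$, and $\RO\,\set{x}=\set{\RO\,x}$.
\item For $\mu=\CONSUME$, \rruleref{sc-consume-mono} gives the analogous conclusion.
\end{itemize}
The delicate point is that the definition of mode application in the core leaves $\CONSUME$-marked elements unchanged under $\RO$. So $\RO\,\roots{\G}{C_x}$ and the literal right-hand side $\mu\,\roots{\G}{C_x}$ coincide only if the roots' multiplication in the definition of $\roots{}{}$ uses the same mode application. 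I would check that the definitions agree syntactically on this. If they do not, I would instead finish with \rruleref{sc-elem}, using the fact that any residual $\CONSUME$ atoms appear identically on both sides.

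Finally, note that the lemma holds only in \corecapybara{}. The surface calculus lacks \rruleref{sc-consume-mono}, which is exactly why the statement is restricted to the core.
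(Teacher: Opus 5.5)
Your proof is correct and is essentially the paper's: induction along the resolution of roots, with \rruleref{sc-elem} for capture-variable atoms, \rruleref{sc-var} combined with \rruleref{sc-trans} and the qualification rules \rruleref{sc-ro-mono} and \rruleref{sc-consume-mono} for a term-variable atom $\mu\,x$, and \rruleref{sc-union} to assemble. The only difference is the order of steps: the paper qualifies $\set{x}<:C_x$ to mode $\mu$ first and then applies the hypothesis to $\mu\,C_x$ together with multiplicativity of resolution, whereas you apply the hypothesis to $C_x$ and qualify afterwards, which matches the defining clause $\roots{\G}{\set{\mu\,x}}=\mu\,\roots{\G}{C_x}$ directly (your concern about mode application does not arise, since that clause uses the same core mode application).
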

\begin{proof}
By induction on the resolution. A capture-variable atom resolves to itself
(\rruleref{refl} via \rruleref{sc-elem}). A term-variable atom $\mu\,x$ with
$x:m\,S\capt C_x\in\G$: $\subs{\G}{\set{x}}{C_x}$ by \rruleref{sc-var},
qualified to $\mu$ by \Cref{lem:tr:qualmono}, then \rruleref{sc-trans} with
the inductive $\subs{\G}{\mu\,C_x}{\roots{\G}{\mu\,C_x}}$ and
multiplicativity. \rruleref{sc-union} assembles.
\end{proof}

\begin{lemma}[Kinding transfers to roots]\label{lem:tr:kindroot}
If $\typs{\G}{C}{m}$ then $\typs{\G}{\roots{\G}{C}}{m}$. Moreover, if
additionally $\accessonly{\G}{C}$, then every atom $\mu\,p$ of
$\roots{\G}{C}$ is \emph{read-only}: $\mu=\RO$, or $\mu=\epsilon$ and
$p<:\RO\in\G$.
\end{lemma}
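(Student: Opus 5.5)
The plan is to argue by induction on the kinding derivation $\typs{\G}{C}{m}$, proving the first claim, and then to obtain the second claim by a separate induction. The case $m=\epsilon$ is trivial by \ruleref{k-rw}, so all the work is for $m=\RO$.

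For the first claim, the cases go as follows. \ruleref{k-empty} gives $\roots{\G}{\set{}}=\set{}$, closed by \ruleref{k-empty}. \ruleref{k-union} follows because resolution distributes over union; apply the two induction hypotheses and close with \ruleref{k-union}. \ruleref{k-ro} has $C=\RO\,C_0$, so by multiplicativity $\roots{\G}{\RO\,C_0}=\RO\,\roots{\G}{C_0}$, which has kind $\RO$ by \ruleref{k-ro}. \ruleref{k-imm} has $C=\set{c}$ with $c<:\RO\in\G$; since the roots of a capture-variable atom are the atom itself, \ruleref{k-imm} applies again.

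The only substantive case is \ruleref{k-sc}, where $\subs{\G}{C}{C_2}$ and $\typs{\G}{C_2}{\RO}$, and the induction hypothesis gives $\typs{\G}{\roots{\G}{C_2}}{\RO}$. Root monotonicity (\Cref{lem:tr:rootmono}) gives $\roots{\G}{C}\sqsubseteq\roots{\G}{C_2}$. I would then turn covering into subcapturing: each atom $\mu\,p$ of the left side has some $\mu'\,p$ on the right with $\mu\preceq\mu'$. When $\mu=\mu'$ this is \ruleref{sc-elem}. When $\mu=\RO$ and $\mu'=\epsilon$ it is \ruleref{sc-mode} followed by \ruleref{sc-elem}. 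For $\CONSUME$ the two modes are equal. \ruleref{sc-union} assembles the atoms into $\subs{\G}{\roots{\G}{C}}{\roots{\G}{C_2}}$, and \ruleref{k-sc} closes the case. I expect this covering-to-subcapturing step to be the one place needing care.

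For the second claim, I would strengthen it to an induction on kinding (with $m=\RO$) under the extra hypothesis $\accessonly{\G}{C}$. The strengthened statement is: every access atom of $\roots{\G}{C}$ is read-only in the stated sense; and if $C$ is access-only, its roots contain no $\CONSUME$ atoms, so all atoms are access atoms. The access-only property passes down to subsets, by \ruleref{ao-sc}, or by \Cref{lem:tr:rootmono} in the root-based reading. The cases are then:
\begin{itemize}
\item \ruleref{k-ro}: every access atom of $\RO\,\roots{\G}{C_0}$ has mode $\RO$. The $\CONSUME$ atoms are excluded because the set is access-only.
\item \ruleref{k-imm}: the single atom is $\epsilon\,c$ with $c<:\RO\in\G$.
\item \ruleref{k-union}: the claim splits across the two halves.
\item \ruleref{k-sc}: any atom $\mu\,p$ of $\roots{\G}{C}$ is covered by some $\mu'\,p$ in $\roots{\G}{C_2}$. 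The hard point is that $C_2$ itself need not be access-only. I would therefore state the induction per access atom, without requiring the whole of $C_2$ to be access-only: every access-moded atom of the roots is read-only. Covering then preserves this. If $\mu'=\RO$ then $\mu=\RO$. If $\mu'=\epsilon$ with $p<:\RO\in\G$, then either $\mu=\RO$ or $\mu=\epsilon$ with the same $p<:\RO$. If $\mu'=\CONSUME$ then $\mu=\CONSUME$, which the access-only hypothesis on $C$ rules out.
\end{itemize}
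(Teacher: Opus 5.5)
Your proof is correct and follows the paper's route. Both use a single induction on the kinding derivation with the same cases, and in the \ruleref{k-sc} case both use root monotonicity (\Cref{lem:tr:rootmono}) and turn covering into subcapturing by \ruleref{sc-mode} and \ruleref{sc-elem}.

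Your strengthening of the second claim to a statement about access atoms only is the careful way to handle \ruleref{k-sc}. The paper invokes its inductive claim at $C_2$ and appeals to \ruleref{ao-sc}, but that rule only transports access-onliness from $C_2$ down to $C_1$, which is the wrong direction for applying the hypothesis at $C_2$. Your invariant needs no access-onliness of $C_2$, so the step goes through.
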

\begin{proof}
Both claims by induction on the kinding derivation; only $m=\RO$ is
nontrivial (\ruleref{k-rw} makes $m=\epsilon$ universal).
\ruleref{k-empty}, \ruleref{k-union}: resolution distributes.
\ruleref{k-ro}: $C=\RO\,C'$, so $\roots{\G}{C}=\RO\,\roots{\G}{C'}$ is again
$\RO$-qualified, and \ruleref{k-ro} re-applies; its atoms are $\RO$-marked
except $\CONSUME$-marked ones, which access-onliness excludes.
\ruleref{k-imm}: $C=\set{c}$ with $c<:\RO\in\G$ resolves to itself, and its
atom is read-only by the bound.
\ruleref{k-sc}: from $\subs{\G}{C_1}{C_2}$ and the inductive claims for
$C_2$, \Cref{lem:tr:rootmono} covers each atom $\mu\,p$ of
$\roots{\G}{C_1}$ by some $\mu'\,p$ of $\roots{\G}{C_2}$ with
$\mu\preceq\mu'$; then $\subs{\G}{\set{\mu\,p}}{\set{\mu'\,p}}$
(\ruleref{sc-mode}, \ruleref{sc-elem}) and $\subs{\G}{\set{\mu'\,p}}{\roots{\G}{C_2}}$
give the kinding by \ruleref{k-sc}, and read-onliness is closed downward
under $\preceq$: $\mu\preceq\mu'=\RO$ forces $\mu=\RO$, and an $\RO$-bounded
$p$ stays $\RO$-bounded. Access-onliness descends by \ruleref{ao-sc}.
\end{proof}

\paragraph{L3: Subcapturing preservation.}
\begin{lemma}[Subcapturing preservation]\label{lem:tr:sc}
If $\subs{\G}{C_1}{C_2}$ then $\subs{\embed{\G}^{\LOCK}}{\embed{C_1}}{\embed{C_2}}$.
\end{lemma}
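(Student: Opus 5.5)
We argue by induction on the surface subcapturing derivation of $\subs{\G}{C_1}{C_2}$, one case per rule of \Cref{fig:all-typing}, showing that each rule has a direct core counterpart in \Cref{fig:core-subcapturing} once we pass through the translation. Throughout, $C_1$ and $C_2$ are root-free, following the convention of \Cref{app:tr:conv}, so $\embed{\cdot}$ acts elementwise. By \Cref{lem:tr:mode}, translation commutes with union and with mode application: $\embed{C\cup C'}=\embed{C}\cup\embed{C'}$ and $\embed{\mu\,C}=\mu\,\embed{C}$.

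\emph{Structural cases.} For \ruleref{sc-elem}, inclusion $C_1\subseteq C_2$ is preserved by the elementwise translation, so core \rruleref{sc-elem} applies. \ruleref{sc-union} and \ruleref{sc-trans} follow from the induction hypotheses and the core rules of the same name, using commutation with union. For \ruleref{sc-mode} we need $\subs{}{m_1\,\embed{C}}{m_2\,\embed{C}}$ with $m_1\preceq m_2$, which is core \rruleref{sc-mode}. For \ruleref{sc-ro-mono} we apply the induction hypothesis and then \Cref{lem:tr:qualmono} with $\mu=\RO$.

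\emph{The variable case.} This is the only case needing thought. Given $x:m\,S\capt C\in\G$, the translated context $\embed{\G}^{\LOCK}$ binds $x:\embed{m\,S\capt C}$. We must check that this is a capturing type with capture set $\embed{C}$, so that core \rruleref{sc-var} yields $\subs{}{\set{x}}{\embed{C}}$.

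\begin{itemize}
\item For ambient and \ruleref{let}-bound variables, the plain translation of a capturing type is $m\,\embed{S}\capt\embed{C}$ by definition. This holds even when $S$ is a function shape, since the manufactured lock sits inside the shape and the outer annotation stays $\embed{C}$.
\item For a $\lambda$-bound parameter, the binding is $x:\embed{T[\ANY\leadsto\set{\cstar}]}$. Here the surface context entry is $T[\ANY\leadsto\set{c}]$, with $c$ renamed to $\cstar$ by the binder frame. So the declared set of the surface entry, once translated, is exactly the top-level set of the core entry.
\item For a \ruleref{let} with $\FRESH$, the unpack frame binds $x$ at $\embed{T[\FRESH\leadsto\set{c_1},\ldots]}$. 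This again matches the surface entry.
\end{itemize}

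Weakening across the interleaved lock and capture frames is harmless, by the stability of context-indexed types noted in \Cref{app:tr:conv}.

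\emph{Main obstacle.} The expected difficulty is the bookkeeping just described. We must match the surface context's names for root instantiations with the core binder frames of \Cref{app:tr:ctx}, and confirm that the plain translation never rewrites a top-level capture set. Type translation touches only the interior of function shapes; top-level sets are translated homomorphically. If either check fails, the fallback is to strengthen the induction to track the correspondence between surface entries and core frames explicitly.
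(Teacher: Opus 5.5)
Your proof is correct and follows the paper's own argument. Like the paper, you induct on the surface subcapturing derivation, map \ruleref{sc-elem}, \ruleref{sc-union}, \ruleref{sc-trans}, \ruleref{sc-mode}, and \ruleref{sc-ro-mono} to their core namesakes, and close the \ruleref{sc-var} case by core \rruleref{sc-var} on the translated binding $x:m\,\embed{S}\capt\embed{C}$; your binder-by-binder check that the top-level capture set survives translation (ambient and let-bound, $\lambda$-bound with $c$ renamed to $\cstar$, and the $\FRESH$-let unpack frame) only spells out what the paper asserts in one line, and commutation with union comes from the elementwise definition rather than from \Cref{lem:tr:mode}.
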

\begin{proof}
By induction on the surface subcapturing derivation.
\ruleref{sc-trans}, \ruleref{sc-union} map to their core namesakes by the IH.
\ruleref{sc-elem}: $C_1\subseteq C_2$ gives $\embed{C_1}\subseteq\embed{C_2}$,
so \rruleref{sc-elem}. \ruleref{sc-mode}: \Cref{lem:tr:qualmono}, or directly
\rruleref{sc-mode}. \ruleref{sc-ro-mono}: \rruleref{sc-ro-mono} on the IH.
\ruleref{sc-var}: from $x:m\,S\capt C\in\G$ the translated binding is
$x:m\,\embed{S}\capt \embed{C}\in\embed{\G}$, so \rruleref{sc-var} gives
$\subs{}{\set{x}}{\embed{C}}$, the image of the source conclusion; the
permission qualifies the binding's type and plays no role in resolution.
\end{proof}

\paragraph{L4: Capability kinding, access-only, consumable, accessible.}
\begin{lemma}[Predicate preservation]\label{lem:tr:kind}
Each surface predicate implies its core image:
$\typs{\G}{C}{m}\Rightarrow\typs{\embed{\G}^{\LOCK}}{\embed{C}}{m}$;
$\accessonly{\G}{C}\Rightarrow\accessonly{\embed{\G}^{\LOCK}}{\embed{C}}$; and
$\consumable{\G}{C}\Rightarrow\consumable{\embed{\G}^{\LOCK}}{\embed{C}}$.
\end{lemma}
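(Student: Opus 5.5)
The plan is to prove the three implications separately, each by induction on the surface derivation of the predicate. The recurring tool is the fact, noted under L2, that resolution commutes with the translation: $\roots{\embed{\G}^{\LOCK}}{\embed{C}}=\embed{\roots{\G}{C}}$. In addition, each capture binding $\alpha\,c<:B$ of $\G$ reappears in $\embed{\G}^{\LOCK}$ with the same consume mode $\alpha$, inside the frame of the translated binder that introduced it (\Cref{app:tr:ctx}). By the ambient-context convention, every capture variable of $\G$ comes from such a binder.

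\emph{Kinding.} I handle the rules one at a time.
\begin{itemize}
\item \ruleref{k-empty}, \ruleref{k-union}, \ruleref{k-rw} and \ruleref{k-ro} map directly to their core namesakes, using $\embed{\RO\,C}=\RO\,\embed{C}$ (\Cref{lem:tr:mode}) for \ruleref{k-ro}.
\item \ruleref{k-sc} combines \Cref{lem:tr:sc} on the subcapturing premise with the induction hypothesis, then applies \rruleref{k-sc}.
\item \ruleref{k-imm} is the only real case. Its premise is $c<:\RO\in\G$. A surface binding with bound $\RO$ arises only from a capture binder $\lambda[c<:\RO]$: term binders bind $c<:\epsilon$, \ruleref{let} binds consumable roots without an $\RO$ bound, and the ambient context binds no capture variables. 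That binder's frame contributes $\epsilon\,c<:\top$ followed by $\LOCK[\Psi_A,\Phi_A]$ with $\Phi_A=(\set{c}:\RO)$. This lock lies in $\embed{\G}^{\LOCK}$ at every site inside the binder's scope, so \rruleref{k-imm} derives $\typs{\embed{\G}^{\LOCK}}{\set{c}}{\RO}$.
\end{itemize}
I expect \ruleref{k-imm} to be the main obstacle. The core bound has been weakened to $\top$, so the $\RO$ kind must be recovered from the lock alone. This requires a small invariant, proved alongside the definition of $\embed{\G}^{\LOCK}$: every surface binding $c<:\RO$ has its mode-entry lock in scope. I would establish it by inspecting the four frame forms of \Cref{app:tr:ctx}.

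\emph{Access-only.} In the core this predicate is defined via roots, so I induct on the surface rules and show that $\roots{\embed{\G}^{\LOCK}}{\embed{C}}$ has no $\CONSUME$-marked atom.
\begin{itemize}
\item \ruleref{ao-empty} is trivial.
\item \ruleref{ao-access}: the roots of $\set{m\,c}$ are $\set{m\,c}$ itself.
\item \ruleref{ao-union}: resolution distributes over union.
\item \ruleref{ao-sc}: \Cref{lem:tr:rootmono} gives $\roots{\G}{C_1}\sqsubseteq\roots{\G}{C_2}$, which transports to the core through the commutation fact. In the covering order a $\CONSUME$ atom is covered only by a $\CONSUME$ atom, so the root set of $C_1$ cannot contain a $\CONSUME$ atom when that of $C_2$ contains none.
\end{itemize}

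\emph{Consumable.} This follows the same pattern, now tracking the consume mode of each root variable.
\begin{itemize}
\item \ruleref{con-root}: $\CONSUME\,c<:B\in\G$, and the translated frame binds $c$ as $\CONSUME\,c<:\top$. This holds whether $c$ comes from a consumer binding $\cstar$, which is identified with the surface $c$ of \ruleref{abs}, or from an \rruleref{unpack} witness of a $\FRESH$-carrying \ruleref{let}.
\item \ruleref{con-union}: resolution distributes.
\item \ruleref{con-sc}: root monotonicity puts every root variable of $C_1$ among those of $C_2$, and those are consumable by the induction hypothesis.
\end{itemize}
Since $\embed{\G}^{\LOCK}$ never applies the kill operation $\ominus$, no consumable binding has been turned into $\KILLED$, so the consume modes are indeed preserved.
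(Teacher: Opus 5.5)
Your proposal is correct and follows the paper's proof essentially step for step. Kinding goes by rule-by-rule induction, with \ruleref{k-imm} recovered by \rruleref{k-imm} from the $(\set{c}:\RO)$ entry of the capture binder's lock. The access-only and consumable cases are argued on roots, using that the translated context preserves consume modes and applies no kills.

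The only difference is cosmetic and lies in \ruleref{ao-sc} and \ruleref{con-sc}. You apply \Cref{lem:tr:rootmono} on the surface and transport it through the commutation $\roots{\embed{\G}^{\LOCK}}{\embed{C}}=\embed{\roots{\G}{C}}$, whereas the paper first translates the subcapturing by \Cref{lem:tr:sc}. Your order is slightly cleaner, since root monotonicity is proved as a surface lemma and core \rruleref{sc-cvar} is not root-monotone.
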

\begin{proof}
For kinding, induction on the surface derivation. \ruleref{k-empty},
\ruleref{k-union}, \ruleref{k-rw}, \ruleref{k-ro} map to their core namesakes.
\ruleref{k-sc} uses \Cref{lem:tr:sc} and \rruleref{k-sc}. \ruleref{k-imm}:
a surface $\RO$-bound $c<:\RO\in\G$ translates to $c<:\top$ together with the mode
entry $(\set{c}:\RO)$ that the binder's lock records in its $\Phi$
(\Cref{app:tr:types}); \rruleref{k-imm} reads that entry
to conclude $\typs{}{\set{c}}{\RO}$.

For access-only and consumable, the surface rules are inductive
(\ruleref{ao-access}, \ruleref{ao-union}, \ruleref{ao-sc}, and duals). Core
defines both via roots. \ruleref{ao-access} gives $\accessonly{}{\set{m\,c}}$,
whose translated root set $\set{m\,c}$ has no $\CONSUME$ mark. Unions map to unions.
\ruleref{ao-sc} uses \Cref{lem:tr:sc} together with \Cref{lem:tr:rootmono}
read in the core: shrinking a set covers its roots, and neither predicate
can acquire a $\CONSUME$ mark or a non-$\CONSUME$ binding by shrinking.
The accessible predicate is preserved likewise: the translated roots carry no
$\KILLED$ binding because the source context has no killed variables and the
translated kills are introduced only where the source \ruleref{let} consumes,
tracked by \Cref{lem:tr:kill}.
\end{proof}

\paragraph{L5: Substitution and instantiation.}
The translated eliminations substitute a term variable, a type, or a capture
set into a translated type. Away from the manufactured locks and codomain
annotations, translation and substitution commute on the nose; at the locks,
substitution replaces the resolved entries by written-level sets, and one
subtyping cast re-relates the two.
\begin{lemma}[Substitution commutation]\label{lem:tr:subst}
$\embed{[X:=S]U}=[X:=\embed{S}]\embed{U}$, and likewise for
$\eemb{\cdot}$: type substitution touches no capture set. For term and
capture substitution, $\embed{\cdot}$ and the substitution agree on every
position except the stored entries $\Psi_A$ and annotations $W_A$ of
function types: domains, outer capture sets, and all other capture positions
are translated elementwise, where
$[z:=y]\embed{C}=\embed{[z:=y]C}$ and $[c:=\embed{D}]\embed{C}=\embed{[c:=D]C}$
hold by definition. On a stored entry, substitution acts componentwise: a
term substitution is the identity (entries contain only capture variables),
and a capture substitution $[c:=\embed{D}]$ replaces the atoms of the entry
$P\vert_c$ by the correspondingly qualified $\embed{D}$ and leaves the other
entries unchanged.
\end{lemma}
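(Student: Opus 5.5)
I would prove \Cref{lem:tr:subst} by structural induction on the surface type $U$, and on capture sets inside it, treating the three kinds of substitution separately. I would always unfold the translation clause for the outermost former first and then push the substitution through it.

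\emph{Type substitution.} Take $[X:=S]$. The base and capturing cases are immediate: $\embed{X}=X$, and $\embed{m\,S'\capt C}=m\,\embed{S'}\capt\embed{C}$ has a capture set that $[X:=S]$ does not touch. For a function former $A$, the manufactured data $\Psi_A$, $\Phi_A$, $W_A$ are built only from $R_A$ and the bound capture variable. The set $R_A$ resolves capture sets through term-variable bindings, and no type variable occurs in a capture position, so $R_A$ is unchanged by $[X:=S]$. Domains, bounds and codomains then commute by the induction hypothesis; for codomains I use the $\eemb{\cdot}$ variant. The $\eemb{\cdot}$ claim reduces to the plain one, because $\freshinst{\cdot}{\cdots}$ instantiates $\FRESH$ occurrences and commutes with a type substitution that introduces no roots (root-directedness). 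I also need the convention that $S$ is root-free in domain positions.

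\emph{Term and capture substitution.} Outside stored entries the claim is definitional: $\embed{\cdot}$ acts elementwise and preserves names, so $[z:=y]\embed{C}=\embed{[z:=y]C}$. For capture substitution, $[c:=\embed{D}]\embed{C}=\embed{[c:=D]C}$ holds by \Cref{lem:tr:mode}, since $\mu\,\embed{D}=\embed{\mu\,D}$ for each replaced atom $\mu\,c$. For the stored entries, I argue at the level of syntax. Each entry $P\vert_p$ is a set of moded capture variables, so a term substitution is the identity on it. A capture substitution replaces the atoms $\mu\,c$ of the entry $P\vert_c$ by $\mu\,\embed{D}$ and fixes every other entry $P\vert_p$ with $p\neq c$. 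The annotation $W_A$ is again a root set read as a capture set, so it is treated the same way. I then close the induction through domains, codomains and nested arrows, avoiding the binders $\cstar$, $c'$, $X$ by $\alpha$-renaming.

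\emph{Main obstacle.} I expect the main obstacle to be the fact that the lemma does not claim equality at the entries. The translation of the substituted type would recompute $R_{[c:=D]A}$, resolving $D$ \emph{down} to roots, whereas substitution plugs in the written-level $\embed{D}$. I must therefore state the componentwise action carefully and check that it is well defined. The entries of $\rentries{\cdot}$ are indexed by distinct variables, so exactly one entry, $P\vert_c$, is affected. I also need that $c$ never coincides with a bound $\cstar$ after renaming. Reconciling the two sides is deliberately deferred to the subtyping cast through \rruleref{boxed-sat} mentioned in the paragraph before the lemma. Here I only record the precise syntactic shape of the substituted entry, which is what that cast will need.
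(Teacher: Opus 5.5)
Your proposal is correct and follows the paper's proof: induction on the type. Type substitution touches no capture set, so $R_A$, $\Psi_A$, $\Phi_A$ and $W_A$ are unchanged. The stored entries and annotations are root sets over capture variables, so a term substitution is the identity on them and a capture substitution acts elementwise; your appeal to \Cref{lem:tr:mode} for the qualified replacement is the detail the paper compresses into ``by definition''. You can drop the extra convention that $S$ be root-free. A shape type has no top-level capture set, so any $\ANY$ or $\FRESH$ inside $S$ sits under one of $S$'s own arrows and is translated by that arrow's clause identically on both sides; this, rather than ``introduces no roots'', is also the precise reason $\eemb{\cdot}$ commutes with type substitution.
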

\begin{proof}
By induction on the type. Entries and annotations are stored root sets
(\Cref{app:tr:types}); term variables do not occur in them, and a capture
substitution acts on capture sets elementwise by definition.
\end{proof}

\begin{lemma}[Instantiation cast]\label{lem:tr:cast}
Let a translated elimination at site context $K=\embed{\G}^{\LOCK}$
instantiate a declared function type as in \Cref{app:tr:terms}: for
\ruleref{app} and \ruleref{consume-app}, $\cstar:=\embed{D}$ and $[x:=y]$
with a tight witness ($\roots{\G}{D}\sqsubseteq\roots{\G}{C_y}$) under
domain honesty; for \ruleref{capp}, $[c:=\embed{D}]$ with the witness $D$
of the source term; for \ruleref{tapp}, $[X:=\embed{S'}]$. Then the
substituted image of the translated codomain subtypes the translation of
the substituted source codomain:
\[
\subs{K}{\ \sigma\big(([\Psi_A,\Phi_A]\,\eemb{U})\capt W_A\big)\ }{\
  ([\Psi_{A'},\Phi_{A'}]\,\eemb{U'})\capt W_{A'}},
\]
where $\sigma$ is the substitution performed, $U'=\sigma U$ at the source
level, and $A'$ is the arrow of the instantiated type, read in $\G$.
\end{lemma}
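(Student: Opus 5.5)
The subtyping will be assembled from core \rruleref{capt} at the top level and \rruleref{boxed-sat} beneath it, with \rruleref{trans} in between. \rruleref{capt} splits the goal into two obligations: the substituted annotation $\sigma W_A$ must subcapture $W_{A'}$, and the substituted modal shape $[\sigma\Psi_A,\sigma\Phi_A]\,\sigma\eemb{U}$ must subtype $[\Psi_{A'},\Phi_{A'}]\,\eemb{U'}$. I will first rewrite every position outside the stored entries and annotations using \Cref{lem:tr:subst}, so that $\sigma\eemb{U}=\eemb{U'}$ on the nose. For \ruleref{tapp} this is literal equality. For term substitution it holds because $[z:=y]$ commutes with the elementwise translation. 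For capture substitution it holds by the componentwise clause. The remaining discrepancy then sits entirely in $\sigma\Psi_A$ versus $\Psi_{A'}$ and in $\sigma W_A$ versus $W_{A'}$. I treat the two obligations separately.

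\emph{Annotation.} For \ruleref{tapp} and term substitution, $\sigma W_A=W_A$, since stored root sets contain no term variables. Also $A'$ has the same annotation roots as $A$, because instantiation does not touch the arrow's annotation $C$. So $R_{A'}=R_A$ and reflexivity suffices. For \ruleref{capp}, and for the $\cstar:=\embed{D}$ step of \ruleref{app} and \ruleref{consume-app}, $\sigma W_A$ replaces the bound variable by $\embed{D}$, qualified as in the original atoms. Meanwhile $W_{A'}$ is computed from $R_{A'}=\roots{\G}{\mfree{\sigma(C\cup C_T)}}$. I will show $\sigma W_A\le W_{A'}$ atom by atom. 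Root atoms of $R_A$ that are not the bound variable reappear unchanged in $R_{A'}$. The substituted atoms $\mu\,\embed{D}$ are handled through \Cref{lem:tr:resolve}: they subcapture $\mu\,\roots{}{\embed{D}}$. Here the tight-witness bound and domain honesty give $\roots{\G}{D}\sqsubseteq\roots{\G}{C_y}$, and $C_y$ is covered by the instantiated domain captures, so those roots land in $R_{A'}$ by \Cref{lem:tr:atoms}. Finally \rruleref{sc-union} assembles the atoms.

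\emph{Modal shape.} I apply \rruleref{boxed-sat}. This requires $\sat{(K,\LOCK[\Psi_{A'},\Phi_{A'}])}{[\sigma\Psi_A,\sigma\Phi_A]}$, which in turn requires two things.
\begin{enumerate}
\item Each mode entry of $\sigma\Phi_A$ must be kinded. The only possible entry is $(\set{c}:\RO)$ for an $\RO$-bounded capture function. Under \ruleref{capp}, $\sigma$ turns it into $(\embed{D}:\RO)$. This follows from the source premise $D<:\RO$, via \ruleref{sb-kind} and \Cref{lem:tr:kind}.
\item Each pair of distinct entries of $\sigma\Psi_A$ must be separated. Every entry $E$ of $\sigma\Psi_A$ satisfies $\subs{}{E}{E'}$ for some entry $E'$ of $\Psi_{A'}$, again by \Cref{lem:tr:resolve}. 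Then \rruleref{sep-lock} on the pushed lock, followed by \rruleref{sep-sc} on both sides and \rruleref{sep-symm}, gives the pair.
\end{enumerate}

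The hard case is when two distinct entries of $\sigma\Psi_A$ are covered by the \emph{same} entry of $\Psi_{A'}$. This happens exactly when $\embed{D}$ resolves onto a root already in $R_A$, or when substitution merges two roots. In that case \rruleref{sep-lock} does not apply. My first attempt is to observe that for the parameter entry this overlap cannot arise in $\Psi_{A'}$'s shape: entries of $\rentries{\cdot}$ are indexed by distinct root variables. I would therefore then argue that such a merge forces the source separation premise $\sep{\G}{D}{|A|}$ to be derivable only through \ruleref{sep-ro}. In that situation both sides have read-only kind, and I discharge the pair by core \rruleref{sep-ro} using \Cref{lem:tr:kindroot} instead of the lock. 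If this split fails, the fallback is to weaken the statement: take $A'$'s lock to be the one computed from the substituted entries themselves. The cast would then become equality, and the separation work would move to the unlock in the main proof.
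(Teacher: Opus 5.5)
Your top-level decomposition is reasonable. Your handling of a colliding root is also sound: when a root of $D$ equals a root of $R_A$, the source premise makes both atoms read-only via \Cref{lem:tr:sepchar}, and \rruleref{sep-ro} pays the pair. But the argument rests on the claim that $\sigma\eemb{U}=\eemb{U'}$ on the nose, and that claim is false. \Cref{lem:tr:subst} excludes the stored entries and annotations of \emph{every} function type occurring in $U$, not only the outermost one.

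Consider an arrow nested in $U$ whose annotation mentions the parameter $z$. It stores root entries resolved through $z$'s binding, so they contain $\cstar$. Under $\sigma$ these become $\embed{D}$; a term substitution leaves them unchanged. By contrast, $\eemb{U'}$ re-resolves that annotation through $y$ and stores the per-root entries of $\roots{\G}{C_y}$.

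The paper's running example shows the mismatch. Take $U=(\forall(op_2:T_a)\,\TUNIT)\capt\set{op_1}$ and $\sigma=[c_{\star1}:=\set{f_1}][op_1:=f_1]$. Under $\sigma$ the nested lock and annotation become $(\set{c_{\star2}},\set{f_1})$ and $\set{c_{\star2},f_1}$. In $\eemb{U'}$ they are $(\set{c_{\star2}},\set{c_1})$ and $\set{c_{\star2},c_1}$. Since \rruleref{boxed-sat} only relates modal types with the \emph{same} body, your single top-level use of it does not go through.

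The missing idea is the induction on $U$ that the paper's proof performs:
\begin{itemize}
\item At every modal wrapper, split by \rruleref{trans} into \rruleref{boxed} (the inductive hypothesis on the body, under the shared lock) followed by \rruleref{boxed-sat}.
\item At every nested annotation, repeat your resolution argument: $\subs{K}{\embed{D}}{\roots{K}{\embed{D}}}$ by \Cref{lem:tr:resolve}, then covering by tightness and domain honesty.
\item Pass through nested binders by the congruence rules.
\end{itemize}

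Three further points.
\begin{itemize}
\item At a nested arrow the substituted variable sits in a \emph{root} entry, not the parameter entry. So the collision you worried about can occur between two root entries. What makes those atoms read-only is the source premise $\sep{\G}{D}{|A|}$ against the \emph{deep} spine, which collects the nested annotations; your top-level-only argument never uses it.
\item A witness with several roots is covered by a union of entries of $\Psi_{A'}$, not by a single entry. You must split it into atoms and reassemble with \rruleref{sep-union}.
\item Your fallback of taking $A'$'s lock to be $\sigma\Psi_A$ does not prove the lemma. The target $\eemb{U'}$ is fixed by the translation, the nested entries would still disagree, and the application case needs the unlocked result to be exactly $\eemb{[z:=y]U}$.
\end{itemize}
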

\begin{proof}
For \ruleref{tapp} the two sides are equal by \Cref{lem:tr:subst}. For the
others, by induction on $U$; all positions except modal wrappers and
annotations agree on the nose by \Cref{lem:tr:subst} (in particular every
domain, which is invariant, mentions the substituted variables only in
elementwise-translated capture sets). At an annotation, the substituted
$\sigma W$ and the re-resolved $W'$ are related by
$\subs{K}{\sigma W}{W'}$: the atoms of $\sigma W$ are either root atoms of
$W$ also covered by $W'$ (resolution is stable under the extension), or
atoms of $\embed{D}$ replacing the substituted variable, and
$\subs{K}{\embed{D}}{\roots{K}{\embed{D}}}$ by \Cref{lem:tr:resolve} with
$\roots{K}{\embed{D}}$ covered by $W'$: for \ruleref{app},
$\roots{\G}{D}\sqsubseteq\roots{\G}{C_y}\subseteq$ the resolution of the
substituted annotation, by tightness; for \ruleref{capp} directly, since
the instantiated source annotation contains $D$. Covering lifts to
subcapturing elementwise (\rruleref{sc-elem}, \rruleref{sc-mode},
\rruleref{sc-union}). At a modal wrapper, split by \rruleref{trans} into
\rruleref{boxed} (the inductive hypothesis under the shared lock) and
\rruleref{boxed-sat}, whose obligation asks each entry pair of the
substituted $\sigma\Psi$ to be separated under $\LOCK[\Psi']$ and each
$\sigma\Phi$ entry to be kinded. A pair of unsubstituted entries is a pair
of single-variable root entries whose variables also carry entries in
$\Psi'$ at covering modes, so \rruleref{sep-lock} and \rruleref{sep-sc}
close it. A pair involving the substituted entry $\mu\,\embed{D}$ resolves
upward by \Cref{lem:tr:resolve}; each of its root atoms is covered by an
entry variable of $\Psi'$ (tightness and domain honesty for \ruleref{app},
the charged witness for \ruleref{capp}), so \rruleref{sep-lock} pairs it
with the other entry's variable, \rruleref{sep-union} assembles the atoms,
and \rruleref{sep-sc} descends to the entries. A $\sigma\Phi$ entry is
either preserved in $\Phi'$ (\rruleref{k-imm}) or, when the substitution
instantiated its variable, kinded through \Cref{lem:tr:kind} exactly as in
the \ruleref{cfun} case of \Cref{lem:tr:subtype}.
\end{proof}

\paragraph{L6: Subtyping preservation.}
\begin{lemma}[Subtyping preservation]\label{lem:tr:subtype}
If $\subs{\G}{T_1}{T_2}$ then $\subs{\embed{\G}^{\LOCK}}{\eemb{T_1}}{\eemb{T_2}}$.
\end{lemma}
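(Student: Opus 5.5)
We proceed by induction on the surface subtyping derivation $\subs{\G}{T_1}{T_2}$, mapping each surface rule to core rules under $\embed{\G}^{\LOCK}$. First we dispose of $\FRESH$. By root-directedness, $\FRESH$ occurs only in covariant spine positions. A surface subtyping step relates types whose $\FRESH$ occurrences correspond one-to-one, because subcapturing cannot introduce or remove a root element except through the rules that treat it as an ordinary variable. So when either side mentions $\FRESH$, we instantiate both sides with the same fresh $\set{c_1},\ldots,\set{c_n}$, prove the plain statement $\subs{K}{\embed{\freshinst{T_1}{\cdots}}}{\embed{\freshinst{T_2}{\cdots}}}$ in the context $K$ extended by $\epsilon\,c_i<:\top$, and close with \rruleref{exists}. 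The main induction therefore concerns the plain translation $\embed{\cdot}$ of root-free types, after this instantiation.

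The structural cases are routine. \ruleref{top}, \ruleref{refl}, \ruleref{trans} and \ruleref{tvar} go to their core namesakes, using $\embed{X}=X$ and the pointwise translation of the binding $X<:\embed{S}$. \ruleref{capt} follows from the induction hypothesis on shapes, \Cref{lem:tr:sc} on capture sets, and the fact that the permission $m$ is carried verbatim.

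The arrow cases \ruleref{fun}, \ruleref{tfun} and \ruleref{cfun} carry the weight. Each translated arrow has the form $m\,(\forall[\ldots]\,[\Psi_A,\Phi_A]\,\eemb{U})\capt W_A$. For \ruleref{fun} the domain is identical on both sides, so $\cstar$, the domain and the domain part of $R_A$ agree. We descend with \rruleref{cfun} and then \rruleref{fun}, where the domain premise is reflexive. On the codomain we split by \rruleref{trans} into two steps. The first is \rruleref{boxed}, which uses the induction hypothesis for $U_1<:U_2$ under the pushed frame; weakening is justified by context-index stability. The second is \rruleref{boxed-sat}, whose side is $\Psi_{A_1}=\Psi_{A_2}$ when the outer annotations agree. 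The annotations $W_A$ coincide because $R_A$ depends only on the domain and the outer set $C$. The outer set itself varies only through \ruleref{capt}, which is handled separately.

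For \ruleref{tfun} and \ruleref{cfun} the bound changes contravariantly. We translate the bound premise by the induction hypothesis, or by \Cref{lem:tr:sc} together with \rruleref{sb-top}, and apply core \rruleref{tfun} or \rruleref{cfun}. If a surface $B_2=\RO$ narrows $B_1=\epsilon$, the lock $\Phi_{A_2}=(\set{c}:\RO)$ gains an entry. We absorb this with \rruleref{boxed-sat}: satisfaction of the weaker $[\Psi_{A_1},\emptyset]$ under $\LOCK[\Psi_{A_2},\Phi_{A_2}]$ holds because the $\Psi$ entries coincide and are read off by \rruleref{sep-lock}. When $B_2$ is a capture set with $B_2<:\RO$ via \ruleref{sb-kind}, we use \Cref{lem:tr:kind} instead.

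The main obstacle is the interaction of \ruleref{capt} with function shapes. Widening the outer set from $C_1$ to $C_2$ changes $R_A$, and so changes both the manufactured lock and $W_A$. We handle this by treating $\subs{\G}{m\,S\capt C_1}{m\,S\capt C_2}$ for a function shape $S$ as a combined case. The outer set is handled by \Cref{lem:tr:sc}, and the inner annotation by $\embed{R_{A_1}}\sqsubseteq\embed{R_{A_2}}$ via \Cref{lem:tr:rootmono}, lifted to subcapturing by \rruleref{sc-elem}, \rruleref{sc-mode} and \rruleref{sc-union}. The lock is related by \rruleref{boxed-sat}: every pair of entries of $\Psi_{A_1}$ consists of single-variable root entries whose variables also carry entries in $\Psi_{A_2}$ at covering modes, so \rruleref{sep-lock} followed by \rruleref{sep-sc} discharges each pair.

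If the mechanized rule forbids the larger lock from entailing the smaller one, the fallback is to argue by canonical derivations (\Cref{def:tr:canon}) that widening of function subjects never reaches this lemma. In that case subtyping is only ever applied to outer sets of values, which already carry their own annotations.
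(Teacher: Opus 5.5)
Your treatment of the arrow congruences matches the paper's proof. You reduce through \rruleref{cfun} and \rruleref{fun} to the codomains, lift $R_{A_1}\sqsubseteq R_{A_2}$ (\Cref{lem:tr:rootmono}) to $W_{A_1}<:W_{A_2}$ elementwise, and split the modal step by \rruleref{trans} into \rruleref{boxed} (the IH) and \rruleref{boxed-sat}, paid by \rruleref{sep-lock} and \rruleref{sep-sc}. The paper handles \ruleref{fun} and the outer widening $C_1<:C_2$ in one case instead of splitting off a combined \ruleref{capt} case, but that difference is cosmetic.

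The step that fails is your opening reduction for $\FRESH$. Surface subtyping does not preserve $\FRESH$ occurrences one-to-one. \ruleref{sc-elem} gives $\subs{\G}{\set{}}{\set{\FRESH}}$, hence $\subs{\G}{\TUNIT}{\TUNIT\capt\set{\FRESH}}$ by \ruleref{capt}. The same happens inside a codomain via \ruleref{fun}, where root-directedness explicitly permits $\FRESH$. The translations are $\TUNIT$ and $\EXCAP{c}\TUNIT\capt\set{c}$, and no core derivation relates them. \rruleref{exists} needs existentials of equal arity, and a routine induction on core subtyping shows that an existential on the right forces one of the same arity on the left. So there is in general no common instantiation $c_1,\ldots,c_n$ to push under \rruleref{exists}; your reduction is sound only when the counts agree. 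The paper's proof never attempts this step: it reads capture sets as root-free and lets the codomain IH carry $\eemb{\cdot}$ without comment. Your example therefore exposes an arity-preservation side condition that the lemma needs and that neither argument establishes. State it as an explicit hypothesis (for instance, imposed by surface well-formedness) instead of deriving it from a false claim.

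Two smaller points.
\begin{itemize}
\item In \ruleref{cfun}, \rruleref{boxed-sat} must satisfy the \emph{source} lock $[\Psi_{A_1},\Phi_{A_1}]$ under the target lock. The case you spell out ($B_1=\epsilon$, $B_2=\RO$) is therefore the vacuous one. The real $\Phi$ leg arises when $B_1=\RO$. If $B_2=\RO$ it is closed by \rruleref{k-imm}. Otherwise \rruleref{sc-cvar} resolves $\set{c}$ to $\embed{B_2}$, and \rruleref{k-sc} with \Cref{lem:tr:kind} finishes it. Your remark about \ruleref{sb-kind} points at this case but omits the resolution step.
\item Drop the fallback. \rruleref{boxed-sat} is exactly the rule by which the larger lock entails the smaller. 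Canonicity (\Cref{def:tr:canon}) constrains only elimination subjects, so \ruleref{sub} still widens the outer capture set of function-typed arguments, e.g.\ from $\set{y}$ to the instantiated domain in \ruleref{app}. The widening case cannot be routed around.
\end{itemize}
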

\begin{proof}
By induction on the surface subtyping derivation.
\ruleref{refl}, \ruleref{trans}, \ruleref{tvar} map to their core namesakes;
\ruleref{top} maps to \rruleref{top}, whose purity premise the translated
shape satisfies. \ruleref{capt}: the IH and \Cref{lem:tr:sc} feed
\rruleref{capt}, which carries the permission $m$ rigidly on both sides
exactly as the source rule does.

The three congruences share one pattern; we spell out \ruleref{fun} and
note the deltas. Since term domains are invariant, the source relates
\[
\subs{\G}{(\forall(x:T)U_1)\capt C_1}{(\forall(x:T)U_2)\capt C_2}
\]
from $\subs{\G}{C_1}{C_2}$ and $\subs{(\G,x:T)}{U_1}{U_2}$. The translated
shapes share the binder prefix
$\forall[\cstar<:\top]\forall(x:\embed{\anyinst{T}{\set{\cstar}}})$ on the
nose, so \rruleref{cfun} (bounds $\top$, \rruleref{sb-top}) and
\rruleref{fun} (domain by \rruleref{refl}) reduce the claim to the
codomains, two capturing modal types. Their \rruleref{capt} splits into the
annotation step and the modal step. Annotation:
$R_{A_1}\sqsubseteq R_{A_2}$ by \Cref{lem:tr:rootmono} on
$\subs{\G}{C_1}{C_2}$ (the domain contribution $C_T$ is shared), and
covering lifts to $\subs{}{W_{A_1}}{W_{A_2}}$ elementwise
(\rruleref{sc-elem}, \rruleref{sc-mode}, \rruleref{sc-union}; the parameter
atoms are shared). Modal: split by \rruleref{trans},
\[
[\Psi_{A_1},\Phi_{A_1}]\eemb{U_1}
\ \overset{\rruleref{boxed}}{<:}\
[\Psi_{A_1},\Phi_{A_1}]\eemb{U_2}
\ \overset{\rruleref{boxed-sat}}{<:}\
[\Psi_{A_2},\Phi_{A_2}]\eemb{U_2},
\]
the first step by the IH under the shared lock, the second discharging
$\sat{(\embed{\G}^{\LOCK},\LOCK[\Psi_{A_2},\Phi_{A_2}])}{[\Psi_{A_1},\Phi_{A_1}]}$:
every entry of $\Psi_{A_1}$ is a single-variable root entry whose variable
carries an entry of $\Psi_{A_2}$ at covering modes ($R_{A_1}\sqsubseteq
R_{A_2}$, parameter entries shared), so each pair is paid by
\rruleref{sep-lock} followed by \rruleref{sep-sc} on both sides
(\rruleref{sc-elem}, \rruleref{sc-mode}).

\ruleref{tfun}: the bound step $\subs{\G}{S_2}{S_1}$ translates by the IH
and feeds \rruleref{tfun}; bounds do not occur in $\Psi$, $\Phi$, or $W$,
so the codomain argument is verbatim.
\ruleref{cfun}: the bound step translates by \Cref{lem:tr:sc} (capture-set
bounds), \rruleref{sb-top} (mutability bounds), or vacuously (both $\top$),
and feeds \rruleref{cfun}. The lock argument gains a $\Phi$ leg when
$B_1=\RO$: the obligation $\typs{}{\set{c}}{\RO}$ under
$\LOCK[\Psi_{A_2},\Phi_{A_2}]$ holds by \rruleref{k-imm} when $B_2=\RO$
(then $\Phi_{A_2}=(\set{c}:\RO)$), and otherwise $B_2$ is a capture set
that \ruleref{sb-kind} certified at kind $\RO$, so \rruleref{sc-cvar}
resolves $\set{c}$ to $\embed{B_2}$ and \rruleref{k-sc} concludes with
\Cref{lem:tr:kind}.
\end{proof}

\paragraph{L7: Separation.}
The surface leaf \ruleref{sep-root} separates any two distinct roots by
fiat; the core has no fiat, and the translation pays every pair from a lock,
from consume authority, or from read-only kinding. The next lemma extracts
from a surface separation everything the payment needs, the following
definition names the payments, and the preservation lemma reconstructs the
core derivation; which payments are available at a site is the business of
\Cref{lem:tr:invariant}.

\begin{lemma}[Root characterization]\label{lem:tr:sepchar}
If $\sep{\G}{C_1}{C_2}$, then for every pair of atoms
$\mu\,p\in\roots{\G}{C_1}$ and $\nu\,q\in\roots{\G}{C_2}$: either $p\neq q$,
or both atoms are read-only in the sense of \Cref{lem:tr:kindroot}.
The same holds for $\disj{\G}{C_1}{C_2}$ with the second alternative
dropped, and for $\seqcomp{\G}{C_1}{C_2}$ restricted to the
$\CONSUME$-marked atoms of $\roots{\G}{C_1}$: a consumed root of $C_1$ is
not a root of $C_2$.
\end{lemma}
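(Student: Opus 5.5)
The plan is to prove the separation claim by induction on the surface derivation of $\sep{\G}{C_1}{C_2}$, carrying the pairwise root statement as the induction invariant. The disjointness claim follows by the same induction with the \ruleref{sep-ro} case removed. The sequential-composition claim is a separate induction on $\seqcomp{\G}{C_1}{C_2}$ that appeals to the separation claim at its \ruleref{seq-sep} leaf. Throughout, ``root atom'' means an element of $\roots{\G}{\cdot}$, and an atom is read-only in the sense of \Cref{lem:tr:kindroot}.

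The structural cases of the separation induction are direct.
\begin{itemize}
\item \ruleref{sep-empty}: the claim is vacuous, since $\roots{\G}{\set{}}=\set{}$.
\item \ruleref{sep-symm}: the claim is symmetric in the two sides, so the induction hypothesis transfers.
\item \ruleref{sep-union}: resolution distributes over union, so each atom of $\roots{\G}{C_1\cup C_2}$ comes from $C_1$ or from $C_2$, and the corresponding hypothesis applies.
\item \ruleref{sep-sc}: from $\subs{\G}{C_1'}{C_1}$, \Cref{lem:tr:rootmono} covers each $\mu\,p\in\roots{\G}{C_1'}$ by some $\mu'\,p\in\roots{\G}{C_1}$ with $\mu\preceq\mu'$. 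The induction hypothesis applies to $\mu'\,p$ against every $\nu\,q$. If $p\neq q$ we are done. Otherwise $\mu'\,p$ is read-only, and read-onliness is closed downward under $\preceq$; this is the same observation as in the \ruleref{k-sc} case of \Cref{lem:tr:kindroot}. So $\mu\,p$ is read-only too.
\item \ruleref{sep-root}: both sides are singleton capture variables, which resolve to themselves, and $c_1\neq c_2$ holds by the premise.
\end{itemize}

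The only case that needs real work is \ruleref{sep-ro}. Here both sets have kind $\RO$, and I will use the access-only reading that the paper attaches to that rule (``two access-only sets of kind $\RO$''). The second part of \Cref{lem:tr:kindroot} then makes every root atom on both sides read-only, so the second alternative holds for every pair, whether or not $p=q$.

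I expect this case to be the main obstacle. The surface rule \ruleref{sep-ro} as printed lists only the two kinding premises, with no access-only premise. The issue is that a $\CONSUME$-marked atom survives $\RO$-qualification: core mode application leaves $\CONSUME$ elements unchanged. So a kind-$\RO$ set could still contain a consume root, and that root would not be read-only. I would first try to recover access-onliness from the well-formedness of use sets at the sites where separation is invoked. If that fails, I would adopt the core reading of \rruleref{sep-ro}, which does carry access-only premises, as the intended surface rule, and state this reliance explicitly in the proof.

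For disjointness, I rerun the induction without \ruleref{sep-ro}. Every remaining leaf and every structural step preserves the stronger invariant $p\neq q$; in the \ruleref{sep-sc} case the covering atom has the same variable $p$, so this goes through. For sequential composition, I induct on the derivation and track only the $\CONSUME$-marked root atoms of $C_1$.
\begin{itemize}
\item \ruleref{seq-access-only}: $C_1$ has no $\CONSUME$ roots, so the claim is vacuous.
\item \ruleref{seq-union}: resolution distributes over union, so the two hypotheses cover all roots.
\item \ruleref{seq-sc}: a $\CONSUME$ atom of $\roots{\G}{C_1}$ is covered, by \Cref{lem:tr:rootmono}, only by a $\CONSUME$ atom over the same variable, because $\CONSUME$ is comparable only to itself. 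The induction hypothesis for the larger set then applies.
\item \ruleref{seq-sep}: by the separation claim, a $\CONSUME$ atom $\CONSUME\,p$ sharing its variable with a root of $C_2$ would have to be read-only. A $\CONSUME$-marked atom is never read-only, so $p$ is not a root of $C_2$.
\end{itemize}
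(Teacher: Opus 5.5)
Your proposal is correct and matches the paper's proof essentially case for case. It runs induction on the separation derivation, handles \ruleref{sep-sc} by \Cref{lem:tr:rootmono} plus downward closure of read-onliness, obtains disjointness by dropping \ruleref{sep-ro}, and for sequential composition defers to the separation claim at \ruleref{seq-sep}, using that a $\CONSUME$ atom is never read-only. For \ruleref{sep-ro} the paper likewise appeals to \Cref{lem:tr:kindroot} on the rule's ``kinding and access-only premises'', i.e.\ it adopts exactly your fallback reading, the one in the prose of \Cref{sec:formal:consume} and in core \rruleref{sep-ro}; your site-based repair would not work anyway, since the lemma quantifies over arbitrary separation derivations.
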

\begin{proof}
By induction on the respective derivation.
\ruleref{sep-empty} is vacuous; \ruleref{sep-symm} and \ruleref{sep-union}
preserve the condition (it is symmetric and resolution distributes).
\ruleref{sep-root}: the two sides resolve to the two distinct atoms.
\ruleref{sep-ro}: by \Cref{lem:tr:kindroot} on its kinding and access-only
premises, every atom of both resolutions is read-only.
\ruleref{sep-sc}: \Cref{lem:tr:rootmono} covers each atom of the shrunken
side at a weaker mode, and both alternatives are closed downward under
$\preceq$ (a $\CONSUME$ atom is covered only by a $\CONSUME$ atom).
Disjointness drops \ruleref{sep-ro}, and with it the second alternative.
For sequential composition: \ruleref{seq-access-only} has no
$\CONSUME$-marked atom in $\roots{\G}{C_1}$ by definition of access-only;
\ruleref{seq-sep} defers to the separation claim, whose second alternative
is unavailable to a $\CONSUME$-marked atom; \ruleref{seq-sc} uses
\Cref{lem:tr:rootmono} as above; \ruleref{seq-union} splits.
\end{proof}

\begin{definition}[Payable pairs]\label{def:tr:payable}
In a core context $K$, a pair of atoms $\mu\,p$, $\nu\,q$ over
\emph{distinct} capture variables $p\neq q$ is \emph{payable} when
$\sep{K}{\set{\mu\,p}}{\set{\nu\,q}}$ is derivable. A set of atoms is
\emph{pairwise payable} when all its distinct-variable pairs are. Payability
is closed under context extension (derivations weaken), under kills that
touch neither variable (\Cref{lem:tr:kill}), and under mode weakening: if
the pair at $(\mu',\nu')$ is payable and $\mu\preceq\mu'$,
$\nu\preceq\nu'$, then so is the pair at $(\mu,\nu)$, by
\rruleref{sep-sc} and \rruleref{sc-mode} on both sides.
\end{definition}

\begin{lemma}[Separation preservation]\label{lem:tr:sep}
If $\sep{\G}{C_1}{C_2}$ and every distinct-variable pair of atoms of
$\embed{\roots{\G}{C_1}}\times\embed{\roots{\G}{C_2}}$ is payable in
$\embed{\G}^{\LOCK}$, then
$\sep{\embed{\G}^{\LOCK}}{\embed{C_1}}{\embed{C_2}}$.
\end{lemma}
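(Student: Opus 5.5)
Rather than inducting on the surface separation derivation, we prove the goal directly from roots. The hypothesis only covers pairs over distinct variables, and the surface rules \ruleref{sep-sc} and \ruleref{sep-root} do not map rule-for-rule onto core rules. The core derivation has the following shape. First, each side is lifted to its roots: by \Cref{lem:tr:resolve}, $\subs{K}{\embed{C_i}}{\roots{K}{\embed{C_i}}}$ in $K=\embed{\G}^{\LOCK}$. By the commutation of resolution with translation, $\roots{K}{\embed{C_i}}=\embed{\roots{\G}{C_i}}$. Applying \rruleref{sep-sc} on both sides (with \rruleref{sep-symm}) reduces the goal to $\sep{K}{\embed{P_1}}{\embed{P_2}}$, where $P_i=\roots{\G}{C_i}$. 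Then \rruleref{sep-union} on both sides (again via \rruleref{sep-symm}), together with \rruleref{sep-empty} for empty sets, reduces this to one obligation per atom pair $\mu\,p\in\embed{P_1}$, $\nu\,q\in\embed{P_2}$: we must show $\sep{K}{\set{\mu\,p}}{\set{\nu\,q}}$.

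For each atom pair we split on whether the variables coincide. If $p\neq q$, the pair is payable by hypothesis, and \Cref{def:tr:payable} gives exactly the needed judgment. If $p=q$, \Cref{lem:tr:sepchar} applied to $\sep{\G}{C_1}{C_2}$ says both atoms are read-only: each is either $\RO\,p$, or $\epsilon\,p$ with $p<:\RO\in\G$. In this case we close with core \rruleref{sep-ro}, which needs two things for each singleton. Access-onliness holds because read-only atoms carry no $\CONSUME$ mark, and core access-only is defined via roots, which for a capture variable are the atom itself. Kind $\RO$ holds by \rruleref{k-ro} when the atom is $\RO\,p$, since $\set{\RO\,p}=\RO\set{p}$. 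When the atom is $\epsilon\,p$ with $p<:\RO\in\G$, the translated binder for $p$ pushed a lock whose $\Phi$ records $(\set{p}:\RO)$ (\Cref{app:tr:types,app:tr:ctx}), so \rruleref{k-imm} applies in $K$. This last step is just \Cref{lem:tr:kind} specialised to \ruleref{k-imm}.

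The step I expect to be the main obstacle is this last read-only case for $\epsilon\,p$ with $p<:\RO\in\G$. It requires that every $\RO$-bounded capture variable visible in $\G$ has its mode-entry lock present in $\embed{\G}^{\LOCK}$, and that no kill along the way has removed access to it. I would rely on the ambient-context convention, under which every capture variable is introduced by a translated binder, and on the frame description of \Cref{app:tr:ctx}, which pairs every $\RO$-bounded capture binder with its lock in the interleaved context. I would also have to confirm that \rruleref{k-imm} does not depend on accessibility, which holds since it only reads the lock. The remaining bookkeeping is routine: mode qualification commutes with resolution and translation (\Cref{lem:tr:mode}), and \rruleref{sep-union} combined with \rruleref{sep-symm} decomposes both sides.
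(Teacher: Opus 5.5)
Your proposal is correct and follows the paper's proof. Both avoid induction on the surface derivation: distinct-variable atom pairs are paid by the hypothesis, and shared-variable pairs by \rruleref{sep-ro}, using \Cref{lem:tr:sepchar} together with \rruleref{k-ro} or \rruleref{k-imm} on the binder's $\Phi$ entry; the atoms are then assembled with \rruleref{sep-union}/\rruleref{sep-symm} and lifted to the original sets by \rruleref{sep-sc} along \Cref{lem:tr:resolve}. You read the same derivation top-down rather than bottom-up, and you state explicitly the \rruleref{sep-empty} base case and the dependence on the $\RO$-bounded binder's lock frame, which the paper leaves implicit.
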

\begin{proof}
Write $P_i=\embed{\roots{\G}{C_i}}=\roots{\embed{\G}^{\LOCK}}{\embed{C_i}}$.
First derive $\sep{}{P_1}{P_2}$ atom by atom: a distinct-variable pair is
payable by hypothesis; a shared-variable pair is read-only on both sides by
\Cref{lem:tr:sepchar}, and \rruleref{sep-ro} separates the two singletons,
whose access-only premises hold because a read-only atom is not
$\CONSUME$-marked, and whose kinding premises hold by \rruleref{k-ro} for an
$\RO$-marked atom and by \rruleref{k-imm} for an $\RO$-bounded one, reading
the $\Phi$ entry of the binder's frame (\Cref{app:tr:sets}).
\rruleref{sep-union} and \rruleref{sep-symm} assemble the atoms into
$\sep{}{P_1}{P_2}$. Finally \rruleref{sep-sc} on both sides (through
\rruleref{sep-symm}) descends along $\subs{}{\embed{C_i}}{P_i}$
(\Cref{lem:tr:resolve}).
\end{proof}

\begin{corollary}[Disjointness preservation]\label{lem:tr:disj}
If $\disj{\G}{C_1}{C_2}$, $\consumable{\G}{C_1}$, and
$\consumable{\G}{C_2}$, then
$\disj{\embed{\G}^{\LOCK}}{\embed{C_1}}{\embed{C_2}}$.
\end{corollary}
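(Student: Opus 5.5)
The plan is to mirror the proof of \Cref{lem:tr:sep}, with disjointness in place of separation and consume authority in place of lock payment. Write $P_i=\embed{\roots{\G}{C_i}}=\roots{\embed{\G}^{\LOCK}}{\embed{C_i}}$, using commutation of resolution with translation. The core disjointness judgment drops both \rruleref{sep-ro} and \rruleref{sep-lock}. Its leaves are therefore only \rruleref{sep-empty} and \rruleref{sep-consumable}. We first derive $\disj{}{P_1}{P_2}$ atom by atom, and then descend to $\embed{C_1}$ and $\embed{C_2}$ by \rruleref{sep-sc}, through \rruleref{sep-symm}, along $\subs{}{\embed{C_i}}{P_i}$ from \Cref{lem:tr:resolve}. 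Both of these structural rules survive into the disjointness fragment.

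For the atom step, take $\mu\,p\in P_1$ and $\nu\,q\in P_2$. By the disjointness clause of \Cref{lem:tr:sepchar}, $p\neq q$, since the read-only alternative is unavailable for disjointness. By $\consumable{\G}{C_i}$, every capture variable of $\roots{\G}{C_i}$ is bound with mode $\CONSUME$ in $\G$. The context translation of \Cref{app:tr:ctx} maps consume modes to themselves. The unpack and consumer frames of $\embed{\G}^{\LOCK}$ bind their witnesses $\CONSUME$ as well. Hence both $p$ and $q$ carry $\CONSUME$ bindings in $\embed{\G}^{\LOCK}$, and \rruleref{sep-consumable} gives $\disj{}{\set{\mu\,p}}{\set{\nu\,q}}$ directly, without any lock. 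Then \rruleref{sep-union} and \rruleref{sep-symm} assemble these leaves into $\disj{}{P_1}{P_2}$. The case where one side is empty closes by \rruleref{sep-empty}.

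The main obstacle is making sure the $\CONSUME$ bindings are still present at the site, that is, that no kill in $\embed{\G}^{\LOCK}$ has turned one of these variables into $\KILLED$. Since $\embed{\G}^{\LOCK}$ by definition omits the kill operators, handled separately by \Cref{lem:tr:kill}, the bindings are exactly the translated source bindings, and the source predicate guarantees $\CONSUME$. If a site does sit under a kill, I would argue as in \Cref{def:tr:payable}: the kills touch only roots consumed earlier. The surface $\consumable{\G}{C_i}$ at that site already reflects those consumptions, because \ruleref{con-root} requires a live $\CONSUME$ binding. So the variables in question are untouched by the kill.
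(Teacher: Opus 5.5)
Your argument is correct and is essentially the paper's proof: the disjointness clause of \Cref{lem:tr:sepchar} makes every root pair distinct, consumability (which the paper routes through \Cref{lem:tr:kind}, and you read off the mode-preserving context translation directly) makes both variables $\CONSUME$-bound in $\embed{\G}^{\LOCK}$, and \rruleref{sep-consumable} with \rruleref{sep-union}, \rruleref{sep-symm}, and \rruleref{sep-sc} along \Cref{lem:tr:resolve} finishes inside the disjointness fragment. Your closing fallback for sites under a kill is unnecessary, since $\embed{\G}^{\LOCK}$ excludes kills by definition, and its justification is also wrong: the surface calculus never kills a binding, so \ruleref{con-root} certifies no liveness, and what actually keeps these roots out of a core kill is the sequencing premise, through \Cref{lem:tr:sepchar} and \Cref{lem:tr:kill}.
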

\begin{proof}
By \Cref{lem:tr:sepchar} (disjointness clause) the two root sets share no
variable, and by \Cref{lem:tr:kind} all their variables are
$\CONSUME$-bound in $\embed{\G}^{\LOCK}$. \rruleref{sep-consumable}
separates each pair at its stated modes, staying inside the disjointness
fragment; \rruleref{sep-union}, \rruleref{sep-symm}, and \rruleref{sep-sc}
along \Cref{lem:tr:resolve} assemble and descend as in \Cref{lem:tr:sep}.
No lock and no read-only leaf is consulted.
\end{proof}

\paragraph{L8: Sequential composition.}
\begin{lemma}[Sequential composition preservation]\label{lem:tr:seq}
If $\seqcomp{\G}{C_1}{C_2}$ and every distinct-variable pair of a
$\CONSUME$-marked atom of $\embed{\roots{\G}{C_1}}$ with an atom of
$\embed{\roots{\G}{C_2}}$ is payable in $\embed{\G}^{\LOCK}$, then
$\seqcomp{\embed{\G}^{\LOCK}}{\embed{C_1}}{\embed{C_2}}$.
\end{lemma}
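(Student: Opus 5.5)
The plan is to mirror the proof of \Cref{lem:tr:sep}: resolve both sides to their roots, settle the obligation at the root level, and descend along \Cref{lem:tr:resolve}. Write $P_i=\embed{\roots{\G}{C_i}}=\roots{\embed{\G}^{\LOCK}}{\embed{C_i}}$, using that resolution commutes with translation. Since core sequential composition is closed downward on its left argument by \rruleref{seq-sc}, and $\subs{\embed{\G}^{\LOCK}}{\embed{C_1}}{P_1}$ holds by \Cref{lem:tr:resolve}, it suffices to derive $\seqcomp{\embed{\G}^{\LOCK}}{P_1}{\embed{C_2}}$.

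Next, split $P_1$ into its access part and its consumed part. Let $P_1^{a}$ be the atoms of $P_1$ not marked $\CONSUME$, and let $P_1\vert_{\CONSUME}$ be the marked ones. Then $P_1=P_1^{a}\cup P_1\vert_{\CONSUME}$, and \rruleref{seq-union} lets us treat the two parts separately.

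\begin{itemize}
\item The access part is handled by \rruleref{seq-access-only}. Its premise $\accessonly{}{P_1^{a}}$ holds by the core's root-based definition, because $P_1^{a}$ is a root set with no $\CONSUME$ mark.
\item The consumed part is handled by \rruleref{seq-sep}. This reduces it to $\sep{}{P_1\vert_{\CONSUME}}{\embed{C_2}}$, which we obtain from $\sep{}{P_1\vert_{\CONSUME}}{P_2}$ by \rruleref{sep-symm}, \rruleref{sep-sc} along \Cref{lem:tr:resolve}, and \rruleref{sep-symm} again.
\end{itemize}

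To derive $\sep{}{P_1\vert_{\CONSUME}}{P_2}$, we decompose with \rruleref{sep-union} and \rruleref{sep-symm} into singleton pairs $\set{\CONSUME\,p}$, $\set{\nu\,q}$. The sequential clause of \Cref{lem:tr:sepchar} gives $p\neq q$ for every such pair: a consumed root of $C_1$ is not a root of $C_2$. Every pair therefore involves distinct variables, and the lemma's hypothesis says exactly that each such pair is payable, that is, the singleton separation is derivable. So, unlike \Cref{lem:tr:sep}, we never need the read-only leaf \rruleref{sep-ro}.

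The main obstacle is \Cref{lem:tr:sepchar}'s sequential clause. In particular, its \ruleref{seq-sep} case must rule out a $\CONSUME$ atom sharing a variable with a root of $C_2$ through the read-only alternative. This works because read-only atoms are never $\CONSUME$-marked, and that lemma already handles it, so here we only cite it. A secondary check is the empty case: if $P_1\vert_{\CONSUME}$ is empty, \rruleref{sep-empty} closes that leg, or the access leg alone suffices.
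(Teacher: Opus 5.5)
Your proposal is correct and is essentially the paper's own reconstruction argument: lift $\embed{C_1}$ to its roots by \rruleref{seq-sc} and \Cref{lem:tr:resolve}, split with \rruleref{seq-union}, close the non-consumed atoms by \rruleref{seq-access-only}, and close the consumed atoms by \rruleref{seq-sep}, after the sequencing clause of \Cref{lem:tr:sepchar} makes every pair distinct-variable and hence payable by hypothesis. The only difference is that you group the atoms into an access block and a consume block, where the paper closes them one at a time, and this is cosmetic.
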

\begin{proof}
By reconstruction, not by induction on the source derivation.
\rruleref{seq-sc} traces $\embed{C_1}$ up to $P_1=\embed{\roots{\G}{C_1}}$
(\Cref{lem:tr:resolve}), and \rruleref{seq-union} splits $P_1$ into atoms.
An atom without $\CONSUME$ mark is access-only, so \rruleref{seq-access-only}
closes it. A $\CONSUME$-marked atom $\CONSUME\,p$ is, by
\Cref{lem:tr:sepchar}, over a variable that is not a variable of
$\roots{\G}{C_2}$, so all its pairs with the atoms of $P_2$ are
distinct-variable and payable by hypothesis; assembling them by
\rruleref{sep-union}/\rruleref{sep-symm} and descending to $\embed{C_2}$ by
\rruleref{sep-sc} gives $\sep{}{\set{\CONSUME\,p}}{\embed{C_2}}$, and
\rruleref{seq-sep} closes the atom.
\end{proof}

\paragraph{L9: The ambient invariant.}
The payability hypotheses of \Cref{lem:tr:sep,lem:tr:seq} are discharged
uniformly: at every site of the main induction, the roots of the site's own
charge and witnesses are pairwise payable. This is the translated image of
\ruleref{sep-root}'s fiat, cut down from all pairs of roots in scope to the
pairs the site can actually mention.

A \emph{site} of a canonical derivation is a subderivation occurrence; its
\emph{site root set} $\rho(s)$ is $\roots{\G_s}{C_s}$ for its context
$\G_s$ and use set $C_s$, together with $\roots{\G_s}{D}$ for the witnesses
$D$ of its own \ruleref{app} and \ruleref{consume-app} premises
(\ruleref{capp} witnesses are already charged).

\begin{lemma}[Ambient invariant]\label{lem:tr:invariant}
At every site $s$ of a canonical source derivation under the ambient
convention, $\embed{\rho(s)}$ is pairwise payable in the site's translated
context $\embed{\G_s}^{\LOCK}$.
\end{lemma}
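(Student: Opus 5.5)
We would prove \Cref{lem:tr:invariant} by induction on the canonical derivation, top-down: from the root site to its premises. The claim to carry is a slightly stronger one. At every site $s$, every capture variable $p$ occurring in $\embed{\rho(s)}$ is either (a) a consumable root bound in $\embed{\G_s}^{\LOCK}$, or (b) covered by a \emph{lock entry}: some frame $\LOCK[\Psi,\Phi]$ of $\embed{\G_s}^{\LOCK}$ has an entry $\rentries{\cdot}$ over $p$ at a mode covering every atom of $p$ in $\embed{\rho(s)}$. Moreover, any two distinct variables of type (b) are entries of the \emph{same} lock, or one of them is consumable.

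Given this stronger form, payability of a distinct-variable pair $\mu\,p$, $\nu\,q$ follows by cases on (a) and (b):
\begin{itemize}
\item Both are of type (a): \rruleref{sep-consumable}.
\item Both are of type (b): \rruleref{sep-lock} reads the shared lock, and \rruleref{sep-sc} with \rruleref{sc-mode} descends to the stated modes. This uses the mode-weakening closure of \Cref{def:tr:payable}.
\item Mixed case: we expect this to be paid by the freshness lock of \rruleref{unpack} or the consumer's own parameter entry, as described next.
\end{itemize}

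The base case is the ambient context. By the ambient convention it declares only term and type bindings, so every root that a site can mention must be introduced by a translated binder inside the term. The inductive step inspects each binder rule.

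\ruleref{abs} and \ruleref{cabs} push $\LOCK[\Psi_A,\Phi_A]$ with $\Psi_A=\set{\cstar},\rentries{\embed{R_A}}$. The body charge is $C\cup\set{x}$ (plus $D$ in the consume case). Its roots are $R_A\cup\set{\cstar}$, using \Cref{lem:tr:rootmono} and the definition of $R_A$ via the declared domain capture $C_T$. So every root of the body site is an entry of this one lock, which gives (b) uniformly.

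\ruleref{tabs} is the same, without the parameter entry.

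A $\FRESH$-carrying \ruleref{let} binds consumable $c_i$, which gives (a). Its freshness lock $\Psi_w=\set{c_1,\ldots,c_n},C_2$ pays the mixed pairs between the new roots and the continuation's remaining roots. These lie in $C_2$, which is the continuation charge minus the $c_i$.

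For non-binder rules (\ruleref{app}, \ruleref{consume-app}, \ruleref{par}, \ruleref{if}, \ruleref{let}, \ruleref{sub}), each premise site's root set must be covered by the conclusion site's root set, so the invariant is inherited:
\begin{itemize}
\item Use sets are unions: the premises of \ruleref{par}, \ruleref{if} and \ruleref{let} charge subsets of the conclusion.
\item \ruleref{sub} only enlarges the charge downward-to-upward: the premise charge is below the conclusion's, and \Cref{lem:tr:rootmono} gives covering.
\item Witnesses of \ruleref{app} are tight (\Cref{def:tr:canon}): $\roots{\G}{D}\sqsubseteq\roots{\G}{C_y}$, and $y$ is charged.
\end{itemize}
Kills are harmless for the surviving variables by \Cref{lem:tr:kill}. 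A killed consumable root is excluded from later charges by \Cref{lem:tr:sepchar} on the \ruleref{let} sequencing premise.

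The main obstacle is that a variable's roots may come from \emph{different} enclosing binders, e.g.\ an outer lambda's $\cstar_1$ and an inner lambda's $\cstar_2$. Then no single lock pairs them, which is why the invariant above is phrased with ``the same lock''. To handle this, we would show that the inner arrow's $R_A$ already contains every outer root the inner body can reach. The reason is that the inner abstraction's annotation $C$ must cover the body's outer captures, by well-formedness and \ruleref{abs}. Hence the innermost lock alone lists all relevant non-consumable roots, and outer locks are never needed for cross-binder pairs.

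If this fails, for instance through capture variables reached only via a \ruleref{capp} witness, we would fall back on the fact that \ruleref{capp} charges $D$. The witness's roots then enter the enclosing annotation and hence the next enclosing lock.
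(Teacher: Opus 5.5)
Your skeleton is the paper's. It is a top-down induction along the path from the root: an empty base case from the ambient convention, and inheritance through premises by \Cref{lem:tr:rootmono}, with tight \ruleref{app} witnesses and charged \ruleref{capp} and \ruleref{consume-app} witnesses. Each binder re-establishes the invariant wholesale by its own lock, since the body's roots are exactly $R_A\cup\set{\cstar}$. At a $\FRESH$-carrying let, witness pairs are paid by \rruleref{sep-consumable} and witness-versus-older pairs by the ownership lock.

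\textbf{The strengthened invariant does not imply the lemma.} Its disjunct ``or one of them is consumable'' produces no derivation, because \rruleref{sep-consumable} needs \emph{both} variables $\CONSUME$-bound. Take an unpack witness $c_i$ and a root $q$ that lies only in an enclosing binder's lock. Once inherited below the unpack's continuation site, your invariant no longer says how that pair is paid. Your case analysis accordingly leaves exactly this case as ``we expect''. The paper avoids the issue by not strengthening at all: it carries pairwise payability itself. By \Cref{def:tr:payable}, payability is closed under context extension, mode weakening, and kills avoiding the pair's variables. So a pair paid where it first appears stays paid at every deeper site, with no record of which lock paid it. It is first paid either by \rruleref{sep-lock} on a binder's lock, or by \rruleref{sep-consumable} or the ownership lock at an unpack.

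\textbf{The ownership-lock step fails if the lock is read literally as $\Psi_w=\set{c_1,\ldots,c_n}\,,\ C_2$.} The atoms to pay are root atoms: a witness atom, possibly $\CONSUME\,c_i$ (the continuation is charged both $c_i$ and $\CONSUME\,c_i$), against $\nu\,q$ with $q$ a variable of $\roots{\G}{C_2}$. Paying by \rruleref{sep-lock} followed by \rruleref{sep-sc} needs two subcapturings, and neither holds in general:
\begin{itemize}
\item $\subs{}{\set{\nu\,q}}{C_2}$ fails whenever $q$ is reached only through a term variable of $C_2$, because subcapturing resolves variables upward, never downward.
\item $\set{\CONSUME\,c_i}$ below the witness entry fails because $\CONSUME$ is incomparable to the mutabilities.
\end{itemize}
The paper's translation instead instantiates the continuation slot at the root closure $\widehat{C_2}=\embed{\roots{\G}{C_2}}$, widening the continuation's charge by \rruleref{sub}. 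It also stores the witness entry at both modes. Your unpack step needs the same.

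Your ``main obstacle'' paragraph is already settled by your own binder step, and the \ruleref{capp} fallback is unnecessary, since charged witnesses are handled in the descent step.
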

\begin{proof}
By induction on the path from the derivation's root to $s$, using the
closure of payability under context extension and mode weakening
(\Cref{def:tr:payable}).

\emph{Root.} The ambient context declares no capture variables, so every
resolution over it is empty and $\rho$ is empty.

\emph{Descending within a rule instance.} Every premise's use set is a
subset of the conclusion's, equal to it, or below it in subcapturing
(\ruleref{sub}), so \Cref{lem:tr:rootmono} covers the premise's roots by
the conclusion's at weaker modes, and mode closure transfers payability.
Witnesses stay inside $\rho$: an \ruleref{app} witness is tight,
$\roots{\G}{D}\sqsubseteq\roots{\G}{C_y}=\roots{\G}{\set{y}}$
(\Cref{def:tr:canon}), and \ruleref{consume-app} and \ruleref{capp} charge
their witnesses outright.

\emph{Crossing an abstraction binder.} Here the invariant is
re-established wholesale by the binder's own lock, with no help from the
enclosing site. For a term binder with arrow $A$, the body's use set is
$C\cup\set{x}$ (plus $\set{\cstar,\CONSUME\,\cstar}$ in the consume case),
and
\[
\roots{\G'}{C\cup\set{x}}
= \roots{\G}{\mfree{C}} \cup \set{\cstar\text{-atoms}} \cup \roots{\G}{\mfree{C_T}}
= R_A \cup \set{\cstar\text{-atoms}},
\]
since $\set{x}$ resolves through its declared
$\embed{\anyinst{T}{\set{\cstar}}}$. Every atom of the body's site root set
therefore lies, at its occurring mode, in an entry of $\Psi_A$: the
$R_A$-atoms in their root entries, the $\cstar$-atoms in the parameter
entry, which carries both modes exactly when the consume case charges both.
A distinct-variable pair inhabits two distinct entries, so
\rruleref{sep-lock} on the frame's $\LOCK[\Psi_A,\Phi_A]$
(\Cref{app:tr:ctx}) followed by \rruleref{sep-sc} on both sides
(\rruleref{sc-elem} into the entries) pays it. Capture and type binders are
the same computation with $\roots{\G'}{C\cup\set{c}}=R_A\cup\set{c}$ and
$\roots{\G'}{C}=R_A$ respectively.

\emph{Crossing into a continuation.} Let the site be the continuation of a
\ruleref{let} with head charge $C_1$ and continuation charge $C_2$. Pairs
within $\roots{\G}{C_2}$ are inherited: $C_2\subseteq C_1\cup C_2$, and the
paying derivations extend to the continuation's context. They also survive
the kill that the translated \rruleref{let} or \rruleref{unpack} applies:
the killed variables are the $\CONSUME$-marked roots of the head's
translated charge, which \Cref{lem:tr:sepchar} (sequencing clause) on the
source premise $\seqcomp{\G}{C_1}{C_2}$ keeps out of the variables of
$\roots{\G}{C_2}$, and a payment touches no authority beyond its two
variables (\Cref{def:tr:payable}). For a $\FRESH$-carrying let, the
continuation's roots additionally contain the witness atoms
$c_i,\CONSUME\,c_i$. A pair of witness atoms over distinct variables is
paid by \rruleref{sep-consumable}, whose modes are unconstrained. A pair of
a witness atom with an atom $\nu\,q$ of $\roots{\G}{C_2}$ is paid by the
ownership lock $\LOCK[\Psi_w,\Phi_w]$ of the translated \rruleref{unpack}:
the translation instantiates the rule's continuation slot at the root
closure of the continuation's use set (\Cref{app:tr:mainproof}), so
$\Psi_w$'s two entries are the mode-closed witness tuple and a root-level
set containing $\nu\,q$; \rruleref{sep-lock} separates the entries and
\rruleref{sep-sc} descends to the two atoms.
\end{proof}

\begin{corollary}[Satisfaction discharge]\label{cor:tr:sat}
At each translated elimination (\Cref{app:tr:terms}), the satisfaction
obligation $\sat{\embed{\G}^{\LOCK}}{[\sigma\Psi_A,\sigma\Phi_A]}$ of the
instantiated lock is derivable at the site.
\end{corollary}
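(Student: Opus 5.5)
The plan is to unfold \rruleref{sat} for the instantiated lock $[\sigma\Psi_A,\sigma\Phi_A]$ and discharge its two kinds of premises separately. The pairwise-separation premises cover pairs of distinct entries of $\sigma\Psi_A$. The kinding premises cover the entries of $\sigma\Phi_A$. I would split by the elimination form (\ruleref{app}, \ruleref{consume-app}, \ruleref{capp}, \ruleref{tapp}), since $\sigma$ and the shape of $\Psi_A$ differ in each case. In every case, by \Cref{lem:tr:subst}, the instantiated entries are of two kinds. The first kind is the root entries $\rentries{\embed{R_A}}$, which are unchanged because term substitution is the identity on entries and the substituted capture variable is the bound parameter, not one of them. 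The second kind is the parameter entry, which becomes $\embed{D}$, qualified to $\set{\embed{D},\CONSUME\,\embed{D}}$ in the consume case; a type application has no parameter entry.

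For pairs of two root entries I would use \Cref{lem:tr:invariant} at the elimination site. By canonicity (declared subjects), the callee's annotation $C$ is the declared one, and the site charges $\set{x}$, whose roots are $\roots{\G}{C}$. Domain honesty puts the roots of $\mfree{C_T}$ under those of the argument's $C_y$, and tightness puts those under the site's charge or witness. So $R_A$ is root-covered by the site root set $\rho(s)$, and \Cref{lem:tr:invariant} makes $\embed{R_A}$ pairwise payable. Since distinct root entries are over distinct variables, each such pair is a payable distinct-variable pair. I would assemble the atoms of each entry by \rruleref{sep-union}/\rruleref{sep-symm} and use mode closure (\Cref{def:tr:payable}) to pass from the covering modes down to the entry modes.

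For pairs of the parameter entry with a root entry, I would invoke \Cref{lem:tr:sep} on the source separation premise. For \ruleref{app} and \ruleref{capp} that premise is $\sep{\G}{D}{|A|}$, restricted via \ruleref{sep-sc} to the shallow part $\roots{\G}{\mfree{C\cup C_T}}\subseteq|A|$ after resolving upward with \Cref{lem:tr:resolve}. The payability hypothesis of \Cref{lem:tr:sep} again comes from \Cref{lem:tr:invariant}, since $\roots{\G}{D}$ lies in $\rho(s)$: through tightness for \ruleref{app}, and directly for \ruleref{capp} and \ruleref{consume-app}, which charge $D$. Shared-variable atoms are handled inside \Cref{lem:tr:sep} by \Cref{lem:tr:sepchar}. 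For \ruleref{consume-app} the surface has no explicit separation from $|A|$. Instead I would use $\seqcomp{\G}{\CONSUME\,D}{\set{x}}$ and \Cref{lem:tr:sepchar}'s sequencing clause, which keeps the consumed roots of $D$ out of $\roots{\G}{\set{x}}\supseteq R_A$. The remaining pairs are then distinct-variable and payable, with the $\CONSUME\,\embed{D}$ component paid the same way.

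The mode context is nonempty only for a capture function with $B=\RO$. There $\sigma\Phi_A=(\embed{D}:\RO)$. The source \ruleref{capp} premise instantiates $c<:\RO$, so $\subs{\G}{D}{\RO}$ holds, which by inversion of subbounding is \ruleref{sb-kind} (or \ruleref{sb-mode}, vacuous here) giving $\typs{\G}{D}{\RO}$; \Cref{lem:tr:kind} then transfers this to the core. I expect the main obstacle to be the coverage step in the \ruleref{app} case. There one must show that every root of the domain part $\mfree{C_T}$ of $R_A$ really lies in the site root set, so that the invariant applies, even though the site charges only $\set{x,y}$. I would first try to derive it from domain honesty together with \ruleref{sc-var} resolution of $y$. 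If that fails, I would strengthen the site root set to include the callee's declared domain roots.
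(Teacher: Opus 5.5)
Outside \ruleref{consume-app} your plan is essentially the paper's. Root-entry pairs are paid by \Cref{lem:tr:invariant}. The $\mfree{C}$-part of $R_A$ is covered by $\roots{\G}{\set{x}}$, and the $\mfree{C_T}$-part is covered by $\roots{\G}{C_y}=\roots{\G}{\set{y}}$ through domain honesty. So the first route in your last paragraph already works, because $y$ is charged; tightness plays no role there. The mode leg is the image of the \ruleref{sb-kind} kinding.

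One technical correction for the witness pairs. You cannot apply \Cref{lem:tr:sep} to a surface separation and then reach the root entries by \rruleref{sep-sc}: those entries lie \emph{above} $\embed{\mfree{C\cup C_T}}$ in subcapturing, and \rruleref{sep-sc} only descends. Instead, run the atom-level argument inside \Cref{lem:tr:sep} against $\roots{\G}{|A|}\supseteq R_A$. Distinct variables are payable by \Cref{lem:tr:invariant}; a shared variable is read-only on both sides by \Cref{lem:tr:sepchar}, so \rruleref{sep-ro} applies. This is what the paper does inline.

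The genuine gap is the witness case of \ruleref{consume-app}. You are right that this rule has no premise $\sep{\G}{D}{|A|}$. The paper's proof cites that premise for the shared-variable witness pairs uniformly, so it offers no repair. But your substitute rests on $\roots{\G}{\set{x}}\supseteq R_A$, which is false. We have $\roots{\G}{\set{x}}=\roots{\G}{C}$, whereas $R_A=\roots{\G}{\mfree{C\cup C_T}}$ also contains the domain roots, and $\seqcomp{\G}{\CONSUME\,D}{\set{x}}$ says nothing about those. If a root of $D$ is also a root of $\mfree{C_T}$, the pair of the $\CONSUME$-marked witness atom with that root entry is paid by none of the available rules:
\begin{itemize}
\item not by \rruleref{sep-ro}, since the witness atom is not access-only;
\item not by \rruleref{sep-consumable}, since the variable is the same;
\item not by any lock in scope.
\end{itemize}

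This does occur. Let $a$ be bound by a $\FRESH$-carrying \ruleref{let}, giving $\CONSUME\,c_a$ and $a:\REF[\TBOOL]\capt\set{c_a}$. Let $f:(\forall(\CONSUME\,z:\REF[\TBOOL]\capt\set{\ANY,\RO\,a})\,\TUNIT)\capt\set{}$ with body $()$, and translate $f\,a$. The argument premise forces the roots of $D$ to contain $c_a$ at mode $\epsilon$; say $D=\set{c_a}$. All the source-side conditions hold: access-onliness, consumability, $\seqcomp{\G}{\CONSUME\,D}{\set{f}}$ (as $\roots{\G}{\set{f}}=\set{}$), tightness, domain honesty, and root-directedness. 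Yet $R_A=\set{\RO\,c_a}$, and the unlock demands $\sep{\embed{\G}^{\LOCK}}{\set{c_a,\CONSUME\,c_a}}{\set{\RO\,c_a}}$, which is not derivable. The only lock mentioning $c_a$ is the ownership lock of $a$'s unpack, whose other entry lies over the outer context.

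So this case cannot be closed by a better argument from the available premises; it needs an extra hypothesis. One option is a premise $\sep{\G}{D}{\mfree{C_T}}$ (or against the shallow spine) on \ruleref{consume-app}. Another is a strengthening of domain honesty or canonicity that keeps the roots of $D$ off those of $\mfree{C_T}$.
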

\begin{proof}
By canonicity (declared subjects) the lock is the one manufactured at the
callee's declared type, with entries the parameter entry, instantiated to
$\embed{D}$ (at both modes for a consumer), and the root entries of $R_A$.
The obligation is pairwise separation of the entries, plus the $\Phi$ legs.

\emph{Root-entry pairs.} Two root entries carry distinct variables $p\neq q$
of $R_A$ by construction. Both belong to $\embed{\rho(s)}$: the
$\mfree{C}$-part of $R_A$ equals $\roots{\G}{\set{x}}$ for the charged
subject (declared annotation), and the $\mfree{C_T}$-part is covered by
$\roots{\G}{C_y}=\roots{\G}{\set{y}}$ by domain honesty
(\Cref{app:tr:conv}). \Cref{lem:tr:invariant} pays each atom pair, and
\rruleref{sep-union} with \rruleref{sep-sc} assembles atoms into the
entries. For \ruleref{tapp} these are the only pairs, and no source premise
is consumed, matching the rule, which has none.

\emph{Witness pairs.} A pair of the instantiated parameter entry
$\mu\,\embed{D}$ with a root entry over $q$ resolves upward
(\Cref{lem:tr:resolve}) to atom pairs $(\mu'\,d,\nu\,q)$ with
$d$ a variable of $\roots{\G}{D}$. If $d\neq q$, the pair is payable:
$d$ lies in $\rho(s)$ by tightness (\ruleref{app}) or because the witness
is charged (\ruleref{capp}, \ruleref{consume-app}), and $q$ as above. If
$d=q$, the source separation premise $\sep{\G}{D}{|A|}$, whose right side
contains $\mfree{C\cup C_T}$ and hence resolves onto $R_A$, makes both
atoms read-only by \Cref{lem:tr:sepchar}, and \rruleref{sep-ro} pays the
singleton pair as in \Cref{lem:tr:sep}. \rruleref{sep-union},
\rruleref{sep-symm}, and \rruleref{sep-sc} assemble the atoms and descend
to the entries.

\emph{Mode legs.} $\Phi_A$ is empty except at an $\RO$-bounded capture
binder, where the obligation after $[c:=\embed{D}]$ is
$\typs{}{\embed{D}}{\RO}$, the translated image through \Cref{lem:tr:kind}
of the kinding $\typs{\G}{D}{\RO}$ that \ruleref{sb-kind} established when
the source narrowed the bound to $D$.
\end{proof}

\begin{remark}[Ownership is paid, not decreed]\label{rem:tr:ownership}
The continuation case of \Cref{lem:tr:invariant} rests on the ownership
lock that \rruleref{unpack} manufactures; this remark records why the rule
carries it and why its plain sequencing premise suffices to pay for it.
Core \rruleref{sep-consumable} separates two \emph{distinct consumable}
roots, but the continuation also needs separations between an owned root
and \emph{older} bindings: minimally, with $T_a$, $\textsf{rp}$, and
$\textsf{alloc}$ as in \Cref{app:tr:example}, an access-only abstraction
root against a fresh result bound later in the same body,
$f=\lambda(op:T_a).\ \LET fresh=\textsf{alloc}\,()\IN \textsf{rp}\,op\,fresh$, whose source
derivation uses $\sep{\G}{\set{op}}{\set{fresh}}$. No fiat ownership rule
can supply these: an axiom separating a $\top$-bounded consume root from
everything older conflates the root's binding time with its footprint's
\emph{provenance}: an unpack witness is young as a \emph{binding}, but its
footprint is the consumed evidence's, inherited verbatim, and an older
binding may alias it. With $\CONSUME\,c_a<:\top,\ a:\REF[T]\capt\set{c_a}$
in scope and $c_a$ live,
\[
\lambda[c_i<:\top]\,\lambda(g:\REF[T]\capt\set{c_i})\
\LET \<c_j,x\> = \<\set{c_a},a\>\IN t
\]
satisfies both natural guards ($c_j$ is $\top$-bounded, $c_i$ is older), yet
\rruleref{capp} may instantiate $c_i:=\set{c_a}$, an access-only view of the
still-live source that the kill at the unpack does not remove; the two
``separate'' sides share $c_a$'s cell. The mechanization refuted exactly
this rule and its ownership-at-birth invariant; freshness-at-creation is a
fact about \rruleref{alloc}, not about unpacking.

Ownership is instead a \emph{paid certificate}, in the same way the
function locks of \Cref{app:tr:types} pay for their parameter separations:
\rruleref{unpack} manufactures the lock
$\Psi_w$ from what its own premises prove. The
payment is the footprint model's \emph{witness-confinement bound}: every
location reachable from a pack's witness evidence is either covered
\emph{at consume mode} by the producing computation's charge $C_1$, or
allocated during that computation. A fresh location cannot lie in the
continuation charge's footprint, live before the computation runs; and a
$C_1$-consume-covered location shared with $C_2$'s footprint is a
consume-then-use conflict, which is precisely what $\seqcomp{\G}{C_1}{C_2}$
denies: its access-only leg denotes a consume-free charge, forcing every
witness fresh, and its separating leg denies the overlap outright. In
particular the plain sequencing premise suffices; a strengthened separation
premise, which an earlier revision adopted after refuting a discharge of the
lock through the premise's separation semantics alone (a fact about that
discharge route, not about the rule), is unnecessary, and would forbid a
continuation from re-touching an access-shared producer, e.g.\ calling the
same existential-returning function twice. Two aspects of how the lock is
\emph{instantiated} matter to \Cref{lem:tr:invariant}. The witness entry is
stored at both access modes: the confinement bound is a fact about the
witnesses' locations, indifferent to the mode at which the continuation
later touches them, and the $\CONSUME$-marked reading is what pays the
continuation's own consumption of a witness against its remaining charge.
The continuation entry is stored in root-resolved form, chosen by the
translation when it instantiates the rule; a \emph{rule-level} root
enrichment would instead strictly enlarge the recorded footprint and demand
a genuinely stronger premise. All of this is mechanized: the refutations,
the lock-manufacturing \rruleref{unpack} as the sole unpack rule of the
development, and the headline theorems (the fundamental theorem, adequacy,
immutability, standardization, and confluence), which check with no axioms
beyond propositional extensionality, choice, and quotient soundness.
\end{remark}

\paragraph{L10: Kill transport.}
The core \rruleref{let} and \rruleref{unpack} type their continuation in the
killed context $\G\ominus C_1$, whereas the induction produces the
translated continuation in the unkilled $\embed{\G}^{\LOCK}$. The bridge is
that translated derivations consult consume authority only at charged
roots, which sequencing keeps live.

\begin{definition}[Authority robustness]\label{def:tr:robust}
A \corecapybara{} derivation is \emph{authority-robust for} a set $X$ of
capture variables when every leaf that consults a binding's consume
authority, an $\accessible{}{\cdot}$ premise, a $\consumable{}{\cdot}$
premise, or a \rruleref{sep-consumable} instance, mentions only capture
variables that are in $X$ or bound within the derivation itself.
(\rruleref{sc-cvar} also reads a binding, but applies only to access-only
bindings, which no kill targets, so it needs no tracking; \rruleref{k-imm}
and \rruleref{sep-lock} read locks, not authority.)
\end{definition}

\begin{lemma}[Kill transport]\label{lem:tr:kill}
Let $\typ{C}{\G}{t}{E}$ in \corecapybara{} by a derivation authority-robust
for $X$, and let no variable killed by $\G\ominus C_1$ belong to $X$. Then
$\typ{C}{\G\ominus C_1}{t}{E}$, by a derivation authority-robust for the
same $X$; and likewise for every auxiliary judgment.
\end{lemma}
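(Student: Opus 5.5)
The plan is to induct on the given derivation of $\typ{C}{\G}{t}{E}$, proving simultaneously the analogous statement for every auxiliary judgment: subcapturing, kinding, subbounding, separation, disjointness, sequencing, satisfaction, and subtyping. The invariant carried through the induction is that the current context is some extension $\G,\Delta$ of the original. Here $\Delta$ collects binders, frames and locks pushed inside the derivation, together with any kills those inner rules apply. The kill $\ominus C_1$ acts only on bindings of $\G$, because $\roots{\G}{C_1}$ mentions only variables of $\G$. So at an inner node we must show that replacing $\G,\Delta$ by $(\G\ominus C_1),\Delta$ preserves the rule instance. Since $\ominus$ only changes consume modes, from $\CONSUME$ or $\epsilon$ to $\KILLED$, and never touches bounds, type bindings, term bindings or locks, the first step is to classify each leaf rule by whether it reads a consume mode.

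Most leaves are insensitive to the change. These are \rruleref{sc-elem}, \rruleref{sc-var}, \rruleref{sc-mode}, \rruleref{sc-ro}, both mono rules, \rruleref{k-ro}, \rruleref{k-rw}, \rruleref{k-imm}, \rruleref{sep-lock}, \rruleref{sep-empty}, \rruleref{sb-top}, \rruleref{tvar}, the constants, and the access-only predicate, which inspects only the marks in roots. The roots function itself is unaffected because term bindings are unchanged. All of these transfer verbatim, with the induction hypothesis applied to subderivations, which stay robust for the same $X$ since robustness is hereditary. \rruleref{sc-cvar} reads a binding, but only an $\epsilon$-binding. One subtlety needs checking: the kill definition maps any $\alpha$, so it could turn an $\epsilon$-binding into $\KILLED$. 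It only does so for variables marked $\CONSUME$ in $\roots{\G}{C_1}$, however. I would argue, via the access-only premise on capture instantiations and the fact that consumed roots in well-formed derivations arise only through consumable authority, that such variables are $\CONSUME$-bound. If that fails, the fallback is to extend the robustness definition to list \rruleref{sc-cvar} as well, which the remark in Definition~\ref{def:tr:robust} suggests is harmless.

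The authority-consulting leaves are $\accessible{}{\cdot}$, $\consumable{}{\cdot}$, and \rruleref{sep-consumable}. By robustness, each mentions only variables in $X$ or variables bound inside the derivation. Variables in $X$ are not killed, by hypothesis. Variables bound inside the derivation live in $\Delta$ and are untouched by $\ominus C_1$, apart from kills that inner rules themselves apply. Those inner kills are applied identically in both the original and the transported derivation. This needs a commutation fact, $((\G\ominus C_1),\Delta)\ominus C' = ((\G,\Delta)\ominus C')$ restricted appropriately, since kill is a pointwise idempotent update and two kills commute. So the binding modes these leaves consult coincide in the two contexts, and the leaves still hold.

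The main obstacle is the inner kill-applying rules \rruleref{let} and \rruleref{unpack}. There the induction hypothesis must be applied to a continuation typed in $((\G,\Delta)\ominus C'),x:T$. This is exactly why the invariant has to be phrased as ``$\G$ extended by a $\Delta$ that may itself contain kills,'' and why kill commutation is required. I would first prove that $\ominus$ is a monotone, commutative, idempotent update on modes, and that roots are invariant under it. The remaining binder cases (\rruleref{abs}, \rruleref{consumer}, \rruleref{lock}, \rruleref{cfun}, \rruleref{boxed}, \rruleref{exists}) then go through by the induction hypothesis with $\Delta$ extended, and the robustness conclusion follows because every leaf of the new derivation is the image of a leaf of the old one.
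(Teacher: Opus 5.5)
Your proposal is correct and follows the paper's proof. Both argue by induction on the derivation, simultaneously for the auxiliary judgments. Since $\ominus$ rewrites only consume authorities, every authority-free rule re-applies verbatim, and the accessible, consumable, and \rruleref{sep-consumable} leaves survive because robustness confines them to $X$ or derivation-local variables, none of which is killed. Your $\G,\Delta$ invariant and kill commutation spell out what the paper leaves implicit for nested \rruleref{let}/\rruleref{unpack}.

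Your \rruleref{sc-cvar} worry is justified. The formal $\ominus$ kills any binding whose root is $\CONSUME$-marked, including an access-only one, and the paper merely asserts that this never happens. Your first justification is unavailable because $C_1$ is arbitrary in the lemma. So either use your fallback, or assume the $\CONSUME$-marked roots of $C_1$ are $\CONSUME$-bound, which the main proof's let case provides via its consumed-charges fact.
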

\begin{proof}
By induction on the derivation. The kill rewrites consume authorities and
nothing else: every payload, every lock entry, and every type is unchanged,
so each rule that does not consult authority re-applies verbatim on the
inductive hypotheses. The authority-consulting leaves survive because their
variables are in $X$ or derivation-local, hence not killed: an
$\accessible{}{\cdot}$ premise finds no $\KILLED$ binding among them, a
$\consumable{}{\cdot}$ premise and a \rruleref{sep-consumable} instance
find their $\CONSUME$ bindings intact, and \rruleref{sc-cvar} reads
access-only bindings, which the kill never touches.
\end{proof}

The main induction produces derivations that are authority-robust for
$X_s=\embed{\rho(s)}$, the translated site root set: every elimination's
subject is charged by the \emph{source} (surface \ruleref{var} charges
occurrences eagerly), every consumable premise concerns a charged or
witness set, and every \rruleref{sep-consumable} instance introduced by
\Cref{lem:tr:sep,lem:tr:seq,lem:tr:disj,lem:tr:invariant} pairs atoms of
$\rho(s)$. The let case of \Cref{app:tr:mainproof} verifies that the killed
roots avoid exactly this set.

\subsection{Proof of the Translation Theorem}\label{app:tr:mainproof}

We prove \Cref{thm:tr:main} by induction on the canonical source derivation,
one case per rule. Throughout, $C'$ denotes the translated use set, verified
root-covered by $\embed{C}$; unless stated, the translated result type is
$\eemb{T}$ by construction. Three recurring facts are used without further
comment. \emph{Stamping}: whenever
$\roots{\embed{\G}^{\LOCK}}{C'}\sqsubseteq W$ for a root-level capture set
$W$, also $\subs{\embed{\G}^{\LOCK}}{C'}{W}$, by \Cref{lem:tr:resolve}
followed by \rruleref{sc-elem}, \rruleref{sc-mode}, and \rruleref{sc-union}
on the covering; this is what matches a translated body charge against the
stated annotation $W_A$ and a translated use set against a root-closure
slot. \emph{Robustness}: each case's derivation is authority-robust for
$\embed{\rho(s)}$ (\Cref{def:tr:robust}), by inspection of its leaves.
\emph{Consumed charges}: every $\CONSUME$-marked atom of a translated use
set either belongs to a translated \rruleref{pack},
\rruleref{consumer-app}, \rruleref{dealloc}, or \rruleref{unpack} charge,
each guarded by a consumable premise translated through \Cref{lem:tr:kind},
or resolves from an annotation stamped from such a charge, so its variable
is $\CONSUME$-bound.

\paragraph{\ruleref{var}.}
$\embed{x}=x$. The translated binding is $x:m\,\embed{S}\capt\embed{C}$, so
\rruleref{var} gives
$\typ{\set{}}{\embed{\G}^{\LOCK}}{x}{m\,\embed{S}\capt\set{x}}$, which is
$\eemb{m\,S\capt\set{x}}$ for either $m$: source and core \ruleref{var}
preserve the binding's qualifier on the same plain singleton. Uses:
$\set{}$, trivially below.

\paragraph{\ruleref{readonly}.}
$\embed{x}=\RO\,x$. By the convention of \Cref{app:tr:terms} the subject is
reference-shaped, $x:\REF[T]\capt C\in\G$, so \rruleref{reader} gives
$\typ{\set{}}{\embed{\G}^{\LOCK}}{\RO\,x}{\RO\,\REF[\embed{T}]\capt\set{\RO\,x}}
=\eemb{\RO\,\REF[T]\capt\set{\RO\,x}}$ primitively, with no coercion.
Uses: $\set{}$, trivially below the source charge $\set{\RO\,x}$.

\paragraph{\ruleref{fresh}.}
$\embed{x}=\langle\embed{D_1},\ldots,\embed{D_n},x\rangle$. $x$'s declared type is
\[
[c_1:=\embed{D_1},\ldots,c_n:=\embed{D_n}]\,\embed{\freshinst{T}{\set{c_1},\ldots,\set{c_n}}}
\]
(\Cref{lem:tr:subst}), so \rruleref{pack} applies: each $\embed{D_i}$ is
access-only and consumable (\Cref{lem:tr:kind} on the rule's premises), and
the witnesses are pairwise disjoint, $\disj{}{\embed{D_i}}{\embed{D_j}}$ for
$i\neq j$, by \Cref{lem:tr:disj} on the rule's disjointness premises, whose
sides the rule's other premises make consumable.
The pack has type
$\eemb{T}=\EXCAP{c_1,\ldots,c_n}\embed{\freshinst{T}{\set{c_1},\ldots,\set{c_n}}}$ and charges
$\bigcup_i\embed{D_i}\cup\CONSUME\bigcup_i\embed{D_i}
=\embed{D\cup\CONSUME\,D}$ (\Cref{lem:tr:mode}), exactly the source charge.

\paragraph{\ruleref{sub}.}
By \Cref{lem:tr:subtype} on the type, \Cref{lem:tr:sc} on the use set, and
\rruleref{sub}; the side condition follows by \Cref{lem:tr:rootmono}.

\paragraph{\ruleref{abs}.}
For $\alpha=\epsilon$,
\[
\embed{\lambda(x:T)t}=\lambda[\cstar<:\top]\lambda(x:\embed{\anyinst{T}{\set{\cstar}}})\LOCK[\Psi_A,\Phi_A]\embed{t}.
\]
The source premise types $t$ at $C\cup\set{x}$, so the IH types $\embed{t}$
at a charge $C'$ with
$\roots{}{C'}\sqsubseteq\embed{\roots{\G'}{C\cup\set{x}}}
= \embed{R_A}\cup\set{\cstar}$, the atoms of $W_A$, under the binder's frame
(\Cref{app:tr:ctx}). \rruleref{lock} stamps that charge onto the codomain
modal as $([\Psi_A,\Phi_A]\eemb{U})\capt C'$, and \rruleref{sub} with
stamping widens $C'$ to the stated annotation $W_A$.
\rruleref{abs} abstracts $x$ and \rruleref{cabs} abstracts $\cstar$ (premise:
$\top$ is access-only); the value's outer capture $\set{}$ is widened to
$\embed{C}$ by \rruleref{sub}, giving $\eemb{(\forall(x:T)U)\capt C}$. For
$\alpha=\CONSUME$ a \rruleref{consumer} frame replaces
\rruleref{cabs}+\rruleref{abs}; its body charge additionally carries
$\set{\cstar,\CONSUME\,\cstar}$, the image of the source
$D=\set{c,\CONSUME\,c}$, and the stamping target is
$W_A=\set{\cstar,\CONSUME\,\cstar}\cup\embed{R_A}$. Uses: $\set{}$.

\paragraph{\ruleref{tabs}, \ruleref{cabs}.}
As for \ruleref{abs} with one binder fewer:
$\embed{\lambda[X<:S]t}=\lambda[X<:\embed{S}]\LOCK[\Psi_A,\Phi_A]\embed{t}$
types by \rruleref{tabs} around \rruleref{lock}, and
$\embed{\lambda[c<:B]t}=\lambda[c<:\embed{B}]\LOCK[\Psi_A,\Phi_A]\embed{t}$
by \rruleref{cabs} (its premise $\accessonly{\G}{\embed{B}}$ by
\Cref{lem:tr:kind}, with $\top$ access-only by definition) around
\rruleref{lock}. The body charges resolve into $\embed{R_A}\cup\set{c}$
resp.\ $\embed{R_A}$, and stamping widens them to $W_A$. Both charge
$\set{}$ at the conclusion, as their source rules do.

\paragraph{\ruleref{app}.}
$\embed{x\,y}=\LET x_1=x[\embed{D}]\IN\LET x_2=x_1\,y\IN\UNLOCK\,x_2$, with
$D$ the canonical witness. By canonicity the callee premise concludes at the
declared type, translated
\[
m\big(\forall[\cstar<:\top]\forall(z:\embed{\anyinst{T}{\set{\cstar}}})\,([\Psi_A,\Phi_A]\eemb{U})\capt W_A\big)\capt\embed{C}.
\]
\rruleref{sub} with \rruleref{cfun} and \rruleref{sb-top} narrows the bound
$\top$ to $\embed{D}$, and \rruleref{capp} instantiates
$\cstar:=\embed{D}$; its premise $\accessonly{\G}{\embed{D}}$ is the image
of the source premise through \Cref{lem:tr:kind}, and it charges $\set{}$.
The domain of the instantiated arrow is
$[\cstar:=\embed{D}]\,\embed{\anyinst{T}{\set{\cstar}}}
=\embed{\anyinst{T}{D}}$ on the nose: by root-directedness the sole $\ANY$
sits in the domain's top-level capture set, which is translated elementwise
(\Cref{lem:tr:subst}). The IH on the argument premise types $y$ at exactly
this type, so \rruleref{app} applies, charging $\set{x_1}$ and producing
$[z:=y][\cstar:=\embed{D}]\big(([\Psi_A,\Phi_A]\eemb{U})\capt W_A\big)$,
which \Cref{lem:tr:cast} casts to the translation of the instantiated
source codomain. $\UNLOCK\,x_2$ (charge $\set{}$) opens the modal,
discharging its satisfaction by \Cref{cor:tr:sat}, and returns
$\eemb{[z:=y]U}$. The administrative lets sequence trivially:
\rruleref{tapp}-style inert charges make the first head $\set{}$
(\rruleref{seq-access-only}) and the unlock continuation $\set{}$
(\rruleref{seq-sep} via \rruleref{sep-empty} and \rruleref{sep-symm}).
Uses: $\set{x_1}$, whose roots are the atoms of the instantiated $W_A$,
covered by $\embed{\roots{\G}{\set{x,y}}}$: the $R_A$-part equals
$\roots{\G}{\set{x}}$ (declared annotation) and the witness part satisfies
$\roots{\G}{D}\sqsubseteq\roots{\G}{\set{y}}$ (tightness).

\paragraph{\ruleref{consume-app}.}
$\embed{x\,y}=\LET x_2=x\,\langle\embed{D},y\rangle\IN\UNLOCK\,x_2$. The callee
translates to a consumer type. \rruleref{pack} forms
$\langle\embed{D},y\rangle:\EXCAP{\cstar}\embed{\anyinst{T}{\set{\cstar}}}$
(charge $\embed{D}\cup\CONSUME\,\embed{D}$; access-onliness and
consumability from the source premises through \Cref{lem:tr:kind}).
\rruleref{consumer-app} applies $x$, requiring
$\seqcomp{\G}{\embed{D}\cup\CONSUME\,\embed{D}}{\set{x}}$:
\rruleref{seq-union} splits it into the access leg, free by
\rruleref{seq-access-only}, and the consume leg
$\seqcomp{}{\CONSUME\,\embed{D}}{\set{x}}$, given by \Cref{lem:tr:seq} on
the source premise $\seqcomp{\G}{\CONSUME\,D}{\set{x}}$ with payability from
\Cref{lem:tr:invariant} ($D$ is charged at this site). \rruleref{unlock}
opens the codomain by \Cref{cor:tr:sat}, after \Cref{lem:tr:cast} casts the
substituted codomain. The charge
$\embed{D}\cup\CONSUME\,\embed{D}\cup\set{x}$ lies below the source charge
$\embed{\set{x,y}\cup D\cup\CONSUME\,D}$ by \rruleref{sc-elem}.

\paragraph{\ruleref{tapp}, \ruleref{capp}.}
$\embed{x[S']}=\LET x_1=x[\embed{S'}]\IN\UNLOCK\,x_1$: \rruleref{sub} with
\rruleref{tfun} narrows the bound from $\embed{S}$ to $\embed{S'}$ (using
$\subs{\G}{S'}{S}$ through \Cref{lem:tr:subtype}), \rruleref{tapp}
instantiates, and \rruleref{unlock} opens the lock, whose entries the type
substitution does not touch (\Cref{lem:tr:subst}); \Cref{cor:tr:sat}
discharges the satisfaction from the ambient invariant alone, and the
result is $[X:=\embed{S'}]\eemb{U}=\eemb{[X:=S']U}$.
$\embed{x[D]}=\LET x_1=x[\embed{D}]\IN\UNLOCK\,x_1$: \rruleref{sub} narrows
the translated bound to $\embed{D}$ (\Cref{lem:tr:sc} for a capture-set
bound, \rruleref{sb-top} for a translated mutability bound), \rruleref{capp}
instantiates $c:=\embed{D}$ (access-only premise through
\Cref{lem:tr:kind}), \Cref{lem:tr:cast} casts the substituted codomain, and
\rruleref{unlock} opens the lock by \Cref{cor:tr:sat}, whose witness pairs
are payable because the revised \ruleref{capp} charges $D$; when $B=\RO$
the mode leg is the image of the \ruleref{sb-kind} kinding
(\Cref{lem:tr:kind}). In both cases every translated charge is $\set{x}$,
directly corresponding to the source charge,
and the administrative let sequences by \rruleref{seq-access-only}.

\paragraph{\ruleref{let}.}
Let $C_1,C_2$ be the source charges and $C_1',C_2'$ the translated ones from
the IHs. Both forms instantiate their core rule with the head slot $C_1'$
and the \emph{root-closure} continuation slot
$\widehat{C_2}=\embed{\roots{\G}{C_2}}$ (plus, for \rruleref{unpack}, the
witness uses the rule adds); the continuation's typing is widened from
$C_2'$ to $\widehat{C_2}$ by \rruleref{sub} and stamping, and the closure
adds no roots, so the side condition for the total
$C_1'\cup\widehat{C_2}$ still reads
$\sqsubseteq\embed{\roots{\G}{C_1\cup C_2}}$.
The sequencing premise $\seqcomp{}{C_1'}{\widehat{C_2}}$ is derived as in
\Cref{lem:tr:seq}: \rruleref{seq-sc} traces $C_1'$ to its roots, covered by
$\embed{\roots{\G}{C_1}}$; access-only atoms close by
\rruleref{seq-access-only}; and a $\CONSUME$-marked atom is, by
\Cref{lem:tr:sepchar} on the source premise $\seqcomp{\G}{C_1}{C_2}$, over
a variable outside $\roots{\G}{C_2}$, so its pairs are payable
(\Cref{lem:tr:invariant}) and \rruleref{seq-sep} closes it.

The continuation's IH derivation lives in the unkilled context, and
\Cref{lem:tr:kill} places it under the kill: it is authority-robust for
$X=\embed{\roots{\G}{C_2}}$ together with the continuation's witnesses and
locally bound roots, and the killed variables, the $\CONSUME$-marked roots
of $C_1'$, avoid $X$: they avoid $\roots{\G}{C_2}$ by
\Cref{lem:tr:sepchar} as above, they avoid the witnesses because a witness
is tight or charged, hence inside $\roots{\G}{C_2}$-coverage, and local
roots are bound after the kill.

When $T$ is $\FRESH$-free the translated term is
$\LET x=\embed{t}\IN\embed{u}$, typed by \rruleref{let}; the continuation's
result type $\eemb{U}$ and every payload pass through the kill unchanged,
since $\ominus$ rewrites authorities only. When $T$ carries $\FRESH$,
$\embed{\LET x=t\IN u}=\LET\langle c_1,\ldots,c_n,x\rangle=\embed{t}\IN\embed{u}$,
typed by \rruleref{unpack}. The match is exact: \rruleref{unpack} binds each
$\CONSUME\,c_i<:\top$ and adds
$\set{c_1,\ldots,c_n,\CONSUME\,c_1,\ldots,\CONSUME\,c_n}$ to the
continuation's use set, the images of the source \ruleref{let}'s
$\CONSUME\,c_i$ bindings and its body charge $D\cup\CONSUME\,D$; the
consumed roots of $C_1'$ are consumable by the consumed-charges fact. The
rule pushes the ownership lock
$\Psi_w=\set{c_1,\ldots,c_n,\CONSUME\,c_1,\ldots,\CONSUME\,c_n}\,,\
\widehat{C_2}$, whose root-level continuation entry is exactly what the
continuation case of \Cref{lem:tr:invariant} reads; the lock imposes no
obligation on the continuation (it is read, not proved, at the binder).

\paragraph{State, parallelism, and conditionals.}
\ruleref{alloc}, \ruleref{write}, \ruleref{dealloc}, and \ruleref{if} map to
their core namesakes with the premises translated by \Cref{lem:tr:kind};
\ruleref{read} inserts the reader as in \Cref{app:tr:terms}, its
administrative let sequencing by \rruleref{seq-access-only} (a reader value
charges nothing). \ruleref{par} maps to \rruleref{par}, whose separation
premise is \Cref{lem:tr:sep} on the source premise, with payability from
\Cref{lem:tr:invariant}: both branch charges are part of the site's charge.
Charges are the translated images of the source charges throughout.

This completes the induction, establishing \Cref{thm:tr:main}.\qed

\subsection{Example: \texttt{runParallel}}\label{app:tr:example}

We trace the \lstinline|runParallel| pattern of \Cref{sec:informal} through the
translation, exercising the lock manufacture of \Cref{app:tr:types} and the
satisfaction discharge of \Cref{cor:tr:sat}. Use a type variable $X$ (declared
$X<:\top$) for the resource shape, take $\TUNIT$ and $()$ as a base type and its
value as in the core calculus, and abbreviate $T_a=X\capt\set{\ANY}$, a
resource whose capture is a fresh $\ANY$. Take
\[
T_{\textsf{rp}}=\forall(op_1:T_a)\,\big((\forall(op_2:T_a)\TUNIT)\capt\set{op_1}\big),
\]
the curried type of \lstinline|runParallel|: applied to $op_1$ it returns a
closure that has captured $op_1$ and awaits $op_2$, and running the two together
requires $op_2$ separate from $op_1$. The context declares
$\textsf{rp}:T_{\textsf{rp}}$ and a primitive
$\textsf{alloc}:\forall(y:\TUNIT)(X\capt\set{\FRESH})$ producing
fresh resources. The source program is
\[
\LET f_1=\textsf{alloc}\,()\IN \LET f_2=\textsf{alloc}\,()\IN \textsf{rp}\,f_1\,f_2.
\]
Each $\LET$ binds a fresh consumable root; write $c_1,c_2$ for them, so
$f_1:X\capt\set{c_1}$, $f_2:X\capt\set{c_2}$ with $c_1\neq c_2$. The first
application $\textsf{rp}\,f_1$ carries the trivial separation
$\sep{\G}{\set{f_1}}{|T_{\textsf{rp}}|}=\sep{\G}{\set{f_1}}{\set{}}$; the second,
$(\textsf{rp}\,f_1)\,f_2$, carries the real one $\sep{\G}{\set{f_2}}{\set{f_1}}$, since the
residual callee $(\forall(op_2:T_a)\TUNIT)\capt\set{f_1}$ has footprint
$\set{f_1}$. \ruleref{sep-root} discharges it, $c_1\neq c_2$.

\paragraph{Translated type.}
By \Cref{app:tr:types}, $\embed{T_{\textsf{rp}}}$ manufactures one lock per arrow:
\[
\begin{aligned}
\embed{T_{\textsf{rp}}}=
\forall[c_{\star1}{<:}\top]\,\forall(op_1{:}X\capt\set{c_{\star1}})\,\Big(
&[\Psi_1,\emptyset]\,\big(\forall[c_{\star2}{<:}\top]\,\forall(op_2{:}X\capt\set{c_{\star2}})\\[-2pt]
&\quad([\Psi_2,\emptyset]\,\TUNIT)\capt\set{c_{\star2},c_{\star1}}\big)\capt\set{op_1}\Big)\capt\set{c_{\star1}},
\end{aligned}
\]
\[
\text{with}\quad
\Psi_1=\set{c_{\star1}}
\qquad\text{and}\qquad
\Psi_2=\set{c_{\star2}},\ \set{c_{\star1}}.
\]
The outer lock $\Psi_1$ has a single entry, the parameter root: the outer
arrow's annotation is empty, so $R_{A_1}=\set{}$, and a one-entry lock
demands no pairs; its satisfaction is vacuous. The inner arrow captures
$op_1$, so $R_{A_2}=\roots{\G}{\set{op_1}}=\set{c_{\star1}}$, and
$\Psi_2$ records that $op_2$'s root $c_{\star2}$ is separate from $op_1$'s
root $c_{\star1}$: the entries are individual roots, resolved once, at the
type former, and a body-internal fiat pair is read off them by
\rruleref{sep-lock} directly. This is the translated content of
``supplying $op_2$ requires it separate from $op_1$''. The codomain
annotations are the same root sets read as capture sets,
$W_{A_1}=\set{c_{\star1}}$ and $W_{A_2}=\set{c_{\star2},c_{\star1}}$; a
user-written \lstinline|runParallel|, whose inner body applies $op_2$ and
charges $\set{op_1,op_2}$, stamps that charge onto them by resolution
(\Cref{app:tr:mainproof}), so the manufactured arrow is inhabited by an
actual abstraction and not only by the context primitive $\textsf{rp}$.

\paragraph{Translated term.}
The nested application is administratively normalized, each
$\embed{x\,y}$ expanding as in \Cref{app:tr:terms}:
\[
\begin{aligned}
&\LET\langle c_1,f_1\rangle=\embed{\textsf{alloc}\,()}\IN
 \LET\langle c_2,f_2\rangle=\embed{\textsf{alloc}\,()}\IN{}\\
&\quad\LET g=\big(\LET a_1{=}\textsf{rp}[\set{f_1}]\IN\LET a_2{=}a_1\,f_1\IN\UNLOCK\,a_2\big)\IN{}\\
&\quad\LET b_1{=}g[\set{f_2}]\IN\LET b_2{=}b_1\,f_2\IN\UNLOCK\,b_2.
\end{aligned}
\]
The two $\LET\langle c_i,f_i\rangle$ are \rruleref{unpack}s of the fresh results,
binding $\CONSUME\,c_i<:\top$. The capture applications instantiate
$c_{\star1}:=\set{f_1}$ and $c_{\star2}:=\set{f_2}$, narrowing $\top$ by
\rruleref{sb-top} first; the witnesses are the canonical (tight) ones, the
arguments' declared capture sets.

\paragraph{Satisfaction discharge.}
Each $\UNLOCK$ discharges the satisfaction of the instantiated lock.
\begin{itemize}
\item $\UNLOCK\,a_2$: the instantiated $\Psi_1[c_{\star1}{:=}\set{f_1}]
  =(\set{f_1})$ has a single entry, so \rruleref{sat} demands no pair. This
  is the image of the trivial source separation against the empty spine.
\item $\UNLOCK\,b_2$: the instantiated
  $\Psi_2[c_{\star1}{:=}\set{f_1}][c_{\star2}{:=}\set{f_2}]
  =(\set{f_2},\set{f_1})$ requires $\sep{\G}{\set{f_2}}{\set{f_1}}$. Both
  entries resolve upward to the distinct $\top$-bounded consumable roots
  $c_2,c_1$ (\Cref{lem:tr:resolve}), \rruleref{sep-consumable} gives
  $\sep{\G}{\set{c_2}}{\set{c_1}}$, and \rruleref{sep-sc} descends to the
  entries. This is \Cref{cor:tr:sat} at the site: the witness pair is
  distinct by the source premise $\sep{\G}{\set{f_2}}{\set{f_1}}$
  (\Cref{lem:tr:sepchar}) and payable by the ambient invariant
  (\Cref{lem:tr:invariant}), here through the consume authority of the two
  unpacked roots.
\end{itemize}
Had the call instead passed two views of a single resource $f$, both
applications would instantiate the same $\set{f}$, and the inner obligation would
be $\sep{\G}{\set{f}}{\set{f}}$, underivable because the two roots coincide. The
manufactured lock thus rejects exactly the racy call, as \Cref{sec:informal}
requires.
 
\end{document}